\documentclass[a4paper,11pt]{article}

\title{Ascending Auctions for Combinatorial Markets with~Frictions:\\ A Unified Framework via Discrete Convex Analysis}
\author{Taihei Oki\thanks{Institute for Chemical Reaction Design and Discovery (ICReDD), Hokkaido University, Sapporo, Hokkaido, Japan. D3 Center, The University of Osaka, Osaka, Japan. Center for Advanced Intelligence Project, RIKEN, Tokyo, Japan. \texttt{oki@icredd.hokudai.ac.jp}},\quad Ryosuke Sato\thanks{School of Science and Engineering, Chuo University, Tokyo, Japan. \texttt{rsato636@g.chuo-u.ac.jp}}}
\date{\today}
\usepackage[top=1truein,bottom=1truein,left=0.7truein,right=0.7truein]{geometry}
\usepackage{amsthm}
\usepackage{amsmath}
\usepackage{amssymb}
\usepackage{hyperref}
\usepackage{amsfonts}
\usepackage{bm}
\usepackage{algorithm}
\usepackage{tikz}

\usepackage{mathtools}
\usepackage{standalone}
\usepackage{multirow}
\usetikzlibrary{positioning,arrows.meta, decorations.pathreplacing}
\usetikzlibrary{calc}
\usepackage{subcaption}
\usepackage{algorithmic}
\usepackage{comment}
\usepackage{natbib}
\usepackage{booktabs}

\theoremstyle{plain}
\newtheorem{theorem}{Theorem}[section]      
\newtheorem{proposition}[theorem]{Proposition}
\newtheorem{lemma}[theorem]{Lemma}

\newtheorem{fact}[theorem]{Fact}

\theoremstyle{definition}
\newtheorem{definition}[theorem]{Definition}
\newtheorem{remark}[theorem]{Remark}

\newtheorem{assumption}[theorem]{Assumption}
\newtheorem{problem}[theorem]{Problem}
\newtheorem{observation}[theorem]{Observation}

\newtheorem{example}[theorem]{Example}

\providecommand{\mathbb}[1]{\mathbf{#1}}
\newcommand{\R}{\mathbb{R}}

\newcommand{\Z}{\mathbb{Z}}

\newcommand{\rhot}{\tilde{\rho}}
\newcommand{\rhoc}{\check{\rho}}
\newcommand{\rhoh}{\hat{\rho}}
\newcommand{\Dc}{\check{D}}
\newcommand{\Dh}{\hat{D}}
\newcommand{\cL}{\mathcal{L}}
\newcommand{\LSC}{\textbf{(LSC)}\,}

\let\angle\relax \newcommand{\angle}[1]{\langle #1 \rangle}

\newcommand{\set}[1]{\left\{ #1 \right\}}
\newcommand{\Set}[2]{\left\{ #1 \;\middle|\; #2 \right\}}

\providecommand{\texorpdfstring}[2]{#1}
\providecommand{\url}[1]{\texttt{#1}}

\DeclareMathOperator*{\argmax}{arg\,max}
\DeclareMathOperator*{\argmin}{arg\,min}

\DeclareMathOperator*{\minimize}{minimize}
\DeclareMathOperator*{\maximize}{maximize}

\newcommand{\zeros}{\mathbf{0}}
\newcommand{\ones}{\mathbf{1}}
\newcommand{\Mn}{M\texorpdfstring{${}^\natural$}{-natural}}
\newcommand{\Ln}{L\texorpdfstring{${}^\natural$}{-natural}}

\begin{document}
\maketitle

\begin{abstract}
We develop a unified ascending-auction framework for computing Walrasian equilibria in combinatorial markets with strong substitutes valuations and piecewise-linear payment functions. Our auction extends the celebrated ascending auctions of \citet{GS2000} and \citet{A2006} to accommodate payment \emph{frictions} (e.g., transaction taxes or commission fees). This is achieved by incorporating directional price updates that reflect heterogeneous payment structures. Our framework also generalizes the unit-demand imperfectly transferable utility models of \citet{A1989, A1992} to a fully combinatorial setting, thereby unifying these paradigms. Furthermore, this is the first study to compute the \emph{minimum} -- also known as the buyer-optimal -- equilibrium in combinatorial markets with such frictions.

Our analysis builds upon \emph{discrete convex analysis}. Our main technical contribution is a characterization of valid price-update directions, together with a strongly polynomial-time algorithm for computing them. Notably, the algorithm uses only demand- and exchange-oracle queries and never requires handling information of exponential size. To compute such a direction, we formulate a \emph{lexicographic} extension of the polymatroid sum problem and characterize its dual solution via a reduction to a convex flow problem. Exploiting the \Ln-convexity of the dual objective, we show that the desired direction can be constructed from the minimal dual solution. This convexity also yields transparent economic and potential-based interpretations of the auction dynamics, strengthening the connection between ascending auctions and discrete optimization.
\end{abstract}

\section{Introduction}
\label{intro}
Efficient resource allocation is a central challenge in markets with {\it indivisible goods}, where heterogeneous items are allocated via monetary transfers. A central solution concept is the {\it Walrasian equilibrium} -- an allocation and a price vector such that, at those prices, each buyer obtains their most-preferred bundle and the market clears. In such markets, buyers are typically modeled using combinatorial valuation functions and {\it quasi-linear transferable utility}, where utility equals their valuation minus the total price of the allocated goods. A seminal result by \citet{KC1982} shows that an equilibrium exists for {\it gross substitutes} valuations. \citet{FY2003} prove that gross substitutability is equivalent to \Mn-concavity in \emph{discrete convex analysis}, a framework developed by \citet{M1998,M2003}, establishing a deep connection between economics and discrete optimization.

Ascending auctions provide constructive procedures for computing such equilibria. In unit-demand assignment markets, the ascending auction of \citet{DGS1986} computes the minimum equilibrium price vector. In combinatorial settings with gross substitutes valuations, the celebrated auctions of \citet{GS2000} and \citet{A2006} extend this classical result, ensuring convergence to the minimum equilibrium price vector. Their dynamics admit a Lyapunov function formulation and are closely related to discrete convex analysis. In particular, \citet{MSY2016} show that this potential function satisfies ${\mathrm L}^{\natural}$-convexity -- the conjugate of \Mn-concavity -- revealing that the auction dynamics can be interpreted as steepest descent algorithms for minimizing these ${\mathrm L}^{\natural}$-convex functions.

Many real-world markets depart from this classical framework due to {\it frictions} that distort the mapping from prices to payments, allowing identical prices to yield heterogeneous payments across buyers. Such frictions are ubiquitous; examples include:
\begin{itemize}
\item \textbf{Tiered commodity taxes}: Marginal tax rates may vary across price  tiers, and applicable rates or exemptions often depend on the buyer's legal status. 
\item \textbf{Import tariffs}: Import duties are often determined by transaction-price brackets, with rates typically depending on the origin and destination regions.
\item \textbf{Platform fees}: E-commerce platforms often charge piecewise-constant commission rates across different price thresholds, varying with the buyer's membership tier or contract type. 
\end{itemize}
Beyond these examples, frictions also include buyer-specific subsidies and discounts that reduce effective payments. In all these settings, each good still has a single price; buyers differ only in how this price translates into payments.
Piecewise-linear payment functions, when defined for each buyer-good pair, naturally model such structures. They preserve quasi-linear utility while relaxing the standard assumption that a bundle's payment equals the sum of its prices, necessitating models with imperfectly transferable utility (ITU). While \citet{A1989, A1992} explored ITU models for unit-demand buyers, these works did not address combinatorial~settings.

The goal of this study is to unify these two research lines by leveraging the framework of discrete convex analysis. We address a combinatorial ITU model accommodating buyers with multi-unit demands and goods with multi-unit supplies, in the presence of a broad class of frictions. Specifically, we assume strong substitutes valuations paired with piecewise-linear payment functions that are heterogeneous across buyer-good pairs. Our framework extends the paradigms of \citet{GS2000} and \citet{A2006} to this setting  while substantially generalizing the unit-demand ITU models of \citet{A1989, A1992}. This establishes a general algorithmic framework for combinatorial auctions with frictions, advancing the theoretical frontiers of both auction design and discrete optimization.

\subsection{Our Contributions}
\label{sec:contribution}
We address the computational problem of finding Walrasian equilibria in combinatorial markets with ITU. While existence and the lattice structure of equilibria have been established in more general settings \citep{FJJT2019,S2022}, 
these results are non-constructive. Our focus is primarily algorithmic: we design an ascending auction that finitely converges to the minimum equilibrium price vector. This task is highly non-trivial under frictions, where the uniform price increments of the classical auctions \citep{GS2000,A2006} induce heterogeneous payment increases and can therefore fail (see Example~\ref{need_direction}).
Since the corresponding allocation can be efficiently determined by solving a submodular intersection problem (detailed in Appendix~\ref{Allocation}), our framework computes the minimum Walrasian equilibrium. Our contributions are summarized as follows: 

\textbf{(1) Ascending Auction for Combinatorial ITU Models.} We extend the celebrated ascending auctions \citep{A2006,GS2000} to ITU models by incorporating \emph{directional price updates}. Unlike standard uniform increments, our mechanism employs heterogeneous price increases to accommodate payment functions that vary across buyer-good pairs. Our framework also substantially generalizes the unit-demand ITU models of \citet{A1989, A1992}.

\textbf{(2) Characterization of Admissible Directions.}
We identify a sufficient condition on update directions for convergence to the \emph{minimum} equilibrium price vector.
Specifically, we introduce the \emph{Local Stability Condition} \LSC, which requires that the set of goods targeted for price increases, denoted by $X^*$, remains invariant under small price perturbations. 
This condition captures the core features of the updates in \citep{A2006,GS2000}.\footnote{In their models without frictions, the characteristic vector $\chi_{\scriptscriptstyle X^*}$ inherently satisfies \LSC; see Proposition \ref{simple}.}

\textbf{(3) Strongly Polynomial-time Algorithm for Computing Directions.} 
We develop a strongly polynomial-time algorithm to compute a direction $d^*$ satisfying \LSC. Determining an admissible direction under heterogeneous payment structures is highly non-trivial and thus constitutes our main technical contribution. 

Our approach refines the primal-dual framework of \citet{ENPRVV2025}, which identifies the set $X^*$ via the polymatroid sum problem. 
To simultaneously compute $X^*$ and $d^*$, we formulate a \emph{lexicographic extension} of the primal problem and characterize its dual solution via a reduction to a convex flow problem. We then construct $d^*$ from the unique minimal dual solution, ensuring that it is supported on $X^*$.
Finally, by exploiting the $\mathrm{L}^{\natural}$-convexity of the dual objective, we formally establish that $d^*$ satisfies~\LSC.

To compute the minimal dual solution, we utilize the fact that our primal problem is a weighted polymatroid sum problem. Both the primal and minimal dual solutions can be computed in strongly polynomial time via efficient algorithms for weighted polymatroid intersection. Notably, our approach requires only access to oracles that return an element of the minimal demand set and the exchange capacity of each buyer, avoiding explicit value queries. This aligns well with the auction format by both protecting privacy and accommodating buyers unable to perfectly quantify their valuations.

 \textbf{(4) Interpretations of the Dynamics.}
We provide an economic interpretation of $d^*$. Notably, it corresponds to the minimum equilibrium price vector of an auxiliary assignment market. As this market consists of unit-demand buyers, the result of \citet{L1983} implies that these prices coincide with the VCG~prices. 

Regarding the overall dynamics, we construct a \emph{scaled Lyapunov function} for the separable case -- where marginal payments decompose into buyer- and good-specific constants -- thereby extending the potential-based framework of \citet{A2006} and \citet{MSY2016}. Finally, we identify the theoretical requirements to further generalize this potential-based approach, providing a roadmap for future developments.

\begin{table}[t]
\centering
\small
\caption{Comparison of market models and oracle requirements. All works assume substitute valuations, including unit-demand functions, gross substitutes over $2^M$, and strong substitutes over $\Z^M_+$.}
\label{tab:comparison}
\renewcommand{\arraystretch}{1.2}
\setlength{\tabcolsep}{6pt}
\begin{tabular}{l c c c c c}
\toprule
\multicolumn{1}{c}{\textbf{Literature}} 
& \textbf{Payment} &\textbf{Valuation}   & \textbf{Oracle}&\textbf{Perspective}&\textbf{Minimum} \\ 
 & \textbf{Function} &\textbf{Domain}  & && \textbf{Equilibrium}\\ 
\midrule
\citet{GS2000} & \multirow{2}{*}{Identity} & $2^M$ & \multirow{2}{*}{Demand}&\multirow{2}{*}{Dual} &\checkmark \\ 
\citet{A2006} &  & $\Z^M_+$ & & &\checkmark\\ 
\midrule
\citet{A1992} & \multirow{3}{*}{\parbox[c][3\baselineskip][c]{1.2cm}{
\centering Piecewise-linear}} & M & Demand &Dual &\checkmark\\ 
\citet{BEF2024} &  & $2^M$ & Value &Primal/Dual &\\ 
\textbf{Our Work} & & \textbf{$\Z^M_+$} & \textbf{Demand} &\textbf{Dual} &\checkmark\\ 
\bottomrule
\end{tabular}
\end{table}

\subsection{Comparison with Related Literature}
As summarized in Table~\ref{tab:comparison}, our framework unifies several existing market models. 
Specifically, our model generalizes the unit-demand ITU models \citep{A1989,A1992} and, in the absence of frictions, exactly recovers the celebrated auctions \citep{GS2000,A2006}.
The work most closely related to ours is \citet{BEF2024}, which proposes t\^{a}tonnement procedures for combinatorial ITU assignment games. 
While sharing the goal of computing equilibria, our framework differs from theirs in several key respects. 
Beyond the clear differences in domains and oracles, 
we highlight the following distinctions:

\textbf{(1) Alignment with Canonical Auction Frameworks.} 
Unlike their alternating-path approach, which updates allocations and prices simultaneously, ours adapts the classical approach of computing equilibrium prices first and subsequently recovering the allocation from these prices. While their dynamics depart from canonical price paths, our adherence to the standard framework ensures transparent price discovery.

\textbf{(2) Characterization of Directions.}
Unlike their graph-theoretic construction, we characterize the direction explicitly via the minimal solution to an $\mathrm{L}^{\natural}$-convex minimization problem (see Section~\ref{sec:contribution}(3) and~(4)).

\subsection{Other Related Work}
The computation of Walrasian equilibria in markets with indivisible goods has been extensively studied. In the classical assignment game, \citet{SS1971} showed that the core payoffs coincide with the set of optimal solutions to the dual of the assignment linear program. Building on this, \citet{DGS1986} proposed an ascending auction -- a variant of the Hungarian algorithm \citep{K1955} -- that converges to the minimum equilibrium prices. 
For gross substitutes valuations, \citet{KC1982} established the existence of equilibria and introduced a t\^{a}tonnement procedure. Subsequent research (e.g., \citet{BM1997} and \citet{NS2006}) further advanced linear programming formulations and computational perspectives. For a comprehensive survey on these topics, see \citet{P2017}.

The Lyapunov function approach \citep{A2006} provides a foundational framework for analyzing and computing equilibria.  Within this paradigm, \citet{PW2020} recently developed polynomial-time algorithms in the aggregate-demand oracle model. This Lyapunov-based framework has also been extended to richer preference structures and representations, including markets with limited complementarities \citep{SY2009, SY2015}, expressive bidding languages \citep{BK2019, BGKL2024}, and unimodular demand types \citep{FY2026}. While all of these existing works generalize the \textit{preference} side, this work is the first to generalize the \textit{payment} side.

Discrete convex analysis \citep{M1998, M2003} and matroid theory \citep{O2011} offer powerful mathematical frameworks for analyzing combinatorial structures in economics. Specifically, \citet{M19961, M19962} developed a valuated matroid intersection algorithm to compute equilibrium allocations in single-unit demand settings. This approach was later extended to multi-unit demand environments by \citet{MT2003} via $\mathrm{M}^{\natural}$-convex submodular flows \citep{M1999}. Beyond valuation structures, (poly)matroid theory has also been extensively utilized in auction design to address complex market constraints, such as allocation constraints \citep{BSV2011} and their integration with budget constraints \citep{GMP2015,LMSZ2021}. While these prior works primarily focus on allocation and budget limitations, we leverage these frameworks to incorporate market frictions into combinatorial auction design.

\subsubsection*{Organization of the Paper}
The remainder of this paper is organized as follows.
Section~\ref{Preliminaries} introduces our model and basic definitions.
Section~\ref{Overview} presents our ascending auction and analyzes the fundamental properties of directional price updates.
Section~\ref{Direction_Computation} details the computation of the directions, utilizing lexicographic optimization and duality.
Section~\ref{Convergence} establishes the finite convergence of our algorithm.
Section~\ref{potential} develops a potential-based analysis and formulates a major open problem.
Section~\ref{Lexico} establishes fundamental properties of lexicographic optimization for convex flow problems and proves one of the key theorems in Section~\ref{Direction_Computation}. 
Section~\ref{Concluding} concludes the paper.
Unless otherwise stated, omitted proofs are provided in the~appendix.

\section{Preliminaries}
\label{Preliminaries}
Let $\R_+$ and $\Z_+$ denote the sets of nonnegative real numbers and integers, respectively.
Let $\R_{++}$ denote the set of positive real numbers.
Let $M = \{1, \dots, m\}$ denote the set of goods.
For a vector $x \in \R^M$, we denote its $j$-th component by $x_j$, its restriction to $X \subseteq M$ by $x|_X$, and the sum of components in $X$ by $x(X) \coloneqq \sum_{j \in X} x_j$. 
Vector inequalities are defined componentwise; i.e., $x \leq y$ if $x_j \leq y_j$ for all $j \in M$, and $x \lneq y$ if $x \leq y$ and $x \neq y$.
Let $\zeros$ denote the zero vector in $\R^M_+$.
For a subset $X \subseteq M$, let $\chi_{\scriptscriptstyle X} \in \{0, 1\}^M$ denote the characteristic vector of $X$, defined by $(\chi_{\scriptscriptstyle X})_j = 1$ if $j \in X$ and $0$ otherwise.
We write $\chi_{\scriptstyle j} \coloneqq \chi_{\scriptstyle\{j\}}$ and set $\chi_{\scriptstyle 0} \coloneqq \zeros$.
Define the positive and negative supports of $x$ as $\operatorname{supp}_+(x) \coloneqq \{ j \in M \mid x_j > 0 \}$ and $\operatorname{supp}_-(x) \coloneqq \{ j \in M \mid x_j < 0 \}$, respectively.
For $a, b \in \Z^M_+$ with $a \le b$, let $[a, b]_{\Z} \coloneqq \{ x \in \Z^M_+ \mid a \le x \le b \}$ be the discrete~interval.

\subsection{Our Model}
Consider a market where a seller offers $m$ types of  \textit{indivisible} goods to a set of buyers 
$N =\{1, 2, \dots, n\}$ with $n\geq 2$.  
The  \textit{supply} is denoted by a vector $s = (s_1, s_2, \dots, s_m) \in \Z_{+}^M$, where $s_j$ is the number of available units of good $j$.  
These goods are allocated through an ascending auction. 
An \textit{outcome} consists of an \textit{allocation} $\{x_i\}_{i \in N}$ and a \textit{price vector} $p = (p_1, \dots, p_m) \in \R_+^M$. 
Here, $x_i \in [\zeros, s]_{\Z}$ represents the bundle assigned to buyer $i$.
The allocation $\{x_i\}_{i \in N}$ is said to be feasible if $\sum_{i\in N} x_i \leq s$.

Each buyer $i \in N$ has a valuation function 
$v_i : [\zeros,s]_{\Z} \to \R_+$ defined over bundles.  
To model \textit{frictions} between prices and payments, we consider \textit{payment functions}, 
which are assumed to be public.\footnote{This assumption is natural, since frictions typically arise from publicly observable factors such as tax rates, transaction fees, or tariffs, which are generally common knowledge among market participants.}
For each pair $(i,j)\in N\times M$, the payment function $q_{ij}: \R_+ \to \R_+$ maps the price $p_j$ of good $j$ to the payment $q_{ij}(p_j)$ made by buyer $i$ per unit of good $j$.
We assume that each payment function $q_{ij}$ is continuous, piecewise-linear, strictly increasing, and satisfies $q_{ij}(0)=0$. 
Furthermore, we assume that, for each $i$ and $j$, there exists $\kappa\in\R_+$ such that $q_{ij}(\kappa) = v_i(\chi_{\scriptstyle j})$. 
The right derivative of $q_{ij}$ is denoted by $q'_{ij}$.\footnote{Since each $q_{ij}$ is piecewise-linear, the auctioneer needs to maintain only its breakpoints and slopes; in the applications listed in Section~\ref{intro}, the number of linear segments is typically small (e.g., a few tax brackets, tariff thresholds, or commission tiers).} 
Given a price vector $p$, the set of most-preferred bundles of buyer $i$,
called the \emph{demand set}, is defined~by
\[
    D_i(p) \coloneqq
      \operatorname*{arg\,max}_{x_i\in [\zeros,s]_{\Z}} u_i(x_i;p),
\]
where $u_i(x_i; p) \coloneqq v_i(x_i) - \sum_{j\in M} q_{ij}(p_j)\,x_{ij}$ is the \emph{utility function} of buyer $i$ for bundle $x_i$ at $p$.\footnote{Formally, $D_i$ is a demand \textit{correspondence}. We refer to the set of optimal bundles $D_i(p)$ at price $p$ as the \textit{demand set}.}

Throughout this study, we assume that each valuation function $v_i$ is:
(i) normalized, i.e., $v_i(\zeros) = 0$;
(ii) monotone, i.e., $v_i(x) \le v_i(y)$ whenever $x \leq y$; and
(iii) an \emph{$\mathrm{M}^\natural$-concave} function, defined as~follows:
\begin{definition}
A valuation function $v : [\zeros,s]_{\Z} \to \R_+ \cup \{-\infty\}$ is said to be 
${\rm M}^\natural$-concave if for all $x, x' \in [\zeros,s]_{\Z}$ and for all 
$j \in \operatorname{supp}_+(x - x')$, there exists 
$k \in \operatorname{supp}_-(x - x')\cup \{0\}$ such that
$v(x) + v(x') \leq v(x - \chi_{\scriptstyle j} + \chi_{\scriptstyle k}) + v(x' + \chi_{\scriptstyle j} - \chi_{\scriptstyle k})$.
\end{definition}

We adopt the $\mathrm{M}^{\natural}$-concavity formulation to leverage the rich framework of discrete convex analysis. It corresponds to classical substitutability conditions. For unit supplies $s \in \{0,1\}^M$, 
$\mathrm{M}^{\natural}$-concavity is equivalent \citep{FY2003} to the \emph{gross substitutes} condition of \citet{KC1982}, which requires that a buyer continue to demand any good whose price has not increased. For general supplies $s \in \Z_+^M$, a valuation is \emph{strong substitutes} if it induces a gross substitutes valuation when every unit is regarded as a separate good \citep{MS2009}; strong substitutability, together with concave-extensibility, is equivalent to $\mathrm{M}^{\natural}$-concavity~\citep[Theorem~4.1]{ST2015}.


Our goal is to compute a \textit{Walrasian equilibrium}.  
Intuitively, it is an outcome in which   
each buyer receives a most-preferred bundle at the prevailing prices 
and the market clears.
\begin{definition}
\label{Walrasian}
A Walrasian equilibrium is a pair $\bigl(\{x_i\}_{i\in N},\,p\bigr)$ such that
(i) $x_i\in D_i(p)$ for every $i\in N$, and
(ii) the allocation clears the market, i.e., $\sum_{i\in N} x_i = s$.
If such an allocation $x\coloneqq \{x_i\}_{i\in N}$ exists for a price vector $p$, we call $p$ an \emph{equilibrium price vector}.
\end{definition}
The existence of equilibria and the lattice structure of equilibrium prices have been established by \citet{FJJT2019} and \citet{S2022}. These results apply here since our assumption of symmetric frictions across units of the same good satisfies the conditions in Remarks~7 and~9 of \citet{S2022}. Building upon these foundations, we now address the computation of such equilibria via an ascending auction format.

\subsection{Polymatroid Theory}
\label{polymatroid_theory}
We recall the basic terminology of polymatroid theory.
Unless otherwise stated, the definitions and results are based on \citet{F2005} and \citet{M2003,M2022}.

Let $\Dc_i(p)$ and $\Dh_i(p)$ denote the sets of minimal and maximal elements of $D_i(p)$, respectively. Formally, $\Dc_i(p) \coloneqq \{ x_i \in D_i(p) \mid \nexists x'_i \in D_i(p) \text{ with } x'_i \lneq x_i \}$ and $\Dh_i(p) \coloneqq \{ x_i \in D_i(p) \mid \nexists x'_i \in D_i(p) \text{ with } x'_i \gneq x_i \}$.
As the valuation function $v_i$ is $\mathrm{M}^{\natural}$-concave, for any price vector $p$, each of $\Dc_i(p)$ and $\Dh_i(p)$ forms the base polyhedron of an integral polymatroid (also known as an M-convex set; see Appendix~\ref{DemandSets}).
Since we consider only integral polymatroids in this paper, we simply refer to them as polymatroids.
\begin{definition}
\label{polymatroid}
Let $\rho: 2^M \to \Z_+$ be a set function satisfying the following properties:
(i) $\rho(\emptyset)=0$ (\emph{normalization});
(ii) $\rho(S)\leq \rho(T)$ for all $S\subseteq T$ (\emph{monotonicity}); and
(iii) $\rho(S\cup \{e\})-\rho(S)\geq \rho(T\cup \{e\})-\rho(T)$ for all $S\subseteq T$ and $e\in M\setminus T$ (\emph{submodularity}).
The set of vectors defined~by
$P (\rho) \coloneqq \bigl\{\, x \in \Z_+^M \bigm| x(S) \le \rho(S) \text{ for all } S \subseteq M \,\bigr\}$
is called a \emph{polymatroid}. 
Such a function $\rho$ is called the \emph{rank function} of the polymatroid. 
When the context is clear, the polymatroid $P(\rho)$ is often denoted by $P$.
The \emph{base polyhedron} $B (\rho)$ is the set of maximal vectors in $P (\rho)$, defined by
$B (\rho) \coloneqq \bigl\{\; x \in P (\rho) \bigm| x(M) =~\rho(M)\; \bigr\}$.
\end{definition}
Note that the rank function $\rho$ is uniquely determined by $P(\rho)$ or by $B(\rho)$ via the~relation:
\begin{equation}
    \label{rank}
    \rho(S) = \max_{x \in P(\rho)} x(S) = \max_{x \in B(\rho)} x(S) \quad \text{for all } S \subseteq M.
\end{equation}

\paragraph{Basic Operations on Polymatroids}
Throughout this paper, we often use the following operations.
Note that these operations naturally extend to the associated base polyhedra.
\begin{itemize}
\setlength{\leftskip}{-0.15cm}
    \item \textbf{Restriction:}
    For a subset $S \subseteq M$, let $\rho|_S$ be the rank function restricted to $S$, i.e., $\rho|_S(S') = \rho(S')$ for any $S' \subseteq S$.
    The polymatroid $P(\rho|_S)$ on $S$ is called the \emph{restriction} of $P$ to $S$, denoted by $P|_{S}$.

\item \textbf{Contraction:}
The \emph{contraction} of $P(\rho)$ by $X \subseteq M$ is a polymatroid on $M \setminus X$, defined by the rank function $\rho^X(S) \coloneqq \rho(S \cup X) - \rho(X)$.
In terms of polyhedra, it is the projection of the face where $X$ is saturated onto $M \setminus X$, given by
$P(\rho^X) = \bigl\{\, x|_{M \setminus X} \mid x \in P(\rho),\; x(X) = \rho(X) \,\bigr\}$.
    \item \textbf{Direct Sum:}
For polymatroids $P_1(\rho_1)$ and $P_2(\rho_2)$ on disjoint sets $M_1$ and $M_2$, their \emph{direct sum} $P_1 \oplus P_2$ is the polymatroid on $M_1 \cup M_2$ defined by $\rho(S) \coloneqq \rho_1(S \cap M_1) + \rho_2(S \cap M_2)$.
A polymatroid $P(\rho)$ on $M_1 \cup M_2$ admits this decomposition if and only if $\rho(M_1 \cup M_2) = \rho(M_1) + \rho(M_2)$.
\end{itemize}

\paragraph{Linear Optimization over Polymatroids}
Consider the linear maximization of $\sum_{j \in M} w_j x_j$ over a polymatroid $P(\rho)$ or its base polyhedron $B(\rho)$ for a given weight vector $w\in \R^M$.
Let $\bar{w}_1 > \dots > \bar{w}_t$ denote the distinct values of the components of  $w$. 
These values partition the ground set $M$ into subsets $M_k \coloneqq \{ j \in M \mid w_j = \bar{w}_k \}$. 
Let $X_k \coloneqq  \bigcup_{l=1}^k M_l$ for $k=1, \dots, t$ (with $X_0 = \emptyset$). 
Then, the following holds:
\begin{fact}
\label{greedy_structure}
The optimal solutions to the problem $\max \sum_{j \in M} w_j x_j$ decompose as follows:
\begin{enumerate}
    \item[(i)] The optimal set over $B(\rho)$ decomposes into the direct sum of base polyhedra:
        \[
        \argmax_{x \in B(\rho)} \sum_{j\in M}w_{j}x_{j} = \bigoplus^{t}_{k=1} B(\rho^{X_{k-1}})|_{M_k}.
    \]
    Note that this set itself forms the base polyhedron of a polymatroid.
\item[(ii)] The optimal set over $P(\rho)$ decomposes into the direct sum of base polyhedra and a~polymatroid:
    \[
        \argmax_{x \in P(\rho)} \sum_{j\in M}w_{j}x_{j} = 
        \bigoplus^{k'}_{k=1} B(\rho^{X_{k-1}})|_{M_k} 
        \oplus P(\rho^{X_{k'}})|_{\{j\in M \mid w_{j}=0\}}
        \oplus \{ 0|_{\{j\in M \mid w_{j}<0\}} \},
    \]
    where $k' = \max \{k \mid \bar{w}_{k} > 0\}$ (with $k'=0$ if $\bar{w}_1 \le 0$).
\end{enumerate}
\end{fact}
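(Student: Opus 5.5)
The plan is to prove (i) first and then derive (ii) from it. For (i), the starting point is the classical greedy characterization of maximizers of a linear function over a base polyhedron. By summation by parts,
\[
\sum_{j\in M} w_j x_j = \sum_{k=1}^{t-1} (\bar w_k - \bar w_{k+1})\, x(X_k) + \bar w_t\, x(M).
\]
On $B(\rho)$ the last term is constant, equal to $\bar w_t \rho(M)$. Each coefficient $\bar w_k - \bar w_{k+1}$ is positive, and $x(X_k) \le \rho(X_k)$ by definition. Hence the objective is at most $\sum_k (\bar w_k-\bar w_{k+1})\rho(X_k) + \bar w_t\rho(M)$.

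Next I show this bound is attained. The chain $X_1\subseteq\cdots\subseteq X_t=M$ admits a base saturating every $X_k$ simultaneously. This follows from the greedy algorithm: order $M$ compatibly with the chain and set $x_j = \rho(\{\text{predecessors of } j\}\cup\{j\}) - \rho(\{\text{predecessors of } j\})$. By submodularity this $x$ lies in $P(\rho)$, and it saturates every prefix, in particular every $X_k$. Consequently, $x\in B(\rho)$ is optimal if and only if $x(X_k)=\rho(X_k)$ for all $k$.

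It remains to identify this set of bases saturating the chain with $\bigoplus_k B(\rho^{X_{k-1}})|_{M_k}$. I argue by induction along the chain using the contraction description given in the Preliminaries. If $x(X_{k-1})=\rho(X_{k-1})$, then the constraints $x(S)\le\rho(S)$ for $X_{k-1}\subseteq S$ rewrite, on $M\setminus X_{k-1}$, as the constraints of $P(\rho^{X_{k-1}})$. Requiring additionally $x(X_k)=\rho(X_k)$ makes $x|_{M_k}$ a base of $\rho^{X_{k-1}}|_{M_k}$.

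The main obstacle is the converse direction. I must show that every componentwise concatenation of bases $y_k\in B(\rho^{X_{k-1}})|_{M_k}$ satisfies $x(S)\le\rho(S)$ for arbitrary $S$, not just for unions of chain sets. To handle this, I write $S_k=S\cap M_k$ and telescope:
\[
x(S)=\sum_k y_k(S_k)\le\sum_k\bigl(\rho(S_k\cup X_{k-1})-\rho(X_{k-1})\bigr).
\]
Submodularity gives $\rho(S_k\cup X_{k-1})-\rho(X_{k-1})\le \rho(S_k\cup T)-\rho(T)$ for $T=\bigcup_{l<k}S_l\subseteq X_{k-1}$. Summing, the right-hand side telescopes to $\rho(S)$. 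The same computation shows that this set is the base polyhedron of the direct-sum rank function $\sum_k \rho^{X_{k-1}}(S\cap M_k)$, which is the remark following (i).

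For (ii), I reduce to (i) by handling the sign of $w$. Components with $w_j<0$ must be $0$ at any optimum, because decreasing such a coordinate keeps $x$ in $P(\rho)$ and strictly improves the objective. Components with $w_j=0$ do not affect the objective. I then apply the summation-by-parts bound restricted to the positive levels $k\le k'$, which gives an optimum value of $\sum_{k\le k'}(\bar w_k-\bar w_{k+1}^{+})\rho(X_k)$, where $\bar w_{k'+1}^{+}:=0$. Optimality is therefore equivalent to saturating $X_1,\dots,X_{k'}$ with the negative coordinates equal to zero. By the same contraction and telescoping argument, the positive part forms the stated direct sum of base polyhedra. The zero-weight part is then an arbitrary vector in the contracted polymatroid $P(\rho^{X_{k'}})$ restricted to $\{w_j=0\}$; the negative coordinates, being zero, impose no extra constraint on it.
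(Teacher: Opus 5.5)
Your proof is correct. The paper gives no proof of this Fact and simply defers to Fujishige and Murota, and your argument is the standard one behind that citation: summation by parts with positive coefficients, together with the greedy base, shows that the optimal points are exactly those saturating the level-set chain $X_1\subseteq\cdots\subseteq X_t$, and the submodular telescoping bound identifies that face with the direct sum of the minors $B(\rho^{X_{k-1}})|_{M_k}$.

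The only details you leave implicit are in (ii), and both are routine. First, the bound is attained by zeroing the greedy vector outside $X_{k'}$, which stays in $P(\rho)$ because it is down-closed. Second, monotonicity of $\rho$ is needed to pass from $\rho(S\cap(X_{k'}\cup\{j\mid w_j=0\}))$ to $\rho(S)$ when $S$ meets the negative-weight coordinates.
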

We also address the \emph{weighted polymatroid sum problem}. Let $P_1, \dots, P_n$ be polymatroids on $M$. Given a weight $w \in \R_+^{N \times M}$ and a supply $s \in \Z_+^M$, the problem is to find a tuple $(x^*_1, \dots, x^*_n)$ such that
\[
    (x^*_1, \dots, x^*_n) \in \argmax_{x_1, \dots, x_n} 
    \left\{\, \sum_{i=1}^n \sum^{m}_{j=1} w_{ij}x_{ij} \;\middle|\; x_i \in P_i \ (\forall i),\ \sum_{i=1}^n x_i \le s \,\right\}.
\]
This problem can be solved in strongly polynomial time via algorithms for the \emph{weighted polymatroid intersection problem}, given access to the following oracles:
\begin{itemize}
    \item \textbf{Demand Oracle.}
    Given a price vector $p$, return a bundle $x_i \in \Dc_i(p)$.
    \item \textbf{Exchange Oracle.}
    Given a price vector $p$, a bundle $x_i \in D_i(p)$, and distinct indices $j, k \in M\cup \{0\}$, return the value
    $\max\{\alpha \mid x_i - \alpha\chi_{j} + \alpha\chi_{k}\in D_i(p)\}$.
\end{itemize}
These oracles, also assumed in our ascending auction framework (Section~\ref{Overview}), follow those used by \citet{ENPRVV2025} and require no additional information under frictions; a buyer's role is thus unchanged. 
Moreover, membership of a given point $y_i \in \Z^M$ in $\check D_i(p)$ can be tested using polynomially many queries to the demand and exchange oracles.
Specifically, $\check D_i(p)$ contains $y_i$ if and only if $\min \{\, \|x_i' - y_i\|_1 \;|\; x_i' \in \check D_i(p) \, \}=0$, and this M-convex function minimization can be solved by the long-step steepest-descent algorithm using one query to the demand oracle and $\mathrm{O}(m^3)$ queries to the exchange oracle \citep{OS2026}; see also Appendix~\ref{Allocation}.


\subsection{Over-demandedness}
We outline the demand-side structure underlying our analysis.
Given the demand sets, the requirement function of buyer~$i$ at price $p$ is defined by $\mu_i(X;p) \coloneqq \min \{x(X) \mid x \in D_i(p) \}$, representing the minimum amount that buyer~$i$ must demand from a subset $X\subseteq M$ at $p$. 
Using the minimal demand set $\Dc_i(p)$ and its associated rank function $\rhoc_i(\cdot; p)$, this function can be expressed~as
\begin{equation}
\label{mu_rho}
    \mu_i(X;p) = \min \{x(X) \mid x \in \Dc_i(p)\} = \rhoc_i(M; p) - \rhoc_i(M \setminus X; p).
\end{equation}
Based on this, we define the \emph{over-demand function} $\mathcal{O}(\cdot; p): 2^M \to \Z$ at price $p$ by
\begin{equation}
\label{overdemand}
    \mathcal{O}(X; p) \coloneqq \mu(X; p) - s(X)
    = \rhoc(M; p) - \rhoc(M \setminus X; p) - s(X),
\end{equation}
where $\mu(\cdot; p) \coloneqq \sum_{i \in N} \mu_i(\cdot; p)$ and $\rhoc(\cdot; p) \coloneqq \sum_{i \in N} \rhoc_i(\cdot; p)$.
A set $X\subseteq M$ is said to be \emph{over-demanded} at $p$ if $\mathcal{O}(X;p)>0$.
The function $\mathcal{O}(\cdot; p)$ is known to be supermodular (i.e., $-\mathcal{O}(\cdot; p)$ is submodular), 
and thus a unique minimal maximizer of $\mathcal{O}(\cdot; p)$ exists.
This set is referred to as the \emph{minimal maximally over-demanded set}, which we specifically focus on in our analysis.
From \eqref{overdemand}, maximizing the over-demand function is equivalent to the following minimization problem, as the constant term $\rhoc(M; p)$ does not affect the argmin:
\begin{equation}
\label{argmin}
    \argmax_{X \subseteq M} \mathcal{O}(X; p) 
    = \argmin_{X \subseteq M} \{ s(X) + \rhoc(M \setminus X; p) \}.
\end{equation}
The minimal solution of \eqref{argmin} can be computed in strongly polynomial time using submodular function minimization algorithms, e.g., \citet{LSW2015}.
Following \citet{MSY2013}, who first identified the underlying primal-dual structure, \citet{ENPRVV2025} formulated the primal as an (unweighted) polymatroid sum problem and exploited the following formulation to improve computational efficiency in settings without frictions:
\begin{equation}
\label{PD_frictionless}
(P_1) \ \max_{(x_1, \dots, x_n)} \left\{ \sum_{i \in N} x_{i}(M) \,\middle|\, x_i \in P(\rhoc_i(\cdot; p))\, (\forall i),\\ \sum_{i\in N} x_i \leq s \right\},
\ 
(D_1) \ \min_{X \subseteq M} \left\{ s(X) + \rhoc(M\setminus X; p) \right\}.
\end{equation}

Analogously, we define the concept of \emph{under-demandedness}. Let $\rhoh_i(\cdot; p)$ denote the rank function of the maximal demand set $\Dh_i(p)$ for each $i \in N$. The \emph{under-demand function} $\mathcal{U}(\cdot; p): 2^{M} \to \R$ is defined by
$\mathcal{U}(X;p) \coloneqq \rhoh(X;p) - s(X)$, 
where $\rhoh(\cdot; p) \coloneqq \sum_{i \in N} \rhoh_i(\cdot; p)$.
A set $X\subseteq M$ is said to be \emph{under-demanded} at $p$ if $\mathcal{U}(X; p) < 0$. 
Then, equilibrium price vectors are characterized as follows:\footnote{Throughout this paper, proofs omitted from the main text can be found in Appendix~\ref{proofs}, unless otherwise stated.}
\begin{lemma}
\label{price_characterization}
$p$ is an equilibrium price vector if and only if no set is over- or under-demanded at $p$.
\end{lemma}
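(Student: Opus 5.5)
The plan is to reduce the equilibrium condition to a polymatroid intersection feasibility question and then apply a standard min-max / feasibility theorem for sums of M-convex sets. By Definition~\ref{Walrasian}, $p$ is an equilibrium price vector if and only if $s \in D(p) \coloneqq \sum_{i\in N} D_i(p)$ (Minkowski sum). Since each $v_i$ is $\mathrm{M}^\natural$-concave, each $D_i(p)$ is an $\mathrm{M}^\natural$-convex set (Appendix~\ref{DemandSets}). The Minkowski sum of $\mathrm{M}^\natural$-convex sets is again $\mathrm{M}^\natural$-convex, so it is a generalized polymatroid. Its minimal elements form the sum $\sum_i \Dc_i(p)$, with rank function $\rhoc(\cdot;p)$. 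Its maximal elements form $\sum_i \Dh_i(p)$, with rank function $\rhoh(\cdot;p)$.

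The key step is the description of this g-polymatroid by inequalities. I would show
\[
D(p)=\Set{y\in\Z^M}{\mu(X;p)\le y(X)\le \rhoh(X;p)\ \ \forall X\subseteq M}.
\]
The upper bounds are immediate, since $y(X)=\sum_i x_i(X)\le\sum_i\rhoh_i(X;p)$. The lower bounds follow from \eqref{mu_rho}: $\min_{x\in D_i(p)} x(X)=\mu_i(X;p)$.

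The converse inclusion is the main obstacle. I would get it from the fact that an integral g-polymatroid is exactly the set of integer points between its paramodular pair $(\mu,\rhoh)$, where the lower function is the minimum and the upper function the maximum of $x(X)$ over the set. Additivity of these functions under Minkowski sum is what lets us pass from the individual $D_i(p)$ to $D(p)$. One checks that $\max_{y\in D(p)}y(X)=\sum_i\max_{x_i\in D_i(p)}x_i(X)=\rhoh(X;p)$, because a maximal bundle in $D_i(p)$ maximizing $x(X)$ attains $\rhoh_i(X;p)$. The minimum is handled analogously, giving $\mu(X;p)$. Then I would invoke the standard description of g-polymatroids (Frank--Tardos; see \citet{F2005}, \citet{M2003}).

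Given this description, the lemma follows by taking $y=s$. The vector $s$ lies in $D(p)$ if and only if $\mu(X;p)\le s(X)$ and $s(X)\le\rhoh(X;p)$ for all $X$. By \eqref{overdemand} and the definition of $\mathcal U$, these conditions say exactly that $\mathcal{O}(X;p)\le 0$ and $\mathcal{U}(X;p)\ge 0$ for all $X$, i.e.\ that no set is over- or under-demanded.

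If the g-polymatroid description needs more justification, an alternative route is to use \eqref{PD_frictionless} twice. First, the max-flow/min-cut duality $(P_1)$–$(D_1)$ shows that $s$ dominates some element of $\sum_i\Dc_i(p)$ if and only if no set is over-demanded. Second, the symmetric argument with the $\rhoh_i$ shows that $s$ is dominated by some element of $\sum_i\Dh_i(p)$ if and only if no set is under-demanded. Finally, a sandwich (discrete separation/interpolation) argument for $\mathrm{M}^\natural$-convex sets gives a point of $D(p)$ equal to $s$.
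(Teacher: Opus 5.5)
Your proposal is correct and follows essentially the same route as the paper. Both write equilibrium as $s \in D(p)=\sum_{i\in N} D_i(p)$, use that this Minkowski sum is $\mathrm{M}^\natural$-convex (an integral g-polymatroid), and describe it by the integer points $y$ with $\mu(X;p)\le y(X)\le \rhoh(X;p)$ for all $X\subseteq M$, so that the absence of over- and under-demanded sets is exactly the membership of $s$; the paper phrases this key step through $\mathrm{conv}(D(p))$ being that integral polyhedron, which is the same fact you invoke via the paramodular-pair description.
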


\section{Ascending Auctions with Directional Price Updates}
\label{Overview}
This section presents an ascending auction framework that guarantees finite convergence to the minimum equilibrium price. A central feature of our approach is the use of directional price updates; we analyze a fundamental property of such updates and the key requirements for the update direction to ensure convergence to the minimum equilibrium price~vector.

\subsection{Overview of the Proposed Mechanism}
The overall procedure is presented in~Algorithm~\ref{algo1}.
\begin{algorithm}[htb]
\caption{\textsc{Ascending Auction with Directional Price Updates}}
\label{algo1}
\begin{algorithmic}[1]
\STATE Initialize the price vector: $p\coloneqq  \zeros$.
\STATE Compute the pair $(X^*, d^*)$, where $X^*$ is the unique minimal maximizer of $\mathcal{O}(\cdot; p)$ and $d^*$ satisfies the Local Stability Condition \LSC (via Algorithm~\ref{alg:direction} presented in Section~\ref{Direction_Computation}).
\STATE If $X^* = \emptyset$, output $p^*\coloneqq p$ and terminate.
\STATE If $X^* \neq \emptyset$, increase the price along $d^*$ until one of the following occurs:
\begin{itemize}
\item[(a)] The set $X^*$ identified in Line 2 ceases to be the minimal maximizer of $\mathcal{O}(\cdot; p)$.\\
\item[(b)] \LSC is violated while $X^*$ remains the minimal maximizer.
\end{itemize}
Let $\bar{c}(p, d^*)$ be the step size at $(p, d^*)$, and update the price as $p \coloneqq  p + \bar{c}(p, d^*) d^*$. 
Go to Line~2.
\end{algorithmic}
\end{algorithm}
The auction is initialized at $p \coloneqq \zeros$. In each iteration, the auctioneer identifies a pair $(X^*, d^*)$, where $X^*$ is the minimal maximally over-demanded set at the current price $p$ (i.e., the unique minimal maximizer of $\mathcal{O}(\cdot; p)$), and $d^* \in \R_+^M$ is a direction satisfying the \emph{Local Stability~Condition}~\LSC:
\begin{flalign}
  \qquad \LSC \qquad & \parbox[t]{0.8\textwidth}{
    There exists $\delta > 0$ such that for any $\varepsilon \in [0, \delta)$, the set $\operatorname{supp}_+(d^*)$ is the unique minimal maximizer of $\mathcal{O}(\cdot; p+\varepsilon d^*)$.
  } && \nonumber
  \phantomsection\def\@currentlabel{LSC}\label{LSC}
\end{flalign}

Note that setting $\varepsilon=0$ in \LSC implies $\operatorname{supp}_+(d^*)=X^*$.
This condition is essential for the stability of the price update; otherwise, the set $X^*$ could change under arbitrarily small price perturbations.

The pair $(X^*, d^*)$ is computed via Algorithm~\ref{alg:direction} (described in Section~\ref{Direction_Computation}) in strongly polynomial time, 
using the current price $p$, the derivative values $\{q'_{ij}(p_j)\}$, and the oracles introduced in Section~\ref{polymatroid_theory}.
If $X^* = \emptyset$, the auction terminates and returns $p^*$.
Otherwise, the price is updated along $d^*$ until one of the stopping events in Line~4 occurs, after which the auction proceeds to the next~iteration.\footnote{Since each $v_i$ is real-valued, the exact computation of the step size is generally intractable even without frictions; see \citet[Remark~6.8]{S2017}. It requires, for each buyer $i$, an oracle that, given $p$ and $d^*$, returns $\min\{\lambda \in \R_+ \mid D_i(p) \neq D_i(p + \lambda d^*)\}$. 
Such an oracle is implementable in $\mathrm{O}(m^2)$ time given a value oracle, so our setting remains more general even when we assume it.}


The following theorems guarantee finite convergence to the minimum equilibrium price~vector. 
\begin{theorem}
\label{finite}
Algorithm~\ref{algo1} terminates in a finite number of iterations.
\end{theorem}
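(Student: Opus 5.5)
Since $d^*\in\R_+^M$, prices never decrease along the run of Algorithm~\ref{algo1}. The proof therefore combines three facts: prices stay bounded, each iteration moves by a positive step, and only finitely many distinct ``configurations'' can occur.

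\textbf{Step 1: prices stay bounded.} Every price the auction visits lies below a fixed vector $\bar p$. Fix $j$ and take $\kappa_{ij}$ with $q_{ij}(\kappa_{ij})=v_i(\chi_j)$. If $p_j>\max_i\kappa_{ij}$, then $M^\natural$-concavity and monotonicity make the marginal value of any unit of $j$ at most $v_i(\chi_j)<q_{ij}(p_j)$. Hence no buyer's minimal demand contains $j$, i.e. $\mu_i(\{j\};p)=0$. A set $X\ni j$ is then never part of the minimal maximizer of $\mathcal{O}$, because removing $j$ does not decrease $\mathcal{O}$ (here I use supermodularity together with $s_j\ge 0$ and $\mu(X)-\mu(X\setminus j)\le$ the contribution of $j$, which is $0$). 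So $\operatorname{supp}_+(d^*)=X^*$ never contains $j$ once $p_j$ exceeds $\bar p_j\coloneqq\max_i\kappa_{ij}$. Consequently $p\le\bar p$ throughout, provided each step stops when a breakpoint is reached. This needs care: I will argue that $\mathcal{O}$ changes when $p_j$ crosses $\kappa_{ij}$, so event (a) triggers there.

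\textbf{Step 2: finitely many regions.} Partition the box $[\zeros,\bar p]$ according to the finite data on which the step computation depends. This data consists of the linear pieces of every $q_{ij}$ (finitely many breakpoints) and the demand correspondence. Within a region where all $q_{ij}$ are linear, $D_i(p)$ is determined by comparing the finitely many values $v_i(x)-\sum_j q_{ij}(p_j)x_j$ over $x\in[\zeros,s]_\Z$. The set of prices is therefore cut by finitely many hyperplanes into finitely many relatively open cells, and on each cell $D_i(p)$ is constant. Inside a cell, $\mathcal{O}(\cdot;p)$, $X^*$, and the output $d^*$ of Algorithm~\ref{alg:direction} are constant, since $d^*$ depends only on $p$ through $\check D_i(p)$, $\hat D_i(p)$ and $q'_{ij}(p_j)$.

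\textbf{Step 3: each iteration leaves a cell.} The step size $\bar c(p,d^*)$ is positive by \LSC. Because the direction is constant on a cell and $X^*$ remains the minimal maximizer there, each step ends either on the boundary of a cell or upon entering a new one. Since $d^*\ge 0$ and $d^*\neq 0$, with strictly positive entries on $X^*$, a cell cannot be revisited. This is because cells are defined by hyperplanes, so a monotone nonconstant ray path crosses each hyperplane at most once in a given direction. I expect this to be the main obstacle. The hyperplanes $\sum_j (q_{ij}(p_j))(x_j-y_j)=v_i(x)-v_i(y)$ are not monotone-crossing in general, because they have mixed signs. I would first try to replace this step by a potential argument: show that the number of iterations at which $p$ lies in a given lower-dimensional face is at most $2^m$, since $X^*$ must change by \LSC failing or (a). I would then use that after leaving a cell via (a) or (b), the pair (cell, $X^*$) strictly progresses in the lexicographic order given by the price vector on $X^*$. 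Combining the finiteness of cells with the fact that each iteration consumes at least one (cell, $X^*$) pair gives termination.
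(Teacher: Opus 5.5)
Your Steps 1 and 2 are sound. Prices stay below $\bar p$, and on each relatively open cell of the finite arrangement the data $\check D_i(p)$ and $q'_{ij}(p_j)$ are constant, so $X^*$ and $d^*$ are constant too. The proof, however, stands or falls with Step 3, and there it fails. A finite cell decomposition yields termination only if no cell is entered twice, and nothing you use excludes that. A nondecreasing path can leave and re-enter a convex cell: for example, the face $\{p_1=p_2\}$ contains $(0,0)$ and $(1,1)$ but not $(1,0)$. You notice this yourself.

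The fallback does not repair the problem, for three reasons:
\begin{enumerate}
\item Under event (b), $X^*$ by definition does \emph{not} change, so ``$X^*$ must change'' is false.
\item Since $X^*$ is a function of the cell, the pair (cell, $X^*$) is just the cell. ``Progress in the lexicographic order given by the price vector'' is not an order on cells: monotonicity of $p$ does not make the sequence of visited cells monotone. Saying that each iteration ``consumes'' a pair is exactly the non-revisiting claim that needs proof.
\item Your argument uses nothing about $d^*$ beyond its being locally determined. The paper stresses that, unlike Theorem~\ref{minimum_eq}, finiteness relies on the specific construction of $d^*$ in Algorithm~\ref{alg:direction}.
\end{enumerate}

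The missing ingredient is a potential that is monotone along the whole price path; the paper uses a lexicographic one in two parts.

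First, $\theta=(m+1)\mathcal{O}(X^*;p)-|X^*|$ is non-increasing within an iteration by Lemma~\ref{monotone_c}. That lemma is Proposition~\ref{mu_rho_relation} with $Y=M\setminus X^*$, which rests on the refined GS\&LAD property of Theorem~\ref{gs_lad}. Moreover, $\theta$ strictly decreases at each event (a): the maximizers of the supermodular $\mathcal{O}(\cdot;p')$ form a lattice, so a tie forces $X^*_t\subsetneq X^*_{t+1}$. This bounds event (a) by $O(s(M)mn)$.

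Second, consider the stretch between two events (a). Slope changes are finitely many, and decreases of $\mathcal{O}(X^*;p)$ are charged to $\theta$. Otherwise, Proposition~\ref{mu_rho_relation}(ii) forces $\check\rho_i(M;p)$ and $\check\rho_i(M\setminus X^*;p)$ to stay fixed, so the contracted polymatroids $P(\tilde\rho_i)$ only grow. Hence the optimal value of $(D_2)$ is non-decreasing over a finite set of values. When that value stays the same, the minimal dual solution $z^*$ strictly increases and so never recurs.

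Such a potential would also rule out revisiting a cell and make your decomposition usable. But it is the whole substance of the proof, and it depends on $d^*$ being built from the minimal solution of $(D_2)$.
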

\begin{theorem}
\label{minimum_eq}
The output $p^*$ of Algorithm~\ref{algo1} coincides with the minimum equilibrium price vector.
\end{theorem}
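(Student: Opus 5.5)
Since Algorithm~\ref{algo1} terminates by Theorem~\ref{finite}, it outputs some $p^*$ with $X^*=\emptyset$. I will show two things: (a) $p^*$ is an equilibrium price vector; (b) $p^* \le p$ for every equilibrium price vector $p$, so $p^*$ is the minimum. The minimum exists by the lattice structure \citep{FJJT2019,S2022}.

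For (a), I use Lemma~\ref{price_characterization}. At termination the minimal maximizer of $\mathcal{O}(\cdot;p^*)$ is $\emptyset$, so $\max_X \mathcal{O}(X;p^*) = \mathcal{O}(\emptyset;p^*)=0$ and no set is over-demanded. For under-demandedness, I prove the invariant that no set is under-demanded at any price visited, by induction over iterations. At $p=\zeros$ every payment is $0$. Monotonicity then makes every bundle that is maximal in $[\zeros,s]_\Z$ demanded, so $s\in\Dh_i(\zeros)$, which gives $\rhoh_i(X;\zeros)\ge s(X)$ and hence $\mathcal{U}(X;\zeros)\ge 0$. For the inductive step, along the segment $p+\varepsilon d^*$ only goods in $X^*=\operatorname{supp}_+(d^*)$ rise in price. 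Suppose some $Y$ becomes under-demanded. By gross substitutability (\Mn-concavity), demand for goods outside $X^*$ does not fall, so only $Y\cap X^*$ can lose demand. I then compare with the over-demand of $X^*$. If $\mathcal{U}(Y;p')<0$, I show that $X^*\setminus Y$ would be over-demanded at least as much as $X^*$ at the preceding prices. This contradicts the minimality of $X^*$ guaranteed by \LSC on $[0,\bar c)$, using continuity of $\mathcal{O}$ and $\mathcal{U}$ as the step stops at event (a) or (b). The standard Gul--Stacchetti argument transfers here because the directional increments are just a reparametrized increase on $X^*$.

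For (b), let $p$ be any equilibrium price vector. I show by induction that every iterate $p^{(t)}$ satisfies $p^{(t)}\le p$; the base case $\zeros\le p$ is immediate. Suppose $p^{(t)}\le p$ but the update along $d^*$ would exceed $p$. Let $\varepsilon$ be the first time some coordinate hits $p$, and let $Z=\{j\in X^*: p^{(t)}_j+\varepsilon d^*_j = p_j\}\neq\emptyset$. At $p'=p^{(t)}+\varepsilon d^*$, with $\varepsilon$ still within the interval where \LSC holds, $X^*$ is the unique minimal maximizer of $\mathcal{O}(\cdot;p')$, so $\mathcal{O}(X^*;p')>\mathcal{O}(X^*\setminus Z;p')$. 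Moving from $p'$ to $p$ raises prices only on goods outside $Z$, and $p'_Z=p_Z$. By the substitutes property in the form of the requirement function, $\mu(Z;\cdot)$ is nondecreasing when prices of other goods rise. The supermodular increment $\mathcal{O}(X^*;p')-\mathcal{O}(X^*\setminus Z;p')>0$ should thus lower-bound $\mathcal{O}(Z;p)$, making $Z$ over-demanded at $p$ and contradicting Lemma~\ref{price_characterization}. To make this rigorous I will write $\mathcal{O}(Z;p)\ge \mu(Z;p)-s(Z)\ge\mu^{X^*\setminus Z}(Z;p')-s(Z)$, where the middle term is a contracted requirement, and then use the formula~\eqref{mu_rho} via $\rhoc$ together with submodularity.

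The main obstacle is this last inequality: showing that raising the prices of other goods cannot reduce the conditional minimum demand on $Z$. In the frictionless case this is standard, but here the payments $q_{ij}$ differ across buyers. I expect it still to hold because each $q_{ij}$ is strictly increasing, so a price increase of good $k$ is a payment increase for every buyer, and the comparative statics of \Mn-concave demand depend only on the sign of the payment change, which is buyer-independent.
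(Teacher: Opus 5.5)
Your overall architecture matches the paper's: no over-demand at termination plus an under-demand invariant gives an equilibrium, and a first-touch set $Z$ together with strict minimality of $X^*$ gives minimality. However, part (b) has a genuine gap at exactly the step that carries the argument. What you need is
\[
\mathcal{O}(Z;p)\;\ge\;\mathcal{O}(X^*;p')-\mathcal{O}(X^*\setminus Z;p'),
\quad\text{i.e.}\quad
\mu(Z;p)\;\ge\;\rhoc\bigl(M\setminus(X^*\setminus Z);p'\bigr)-\rhoc(M\setminus X^*;p').
\]
Neither of your tools gives this. Supermodularity of $\mathcal{O}(\cdot;p')$ with $\mathcal{O}(\emptyset;p')=0$ yields the \emph{opposite} bound, $\mathcal{O}(Z;p')\le\mathcal{O}(X^*;p')-\mathcal{O}(X^*\setminus Z;p')$. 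Monotonicity of $\mu(Z;\cdot)$ when other prices rise (Proposition~\ref{mu_rho_relation}) only gives $\mathcal{O}(Z;p)\ge\mathcal{O}(Z;p')$, and that can be negative. Take unit supplies on two goods, two buyers with $\Dc_i=\{(1,0),(0,1)\}$, and one buyer with $\Dc_i=\{(0,1)\}$. Then $X^*=\{1,2\}$ and $\mathcal{O}(X^*)-\mathcal{O}(\{2\})=1$, yet $\mathcal{O}(\{1\})=-1$. So the two inequalities do not chain. Any argument using only weak monotonicity would also ``prove'' the display when the prices on $X^*\setminus Z$ do not move, and there it is false, as this example shows. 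The obstacle you single out, monotonicity of requirements under buyer-specific frictions, is not the difficulty; it is already Theorem~\ref{gs_lad} and Proposition~\ref{mu_rho_relation}.

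The missing idea is to exploit that $p'_j<p_j$ \emph{strictly} for $j\in X^*\setminus Z$. Your definition of $Z$ guarantees this, but your argument never uses it. Insert a virtual price $p'+\eta d'$ with $\operatorname{supp}_+(d')=X^*\setminus Z$ and small $\eta>0$, so that $p'+\eta d'\le p$ with equality on $Z$. By Proposition~\ref{decomposition}(ii), each buyer's demand on $X^*\setminus Z$ is pinned to its minimum there, which gives $\mu(Z;p'+\eta d')=\rhoc(M\setminus(X^*\setminus Z);p')-\rhoc(M\setminus X^*;p')$. This exceeds $s(Z)$ because $X^*$ is the unique minimal maximizer at $p'$. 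Only now does Proposition~\ref{mu_rho_relation}(ii), applied to $p'+\eta d'\le p$ on $Z$, give $\mathcal{O}(Z;p)>0$. In the example, slightly raising good~$2$ pushes both indifferent buyers onto good~$1$, so $\mu(\{1\})$ jumps from $0$ to $2$. This is precisely the paper's virtual-direction argument.

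Part (a) is fine in outline; it is Lemma~\ref{no_underdemanded_local}. But it also rests on Proposition~\ref{decomposition}(ii): $\rhoh(Y;p+\varepsilon d^*)=\rhoc(Y\cap X^*;p+\varepsilon d^*)+\rhoh(Y\setminus X^*;p)$, because each buyer's demand on $X^*$ is fixed at $\mu_i(X^*;p)$. It does not hold because $d^*$ is a ``reparametrized uniform increase'', which it is not when $d^*$ is non-uniform. Also, the endpoint of a step needs upper semicontinuity of $D_i$, not continuity of the integer-valued $\mathcal{O}$ and $\mathcal{U}$.
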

We prove Theorem~\ref{minimum_eq} at the end of this section, as it relies solely on \LSC.
In contrast, the proof of Theorem~\ref{finite} is deferred to Section~\ref{Convergence}, 
as it requires the specific construction of $d^*$ in~Algorithm~\ref{alg:direction}.

Theorem \ref{minimum_eq} guarantees the existence of 
an equilibrium allocation $x^*\coloneqq (x^*_i)_{i\in N}$ 
satisfying $x^*_{i}\in D_i(p^*)$ for each $i\in N$ and $\sum_{i \in N} x^*_i = s$. 
Given access to the oracles, 
we can compute $x^*$ in strongly polynomial time by reducing the problem to submodular intersection; see Appendix~\ref{Allocation}. 
Consequently, the pair $(x^*, p^*)$ constitutes a minimum Walrasian equilibrium, which is of economic interest as the \textit{buyer-optimal equilibrium}.

\begin{remark}
In the absence of frictions, Algorithm~\ref{algo1} recovers the auctions of \citet{GS2000} and \citet{A2006}.\footnote{Consequently, Algorithm~\ref{algo1} is not incentive compatible, as these auctions are known to lack this property already in the frictionless setting; the absence of incentive compatibility is thus inherited from the underlying paradigm rather than introduced by frictions. This does not prevent practical use, as many mechanisms used in practice are not truthful either.} 
Specifically, the set $X^*$ in Line~2 coincides with the \emph{excess demand set} \citep{GS2000}, and the direction $d^* = \chi_{\scriptscriptstyle X^*}$ satisfies \LSC (see Proposition~\ref{simple}).
In contrast, the mechanisms of \citet{A1992} and \citet{BEF2024} increase prices on a subset of $X^*$, yielding different dynamics even in the settings they~consider.
\end{remark}

The remainder of this section analyzes the fundamental properties of directional price updates.
To build intuition, we first present an example illustrating that the naive direction 
$d = \chi_{\scriptscriptstyle X^*}$ -- which trivially satisfies $\operatorname{supp}_+(d)=X^*$ -- may fail to satisfy \LSC due to the presence of frictions.

\begin{example}
\label{need_direction}
Consider a market with $N=\{1,2,3\}$ and $M=\{1,2,3\}$, each with unit supply. 
Suppose that the minimal demand sets 
at the initial price $p=(0,0,0)$ are given by
$\Dc_1(p) = \{(1,0,1), (0,1,1)\}, \ 
\Dc_2(p) = \{(0,1,1)\},\ 
\Dc_3(p) = \{(0,1,0), (0,0,1)\}$. 
Here, each bundle is a $\{0,1\}^M$ vector of demanded goods.

To compute $X^*$, we examine the objective function $s(X) + \rhoc(M \setminus X; p)$ of the dual problem $(D_1)$ in~\eqref{PD_frictionless}. As shown in Table~\ref{rank_values1}, this function is minimized at $\{2,3\}$ and $\{1,2,3\}$, 
yielding $X^* = \{2,3\}$.
Suppose that the right derivatives of the payment functions at $p$ satisfy $q'_{ij}(0)=1$ for all $(i,j)\neq (3,2)$ and $q'_{32}(0) = 2$.

\begin{table}[t]
\centering
\caption{Values of rank functions $\rhoc_i(\cdot ;p)$ and the objective function of $(D_1)$ at $p$ in Example~\ref{need_direction}.}
\label{rank_values1}
\begin{tabular}{c|c c c c c c c c}
    $X\ (\subseteq M)$& $\emptyset$ & $\{1\}$ & $\{2\}$ & $\{3\}$ & $\{1,2\}$ & $\{2,3\}$ & $\{1,3\}$ & $\{1,2,3\}$ \\
    \hline
    $\rhoc_1(X ;p)$ & 0 & 1 & 1 & 1 & 1 & 2 & 2 & 2 \\
    $\rhoc_2(X ;p)$ & 0 & 0 & 1 & 1 & 1 & 2 & 1 & 2 \\
    $\rhoc_3(X ;p)$ & 0 & 0 & 1 & 1 & 1 & 1 & 1 & 1 \\
    $s(X)+\rhoc(M\setminus X; p)$ & 5 & 6 & 5 & 4 & 5 & \textbf{3} & 5 & \textbf{3}
\end{tabular}
\end{table}

In the following, we consider two directions $d^1, d^2\in \mathbb R^M_+$ with $\operatorname{supp}_+(d^1)=\operatorname{supp}_+(d^2)=X^*$. As will be formulated in Proposition \ref{decomposition} (iii), for any such direction $d$ and sufficiently small $\varepsilon > 0$, the minimal demand set $\Dc_i(p+\varepsilon d)$ consists of the bundles in $\Dc_i(p)$ that minimize the payment increase:
\begin{equation}
\label{demand_example}
\Dc_i(p+\varepsilon d)
=\argmax_{x_i\in \Dc_i(p)} \left\{u_i(x_i; p) - \varepsilon \sum_{j\in X^*}q'_{ij}(p_j)d_j x_{ij}\right\}
=\argmin_{x_i\in \Dc_i(p)}\sum_{j\in X^*}q'_{ij}(p_j)d_j x_{ij}.
\end{equation}
To verify \LSC for these directions, we rely on a characterization that will be formally established in Theorem~\ref{any_direction}. Specifically, a direction $d$ satisfies \LSC if and only if 
$X^*$ is the minimal minimizer of $s(X)+\rhoc(X^*\setminus X; p+\varepsilon d)$ among all subsets $X\subseteq X^*$.

Let $d^1 = \chi_{\scriptscriptstyle X^*}=(0,1,1)$ and consider $p^1\coloneqq p+\varepsilon d^1$ for sufficiently small $\varepsilon > 0$.
We apply \eqref{demand_example} to determine the minimal demand sets at $p^1$.
For Buyer 1, since $q'_{13}(p_3)d^1_3 < q'_{12}(p_2)d^1_2 + q'_{13}(p_3)d^1_3$, we have $\Dc_1(p^1) = \{(1,0,1)\}$.
For Buyer 2, since $\Dc_2(p)$ is a singleton, the minimal demand set remains unchanged: $\Dc_2(p^1) = \{(0,1,1)\}$.
For Buyer~3, since $q'_{33}(p_3)d^1_3 = 1 < 2 = q'_{32}(p_2)d^1_2$, we have  $\Dc_3(p^1) = \{(0,0,1)\}$.
As shown in Table~\ref{rank_values2}~(a), the function $s(X) + \rhoc(X^* \setminus X; p_1)$ is minimized at $\{3\}$ and $\{2,3\}$. Then, the minimal minimizer is $\{3\}\subsetneq X^*$. 
Therefore, $d^1$ does not satisfy \LSC.

Next, let $d^2 = (0, 1, 2)$ and consider $p^2 \coloneqq p+\varepsilon d^2$. 
For Buyers 1 and 2, the minimal demand sets are determined similarly to the case of $d^1$.
For Buyer~3, by  \eqref{demand_example} and $q'_{33}(p_3)d^2_3 = 2 = q'_{32}(p_2)d^2_2$, the demand set remains unchanged. Then, the minimal demand sets at $p^2$ are given by $\Dc_1(p^2) = \{(1,0,1)\},\ \Dc_2(p^2) = \{(0,1,1)\},\ \Dc_3(p^2) = \{(0,1,0), (0,0,1)\}$. 
As shown in Table~\ref{rank_values2}~(b), the value $s(X) + \rhoc(X^* \setminus X; p^2)$ is uniquely minimized at $\{2,3\}$. Therefore, the minimal minimizer remains $X^*=\{2,3\}$ and $d^2$ satisfies \LSC.

\begin{table}[t]
\caption{Values of rank functions $\rhoc_i(\cdot ;p+\varepsilon d)$ and $s(X)+\rhoc(X^*\setminus X; p+\varepsilon d)$ under different directions.}
\label{rank_values2}
\begin{minipage}{0.48\textwidth}
\centering
\subcaption{Under $d^1=(0,1,1)$}
\begin{tabular}{c|c c c c}
    $X (\subseteq X^*)$ & $\emptyset$ & $\{2\}$ & $\{3\}$ & $\{2, 3\}$ \\
    \hline
    $\rhoc_1(X; p^1)$ & 0 & 0 & 1 & 1 \\
    $\rhoc_2(X; p^1)$ & 0 & 1 & 1 & 2 \\
    $\rhoc_3(X; p^1)$ & 0 & 0 & 1 & 1 \\
    $s(X)+\rhoc(X^*\setminus X;p^1)$ & 4 & 4 & \textbf{2} & \textbf{2}
\end{tabular}
\end{minipage}
\begin{minipage}{0.48\textwidth}
\centering
\subcaption{Under $d^2=(0, 1, 2)$}
\begin{tabular}{c|c c c c}
    $X (\subseteq X^*)$ & $\emptyset$ & $\{2\}$ & $\{3\}$ & $\{2, 3\}$ \\
    \hline
    $\rhoc_1(X; p^2)$ & 0 & 0 & 1 & 1 \\
    $\rhoc_2(X; p^2)$ & 0 & 1 & 1 & 2 \\
    $\rhoc_3(X; p^2)$ & 0 & 1 & 1 & 1 \\
    $s(X)+\rhoc(X^*\setminus X; p^2)$ & 4 & 4 & 3 & \textbf{2}
\end{tabular}
\end{minipage}
\end{table}
\end{example}

In Example \ref{need_direction}, any direction $d$ with $\operatorname{supp}_+(d)=X^*$ inevitably alters the demand set of Buyer~1. However, with an appropriate choice like $d^2$, \LSC can still be satisfied. Section~\ref{Direction_Computation} investigates the structural properties enabling this and describes an algorithm to compute such a direction $d^*$.

\subsection{Monotonicity of Rank and Requirement Functions}
We first establish the monotonicity of the rank and requirement functions with respect to price changes. A key tool is the following theorem, which refines the gross substitutes and law of aggregate demand (GS\&LAD) condition \citep{MSY2013} by establishing it specifically for minimal (resp. maximal) bundles, rather than arbitrary ones.
\begin{theorem}
\label{gs_lad}
For any two price vectors $p, p' \in \R_+^M$ with $p \leq p'$, the following hold:
\begin{itemize}
    \item[{\rm (i)}] For any $x_i \in \Dc_i(p)$ (resp. $\Dh_i(p)$), there exists $x'_i \in \Dc_i(p')$ (resp. $\Dh_i(p')$) such that
    \[
        x_{ij} \leq x'_{ij} \quad (\forall j \in M : p_j = p'_j)
        \quad \text{and} \quad
        x_i(M) \geq x'_i(M).
    \]
    \item[{\rm (ii)}] For any $x'_i \in \Dc_i(p')$ (resp. $\Dh_i(p')$), there exists $x_i \in \Dc_i(p)$ (resp. $\Dh_i(p)$) such that
    \[
        x_{ij} \leq x'_{ij} \quad (\forall j \in M : p_j = p'_j)
        \quad \text{and} \quad
        x_i(M) \geq x'_i(M).
    \]
\end{itemize}
\end{theorem}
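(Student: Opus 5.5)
We will reduce the statement to the case where $p$ and $p'$ differ in a single coordinate and then pass to the general case by chaining; the property in (i) and (ii) is transitive. Suppose $p \le p' \le p''$ and each consecutive pair satisfies the property. Then a chain $x \to x' \to x''$ of bundles obeys $x_{ij} \le x'_{ij} \le x''_{ij}$ on coordinates where $p_j = p''_j$, because then $p_j = p'_j = p''_j$. The sizes also compose: $x(M) \ge x'(M) \ge x''(M)$. For (ii) we chain backwards in the same way. So it suffices to treat $p' = p + \lambda \chi_k$ for a single good $k$ and $\lambda > 0$.

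For one good, the frictional utility is $u_i(x;p) = v_i(x) - \sum_j q_{ij}(p_j)x_j$. Raising $p_k$ changes only the per-unit payment of good $k$, from $q_{ij}(p_k)$ to $q_{ij}(p'_k) > q_{ij}(p_k)$ (strict monotonicity). All other per-unit payments are unchanged. We therefore set $\pi_j := q_{ij}(p_j)$ and $\pi'_j := q_{ij}(p'_j)$, so that $\pi \le \pi'$ with $\pi' - \pi$ supported on $\{k\}$. Now $D_i(p)$ and $D_i(p')$ are exactly the ordinary quasi-linear demand sets of the \Mn-concave valuation $v_i$ at the effective price vectors $\pi$ and $\pi'$. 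The coordinates with $p_j = p'_j$ are exactly those with $\pi_j = \pi'_j$. Thus the frictions disappear, and we must prove a GS\&LAD statement for \emph{minimal} (resp.\ maximal) demanded bundles in the frictionless model with prices $\pi \le \pi'$.

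For the frictionless statement, we use the known GS\&LAD theorem of \citet{MSY2013} for arbitrary demanded bundles, and refine it using the M-convex structure of $\Dc_i$ and $\Dh_i$ (Appendix~\ref{DemandSets}). For (i) with minimal bundles, take $x \in \Dc_i(\pi)$. GS\&LAD gives some $y \in D_i(\pi')$ with $x_j \le y_j$ where $\pi_j = \pi'_j$ and $x(M) \ge y(M)$. Choose $x' \in \Dc_i(\pi')$ with $x' \le y$; this gives the size inequality at once, but it may break the coordinatewise inequality on unchanged goods. The main obstacle is to repair this. We take $y$ chosen to minimize $y(M)$ subject to the GS condition. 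If $y$ is not minimal in $D_i(\pi')$, M$^\natural$-concavity of $u_i(\cdot;\pi')$ provides a unit decrease $y - \chi_j \in D_i(\pi')$. We must show that $j$ can be taken to be a changed good, i.e.\ $j = k$, or an unchanged good $j$ where $y_j > x_j$. The argument will be an exchange between $x$ and $y$ using the \Mn-concave exchange axiom on $v_i$ at both price vectors. Suppose the only removable units lie on unchanged coordinates with $y_j = x_j$. Then applying the exchange to the pair $(x, y - \chi_j)$ would produce a demanded bundle at $\pi$ strictly below $x$, contradicting $x \in \Dc_i(\pi)$. Iterating yields $x' \in \Dc_i(\pi')$ with both properties.

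For (ii) we will run the symmetric argument starting from $x' \in \Dc_i(\pi')$, using the reverse direction of GS\&LAD from \citet{MSY2013}. That gives $y \in D_i(\pi)$, and we then lower $y$ into $\Dc_i(\pi)$. Here lowering $y$ only helps $y(M) \ge x'(M)$ if we instead \emph{raise}, so the case needs care: we want a minimal $x \le y$ that still has $x(M) \ge x'(M)$. We will obtain this by the same exchange contradiction, applied with minimality of $x'$ at $\pi'$. The maximal-bundle versions follow by the mirror argument: we reverse the roles of decreasing and increasing unit moves and use $\Dh_i$ with exchanges adding $\chi_j$.
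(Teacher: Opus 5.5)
Your route works in outline, but it differs from the paper's, and two steps need fixing.

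\textbf{How the routes differ.} The paper proves everything directly, without invoking the known GS\&LAD theorem and without chaining. For (i) it picks $x'_i \in \Dc_i(p')$ minimizing $\|x_i - x'_i\|_1$. If some unchanged $j$ had $x_{ij} > x'_{ij}$, the \Mn-exchange between $x_i$ and $x'_i$ at $j$ would give one of two contradictions:
\begin{itemize}
\item $x_i - \chi_j \in D_i(p)$, contradicting minimality of $x_i$; or
\item $x'_i + \chi_j - \chi_{j'} \in \Dc_i(p')$, which is closer to $x_i$.
\end{itemize}
The size inequality comes separately from Lemma~\ref{size_exchange} and minimality of $x'_i$. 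Part (ii) and the $\Dh_i$ cases follow the same template. Frictions are handled inline through the nonpositive payment-difference terms, which is exactly your effective-price observation $\pi_j = q_{ij}(p_j)$.

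You instead start from an arbitrary demanded bundle supplied by GS\&LAD, which gives the size inequality for free. You then repair only the coordinate condition, via a size-minimizing choice of $y$. This isolates exactly what is new relative to \citet{MSY2013}, at the cost of relying on both directions of that external result. The reduction to single-coordinate changes is unnecessary. Your repair works verbatim for any $\pi \le \pi'$, the admissible unit removals being any changed good or any unchanged good $j$ with $y_j > x_j$.

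\textbf{First fix: the exchange step in (i) for minimal bundles.} Applying \Mn-concavity to $(x, y - \chi_j)$ at $j$ does not always produce a bundle strictly below $x$. It yields some $j' \in \operatorname{supp}_-(x - y + \chi_j) \cup \{0\}$, and only $j' = 0$ gives $x - \chi_j \in D_i(\pi)$. If $j' \neq 0$, you get $y - \chi_{j'} \in D_i(\pi')$ with $y_{j'} > x_{j'}$, using $\pi_j = \pi'_j$ and $\pi_{j'} \le \pi'_{j'}$. That is an admissible removal, so it contradicts your minimal choice of $y(M)$. With this case included, no iteration is needed.

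\textbf{Second fix: the size condition in (ii) for minimal bundles.} Lowering $y$ to any $x \in \Dc_i(\pi)$ with $x \le y$ automatically preserves $x_j \le x'_j$. The size condition then holds because all elements of the M-convex set $\Dc_i(\pi)$ share a common size. By the size part of (i), this common size is at least that of $\Dc_i(\pi')$. This is not ``the same exchange contradiction'' as your repair step. If you want a direct argument, you need Lemma~\ref{size_exchange}, as the paper uses.

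\textbf{The maximal-bundle cases.} Mirrored, (i) for $\Dh_i$ is the easy size-only case. Part (ii) for $\Dh_i$ needs the mirrored repair, again with the $j' \neq 0$ case included. There, $j' = 0$ contradicts maximality of $x'$.
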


The proof is given in Appendix~\ref{DemandSets}.
Using this strengthened condition, the following proposition formalizes the monotonicity of rank and requirement functions as prices increase. While property (ii) generalizes and strengthens Theorem 2 of \citet{GS2000}, Theorem~\ref{gs_lad} enables a remarkably concise~proof.
\begin{proposition}
\label{mu_rho_relation}
Let $p, p'$ be price vectors such that $p \leq p'$, and let $Y \subseteq \{j \in M \mid p_j = p'_j\}$. Then, for each $i \in N$, the following hold:
\begin{itemize}
    \item[{\rm (i)}] The inequalities $\rhoc_i(Y; p) \leq \rhoc_i(Y; p')$ and 
    $\rhoh_i(Y; p) \leq \rhoh_i(Y; p')$ hold.
    \item[{\rm (ii)}] The inequalities $\mu_i(Y; p) \leq \mu_i(Y; p')$ and $\mu_i(M\setminus Y; p) \geq \mu_i(M\setminus Y; p')$ hold. Furthermore, if $\mu_i(M\setminus Y; p) = \mu_i(M\setminus Y; p')$, then the equalities $\rhoc_i(M; p) = \rhoc_i(M; p')$ and $\rhoc_i(Y; p) = \rhoc_i(Y; p')$ hold.
\end{itemize}
\end{proposition}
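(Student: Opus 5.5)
The plan is to derive every inequality directly from Theorem~\ref{gs_lad}, using the characterization \eqref{rank} of rank functions as maxima over base polyhedra together with the identity \eqref{mu_rho}. Throughout, fix $i$ and note that every $j\in Y$ satisfies $p_j=p'_j$.

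\textbf{Part (i).} Choose $x_i\in\Dc_i(p)$ attaining $\rhoc_i(Y;p)=x_i(Y)$. By Theorem~\ref{gs_lad}(i) there is $x'_i\in\Dc_i(p')$ with $x_{ij}\le x'_{ij}$ for all $j\in Y$. Summing over $Y$ gives $\rhoc_i(Y;p)=x_i(Y)\le x'_i(Y)\le\rhoc_i(Y;p')$. The statement for $\rhoh_i$ follows verbatim from the ``resp.'' version of the theorem.

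\textbf{Part (ii), first two inequalities.} By \eqref{mu_rho}, $\mu_i(X;q)=\min\{x(X)\mid x\in\Dc_i(q)\}$ for any price vector $q$.

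For the first inequality, pick $x'_i\in\Dc_i(p')$ attaining $\mu_i(Y;p')$. By Theorem~\ref{gs_lad}(ii) there is $x_i\in\Dc_i(p)$ with $x_i(Y)\le x'_i(Y)$, so $\mu_i(Y;p)\le x_i(Y)\le\mu_i(Y;p')$.

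For the second inequality, pick $x_i\in\Dc_i(p)$ attaining $\mu_i(M\setminus Y;p)$. Theorem~\ref{gs_lad}(i) yields $x'_i\in\Dc_i(p')$ with $x'_i(Y)\ge x_i(Y)$ and $x'_i(M)\le x_i(M)$. Subtracting, $x'_i(M\setminus Y)\le x_i(M\setminus Y)$, hence $\mu_i(M\setminus Y;p')\le\mu_i(M\setminus Y;p)$.

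\textbf{Part (ii), equality case.} This step is the main one. Write $\mu_i(M\setminus Y;q)=\rhoc_i(M;q)-\rhoc_i(Y;q)$ via \eqref{mu_rho}.

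Part (i) already gives $\rhoc_i(Y;p)\le\rhoc_i(Y;p')$. I also need $\rhoc_i(M;p)\ge\rhoc_i(M;p')$. Since $\Dc_i(q)$ is a base polyhedron, every element has total size $\rhoc_i(M;q)$. Applying Theorem~\ref{gs_lad}(i) to any $x_i\in\Dc_i(p)$ gives an $x'_i\in\Dc_i(p')$ with $x'_i(M)\le x_i(M)$, which yields $\rhoc_i(M;p')\le\rhoc_i(M;p)$.

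Now suppose $\mu_i(M\setminus Y;p)=\mu_i(M\setminus Y;p')$. Then
\[
\rhoc_i(M;p)-\rhoc_i(M;p')=\rhoc_i(Y;p)-\rhoc_i(Y;p').
\]
The left side is $\ge 0$ and the right side is $\le 0$, so both vanish. This gives both claimed equalities.

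The only subtle point I expect is justifying that all minimal demanded bundles have equal cardinality, that is, that $\Dc_i(q)$ is a base polyhedron. The paper asserts this in Section~\ref{polymatroid_theory}, so I will cite it rather than reprove it.
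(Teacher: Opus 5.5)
Your proof is correct and follows essentially the same route as the paper: every step comes from Theorem~\ref{gs_lad}, the identity \eqref{mu_rho}, and the fact that all elements of $\Dc_i(q)$ have size $\rhoc_i(M;q)$. The differences are cosmetic. The paper uses a single bundle $z\in\Dc_i(p)$ attaining $\rhoc_i(Y;p)$ to get both $\rhoc_i(M;p)\ge\rhoc_i(M;p')$ and $\rhoc_i(Y;p)\le\rhoc_i(Y;p')$, which yields the second inequality of (ii) and its equality case at once. You instead prove that inequality directly from a bundle attaining $\mu_i(M\setminus Y;p)$, and build the equality case from part (i) plus a separate size comparison.
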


\begin{proof}
(i) Let $x \in \Dc_i(p)$ be a bundle such that $x(Y) = \rhoc_i(Y; p)$.
By Theorem~\ref{gs_lad}~(i), there exists a bundle $x' \in \Dc_i(p')$ such that $x_{j} \leq x'_{j}$ for all $j \in Y$ and $x(M) \geq x'(M)$.
Then, we have
$\rhoc_i(Y; p') = \max \{\check{x}_i(Y) \mid \check{x}_i \in \Dc_i(p')\} \geq x'(Y) \geq x(Y) = \rhoc_i(Y; p)$.
Applying the same argument to $\Dh_i$, we also have $\rhoh_i(Y; p') \geq \rhoh_i(Y; p)$.

(ii) Let $y' \in \Dc_i(p')$ be a bundle such that $y'(Y) = \mu_i(Y; p')$.
By Theorem~\ref{gs_lad}~(ii), there exists a bundle $y \in \Dc_i(p)$ such that $y_{j} \leq y'_{j}$ for each $j \in Y$.
Since $y \in \Dc_i(p)$, we have
$\mu_i(Y; p) = \min \{\check{x}_i(Y) \mid \check{x}_i \in \Dc_i(p)\} \leq y(Y) \leq y'(Y) = \mu_i(Y; p')$.

Let $z \in \Dc_i(p)$ be a bundle such that $z(Y) = \rhoc_i(Y; p)$.
By Theorem~\ref{gs_lad}~(i), there exists a bundle $z' \in \Dc_i(p')$ such that $z_{j} \leq z'_{j}$ for each $j \in Y$ and $z(M) \geq z'(M)$.
This implies $\rhoc_i(M; p) = z(M) \geq z'(M) = \rhoc_i(M; p')$ and $\rhoc_i(Y; p) = z(Y) \leq z'(Y) \leq \rhoc_i(Y; p')$.
By (\ref{mu_rho}), we have 
$\mu_i(M\setminus Y; p)=\rhoc_i(M; p) - \rhoc_i(Y; p)\geq \rhoc_i(M; p') - \rhoc_i(Y; p')=\mu_i(M\setminus Y; p')$, with equality only if 
$\rhoc_i(M; p) = \rhoc_i(M; p')$ and $\rhoc_i(Y; p) = \rhoc_i(Y; p')$.
\end{proof}

\subsection{Directional Price Updates}
To analyze an iteration of Algorithm~\ref{algo1}, we investigate the behavior of demand sets when the current price vector $p$ is perturbed along an arbitrary direction $d \in \R_{+}^M$. We first show that for small perturbations, the updated demand sets are contained within the current ones.
\begin{lemma}
\label{subset}
For any direction $d \in \R^M_+$, there exists $\delta > 0$ such that for all $\varepsilon \in (0, \delta)$ and each buyer $i \in N$, we have $D_i(p+\varepsilon d) \subseteq D_i(p)$ and $\Dc_i(p+\varepsilon d) \subseteq \Dc_i(p)$. 
\end{lemma}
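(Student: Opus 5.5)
The plan is to exploit finiteness of the bundle domain $[\zeros,s]_{\Z}$ together with continuity and piecewise-linearity of the payment functions. Fix $p$ and $d \in \R^M_+$. Since each $q_{ij}$ is piecewise-linear and continuous, there is $\delta_0>0$ such that on $[0,\delta_0)$ every map $\varepsilon \mapsto q_{ij}(p_j+\varepsilon d_j)$ is affine, namely $q_{ij}(p_j)+\varepsilon\, q'_{ij}(p_j) d_j$ (right derivative). Hence for each bundle $x$ the utility $u_i(x;p+\varepsilon d)$ equals $u_i(x;p) - \varepsilon \sum_j q'_{ij}(p_j) d_j x_j$, an affine function of $\varepsilon$ on $[0,\delta_0)$.

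\emph{Step 1: $D_i(p+\varepsilon d)\subseteq D_i(p)$.} Let $g_i \coloneqq \min\{\max_y u_i(y;p) - u_i(x;p) \mid x \notin D_i(p)\}>0$, a minimum over a finite set (take $g_i=+\infty$ if the set is empty). Let $C_i \coloneqq \max_{x}\sum_j q'_{ij}(p_j)d_j x_j <\infty$. For $\varepsilon < \min(\delta_0, g_i/(2C_i+1))$, any $x\notin D_i(p)$ and any $y\in D_i(p)$ satisfy $u_i(y;p+\varepsilon d)-u_i(x;p+\varepsilon d) \ge g_i - \varepsilon C_i >0$, so $x\notin D_i(p+\varepsilon d)$. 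Taking the minimum over the finitely many buyers gives a common $\delta$.

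\emph{Step 2: $\Dc_i(p+\varepsilon d)\subseteq\Dc_i(p)$.} This is the step needing care, since minimality is relative to the demand set, which shrinks. The key observation is that for $\varepsilon$ as above, $D_i(p+\varepsilon d)$ is exactly the set of maximizers over $D_i(p)$ of the linear perturbation, i.e.\ $D_i(p+\varepsilon d)=\argmin_{x\in D_i(p)} \sum_j q'_{ij}(p_j)d_jx_j$, independent of $\varepsilon\in(0,\delta)$, because all elements of $D_i(p)$ share the same value $u_i(\cdot;p)$. Now take $x\in\Dc_i(p+\varepsilon d)$ and suppose $x\notin \Dc_i(p)$; since $x\in D_i(p)$, there is $x'\in D_i(p)$ with $x'\lneq x$. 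Because $d\ge 0$ and $q'_{ij}>0$ (strictly increasing), the weight $w_j\coloneqq q'_{ij}(p_j)d_j\ge 0$, so $w\cdot x'\le w\cdot x$, meaning $x'$ also attains the minimum and lies in $D_i(p+\varepsilon d)$, contradicting minimality of $x$ there. Hence $\Dc_i(p+\varepsilon d)\subseteq\Dc_i(p)$.

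The main obstacle is Step 2, whose resolution hinges on the nonnegativity of $d$ and of the slopes, ensuring smaller bundles never incur larger payment increases; I would also double-check the uniform choice of $\delta_0$ via the finitely many breakpoints of the $q_{ij}$ near $p_j$.
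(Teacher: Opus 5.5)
Your proposal is correct and follows essentially the same route as the paper: a utility-gap bound over the finite domain $[\zeros,s]_{\Z}$, with $\delta$ also kept below the next slope breakpoint, gives $D_i(p+\varepsilon d)\subseteq D_i(p)$; then any demanded bundle $x'\lneq x$ incurs no larger payment increase because $q'_{ij}(p_j)d_j\ge 0$, which rules out non-minimal bundles. The only minor difference is that you take $x'$ directly from the definition of $\Dc_i(p)$, whereas the paper invokes Fact~\ref{minimal_existence} to choose it minimal, which is not actually needed.
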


Hereafter, we implicitly fix $\delta$ such that Lemma~\ref{subset} holds for the given direction.
Recall that $\rhoc_i(\cdot; p)$ and $\rhoh_i(\cdot; p)$ denote the rank functions of the minimal demand set $\Dc_i(p)$ and the maximal demand set $\Dh_i(p)$, respectively.
Let $X \coloneqq \operatorname{supp}_+(d)$ and $\rhot_i(\cdot)$ denote the rank function on $X$ obtained by contracting the set $M \setminus X$ from $\rhoc_i(\cdot; p)$, formally defined by $\rhot_i \coloneqq (\rhoc_i)^{M \setminus X}$.
The following proposition establishes the decomposition of these rank functions and the minimal demand set at~$p+\varepsilon d$.

\begin{proposition}
\label{decomposition}
For any $\varepsilon \in (0, \delta)$, the following decomposition properties hold:
\begin{enumerate}
\item[{\rm (i)}] The inclusion $D_i(p+\varepsilon d) \subseteq \{x_i\in D_i(p) \mid x_i(X)=\mu_i(X; p)\}$ holds.
\item[{\rm (ii)}] For any $X_1 \subseteq X$ and $X_2 \subseteq M \setminus X$, the rank functions decompose as $\rhoc(X_1 \cup X_2; p+\varepsilon d) = \rhoc(X_1; p+\varepsilon d) + \rhoc(X_2; p)$ and $\rhoh(X_1 \cup X_2; p+\varepsilon d)= \rhoc(X_1; p+\varepsilon d) + \rhoh(X_2; p)$.
    \item[{\rm (iii)}] For each $i\in N$, the minimal demand set $\Dc_i(p+\varepsilon d)$ admits the direct sum decomposition:
    \[
        \Dc_i(p+\varepsilon d) =  \Dc_i(p)|_{M\setminus X}\oplus  \argmin_{y_i \in B(\rhot_i)} \sum_{j \in X} q'_{ij}(p_j) d_j y_{ij}.
    \]
\end{enumerate}
\end{proposition}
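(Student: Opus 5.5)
The plan is to reduce everything to linear optimization over the demand sets at $p$. Set $c_{ij}\coloneqq q'_{ij}(p_j)\,d_j\ge 0$. Since each $q_{ij}$ is strictly increasing and piecewise-linear, its right derivative is positive, so $c_{ij}>0$ exactly for $j\in X=\operatorname{supp}_+(d)$ and $c_{ij}=0$ for $j\notin X$. Shrinking $\delta$ if necessary, each $q_{ij}$ is affine on $[p_j,p_j+\varepsilon d_j]$. Hence
\[
u_i(x;p+\varepsilon d)=u_i(x;p)-\varepsilon\sum_{j}c_{ij}x_j .
\]
Bundles outside $D_i(p)$ have a utility gap at $p$ bounded away from $0$, because the domain $[\zeros,s]_{\Z}$ is finite. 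Combining this with Lemma~\ref{subset}, I obtain the basic identity
\[
D_i(p+\varepsilon d)=\argmin_{x\in D_i(p)} c_i^\top x \qquad (\varepsilon\in(0,\delta)).
\]
All three parts will be derived from this identity together with the M$^\natural$-convexity of $D_i(p)$ and the fact that $\Dc_i(p)=B(\rhoc_i(\cdot;p))$.

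For (i), I argue by contradiction. Take $x\in D_i(p+\varepsilon d)$ and suppose $x(X)>\mu_i(X;p)$. I use the standard optimality criterion for linear optimization over an M$^\natural$-convex set: a point minimizes the linear weight $\chi_X$ iff no exchange $x-\chi_j+\chi_k$ improves it, for $j\in M$ and $k\in M\cup\{0\}$. This follows from the greedy structure (Fact~\ref{greedy_structure}) after lifting $D_i(p)$ to an M-convex set with an extra element $0$. Because $x$ does not minimize $x(X)$ over $D_i(p)$, there is then $j\in X$ and $k\in (M\setminus X)\cup\{0\}$ with $x-\chi_j+\chi_k\in D_i(p)$. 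This exchange changes the cost $c_i^\top x$ by $-c_{ij}+c_{ik}=-c_{ij}<0$, which contradicts the minimality of $x$.

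For (iii), I first show that $\Dc_i(p+\varepsilon d)=\argmin_{x\in\Dc_i(p)}c_i^\top x$. The inclusion "$\subseteq$" follows from Lemma~\ref{subset} together with the identity above. It also shows that the minimum cost over $\Dc_i(p)$ equals the minimum over $D_i(p)$. For "$\supseteq$": a cost-minimizer in $\Dc_i(p)$ then lies in $D_i(p+\varepsilon d)\subseteq D_i(p)$, and it is minimal there because it is already minimal in the larger set $D_i(p)$. Next I apply Fact~\ref{greedy_structure}(i) to $B(\rhoc_i(\cdot;p))$ with weight $w=-c_i$. The top level is $\{w_j=0\}=M\setminus X$, which contributes $B(\rhoc_i)|_{M\setminus X}=\Dc_i(p)|_{M\setminus X}$. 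The remaining levels together form $\argmax$ of $-c_i^\top y$ over the contraction $B((\rhoc_i)^{M\setminus X})=B(\rhot_i)$, because the greedy decomposition of the lower levels is exactly the greedy decomposition over the contraction. This yields the claimed direct sum.

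For (ii), the $\rhoc$ part is immediate from (iii) via the direct-sum formula for rank functions: first with $X_2=\emptyset$, to identify the $X$-component's rank as $\rhoc_i(X_1;p+\varepsilon d)$, and then summing over $i$. The $\rhoh$ identity is where I expect the main obstacle. Here I would establish an analogue of (iii) for $\Dh_i(p+\varepsilon d)$ as the set of maximal elements of $\argmin_{D_i(p)}c_i^\top x$. The approach is Fact~\ref{greedy_structure}(ii)-type reasoning on the M$^\natural$-convex set $D_i(p)$: coordinates in $M\setminus X$ carry zero cost and can be raised freely to the maximal level, giving the factor $\Dh_i(p)|_{M\setminus X}$. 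On $X$, part (i) pins the sum at $\mu_i(X;p)$, so the $X$-component coincides with the minimal-cost face already obtained for $\Dc$. The delicate step is showing that the $X$-component is independent of how the $M\setminus X$ coordinates are chosen, i.e.\ that the optimal face genuinely splits as a direct sum. I would first try to prove this by an exchange argument with $k\in M\setminus X$, using (i) to rule out exchanges that change $x(X)$.
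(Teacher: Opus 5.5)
Your parts (i), (iii) and the $\rhoc$ half of (ii) are correct. Part (i) is the paper's argument, with Lemma~\ref{perfect-matching} replaced by the local optimality criterion for linear functions on an M$^\natural$-convex set.

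For (iii) you reverse the paper's order. The paper first proves (ii): it combines (i) with the monotonicity of Proposition~\ref{mu_rho_relation}(i), hence Theorem~\ref{gs_lad}, to control the ranks on $M\setminus X$, and then derives (iii) from (ii). You instead prove $\Dc_i(p+\varepsilon d)=\argmin_{x\in\Dc_i(p)}c_i^\top x$ directly and apply Fact~\ref{greedy_structure}(i) to $B(\rhoc_i(\cdot;p))$. This gives (iii) and the $\rhoc$ identity with no GS\&LAD input.

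The genuine gap is the $\rhoh$ identity, which you only outline. Your greedy route cannot be reused verbatim there. The reason is that $\Dh_i(p+\varepsilon d)$ need not lie in $\Dh_i(p)$: it consists of the maximal elements of the cost-minimal face of $D_i(p)$, whose $X$-part is pinned at $\mu_i(X;p)$. Also, the claim that coordinates in $M\setminus X$ ``can be raised freely'' is not an argument, since raising them inside $D_i(p)$ may force changes on $X$.

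The gap can be closed with three facts.

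(a) The splitting you call delicate already follows from (i). Since $x(X)=\mu_i(X;p)$ for all $x\in D_i(p+\varepsilon d)\supseteq\Dh_i(p+\varepsilon d)$, the maximum $\rhoh_i(X;p+\varepsilon d)$ and the minimum $\rhoh_i(M;p+\varepsilon d)-\rhoh_i(M\setminus X;p+\varepsilon d)$ of $x(X)$ over this base polyhedron coincide. That is exactly the rank-additivity criterion for $\Dh_i(p+\varepsilon d)$ to split over $(X,M\setminus X)$. This is how the paper argues, and no exchange argument is needed.

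(b) For $X_2\subseteq M\setminus X$ you need $\rhoh_i(X_2;p+\varepsilon d)=\rhoh_i(X_2;p)$. The direction ``$\le$'' is Lemma~\ref{subset}, and the paper gets ``$\ge$'' from Proposition~\ref{mu_rho_relation}(i). Within your framework you can instead start at $x\in D_i(p)$ maximizing $x(X_2)$ and run local search for $c_i^\top x$ on $D_i(p)$. An improving exchange $x-\chi_j+\chi_k$ requires $c_{ij}>c_{ik}\ge 0$, since pure additions never improve; hence $j\in X$. So $x(X_2)$ never drops, and the endpoint lies in $D_i(p+\varepsilon d)$ by your basic identity.

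(c) You must write $\rhoc_i(X_1;p+\varepsilon d)$ instead of $\rhoh_i(X_1;p+\varepsilon d)$ for every $X_1\subseteq X$, not only for $X_1=X$. By Fact~\ref{minimal_existence}, each $x\in D_i(p+\varepsilon d)$ dominates some $x'\in\Dc_i(p+\varepsilon d)$. Then (i) forces $x'(X)=x(X)$, so $x'|_X=x|_X$. Your remark that the $X$-component ``coincides with the minimal-cost face'' needs exactly this argument.

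Combining (a)--(c) gives $\rhoh_i(X_1\cup X_2;p+\varepsilon d)=\rhoc_i(X_1;p+\varepsilon d)+\rhoh_i(X_2;p)$, and summing over $i$ finishes (ii).
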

Recall that $\Dc_i(p)|_{M\setminus X}=B(\rhoc_i(\cdot;p)|_{M\setminus X})$. 
Property~(i) shows that the price update restricts the demand set at $p+\varepsilon d$ to the subset of the demand set at $p$ satisfying $x_i(X)=\mu_i(X; p)$.
Property~(ii) shows that the rank functions decompose additively, 
and their values on $M \setminus X$ are determined by those at price $p$.
Property~(iii) shows that $\Dc_i(p+\varepsilon d)$ shrinks to the subset of $\Dc_i(p)$ that minimizes the payment increase. Crucially, an inappropriate choice of $d$ may cause an excessive reduction in the demand set, violating~\LSC. 

In the proof, we use the following elementary lemma from discrete convex analysis.
\begin{lemma}
\label{perfect-matching}
For each $i\in N$ and $x_i \in D_i(p)$, if $x_i(X) > \mu_i(X; p)$ for some $X \subseteq M$, then there exist $j \in X$ and $j' \in (M\setminus X)\cup\{0\}$ such that $x_i - \chi_{j} + \chi_{j'} \in D_i(p)$.
\end{lemma}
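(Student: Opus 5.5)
Our approach is to use the fact that $D_i(p)$ is an M$^\natural$-convex set, being the set of maximizers of the M$^\natural$-concave function $u_i(\cdot;p)$; separable (linear) payments preserve M$^\natural$-concavity. Fix $x_i\in D_i(p)$ with $x_i(X)>\mu_i(X;p)$, and pick $y_i\in D_i(p)$ attaining $y_i(X)=\mu_i(X;p)$. Among all such minimizers, we choose $y_i$ to minimize $\|x_i-y_i\|_1$. Since $x_i(X)>y_i(X)$, the set $\operatorname{supp}_+(x_i-y_i)\cap X$ is nonempty; let $j$ be an element of it.

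Next we apply the exchange property of the M$^\natural$-convex set $D_i(p)$ to the pair $(x_i,y_i)$ and $j\in\operatorname{supp}_+(x_i-y_i)$. This is the M$^\natural$-concavity exchange for $u_i$ restricted to its maximizers: if both points are maximizers, the inequality $u_i(x)+u_i(y)\le u_i(x-\chi_j+\chi_k)+u_i(y+\chi_j-\chi_k)$ forces both exchanged points to be maximizers. It yields $k\in\operatorname{supp}_-(x_i-y_i)\cup\{0\}$ such that both
\[
x_i-\chi_j+\chi_k\in D_i(p)\quad\text{and}\quad y_i+\chi_j-\chi_k\in D_i(p).
\]
Two cases arise. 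If $k\in(M\setminus X)\cup\{0\}$, then $j'\coloneqq k$ works directly. If $k\in X$, then $y'\coloneqq y_i+\chi_j-\chi_k$ satisfies $y'(X)=y_i(X)=\mu_i(X;p)$, so $y'$ is another minimizer. However, $\|x_i-y'\|_1=\|x_i-y_i\|_1-2$, since $j\in\operatorname{supp}_+(x_i-y_i)$ and $k\in\operatorname{supp}_-(x_i-y_i)$. This contradicts the choice of $y_i$, so the second case cannot occur.

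We expect no serious obstacle. The only point needing care is justifying that $D_i(p)$ satisfies the M$^\natural$ exchange with the simultaneous (two-sided) conclusion. This follows because $u_i(\cdot;p)=v_i-\langle q_i(p),\cdot\rangle$ is M$^\natural$-concave and both points are maximizers, so the two exchanged points must also attain the maximum (the sum cannot exceed twice the maximum). The case $k=0$ is covered since $\chi_0=\zeros$.
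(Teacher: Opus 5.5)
Your proposal is correct and follows essentially the same route as the paper. Both arguments choose a bundle attaining $\mu_i(X;p)$ that is $L_1$-closest to $x_i$, apply the two-sided M$^\natural$ exchange at some $j\in X\cap\operatorname{supp}_+(x_i-y_i)$, and rule out $j'\in X$ because that would produce a closer minimizer; your derivation of the two-sided exchange from the M$^\natural$-concavity of $u_i(\cdot;p)$ simply spells out what the paper takes from the M$^\natural$-convexity of $D_i(p)$ stated in its appendix.
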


\begin{proof}[Proof of Proposition~\ref{decomposition}.]
Choose $\delta > 0$ such that Lemma~\ref{subset} holds for all $i \in N$ under $d$, and fix $\varepsilon \in (0, \delta)$.

(i) Consider an arbitrary bundle $x_i \in D_i(p+\varepsilon d)$.
By Lemma \ref{subset}, we have $x_i \in D_i(p)$. 
From the definition of $\mu_i$, it holds that
$x_i(X) \geq \min\{x'_i(X)\mid x'_i\in D_i(p)\} =\mu_i(X; p)$.
Suppose to the contrary that $x_i(X) > \mu_i(X; p)$.
Applying Lemma \ref{perfect-matching} to $X$, there exist $j \in X$ 
and $j' \in (M \setminus X)\cup \{0\}$ such that $z_i \coloneqq x_i - \chi_{j} + \chi_{j'} \in D_i(p)$. Since $x_i,z_i\in D_i(p)$, it holds $u_i(x_i; p) = u_i(z_i; p)$.
For the utility of buyer $i$ at $p+\varepsilon d$, we have 
\begin{align*}
u_i(z_i; p+\varepsilon d) = u_i(z_i; p) - \varepsilon \sum_{\ell\in X}q'_{i\ell}(p_\ell)d_\ell z_{i\ell} 
= u_i(x_i; p+\varepsilon d) +\varepsilon q'_{ij}(p_{j})d_{j}
>u_i(x_i; p+\varepsilon d), 
\end{align*}
where the inequality follows from $\operatorname{supp}_+(d)=X$.
This contradicts the assumption that $x_i \in D_i(p+\varepsilon d)$. 
Consequently, for any $x_i \in D_i(p+\varepsilon d)$, we have $x_i \in D_i(p)$ and $x_i(X) = \mu_i(X; p)$.

(ii) Property (i) shows that for each $i \in N$, the value $x_i(X)$ is constant across all $x_i \in D_i(p+\varepsilon d)$. 
This implies that $\mu_i(X; p+\varepsilon d) = \rhoc_i(X; p+\varepsilon d) = \rhoh_i(X; p+\varepsilon d)$.
Using \eqref{mu_rho}, this~yields 
\begin{align*}
\rhoc_i(X; p+\varepsilon d)
=\mu_i(X; p+\varepsilon d)
=\rhoc_i(M; p+\varepsilon d)-\rhoc_i(M\setminus X; p+\varepsilon d)
=\rhoh_i(M; p+\varepsilon d)-\rhoh_i(M\setminus X; p+\varepsilon d).
\end{align*}
By Lemma \ref{subset}, the inclusions 
$D_i(p+\varepsilon d)\subseteq D_i(p)$ and 
$\Dc_i(p+\varepsilon d)\subseteq \Dc_i(p)$ hold. 
Then, we have
\[
\rhoc_i(X_2; p+\varepsilon d) = \max \{x_i(X_2) \mid x_i \in \Dc_i(p+\varepsilon d)\} \leq \max \{x_i(X_2) \mid x_i \in \Dc_i(p)\} = \rhoc_i(X_2; p).
\]
Replacing $\Dc_i$ with $D_i$ also yields $\rhoh_i(X_2; p+\varepsilon d) \leq \rhoh_i(X_2; p)$. The opposite directions hold by Proposition~\ref{mu_rho_relation} (i).
Consequently, for any $X_1 \subseteq X$ and $X_2 \subseteq M \setminus X$, we have
$\rhoc_i(X_1 \cup X_2; p+\varepsilon d)
    = \rhoc_i(X_1; p+\varepsilon d) + \rhoc_i(X_2; p+\varepsilon d)
    = \rhoc_i(X_1; p+\varepsilon d) + \rhoc_i(X_2; p)$.
Similarly, we have $\rhoh_i(X_1 \cup X_2; p+\varepsilon d) =\rhoc_i(X_1; p+\varepsilon d) + \rhoh_i(X_2; p)$. 
Summing over all $i \in N$ yields the desired equalities. 

(iii) 
From property (ii), the minimal demand set of each buyer admits the following decomposition:
\begin{align*}
 \Dc_i(p+\varepsilon d) &= \Dc_i(p+\varepsilon d)|_{X} \oplus \Dc_i(p+\varepsilon d)|_{M \setminus X}
 =\Dc_i(p+\varepsilon d)|_{X} \oplus \Dc_i(p)|_{M \setminus X}.
\end{align*}
By Lemma \ref{subset}, we have
$\Dc_i(p+\varepsilon d)|_{X} \oplus \Dc_i(p)|_{M\setminus X}
=\Dc_i(p+\varepsilon d) \subseteq \Dc_i(p)$, which implies $\Dc_i(p+\varepsilon d)|_{X} \subseteq 
B(\rhoc^{M\setminus X}_i) = B(\rhot_i)$.
For the utility of buyer~$i$ for a bundle~$x_i \in \Dc_i(p)$ at price $p+\varepsilon d$, 
we have
\begin{align*}
    u_i(x_i; p+\varepsilon d)
    = v_i(x_i) - \sum_{j \in M} \left( q_{ij}(p_j) + \varepsilon q'_{ij}(p_j) d_j \right) x_{ij} 
    = u_i(x_i; p) - \varepsilon \sum_{j \in X} q'_{ij}(p_j) d_{j} x_{ij}.
\end{align*}
Note that for any $x_i \in \Dc_i(p)$, the value $u_i(x_i; p)$ is constant.
For sufficiently small $\varepsilon > 0$, maximizing the utility over $\Dc_i(p)$ is equivalent to minimizing the term $\sum_{j \in X} q'_{ij}(p_j) d_j y_{ij}$ over $y_i\in B(\rhot_i)$.
Thus, we have
\[
    \Dc_i(p+\varepsilon d)
    = \argmax_{x_i\in \Dc_i(p)} u_i(x_i; p+\varepsilon d)
    = \Dc_i(p)|_{M\setminus X} \oplus \argmin_{y_i\in B(\rhot_i)} \sum_{j \in X} q'_{ij}(p_j) d_j  y_{ij}.\qquad 
\]
\end{proof}

Let $X^*$ be the minimal maximally over-demanded set at $p$, i.e., the unique minimal minimizer of  $(D_1)$ for~$p$.  Using Proposition~\ref{decomposition} (ii), we characterize \LSC\, for a direction $d$ with $\operatorname{supp}_+(d)=X^*$:

\begin{theorem}
\label{any_direction}
Let $d$ be a direction with $\operatorname{supp}_+(d)=X^*$. 
For any $\varepsilon \in (0, \delta)$, the following are~equivalent: 
\begin{itemize}
    \item[{\rm (a)}] The direction $d$ satisfies {\rm \LSC}.
    \item[{\rm (b)}] The set $X^*$ is the unique minimizer of $s(X) + \rhoc(X^*\setminus X; p+\varepsilon d)$ among all subsets $X \subseteq X^*$.
    \item[{\rm (c)}] The inequality $s(X) < \rhoc(X; p+\varepsilon d)$ holds for any non-empty subset $X \subseteq X^*$.
\end{itemize}
\end{theorem}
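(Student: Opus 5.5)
We will prove (a)$\Leftrightarrow$(b) and (b)$\Leftrightarrow$(c), fixing $\varepsilon\in(0,\delta)$ and writing $p^\varepsilon\coloneqq p+\varepsilon d$. The main tool is Proposition~\ref{decomposition}~(ii), which splits $\rhoc(\cdot;p^\varepsilon)$ additively across $X^*$ and $M\setminus X^*$. A useful preliminary remark is that the unique minimal maximizer of $\mathcal{O}(\cdot;p^{\varepsilon'})$ does not depend on $\varepsilon'\in(0,\delta)$. Indeed, for all such $\varepsilon'$ the sets $\Dc_i(p^{\varepsilon'})$ coincide by Proposition~\ref{decomposition}~(iii), because the argmin there depends only on $d$ and not on $\varepsilon'$. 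Hence \LSC{} reduces to its $\varepsilon=0$ case, which holds by definition of $X^*$, together with the statement at one fixed $\varepsilon$.

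For (a)$\Leftrightarrow$(b), we use \eqref{argmin} and write any $Z\subseteq M$ as $Z=X\cup Y$ with $X\subseteq X^*$ and $Y\subseteq M\setminus X^*$. By Proposition~\ref{decomposition}~(ii) applied to the complement $(X^*\setminus X)\cup((M\setminus X^*)\setminus Y)$,
\[
s(Z)+\rhoc(M\setminus Z;p^\varepsilon)=\bigl[s(X)+\rhoc(X^*\setminus X;p^\varepsilon)\bigr]+\bigl[s(Y)+\rhoc((M\setminus X^*)\setminus Y;p)\bigr].
\]
The objective is therefore separable. Its minimal minimizer is the union of the minimal minimizer of the first bracket over $X\subseteq X^*$ and that of the second bracket over $Y\subseteq M\setminus X^*$.

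The key step is to show that the second bracket is uniquely minimized at $Y=\emptyset$ in the minimal sense. Since $X^*$ is the minimal minimizer of $(D_1)$ at $p$, we have $s(X^*\cup Y)+\rhoc(M\setminus(X^*\cup Y);p)\ge s(X^*)+\rhoc(M\setminus X^*;p)$, with strict inequality whenever some $Y$ with this property would otherwise give a different minimal minimizer. We then express $\rhoc(M\setminus X^*;p)-\rhoc(M\setminus(X^*\cup Y);p)$ through the same decomposition at $p^\varepsilon$ (take $X_1=\emptyset$) and compare. This yields $s(Y)+\rhoc((M\setminus X^*)\setminus Y;p)\ge \rhoc(M\setminus X^*;p)$ for all $Y$. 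I expect this step to be the main obstacle. In particular, the "minimal" part may need supermodularity of $\mathcal{O}$, and in fact the statement only requires that $Y=\emptyset$ be a minimizer and that no nonempty $Y$ be forced into the minimal minimizer, which follows once the inequality holds. With this established, $\operatorname{supp}_+(d)=X^*$ is the minimal maximizer at $p^\varepsilon$ if and only if $X^*$ is the minimal minimizer of the first bracket over subsets of $X^*$. Because $X^*$ is the largest candidate, that holds exactly when $X^*$ is the unique minimizer, which is (b).

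For (b)$\Leftrightarrow$(c), the argument is a short computation. By Proposition~\ref{decomposition}~(ii), every $x\in\Dc_i(p^\varepsilon)$ has the same value $x(X^*)$, so $\Dc_i(p^\varepsilon)|_{X^*}$ is a base polyhedron and $\rhoc_i(\cdot;p^\varepsilon)|_{X^*}$ is modular on it in the sense that $\rhoc(X^*\setminus X;p^\varepsilon)=\rhoc(X^*;p^\varepsilon)-\mu(X;p^\varepsilon)$. Comparing the value at $X^*$ with that at $X^*\setminus X'$ then gives the objective difference $\rhoc(X';p^\varepsilon)-s(X')$, taken over all nonempty $X'=X^*\setminus X$. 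For this we need $\rhoc(X^*;p^\varepsilon)-\rhoc(X^*\setminus X';p^\varepsilon)$ to equal $\rhoc(X';p^\varepsilon)$, i.e., $\mu(X')=\rhoc(X')$. Strict positivity of this difference for all nonempty $X'$ is exactly (c). If exact modularity fails, we would instead invoke the uniqueness of the minimizer in (b) on nested chains, using submodularity of $\rhoc$, to reduce to singletons and then to all subsets.
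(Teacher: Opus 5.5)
Your route is the paper's. You split the $(D_1)$ objective at $p+\varepsilon d$ via Proposition~\ref{decomposition}~(ii) into an $X^*$-part and an $(M\setminus X^*)$-part, show that $\emptyset$ minimizes the second part, read off (b), and then get (c) by comparing objective values.

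Your preliminary remark adds something the paper lacks. You note that $\Dc_i(p+\varepsilon' d)$, and hence $\mathcal{O}(\cdot;p+\varepsilon' d)$, is the same for every $\varepsilon'\in(0,\delta)$ by Proposition~\ref{decomposition}~(iii). The paper silently identifies \LSC{} with the condition at the single given $\varepsilon$, and your remark is what justifies that.

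The step you call the main obstacle is not one. Compare $X^*$ with $X^*\cup Y$ in $(D_1)$ at $p$, cancel $s(X^*)$, and use $M\setminus(X^*\cup Y)=(M\setminus X^*)\setminus Y$. This gives $\rhoc(M\setminus X^*;p)\le s(Y)+\rhoc((M\setminus X^*)\setminus Y;p)$ directly, which is exactly the paper's one-line argument. You need no decomposition at $p+\varepsilon d$ and no supermodularity, and once $\emptyset$ minimizes the second part it is automatically the minimal minimizer.

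Your (b)$\Leftrightarrow$(c) paragraph contains a false claim that you should delete. The objective in (b) takes the value $s(X^*)+\rhoc(\emptyset;p+\varepsilon d)=s(X^*)$ at $X^*$. At $X^*\setminus X'$ it takes, by direct substitution and $X^*\setminus(X^*\setminus X')=X'$, the value $s(X^*\setminus X')+\rhoc(X';p+\varepsilon d)$. So the difference is $\rhoc(X';p+\varepsilon d)-s(X')$ unconditionally, and (b)$\Leftrightarrow$(c) is immediate; this is the paper's argument.

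The identity $\rhoc(X^*\setminus X;p+\varepsilon d)=\rhoc(X^*;p+\varepsilon d)-\mu(X;p+\varepsilon d)$ is true, as a base-polyhedron identity rather than modularity, but it is irrelevant here. The condition $\mu(X')=\rhoc(X')$ that you say is needed fails in general. In Example~\ref{need_direction} under $d^2$ you have $\mu(\{2\};p^2)=1<2=\rhoc(\{2\};p^2)$. If the proof really depended on it, it would break.

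Your fallback would not rescue it either. Since $\rhoc$ is only submodular, strict inequalities $s(\{j\})<\rhoc(\{j\};p+\varepsilon d)$ on singletons do not propagate to unions. With those sentences removed, your proof coincides with the paper's.
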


\begin{proof}
${\rm (a)} \Leftrightarrow {\rm (b)}:$
By Proposition \ref{decomposition} (ii), the objective value of ${\rm (D_1)}$ in \eqref{PD_frictionless} at $p+\varepsilon d$ is given by
\begin{align*}
\min_{X\subseteq M} &\big\{ s(X) + \rhoc(M\setminus X; p+\varepsilon d) \big\}\\
&=\min_{X_1\subseteq X^*,\ X_2\subseteq M\setminus X^*} 
\big\{ s(X_1 \cup X_2) + \rhoc(X^*\setminus X_1; p+\varepsilon d)+
\rhoc((M\setminus X^*)\setminus X_2; p)
\big\}\\
&=
\min_{X_1\subseteq X^*} \big\{ s(X_1) + \rhoc(X^*\setminus X_1; p+\varepsilon d) \big\}+
\min_{X_2\subseteq M\setminus X^*} \big\{ s(X_2) + \rhoc((M\setminus X^*)\setminus X_2; p) \big\}.
\end{align*}
This decomposition implies that a subset $X\subseteq M$ minimizes the total objective function if and only if its components $X\cap X^*$ and $X\cap (M \setminus X^*)$ minimize the first and second terms, respectively.
Since $X^*$ is the minimal minimizer of  $(D_1)$, it follows that 
$\rhoc(M\setminus X^*; p) \le s(X_2)+\rhoc((M\setminus X^*)\setminus X_2; p)$ 
for any $X_2 \subseteq M \setminus X^*$.
This implies that the minimum of the second term is achieved at $\emptyset$. 
Therefore, \LSC\, holds if and only if $X^*$ is the unique minimizer of the first term, which corresponds to (b).

(b) $\Leftrightarrow$ (c): Condition (b) is equivalent to the inequality
$s(X^*) < s(X) + \rhoc(X^*\setminus X; p+\varepsilon d)$ for all $X \subsetneq X^*$.
Subtracting $s(X)$ from both sides yields $s(X^*\setminus X) < \rhoc(X^*\setminus X; p+\varepsilon d)$.
By setting $Y \coloneqq X^* \setminus X$, this is equivalent to $s(Y) < \rhoc(Y; p+\varepsilon d)$ for all $\emptyset \neq Y \subseteq X^*$, which is exactly (c).
\end{proof}

\subsection{Convergence to the Minimum Equilibrium Price Vector}
The set of equilibrium price vectors forms a distributive lattice, as established by \citet{S2022}. This guarantees the existence of a unique minimum equilibrium price vector, denoted by $\underline{p}\in \mathbb R^M_+$. 
We now prove Theorem~\ref{minimum_eq}, which states that the output $p^*$ of Algorithm~\ref{algo1} coincides with $\underline{p}$. Regarding the choice of $d^*$, 
our argument relies solely on \LSC.

In the following, we prove Theorem \ref{minimum_eq}. We first show that $p^*$ is an equilibrium price vector, and then verify its minimality. 
A key ingredient for the first part is the absence of under-demanded~sets.

\begin{proposition}
\label{no_underdemanded}
Throughout the execution of Algorithm~\ref{algo1}, no set is under-demanded.
\end{proposition}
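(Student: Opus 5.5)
Unlike Theorem~\ref{minimum_eq}, this statement should be provable without \LSC, by induction over the iterations of Algorithm~\ref{algo1}. We need $\mathcal{U}(Y;p)=\rhoh(Y;p)-s(Y)\ge 0$ for every $Y\subseteq M$ and every price vector $p$ visited, including the intermediate points $p+\varepsilon d^*$ along each step.

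\textbf{Base case.} At $p=\zeros$ all payments vanish, so $u_i(x_i;\zeros)=v_i(x_i)$. By monotonicity of $v_i$, the supply bundle $s$ (as a maximal element of $[\zeros,s]_{\Z}$) lies in $D_i(\zeros)$, and hence in $\Dh_i(\zeros)$. Therefore $\rhoh_i(Y;\zeros)\ge s(Y)$ for each $i$. Since $n\ge 2$, summing over buyers gives $\rhoh(Y;\zeros)\ge s(Y)$, so $\mathcal{U}(Y;\zeros)\ge 0$.

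\textbf{Inductive step.} Suppose no set is under-demanded at the current $p$, and let $X=X^*=\operatorname{supp}_+(d^*)$. Fix $Y\subseteq M$ and write $Y_1=Y\cap X$ and $Y_2=Y\setminus X$. There are two cases.
\begin{itemize}
\item \emph{Small $\varepsilon$.} For $\varepsilon\in(0,\delta)$, Proposition~\ref{decomposition}(ii) gives
\[
\rhoh(Y;p+\varepsilon d^*)=\rhoc(Y_1;p+\varepsilon d^*)+\rhoh(Y_2;p).
\]
\item \emph{Whole step.} For the entire segment up to $p+\bar c(p,d^*)d^*$, I would not use Proposition~\ref{decomposition}, which is only local. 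Instead I would apply Proposition~\ref{mu_rho_relation}(i) with $p\le p'$ on the coordinates $Y_2\subseteq M\setminus X$, where prices are unchanged. This gives $\rhoh(Y_2;p')\ge\rhoh(Y_2;p)$.
\end{itemize}

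The key inequality to establish is therefore
\[
\rhoh(Y_1;p')+\bigl[\rhoh(Y;p')-\rhoh(Y_1;p')\bigr]\ge s(Y_1)+s(Y_2).
\]
I would split it as $\rhoh(Y;p')\ge \rhoh(Y_1;p')+\rhoh(Y_2;p')-(\text{slack})$. This needs care, because submodularity points the wrong way.

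A better route is to pass through over-demand. Since $\Dc_i(p')\subseteq D_i(p')$, we have $\rhoh_i(Y;p')\ge \max_{x\in D_i(p')}x(Y)\ge \mu_i(Y;p')$. Moreover, throughout the step, $X^*$ remains a maximizer of $\mathcal{O}(\cdot;p')$ with $\mathcal{O}(X^*;p')\ge 0$: stopping condition (a) is not yet triggered, and $\mathcal{O}(X^*;p)>0$ at the start, since $\mathcal{O}(\emptyset)=0$ and $X^*\neq\emptyset$ is the minimal maximizer. The supermodularity of $\mathcal{O}$ then gives, for $Y_1\subseteq X^*$,
\[
\mathcal{O}(X^*)\ge \mathcal{O}(X^*\cup\cdot)\quad\text{and}\quad \mathcal{O}(Y_1)\ge \mathcal{O}(X^*)-\mathcal{O}(X^*\setminus Y_1)+\dots .
\]
In concrete terms, minimality of $X^*$ in (D$_1$) along the segment yields $s(Y_1)\le \rhoc(Y_1;p')$, which is exactly condition (c) of Theorem~\ref{any_direction} in non-strict form. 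I would prove this non-strict inequality directly: if $s(Y_1)>\rhoc(Y_1;p')$, then $X^*\setminus Y_1$ would give a strictly smaller value of $s(X)+\rhoc(M\setminus X;p')$ than $X^*$, contradicting that $X^*$ still minimizes. Combining this with the additivity from Proposition~\ref{decomposition}(ii), and with the inductive hypothesis applied to $Y_2$ at $p$, yields
\[
\rhoh(Y;p')\ge \rhoc(Y_1;p')+\rhoh(Y_2;p)\ge s(Y_1)+s(Y_2).
\]

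\textbf{Main obstacle.} The hard part is extending the additive decomposition of $\rhoh$ beyond the infinitesimal range $(0,\delta)$ to the whole step length $\bar c$. I would handle this by chaining: any point $p'$ on the segment before the stopping event is itself a valid ``current price'' with the same minimal maximizer $X^*$. I would then reapply the local argument at $p'$, using the inequality $\rhoh(Y_2;p')\ge\rhoh(Y_2;p)$ from Proposition~\ref{mu_rho_relation}(i). Finally, I would pass to the endpoint by upper semicontinuity of demand sets, since the closedness of $\arg\max$ gives $\rhoh(\cdot;\text{limit})\ge\limsup$.
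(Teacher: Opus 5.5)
Your final argument is correct and is essentially the paper's proof. It uses the same base case and then the paper's Lemma~\ref{no_underdemanded_local}: Proposition~\ref{decomposition}(ii), plus $s(Y_1)\le\rhoc(Y_1;p+\varepsilon d^*)$, plus the inductive hypothesis on $Y_2$, followed by chaining along the step and upper semicontinuity at the endpoint. The one variation is that you derive the inequality directly from $X^*$ remaining a minimizer of $(D_1)$ (which stopping rule (a) guarantees) rather than citing Theorem~\ref{any_direction}(c); this is the same information \LSC supplies, so the argument does not really avoid it. The supermodularity detour is unnecessary, and as written its inequality runs the wrong way. Upper semicontinuity is also what you need at interior points of the step where demand sets jump up, not only at the endpoint: at such points no earlier base price lies within the local $\delta$ of Lemma~\ref{subset}.
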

We prove this proposition by induction using the following lemma, which establishes the local property that the absence of under-demanded sets is preserved under a sufficiently small price perturbation.
\begin{lemma}
\label{no_underdemanded_local}
Let $d\in \R^M_+$ be a direction with $\operatorname{supp}_+(d)=X^*$ satisfying \LSC. If no set is under-demanded at $p$, then no set is under-demanded at 
$p+\varepsilon d$ for any $\varepsilon \in (0,\delta)$.
\end{lemma}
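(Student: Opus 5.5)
We must show $\rhoh(Y; p+\varepsilon d) \ge s(Y)$ for every $Y \subseteq M$. We split $Y$ as $Y = Y_1 \cup Y_2$ with $Y_1 \coloneqq Y \cap X^*$ and $Y_2 \coloneqq Y \setminus X^*$. By Proposition~\ref{decomposition}~(ii), applied with $X = \operatorname{supp}_+(d) = X^*$,
\[
\rhoh(Y; p+\varepsilon d) = \rhoc(Y_1; p+\varepsilon d) + \rhoh(Y_2; p).
\]
It therefore suffices to prove the two separate inequalities $\rhoc(Y_1; p+\varepsilon d) \ge s(Y_1)$ and $\rhoh(Y_2; p) \ge s(Y_2)$.

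The first inequality follows directly from \LSC. By Theorem~\ref{any_direction}~(c), we have $s(Y_1) < \rhoc(Y_1; p+\varepsilon d)$ whenever $Y_1 \neq \emptyset$. When $Y_1 = \emptyset$, both sides are $0$. So in either case $\rhoc(Y_1; p+\varepsilon d) \ge s(Y_1)$.

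The second inequality is the hypothesis at $p$. Since no set is under-demanded at $p$, we have $\mathcal{U}(Y_2; p) \ge 0$, that is, $\rhoh(Y_2; p) \ge s(Y_2)$. Adding the two inequalities gives $\mathcal{U}(Y; p+\varepsilon d) \ge 0$ for every $Y$, which is the claim.

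The only point I expect to need care is the domain of Theorem~\ref{any_direction} and the choice of $\delta$. That theorem states the equivalence for each $\varepsilon \in (0,\delta)$, where $\delta$ is the value fixed by Lemma~\ref{subset}. I need to confirm that \LSC (condition (a)) yields (c) at every such $\varepsilon$, and not just at some of them. If the $\delta$ in \LSC is smaller than the one from Lemma~\ref{subset}, I would replace $\delta$ by the minimum of the two. I would also rely on the fact that the decomposition in Proposition~\ref{decomposition} is uniform over $(0,\delta)$, which makes the conclusion hold on the whole interval.
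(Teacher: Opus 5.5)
Your proof is correct and is essentially the paper's argument. You split $Y$ along $X^*$, apply Proposition~\ref{decomposition}~(ii), bound the $X^*$-part via Theorem~\ref{any_direction}~(c), and bound the complementary part via the hypothesis at $p$. Your worry about $\delta$ is unnecessary: condition (a) of Theorem~\ref{any_direction} does not depend on $\varepsilon$, and the equivalence is asserted for every $\varepsilon \in (0,\delta)$, so no shrinking of $\delta$ is needed. This agrees with your observation that $\rhoc(\cdot; p+\varepsilon d)$ is constant on that interval by Proposition~\ref{decomposition}~(iii).
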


\begin{proof}
Since no set is under-demanded at $p$, it holds that $\rhoh(X;p) - s(X)\geq 0$ for any $X\subseteq M$.
By the choice of $d$, Theorem \ref{any_direction} (c) yields $s(X_1) \leq \rhoc(X_1; p+\varepsilon d)$ for any $X_1 \subseteq X^*$. 
Then, for any $X_1 \subseteq X^*$ and $X_2 \subseteq M \setminus X^*$, we have 
$\rhoh(X_1 \cup X_2; p+\varepsilon d)- s(X_1 \cup X_2)
        = \rhoc(X_1; p+\varepsilon d) - s(X_1) + \rhoh(X_2; p) - s(X_2) 
        \geq \rhoh(X_2; p) - s(X_2)\geq 0$,
where the equality holds by Proposition~\ref{decomposition}~(ii).
Therefore, no set is under-demanded at $p+\varepsilon d$.
\end{proof}

\begin{proof}[Proof of Proposition \ref{no_underdemanded}.]
We prove this claim by induction on the number of iterations. Consider the initial state $p=\zeros$. By monotonicity, every buyer demands all available goods, so $s \in \Dh_i(\zeros)$ for each $i \in N$. This implies $\rhoh(X; \zeros) \geq s(X)$ for all $X \subseteq M$ and thus no set is initially under-demanded.

Assume that no set is under-demanded at the start of an iteration. Let $p$ be the price at the end of the iteration, reached along the direction $d^*$.
Since $d^*$ satisfies \LSC\, throughout the iteration, Lemma~\ref{no_underdemanded_local} guarantees the absence of under-demanded sets at any intermediate price. Now, consider the behavior of demand sets as the price approaches $p$ from below. 
For a sufficiently small $\varepsilon > 0$, we can ensure that the demand sets of buyers remain unchanged over the interval $[p - \varepsilon d^*, p)$. 
Then, we~have
$s(X) \leq \rhoh(X; p - \varepsilon d^*)$ for all $X \subseteq M$.
Due to the upper semicontinuity of the demand correspondence, the inclusion $D_i(p - \varepsilon d^*) \subseteq D_i(p)$ holds for every buyer $i \in N$.
This implies $\rhoh(X; p - \varepsilon d^*) \leq \rhoh(X; p)$ 
and thus we have $s(X) \leq \rhoh(X; p)$, which completes the inductive step.
Therefore, no set is under-demanded throughout the execution of Algorithm~\ref{algo1}.

\end{proof}

Now we are ready for the proof of Theorem \ref{minimum_eq}.
\begin{proof}[Proof of Theorem \ref{minimum_eq}.]
The termination condition of Algorithm~\ref{algo1}, together with Proposition~\ref{no_underdemanded}, guarantees that no set is over-demanded or under-demanded at $p^*$.
By Lemma~\ref{price_characterization}, this implies that $p^*$ is an equilibrium price vector and satisfies $p^* \geq \underline{p}$. Therefore, it suffices to show that $p^* \le \underline{p}$.

Suppose to the contrary that $p^* \not\le \underline{p}$.
On the price path of Algorithm~\ref{algo1}, there exists a price $p$ such that $p \le \underline{p}$, but any further update along the current direction $d^*$, which satisfies \LSC\, at $p$, violates this bound; specifically, $p_j = \underline{p}_j$ for some $j \in \operatorname{supp}_+(d^*)$.
Since $d^*$ satisfies \LSC\, at $p$, the set $X^* \coloneqq \operatorname{supp}_+(d^*)$ is the minimal minimizer of ${\rm (D_1)}$ at $p$ and thus satisfies
\begin{equation}
\label{min_overdemanded}
s(X^*)+\rhoc(M\setminus X^*; p) < s(X)+\rhoc(M\setminus X; p)\quad (X\subsetneq X^*).
\end{equation}


Now we partition $X^*$ into $X_1 \coloneqq \{ j \in X^* \mid p_j = \underline{p}_j \}$ and $X_2 \coloneqq X^* \setminus X_1$.
As $p_j = \underline{p}_j$ holds for at least one $j \in \operatorname{supp}_+(d^*)=X^*$, we have $X_1 \neq \emptyset$, which implies $X_2 \subsetneq X^*$.
We now consider a (virtual) direction 
$d' \in \R^M_+$ with $\operatorname{supp}_+(d')=X_2$.
Choose $\delta > 0$ such that Lemma \ref{subset} holds for this new $d'$.
Since $p_j < \underline{p}_j$ for all $j \in X_2$, there exists a sufficiently small $\varepsilon \in (0,\delta)$ 
such that $p + \varepsilon d' < \underline{p}$.
Note that $p_j+\varepsilon d'_j =  \underline{p}_j$ for all $j \in X_1$.
Applying Proposition \ref{mu_rho_relation} to $p+\varepsilon d'$ and $\underline{p}$ for the set $X_1$, we have 
\begin{equation*}
\mathcal O(X_1; \underbar{$p$})=\sum_{i\in N}\mu_i(X_1; \underbar{$p$})-s(X_1)
\geq \sum_{i\in N}\mu_i(X_1;  p+\varepsilon d')-s(X_1).
\end{equation*}
For the right-hand side (RHS) of this inequality, we have 
\begin{align*}
{\rm (RHS)}&=
\rhoc(M;  p+\varepsilon d')-\rhoc(M\setminus X_1;  p+\varepsilon d')-s(X_1)&& ({\rm by\  (\ref{mu_rho})})\\
&=\rhoc(X_2; p+\varepsilon d')+\rhoc(M\setminus X_2; p)-\rhoc(M\setminus X_1; p+\varepsilon d') -s(X_1)&& ({\rm by\  Proposition\ \ref{decomposition}\ (ii)})\\
&\geq -\rhoc(M\setminus X^*; p+\varepsilon d')+\rhoc(M\setminus X_2; p)-s(X_1) && ({\rm by\ submodularity\ of\ }\rhoc)\\
&=- \left(\rhoc(M\setminus X^*; p)+s(X^*)\right)+\left(s(X_2)+\rhoc(M\setminus X_2; p)\right)  && ({\rm by\  Proposition\  \ref{decomposition}\ (ii)})\\
&>0. &&({\rm by\  (\ref{min_overdemanded})})
\end{align*}
This implies $\mathcal O(X_1; \underbar{$p$})>0$, 
which contradicts Lemma \ref{price_characterization}.
Therefore, we conclude that $p^* = \underline{p}$.
\end{proof}

\section{Computing the Directions for Price Updates}
\label{Direction_Computation}

In this section, we focus on the computation of a direction satisfying \LSC. Such a direction is not necessarily unique; we compute a specific one derived from an $\mathrm{L}^{\natural}$-convex minimization problem. We first present an explicit construction for a special case, followed by a general method based on lexicographic optimization and duality. Throughout this section, we fix the current price vector $p$ and let $X^*$ denote the minimal maximally over-demanded set at $p$.

\subsection{Warm-Up: Direction for the Separable Case}
Here, we examine a special case where the payment functions are separable across buyers and goods. 
\begin{assumption}
\label{simple-case}
There exist $\alpha\in \R^N_{++}$ and $\beta\in \R^M_{++}$ such that 
$q_{ij}'(p_j) = \alpha_i\,\beta_j$ for all $i \in N$, $j \in M$, and $p_j\in \R_+$.
\end{assumption}
This assumption captures scenarios such as international trade and regional taxation, where frictions arise from the interplay between buyer-specific and good-specific factors.\footnote{This assumption extends to the case where each $\beta_j$ is piecewise constant in $p_j$. In this case, 
whenever some $\beta_j$ changes, the auction restarts from the current price vector as its initial point, with the potential function (introduced in Section~\ref{potential}) updated accordingly.} 
Under this setting, Algorithm~\ref{algo1} generalizes the ascending auctions of \citet{GS2000} and \citet{A2006}, recovering their models when $\alpha_i = 1$ and $\beta_j = 1$ for all $i$ and $j$.
We now show that a simple direction supported on $X^*$ satisfies~\LSC.

\begin{proposition}
\label{simple}
Under Assumption~\ref{simple-case}, let $d^* \in \R_+^M$ be a direction defined by $d^*_j = 1/\beta_j$ for all $j \in X^*$ and $d^*_j = 0$ for all $j \in M \setminus X^*$.
Then, the direction $d^*$ satisfies \LSC.
\end{proposition}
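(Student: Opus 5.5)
Under Assumption~\ref{simple-case} with $d^*_j = 1/\beta_j$ on $X^*$, we have $q'_{ij}(p_j)d^*_j = \alpha_i$ for every $j\in X^*$. The plan is to use this to show that the perturbation does not shrink the minimal demand sets on $X^*$ at all, and then to verify condition (c) of Theorem~\ref{any_direction} at $p+\varepsilon d^*$ for small $\varepsilon\in(0,\delta)$.

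The first step is to apply Proposition~\ref{decomposition}~(iii) with $X = X^*$. The objective there becomes $\sum_{j\in X^*}\alpha_i y_{ij} = \alpha_i\, y_i(X^*)$. This is constant on the base polyhedron $B(\rhot_i)$, since every base has $y_i(X^*) = \rhot_i(X^*)$. Hence the argmin is all of $B(\rhot_i)$, and
\[
\Dc_i(p+\varepsilon d^*) = \Dc_i(p)|_{M\setminus X^*}\oplus B(\rhot_i).
\]
Consequently, for every $Y\subseteq X^*$, $\rhoc_i(Y;p+\varepsilon d^*) = \rhot_i(Y) = \rhoc_i(Y\cup(M\setminus X^*);p) - \rhoc_i(M\setminus X^*;p)$.

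The second step verifies (c). Fix a nonempty $Y\subseteq X^*$ and set $X\coloneqq X^*\setminus Y\subsetneq X^*$. Since $X^*$ is the unique minimal minimizer of $(D_1)$ at $p$, it is strictly better than $X$:
\[
s(X^*) + \rhoc(M\setminus X^*;p) < s(X) + \rhoc(M\setminus X;p).
\]
Because $M\setminus X = Y\cup(M\setminus X^*)$, this rearranges to $s(Y) < \rhoc(Y\cup(M\setminus X^*);p) - \rhoc(M\setminus X^*;p)$. Summing the first-step identity over $i$, the right-hand side equals $\rhoc(Y;p+\varepsilon d^*)$. This is exactly condition (c), so Theorem~\ref{any_direction} gives \LSC.

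The only step requiring care is the first: applying Proposition~\ref{decomposition}~(iii) with $X=\operatorname{supp}_+(d^*)=X^*$ and using that a linear objective with uniform weights is constant on a base polyhedron. The contraction identity for $\rhot_i$ then makes the rest a direct rewriting of the minimality of $X^*$. In the degenerate case $X^*=\emptyset$ the direction is zero and the statement is vacuous, or is handled by Line~3 of Algorithm~\ref{algo1}.
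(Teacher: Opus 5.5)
Your proof is correct and follows the paper's argument essentially step for step. You use Proposition~\ref{decomposition}~(iii) to show that the uniform weights $q'_{ij}(p_j)d^*_j=\alpha_i$ give $\Dc_i(p+\varepsilon d^*)=\Dc_i(p)|_{M\setminus X^*}\oplus B(\rhot_i)$, then rewrite the strict minimality of $X^*$ in $(D_1)$ as $s(Y)<\rhoc(Y;p+\varepsilon d^*)$ and invoke Theorem~\ref{any_direction} [(c)$\Rightarrow$(a)].
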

\begin{proof}
Under Assumption \ref{simple-case}, setting $d^*_j = 1/\beta_j$ for all $j \in X^*$ yields, for any $i \in N$:
\[
    \argmin_{y_i \in B(\rhot_i)} \sum_{j \in X^*} q'_{ij}(p_j) d^*_j y_{ij}
    = \argmin_{y_i \in B(\rhot_i)} \sum_{j \in X^*}  \alpha_i y_{ij}
    =\argmin_{y_i \in B(\rhot_i)}  y_{i}(X^*)=B(\rhot_i),
\]
where the last equality holds since $y_i(X^*)$ is constant over $B(\rhot_i)$.\footnote{Specifically, $B(\rhot_i)$ corresponds to the contraction of $B(\rhoc_i)$ by $M \setminus X^*$ (see Proposition~\ref{decomposition}). Thus, for any $y_i \in B(\rhot_i)$, the total sum is constant: $y_i(X^*) = \rhoc_i(M) - \rhoc_i(M \setminus X^*)$.}
By Proposition~\ref{decomposition} (iii), we have
\begin{equation}
\label{decompose_separable}
    \Dc_i(p+\varepsilon d^*)
    = \Dc_i(p)|_{M\setminus X^*} \oplus \argmin_{y_i \in B(\rhot_i)} \sum_{j \in X^*} q'_{ij}(p_j) d^*_j y_{ij}
    = \Dc_i(p)|_{M\setminus X^*} \oplus B(\rhot_i).
\end{equation}
Since $X^*$ is the minimal minimizer of $(D_1)$ for $p$, $s(X^*)+\rhoc(M\setminus X^*;p)<s(X^*\setminus X)+\rhoc(X\cup (M\setminus X^*);p)$ for any non-empty $X \subseteq X^*$.
This yields $s(X) < \rhoc(X\cup (M\setminus X^*);p)-\rhoc(M\setminus X^*;p)=\sum_{i\in N}\rhot_i(X)=\rhoc(X;p+\varepsilon d^*)$, 
where the last equality holds by \eqref{decompose_separable}.
By Theorem~\ref{any_direction} [${\rm (c)}\Rightarrow{\rm (a)}$], we conclude that $d^*$ satisfies \LSC.
\end{proof}

When a direction $d$ satisfying \LSC is available in closed form -- as in Proposition~\ref{simple} -- Line~2 of Algorithm~\ref{algo1} reduces to identifying $X^*$ and outputting the associated direction $d$.
The computational complexity of this iteration is $O(n \cdot \mathrm{DO} + nm^3 \cdot \mathrm{ExO})$ following the algorithm of \citet{ENPRVV2025}, where $\mathrm{DO}$ and $\mathrm{ExO}$ denote the costs of a single call to the demand and exchange oracles, respectively.
In the general case, however, such a closed-form direction is not guaranteed to exist. This necessitates a more systematic~approach.

\subsection{Direction in the General Case}
To compute a valid direction, 
we develop a framework utilizing lexicographic optimization and duality. 
We outline the high-level idea below, with the technical roadmap in Figure~\ref{fig:flowchart}. 
Since the structure of the optimal set in Fact~\ref{greedy_structure} (i) depends solely on the ordering of coefficients, taking the logarithm preserves this structure while transforming the multiplicative coefficients into additive terms. For any 
$d$ with $\operatorname{supp}_+(d)=X^*$, the decomposition in Proposition~\ref{decomposition}~(iii) is equivalent~to:
    \begin{equation}
    \label{decomposition_log}
\Dc_i(p+\varepsilon d) 
=\Dc_i(p)|_{M\setminus X^*}\oplus \argmin_{y_i \in B(\rhot_i)} \sum_{j \in X^*} (\log q'_{ij}(p_j) + \log d_j) y_{ij},
    \end{equation}
where we recall that $\rhot_i \coloneqq (\rhoc_i (\cdot;p))^{M \setminus X^*}$.
This indicates that the demand restricted to $X^*$ at $p+\varepsilon d$ is determined by minimizing a linear objective with coefficients $\log q'_{ij}(p_j) + \log d_j$.
Exploiting the fact that the term $\log q'_{ij}(p_j)$ is fixed at the current price, we extend $(P_1)$ to a lexicographic optimization problem, denoted as $(P_2)$, that minimizes $\sum_{i,j} \log q'_{ij}(p_j) x_{ij}$ as a secondary objective.\footnote{This approach is consistent with previous frameworks for ITU models (e.g.,~\citet{A1992} and \citet{BEF2024}) but substantially generalizes them by accommodating strong substitutes valuations under demand and exchange oracles.}

We then analyze the dual of $(P_2)$, denoted as $(D_2)$, and construct the desired direction 
from its minimal optimal solution.
To justify this construction, we utilize the characterization in Theorem \ref{any_direction} and 
the $\mathrm{L}^{\natural}$-convexity of the dual objective function. The dual analysis of the lexicographic extension, leveraging its discrete convexity to determine the direction, is our main technical contribution.

\begin{figure}[t]
    \centering
    \begin{tikzpicture}[
  font=\small,
  >=Latex,
  node distance=10mm and 18mm,
  every node/.style={align=left, inner sep=2pt},
  dashedlink/.style={dashed, line width=0.6pt},
  vlink/.style={-Latex, line width=0.6pt},
  dblv/.style={{Latex}-{Latex}, line width=0.6pt},
  dblone/.style={double, -{Latex}, line width=0.6pt},
  dbldbl/.style={double, {Latex}-{Latex}, line width=0.6pt}
  flow/.style={-{Latex}, line width=0.6pt},
  flowd/.style={-{Latex}, dashed, line width=0.6pt},
  imp/.style={double, -{Implies}, line width=0.6pt},
  equiv/.style={double, {Implies}-{Implies}, line width=0.6pt},
]

\node (P1) at (0, 1.6) {$(P_1)$};
\node (D1) [right=28mm of P1] {$(D_1)$};
\draw[dashedlink] (P1.east) -- (D1.west)
  node[midway, above=1mm] {Duality};

\node (P2) [below=12mm of P1] {$(P_2)$};
\node (D2) [right=28mm of P2] {$(D_2)$};
\draw[dashedlink] (P2.east) -- (D2.west)
  node[midway, below=1mm] {Duality (Thm 4.3)};

\draw[vlink] (P1.south) -- (P2.north)
  node[midway, left=2mm] {Lexicographic\\Extension};


\node (F1) [right=1mm of D1, yshift=-1mm]
  {$\displaystyle \min_{X\subseteq M}\ \{\,s(X)+\check{\rho}(M\setminus X;p)\,\}$};

\node (F2) [right=2mm of D2, yshift=-1mm] {$\displaystyle \min_{z\ge \bm{0}}\ \ \mathcal L(z)$};
\node (Xstard) [below=8mm of D2] {Both $X^{*}$ and $d^{*}$ can be recovered \\ from the minimal solution};
\draw[vlink, dashed] (Xstard.north) -- (D2.south) 
  node[midway, anchor=west, right=2mm] {};

\node (poly) [right=19mm of D2]
  {$\cdots$\quad Polyhedral ${\rm L}^{\natural}$-convex \ \ (Thm 4.3)};
\node (ineq) [below=7mm of poly]
  {\qquad\qquad\qquad$s(X)\ <\ \check{\rho}(X; p+\varepsilon d^{*})$ \ for\ any\ $\emptyset\neq X\subseteq X^{*}$};

\draw[imp] (poly.south) -- (ineq.north)
  node[midway, right=3mm] {(Thm 4.4, 4.5)};
\node (lsc) [below=7mm of ineq, font=\bfseries] {(LSC)};

\draw[equiv] (ineq.south) -- (lsc.north)
  node[midway, right=3mm, font=\small] {(Thm 3.10)};
\end{tikzpicture}
\caption{Overview of the proposed framework. The subsequent analysis focuses on the problem $(D_2)$, the dual of $(P_2)$. Unlike $(D_1)$, which yields only the set $X^*$, the minimal solution $z^*$ of $(D_2)$ identifies both the set $X^*$ and the direction $d^*$ satisfying \LSC.}
    \label{fig:flowchart}
\end{figure}
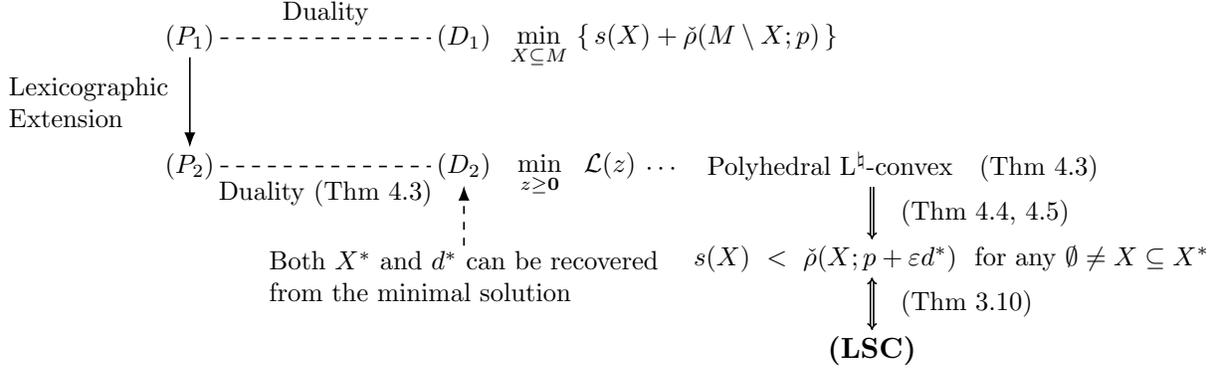
For the lexicographic extension of $(P_1)$, define the weight vector $w \in \R^{N \times M}_+$ by
$w_{ij} \coloneqq 1 - \gamma \log q'_{ij}(p_j)$ 
for all  $(i, j) \in N \times M$, 
where $\gamma>0$ is a sufficiently small constant. 
Hereafter, comparisons involving $\gamma$ are based on the lexicographic order; that is, we define $a_1 + \gamma b_1 > a_2 + \gamma b_2$ if $a_1 > a_2$ or ($a_1 = a_2$ and $b_1 > b_2$).
The other relations are defined analogously. 
Consider the following weighted polymatroid sum problem:
\[
   (P_2) \quad \max_{(x_1, \dots, x_n)} \Biggl\{\, \sum_{i \in N} \sum_{j \in M} w_{ij}x_{ij} \;\mid\; x_i \in P(\rhoc_i) \ (\forall i),\ \ \sum_{i \in N} x_i \le s \,\Biggr\},
\]
where we abbreviate $P(\rhoc_i(\cdot; p))$ as $P(\rhoc_i)$.
Since $(P_2)$ reduces to $(P_1)$ as $\gamma \to 0$, the problem $(P_2)$ seeks a solution from the set of optimal solutions to $(P_1)$ that minimizes the secondary objective $\sum_{i,j} \log q'_{ij}(p_j) x_{ij}$.
The corresponding dual problem $(D_2)$ is to minimize the function $\mathcal L: \mathbb R^M_+ \to \mathbb R_+$, defined by:
\begin{equation*}
\begin{aligned}
(D_2) \quad  \min_{z\in \mathbb R^M_{+}}\ \  \mathcal L(z), \quad
\text{where} \quad  \mathcal L(z) \coloneqq \sum_{i \in N} \max_{x_i\in P(\rhoc_i)} \left\{\sum_{j\in M} (w_{ij}-z_j) x_{ij}\right\} + \sum_{j \in M} s_j z_j.
\end{aligned}
\end{equation*}

We begin our analysis by investigating the structural properties of the dual problem $(D_2)$.
\begin{theorem}
For the problem $(D_2)$, the following hold:
\label{convex-dual}
\begin{itemize}
\item[{\rm (i)}] The set of minimizers of $(D_2)$ forms a lattice and the unique minimal solution $z^*$ exists.
\item[{\rm(ii)}] The optimal value of $(D_2)$ coincides with that of the problem $(P_2)$.
\item[{\rm(iii)}] The unique minimal solution $z^*$ can be expressed as $z^* =\chi_{\scriptscriptstyle X^*} + \gamma  t$ for some $ t \in \R^M$.
\end{itemize}
\end{theorem}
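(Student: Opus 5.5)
The plan is to study the three parts of the statement through the structure of $\mathcal L$, treating $\gamma$ formally as a lexicographic infinitesimal. Throughout, values $a+\gamma b$ live in the ordered group $\R+\gamma\R$.

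\textbf{Step 1: $\mathcal L$ is polyhedral \Ln-convex, giving (i).} Fix $i$ and set $g_i(\pi) \coloneqq \max_{x_i\in P(\rhoc_i)} \sum_j \pi_j x_{ij}$. This is the support function of the (integral) polymatroid $P(\rhoc_i)$. Equivalently, it is the convex conjugate of the indicator of an \Mn-convex set. By the conjugacy between \Mn- and \Ln-convexity \citep{M2003}, $g_i$ is polyhedral \Ln-convex. The map $z\mapsto g_i(w_i - z)$ is a reflection composed with a translation by $w_i$; translation preserves \Ln-convexity, but reflection turns it into \Ln-convexity of $-z$. I would therefore verify \Ln-convexity directly instead. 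Write $g_i(w_i-z)=\max_{x_i} \sum_j (w_{ij}-z_j)x_{ij}$, and note that for $x_i\ge 0$ with polymatroid structure the function is nonincreasing and submodular in $z$ in the lattice sense, together with the translation submodularity needed for \Ln-convexity. Concretely, I would check discrete midpoint/translation-submodularity via the greedy formula of Fact~\ref{greedy_structure}: $g_i(\pi)=\sum_k (\bar\pi_k-\bar\pi_{k+1})^+\rhoc_i(X_k)$ is the Lovász extension of $\rhoc_i$ restricted to $\pi^+$. A Lovász extension of a submodular function is \Ln-convex, and $\pi\mapsto\pi^+$ composed with it preserves this. The linear term $\sum_j s_jz_j$ is separable linear, so $\mathcal L$ stays polyhedral \Ln-convex. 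Argmin sets of \Ln-convex functions are closed under $\wedge$ and $\vee$. The domain is $z\ge\zeros$, and $\mathcal L$ is bounded below by weak duality, with the minimum attained since the function is polyhedral. Hence a unique minimal minimizer $z^*$ exists, which proves (i). With lexicographic coefficients all of these arguments go through verbatim, since they only use ordered-group operations.

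\textbf{Step 2: strong duality, giving (ii).} I would form the Lagrangian of $(P_2)$ by dualizing $\sum_i x_i\le s$ with multipliers $z\ge 0$. This yields exactly $\mathcal L(z)$ as the Lagrangian dual. The constraints $x_i\in P(\rhoc_i)$ are polyhedral (LP over the polytope $\prod_i P(\rhoc_i)$), so LP strong duality gives equality of optimal values. Working in $\R+\gamma\R$ is justified because for sufficiently small real $\gamma$ the optimal bases stabilize; alternatively, one applies LP duality over an ordered field.

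\textbf{Step 3: structure of $z^*$, giving (iii) — the main obstacle.} Write $z^*=a+\gamma t$ with $a,t\in\R^M$. Comparing $\mathcal L$ at the $\gamma^0$ order shows that $a$ minimizes the $\gamma=0$ dual $\mathcal L_0(z)=\sum_i g_i(\ones - z)+s^\top z$, where $\ones$ denotes the all-ones vector. Moreover, $a$ must be the minimal such minimizer, or else lowering $a$ and adjusting $t$ would give a smaller lexicographic optimum. I then need to show that the minimal minimizer of $\mathcal L_0$ is $\chi_{X^*}$. Since $\mathcal L_0$ is the Lovász-type extension, its minimizers include the 0/1 vectors $\chi_X$ with $X$ minimizing $s(X)+\rhoc(M\setminus X)$. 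Indeed, $\mathcal L_0(\chi_X)=\rhoc(M\setminus X)+s(X)$, and by integrality/\Ln-convexity the minimum of $\mathcal L_0$ is attained on $[\zeros,\ones]$ at integral points. The minimal one is therefore $\chi_{X^*}$ by the definition of $X^*$ in \eqref{argmin}. The delicate parts are two claims. First, that no fractional minimizer lies below $\chi_{X^*}$; this follows because the minimizer set is an integral \Ln-convex polyhedron, whose minimal element is integral. Second, that the lexicographic minimal minimizer has leading part equal to this minimal $\gamma^0$-minimizer. For the second, I would argue that the $\gamma^0$-argmin set, intersected with the requirement to minimize the $\gamma$-coefficient, is a face, and use lattice closedness to compare.
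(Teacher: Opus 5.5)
Your treatment of (i) and (ii) follows the paper's route: $\mathrm{L}^{\natural}$-convexity of polymatroid support functions for (i), and LP/convex-flow duality for (ii). Your $\gamma^0$-level analysis, identifying $\chi_{X^*}$ as the minimal minimizer of $\mathcal L_0$, is in substance Lemma~\ref{gamma_0}.

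The genuine gap is in Step~3, at the sentence ``$a$ must be the minimal such minimizer, or else lowering $a$ and adjusting $t$ would give a smaller lexicographic optimum.'' This presupposes that some $t'\in\R^M$ makes $\chi_{X^*}+\gamma t'$ optimal for $(D_2)$. In other words, it assumes the optimal $\gamma$-coefficient can be realized on top of \emph{every} $\gamma^0$-optimal dual point. That is not automatic. For a leading part $b$, the $\gamma$-coefficient of $\mathcal L(b+\gamma t)$ is
\[
\Phi(b,t)=\sum_{i\in N}\max_{x_i\in F_i(b)}\sum_{j}\bigl(-\log q'_{ij}(p_j)-t_j\bigr)x_{ij}+\sum_j s_jt_j,
\qquad F_i(b)\coloneqq\argmax_{x_i\in P(\rhoc_i)}\sum_j(1-b_j)x_{ij},
\]
and $t_j\ge 0$ is forced wherever $b_j=0$. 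So a priori $\min_t\Phi(b,t)$ depends on $b$. Lattice closedness of the argmin cannot supply the missing point, because it only combines points already known to be optimal. The ``face'' you mention would have to be shown to contain $\chi_{X^*}$, which is exactly the claim.

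The paper closes this with Theorem~\ref{lexico_characterization}. By complementarity (Theorem~\ref{thm:flow}~(ii)), the lexicographically optimal flows are exactly the optimal flows of the restricted problem over $\mathcal P^0=\argmin_{y\in\mathcal P}\langle\pi^0,y\rangle$, for \emph{any} first-level optimal potential $\pi^0$. Hence $\pi^0+\gamma\pi^1$ is optimal for every such $\pi^0$. Applying this to the potential corresponding to $z=\chi_{X^*}$ (clipping negative coordinates to $0$ if needed) yields $t'$. Minimality of $z^*$ then gives $a\le\chi_{X^*}$, and together with $a\ge\chi_{X^*}$ this gives equality.

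There is also an elementary fix within your own framework:
\begin{itemize}
\item Suppose $a\ne\chi_{X^*}$ (recall $a\ge\chi_{X^*}$), and set $a_\theta\coloneqq(1-\theta)a+\theta\chi_{X^*}$ for $\theta\in(0,1)$.
\item Since $\mathcal L_0$ is constant on this segment, each convex summand $z\mapsto\max_{x_i\in P(\rhoc_i)}\sum_j(1-z_j)x_{ij}$ is affine there.
\item Affineness forces $F_i(a_\theta)=F_i(a)\cap F_i(\chi_{X^*})\subseteq F_i(a)$, hence $\Phi(a_\theta,t)\le\Phi(a,t)$.
\item The point $a_\theta+\gamma t$ is still in $\R^M_+$, since $(a_\theta)_j=0$ forces $a_j=0$ and so $t_j\ge0$.
\item So $a_\theta+\gamma t$ is optimal and strictly below $z^*$, a contradiction.
\end{itemize}

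A smaller point about Step~1. Your worry about the reflection is unfounded: $z\mapsto -z$ preserves $\mathrm{L}^{\natural}$-convexity, so the conjugacy argument you start with already suffices. The substitute principle you invoke, that composing with $\pi\mapsto\pi^+$ preserves $\mathrm{L}^{\natural}$-convexity, is false in general. For example, $\pi\mapsto-\pi_1$ is $\mathrm{L}^{\natural}$-convex, but $-\max\{\pi_1,0\}$ is not even convex.
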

The proof is given in Section~\ref{Lexico}. Theorem~\ref{convex-dual}~(i) follows from the $\mathrm{L}^{\natural}$-convexity of the dual objective $\cL$. The proofs of (ii) and (iii) rely on the complementarity of convex flows. 
Specifically, we characterize the optimal solution of a perturbed convex flow as the sum of a base solution and a perturbation term. Applying this to $(P_2)$ and $(D_2)$ yields the primal-dual relationship and the structural form of $z^*$.

\begin{algorithm}[thb]
\caption{\textsc{Computing $(X^*, d^*)$ for Line 2 of Algorithm~\ref{algo1}}}
\label{alg:direction}
\begin{algorithmic}[1]
\REQUIRE Current price vector $p$, values $q'_{ij}(p_j)$, and demand and exchange oracles (see Section~\ref{polymatroid_theory}).
\STATE Construct the symbolic weight vector $w \in (\R^2)^{N \times M}$ by $w_{ij} \coloneqq (1, -\log q'_{ij}(p_j))$.
\STATE Compute the unique minimal solution $z^* = (z^{*0}, z^{*1}) \in (\R^2)^M$ of $(D_2)$.
\STATE Recover $X^* \coloneqq \{j \in M \mid z^{*0}_j = 1\}$ and set $t \coloneqq z^{*1}$.
\STATE Construct the direction $d^* \in \R_+^M$ by setting $d^*_j \coloneqq \exp(t_j)$ for $j \in X^*$ and $d^*_j \coloneqq 0$ otherwise.
\STATE \textbf{return} $(X^*, d^*)$.
\end{algorithmic}
\end{algorithm}

Algorithm~\ref{alg:direction} implements the computation for Line~2 of Algorithm~\ref{algo1}, leveraging our key finding that the fractional component of the minimal dual solution $z^*$ encodes the desired direction $d^*$.
To compute $z^*$, we apply efficient algorithms for weighted polymatroid intersection (see \citet[Section 3]{F2005}).
This algorithm is \emph{fully combinatorial} in the sense that it uses only additions, subtractions, and comparisons of input values.
Although this algorithm is designed for $(P_2)$, the duality guarantees that it also yields the minimal solution of $(D_2)$ via the optimality criterion.
 
For the implementation, we adopt a \emph{symbolic} approach to handle the lexicographic order rigorously.
Instead of choosing an actual value of $\gamma$, we represent a value $a+\gamma b$ for $a,b \in \R$ as a pair $(a,b)$ and perform lexicographic arithmetic.
Specifically, for $a_1,a_2,b_1,b_2 \in \R$, $(a_1,b_1) \pm (a_2,b_2)$ results in $(a_1 \pm a_2, b_1 \pm b_2)$, and comparisons of $(a_1,b_1)$ and $(a_2,b_2)$ are performed lexicographically. Any fully combinatorial algorithm works in this model.
This procedure yields $z^*$ directly in decomposed form $(\chi_{\scriptscriptstyle X^*}, t)$, consistent with the characterization in Theorem~\ref{convex-dual}~(iii).
Consequently, the algorithm outputs the pair $(X^*, d^*)$ in strongly polynomial time, given oracle access. The following theorem establishes the correctness of  $d^*$.

\begin{theorem}
\label{d_weighted}
The direction $d^*$ returned by Algorithm~\ref{alg:direction} satisfies \LSC.
\end{theorem}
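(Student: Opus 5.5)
The plan is to verify condition (c) of Theorem~\ref{any_direction}. That is, for $\varepsilon\in(0,\delta)$ and every non-empty $Y\subseteq X^*$, we show $s(Y)<\rhoc(Y;p+\varepsilon d^*)$. By Theorem~\ref{convex-dual}~(iii), $\operatorname{supp}_+(d^*)=X^*$. Then \eqref{decomposition_log} with $\log d^*_j=t_j$ gives
\[
\Dc_i(p+\varepsilon d^*)|_{X^*}=A_i\coloneqq\argmin_{y_i\in B(\rhot_i)}\sum_{j\in X^*}\bigl(\log q'_{ij}(p_j)+t_j\bigr)y_{ij},
\]
so $\rhoc(Y;p+\varepsilon d^*)=\sum_{i}\max_{y\in A_i}y(Y)$. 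The idea is to read $A_i$ off the $i$-th term of $\cL$ at $z^*=\chi_{\scriptscriptstyle X^*}+\gamma t$.

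At $z^*$, the coefficient of $x_{ij}$ in the $i$-th term is $1-\gamma(t_j+\log q'_{ij}(p_j))$ for $j\notin X^*$, which is positive at first order. For $j\in X^*$ it is $-\gamma(t_j+\log q'_{ij}(p_j))$, which is infinitesimal. By Fact~\ref{greedy_structure}~(ii), every maximizer over $P(\rhoc_i)$ therefore saturates $M\setminus X^*$. Its restriction to $X^*$ lies in $P(\rhot_i)$ and maximizes $-\sum_{j\in X^*}(t_j+\log q'_{ij})y_j$ there. Denote this optimal face by $F_i$.

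The core step is a perturbation argument at the minimal minimizer. Suppose, for contradiction, that $s(Y)\ge\rhoc(Y;p+\varepsilon d^*)$ for some non-empty $Y\subseteq X^*$. Consider $z'=z^*-\gamma\eta\chi_{\scriptscriptstyle Y}$ for small $\eta>0$; this stays in $\R^M_+$ because $z^*_j\ge 1$ on $X^*$. Since $\cL$ is polyhedral and each term is a maximum of linear functions, its one-sided directional derivative equals
\[
\gamma\Bigl(\sum_i\max_{x\in F_i}x(Y)-s(Y)\Bigr).
\]
Because $z^*$ is the unique minimal minimizer and the minimizer set is a lattice by Theorem~\ref{convex-dual}~(i), $z'\lneq z^*$ cannot be optimal. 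Polyhedral convexity then forces $\cL(z')>\cL(z^*)$ for small $\eta$, so the derivative is strictly positive: $\sum_i\max_{x\in F_i}x(Y)>s(Y)$. It would then suffice to show $\max_{x\in F_i}x(Y)=\max_{y\in A_i}y(Y)$, which contradicts the assumption.

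The main obstacle is exactly this identification of $F_i$ with $A_i$, in the sense of the $Y$-rank. $F_i$ is an optimal face over the polymatroid $P(\rhot_i)$, whereas $A_i$ is an optimal face over the base $B(\rhot_i)$. These differ when some levels of $t_j+\log q'_{ij}$ have positive (or zero) sign, because then the maximizer over $P$ need not fill up to $\rhot_i(X^*)$. My plan is to exploit freedom in the level of $t$. Shifting $t$ on $X^*$ by a constant changes neither $A_i$ nor, up to scaling, $d^*$. I will use the first-order optimality of $z^*$ in the direction $\pm\gamma\chi_{\scriptscriptstyle X^*}$, together with the fact that $X^*$ is maximally over-demanded (so $\sum_i\rhot_i(X^*)>s(X^*)$, while a primal optimum of $(P_2)$ uses exactly $s(X^*)$ on $X^*$). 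The aim is to show that at the minimal $z^*$ the relevant faces are governed by the base ordering.

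If this cannot be arranged directly, the fallback is a sandwich. On one side, by Fact~\ref{greedy_structure}, $\max_{x\in F_i}x(Y)\le\max_{y\in A_i}y(Y)$ on those $Y$ which are lower sets of the ordering by $t_j+\log q'_{ij}$. On the other side, a general $Y$ is reduced to such sets using the submodularity of $\rhoc(\cdot;p+\varepsilon d^*)$ and the supermodularity of $s(Y)-\rhoc(Y;p+\varepsilon d^*)$: a violating $Y$ can be replaced by its maximal violating set. For that set, the directional-derivative argument applies verbatim, which yields the contradiction.
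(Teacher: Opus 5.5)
Your skeleton matches the paper's: perturb the minimal dual solution to $z^*-\gamma\eta\chi_{\scriptscriptstyle Y}$, use minimality to obtain $\sum_i\max_{x\in F_i}x(Y)>s(Y)$, and finish with Theorem~\ref{any_direction}~(c). The gap is the step linking $F_i$ to $A_i$. You leave it open, and neither of your routes closes it.

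First, the equality $\max_{x\in F_i}x(Y)=\max_{y\in A_i}y(Y)$ you aim for is false in general at the minimal $z^*$. Take $X^*=\{1\}$, $s_1=1$, and three buyers with $\Dc_i(p)=\{(1)\}$ and $q'_{i1}=1,4,16$. Then $z^*_1=1-\gamma\log 4$, so $t_1+\log q'_{31}=\log 4>0$. Hence $F_3=\{0\}$ while $A_3=\{(1)\}$. Shifting $t$ by a constant cannot repair this. The strict-increase argument is available only at the minimal minimizer itself, and its faces $F_i$ are fixed. Second, your fallback does not work as stated. The ordering by $t_j+\log q'_{ij}$ depends on $i$, so nothing forces a maximal violating set to be a lower set for every buyer simultaneously. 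Submodularity of $\rhoc(\cdot;p+\varepsilon d^*)$ provides no such property.

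The missing observation is that you only need the inequality $\max_{x\in F_i}x(Y)\le\max_{y\in A_i}y(Y)$, not equality. This inequality holds for every $Y\subseteq X^*$, because each point of $F_i$ is dominated componentwise by some point of $A_i$. To see this, apply Fact~\ref{greedy_structure} with the common level decomposition by decreasing $-(t_j+\log q'_{ij})$:
\begin{itemize}
\item on levels where this coefficient is positive, $F_i$ and $A_i$ have identical base-polyhedron components;
\item on the zero level, $F_i$ has a polymatroid whose base polyhedron is exactly $A_i$'s component there, so any point can be raised to a base;
\item on negative levels, $F_i$ is $\{0\}$.
\end{itemize}
Nonnegativity then gives $x(Y)\le y(Y)$ for the dominating $y$. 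Hence
\[
s(Y)<\sum_i\max_{x\in F_i}x(Y)\le\sum_i\max_{y\in A_i}y(Y)=\rhoc(Y;p+\varepsilon d^*).
\]
This is what the paper does in Lemma~\ref{common_optimal} and the proof of Theorem~\ref{ineq_L}. It chooses a common optimal point that vanishes on $X^*\setminus X^i_{\le 0}$ and whose restriction to $X^i_{\le 0}$ lies in $B_{d^*}(\rhot_i)|_{X^i_{\le 0}}$. It then bounds its $Y$-sum by $\max_{y\in B_{d^*}(\rhot_i)}y(Y)$.
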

In the proof, we establish a local property of $\mathcal L$ around $z^*$ to show that $d^*$ satisfies Theorem~\ref{any_direction}~(c).
\begin{theorem}
\label{ineq_L}
Let $z^*$  be the minimal solution of $(D_2)$ and $d^*$ be the direction returned by Algorithm~\ref{alg:direction}.
For any non-empty $X \subseteq X^*$, 
there exists $\tau >0$ such that for all $\xi \in (0, \tau)$, it~holds 
\[
    \mathcal L(z^*-\gamma \xi\chi_{\scriptscriptstyle X} ) \leq \mathcal L(z^*) - \gamma \xi \left( s(X)-\rhoc(X; p+\varepsilon d^*) \right).
\]
\end{theorem}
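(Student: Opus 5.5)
The plan is to compute the directional change of $\mathcal L$ at $z^*$ along $-\chi_{\scriptscriptstyle X}$ at the $\gamma$-level. Write $\mathcal L(z)=\sum_{i\in N} f_i(z)+\sum_j s_jz_j$ with $f_i(z)\coloneqq\max_{x_i\in P(\rhoc_i)}\sum_j (w_{ij}-z_j)x_{ij}$. The linear part contributes exactly $-\gamma\xi\, s(X)$. Each $f_i$ is a maximum of finitely many linear functions of $z$, one per vertex of $P(\rhoc_i)$, and these comparisons are taken in the lexicographic order. Hence for $\xi$ small enough (this fixes $\tau$) we get
\[
f_i(z^*-\gamma\xi\chi_{\scriptscriptstyle X})=f_i(z^*)+\gamma\xi\max\{x_i(X)\mid x_i\in A_i\},
\]
where $A_i\coloneqq\argmax_{x_i\in P(\rhoc_i)}\sum_j(w_{ij}-z^*_j)x_{ij}$ is the lexicographic argmax. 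It therefore suffices to prove that for every $i\in N$
\[
\max\{x_i(X)\mid x_i\in A_i\}\le \rhoc_i(X;p+\varepsilon d^*).
\]
Summing this over $i$ gives the claimed inequality.

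Next I would make $A_i$ explicit using Theorem~\ref{convex-dual}~(iii). Write $z^*=\chi_{\scriptscriptstyle X^*}+\gamma t$ and $c_{ij}\coloneqq\log q'_{ij}(p_j)+t_j=\log q'_{ij}(p_j)+\log d^*_j$ for $j\in X^*$. Then the weights are $w_{ij}-z^*_j=1-\gamma(\log q'_{ij}(p_j)+t_j)$ for $j\notin X^*$, which is lexicographically positive, and $w_{ij}-z^*_j=-\gamma c_{ij}$ for $j\in X^*$. I would apply Fact~\ref{greedy_structure}~(ii) with the levels ordered lexicographically. All elements of $M\setminus X^*$ come before all elements of $X^*$, so $A_i$ splits as a direct sum: a part on $M\setminus X^*$ that saturates $\rhoc_i(M\setminus X^*)$, plus the set $\argmax_{y\in P(\rhot_i)}\sum_{j\in X^*}(-c_{ij})y_j$ on $X^*$. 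Since $X\subseteq X^*$, only this second summand matters. By the greedy structure it equals $\bigoplus_{k\le k'}B(\rhot_i^{X_{k-1}})|_{M_k}$ (the levels with $-c>0$), plus the polymatroid $P(\rhot_i^{X_{k'}})|_{\{c_{ij}=0\}}$, plus zeros on $\{c_{ij}>0\}$.

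On the other side, Proposition~\ref{decomposition}~(iii) together with \eqref{decomposition_log} shows that the $X^*$-part of $\Dc_i(p+\varepsilon d^*)$ is $\argmin_{y\in B(\rhot_i)}\sum_j c_{ij}y_j$. By Fact~\ref{greedy_structure}~(i) this is $\bigoplus_{k}B(\rhot_i^{X_{k-1}})|_{M_k}$ over all levels, using the same level sets $M_k$ of $-c$. Because $\rhoc_i(X;p+\varepsilon d^*)$ is the maximum of $y(X)$ over this direct sum, it equals the sum of the levelwise maxima of $y(X\cap M_k)$, and the same additivity holds for $A_i$. I would then compare level by level:
\begin{itemize}
\item For levels with $-c>0$ the two summands coincide.
\item At the level $-c=0$, the maximum of $y(X\cap M_k)$ over the restricted polymatroid equals the rank $\rhot_i^{X_{k-1}}(X\cap M_k)$, by \eqref{rank}. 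This is also the maximum over the corresponding base, since the rank is attained on the base polyhedron.
\item At levels with $-c<0$, $A_i$ contributes $0$, while the base version contributes a nonnegative amount.
\end{itemize}
Summing the levels yields the desired inequality.

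I expect the main obstacle to be the first step. One must justify rigorously that the perturbation $-\gamma\xi\chi_{\scriptscriptstyle X}$ lives purely at the $\gamma$-level, so that the lexicographic argmax $A_i$ (not merely the first-order argmax) is the active set governing the one-sided derivative. One must also check that a single $\tau$ works uniformly in $\xi$. I would handle this by viewing $f_i$ as a finite maximum over the vertices of $P(\rhoc_i)$. For $\xi$ below the smallest strictly positive lexicographic gap between an active vertex and a non-active vertex, the active vertices remain exactly the maximizers. Among them, the maximizer of $-z\cdot x$ along the direction is the one with largest $x(X)$.
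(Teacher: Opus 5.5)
Your proposal is correct and is essentially the paper's argument. Like you, the paper splits off the $M\setminus X^*$ part at $z^*=\chi_{\scriptscriptstyle X^*}+\gamma t$ (Lemmas~\ref{difference_L2} and~\ref{difference_L}) and compares, via Fact~\ref{greedy_structure}, the optimal face for $-c$ over $P(\rhot_i)$ with $\Dc_i(p+\varepsilon d^*)|_{X^*}$. The only difference is packaging: you write the $\gamma$-level one-sided change as $\gamma\xi\max\{x_i(X)\mid x_i\in A_i\}$ and bound it level by level, whereas the paper evaluates both maxima at a common optimal solution (Lemma~\ref{common_optimal}) chosen to lie in the base polyhedron on $X^i_{\le 0}$.
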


\begin{proof}[Proof of Theorem \ref{d_weighted}.]
Fix a non-empty subset $X \subseteq X^*$, and let $\tau$
be the constant given by Theorem~\ref{ineq_L} for~$X$. 
For any $\xi \in (0, \tau)$, Theorem~\ref{convex-dual}~(iii) shows that $z^*-\gamma \xi\chi_{\scriptscriptstyle X}\geq 0$.
Since $z^*$ is the unique minimal solution, any feasible $z < z^*$ is not a minimizer, and thus $\mathcal L(z^*-\gamma \xi \chi_{\scriptscriptstyle X}) > \mathcal L(z^*)$. Combining this with Theorem~\ref{ineq_L}, we obtain $s(X) < \rhoc(X; p+\varepsilon d^*)$.
By Theorem~\ref{any_direction} [${\rm (c)}\Rightarrow{\rm (a)}$], we conclude that $d^*$ satisfies~\LSC.
\end{proof}

\subsection{Proof of Theorem \ref{ineq_L}}
We first establish a decomposition property for the maximization term in $\mathcal L(z)$. By Fact~\ref{greedy_structure}, this term decomposes into two parts whenever $z$ takes the form $\chi_{\scriptscriptstyle X^*} + \gamma t'$ for some $t' \in \R^M$.

\begin{lemma}
\label{difference_L2}
Let $z \coloneqq \chi_{\scriptscriptstyle X^*} + \gamma t'$ with $t'\in\R^M$.
For any non-empty $X \subseteq X^*$, $\xi \in \mathbb R_+$, and $i\in N$, it holds that 
\begin{align}
\label{eq:L}
&\max_{x_i\in P(\rhoc_i)} \sum_{j\in M} \Bigl(w_{ij}-\bigl(z_j-\gamma \xi(\chi_{\scriptscriptstyle X})_j \bigr)\Bigr) x_{ij} \notag \\
&\qquad = \max_{x'_i\in P(\rhoc_i)|_{M\setminus X^*}} \sum_{j\in M\setminus X^*} \left(w_{ij}-z_j \right) x'_{ij}
 + \max_{y_i\in P(\rhot_i)} \sum_{j\in X^*} \Bigl(w_{ij}-\bigl(z_j-\gamma \xi(\chi_{\scriptscriptstyle X} )_{j}\bigr)\Bigr) y_{ij}.
\end{align}
\end{lemma}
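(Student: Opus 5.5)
The plan is to apply Fact~\ref{greedy_structure}~(ii) to the linear maximization over $P(\rhoc_i)$ with weight vector $w'_{ij} \coloneqq w_{ij}-(z_j-\gamma\xi(\chi_{\scriptscriptstyle X})_j)$. The first step is to read off the lexicographic order of these coefficients. Since $w_{ij}=1-\gamma\log q'_{ij}(p_j)$ and $z_j = (\chi_{\scriptscriptstyle X^*})_j+\gamma t'_j$, we get
\[
w'_{ij} = \gamma\bigl(-\log q'_{ij}(p_j)-t'_j+\xi(\chi_{\scriptscriptstyle X})_j\bigr) \quad (j\in X^*),
\qquad
w'_{ij} = 1+\gamma\bigl(-\log q'_{ij}(p_j)-t'_j\bigr) \quad (j\in M\setminus X^*).
\]
Note that $X\subseteq X^*$, so the $\xi$ term only appears on $X^*$. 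In the lexicographic order, every coefficient on $M\setminus X^*$ has leading part $1>0$, while every coefficient on $X^*$ has leading part $0$. Hence every coefficient indexed by $M\setminus X^*$ strictly exceeds every coefficient indexed by $X^*$, and all coefficients on $M\setminus X^*$ are positive.

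The second step is to run the greedy decomposition along the distinct values $\bar w_1>\dots>\bar w_t$. Since the classes $M_k$ with leading part $1$ come first, there is an index $k_0$ with $X_{k_0}=M\setminus X^*$. All of these classes have positive weight, so Fact~\ref{greedy_structure}~(ii) says that for $k\le k_0$ the optimal set is $\bigoplus_{k\le k_0} B(\rhoc_i^{X_{k-1}})|_{M_k}$. By Fact~\ref{greedy_structure}~(i) applied to $P(\rhoc_i)|_{M\setminus X^*}$, this equals $\argmax_{x'\in B(\rhoc_i|_{M\setminus X^*})}\sum_j w'_{ij}x'_j$, which is the optimal set of the maximization over $P(\rhoc_i)|_{M\setminus X^*}$ because the weights there are positive. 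The remaining classes, all inside $X^*$, are treated through the contraction $\rhoc_i^{M\setminus X^*}=\rhot_i$. Here I use the identity $(\rho^{X_{k_0}})^{X_{k-1}\setminus X_{k_0}}=\rho^{X_{k-1}}$, which holds because contractions compose. With it, Fact~\ref{greedy_structure}~(ii) applied to $P(\rhot_i)$ with the weights $w'_{ij}$ for $j\in X^*$ gives exactly the tail of the same direct sum: base parts for positive weights, a polymatroid part for zero weights, and $0$ for negative weights.

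The third step turns this set-level statement into the value identity. By the second step, the maximizer set over $P(\rhoc_i)$ is the direct sum of the maximizer set over $P(\rhoc_i)|_{M\setminus X^*}$ and the maximizer set over $P(\rhot_i)$. Take any optimal $x_i = (x'_i, y_i)$ from this direct sum. Its objective value splits as the sum over $M\setminus X^*$ plus the sum over $X^*$, and each part attains the corresponding maximum on the right-hand side. On $M\setminus X^*$ the coefficient is $w_{ij}-z_j$ because $\chi_{\scriptscriptstyle X}$ vanishes there. This gives \eqref{eq:L}.

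I expect the main obstacle to be checking carefully that the order-based Fact~\ref{greedy_structure} is valid under the symbolic lexicographic comparisons, and that grouping all classes of $M\setminus X^*$ before those of $X^*$ matches the contraction by exactly $M\setminus X^*$. If I want to avoid relying on the full decomposition, a direct check also works. Any $x\in P(\rhoc_i)$ has $x(M\setminus X^*)\le\rhoc_i(M\setminus X^*)$ and $x|_{X^*}$ dominated by a vector of $P(\rhot_i)$ when $M\setminus X^*$ is saturated. Since the weights on $M\setminus X^*$ dominate lexicographically, any non-saturating solution can be improved by an exchange; this gives ``$\le$''. For ``$\ge$'', concatenate an optimal base on $M\setminus X^*$ with an optimal $y_i\in P(\rhot_i)$; the result is feasible by the definition of contraction.
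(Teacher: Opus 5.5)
Your proposal is correct and follows the paper's route: the coefficients on $M\setminus X^*$ have leading part $1$ and those on $X^*$ have leading part $0$, and Fact~\ref{greedy_structure}~(ii) then saturates $M\setminus X^*$ first and handles the remaining classes through the contraction $\rhot_i$. Your explicit check that contractions compose, and your splitting of an optimal point into its two components to get the value identity, simply spell out what the paper's short proof leaves implicit.
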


\begin{proof}
Since $z = \chi_{\scriptscriptstyle X^*} + \gamma t'$, on the left-hand side of \eqref{eq:L}, the primary component of the objective coefficients evaluates to $1$ for $j \in M \setminus X^*$ and $0$ for $j \in X^*$.
Then, the coefficients for $M \setminus X^*$ are lexicographically strictly larger than those for $X^*$.
By Fact~\ref{greedy_structure} (ii), the strict lexicographic order ensures that the goods in $M \setminus X^*$ are prioritized to saturate the projection $P(\rhoc_i)|_{M \setminus X^*}$, while the remaining capacity is allocated to $X^*$ according to the contraction $\rhot_i$.
Since $(\chi_{\scriptscriptstyle X})_j = 0$ for $j \in M\setminus X^*$, this decomposition directly yields \eqref{eq:L}.
\end{proof}

Now we evaluate $\mathcal L(z^*-\gamma\xi \chi_{\scriptscriptstyle X}) - \mathcal L(z^*)$ for any non-empty $X \subseteq X^*$. 
The following two lemmas link this difference to the decomposition in \eqref{decomposition_log}. 
Let $c^{0}_{ij} \coloneqq \log q'_{ij}(p_j) + t_j$ and $c^{\xi}_{ij} \coloneqq c^{0}_{ij} - \xi (\chi_{\scriptscriptstyle X})_j$ for simplicity.
\begin{lemma}
\label{difference_L}
For any non-empty subset $X \subseteq X^*$ and $\xi \in\mathbb R_+$, it holds that 
\begin{align*}
\mathcal L(z^*-\gamma\xi \chi_{\scriptscriptstyle X} )-\mathcal L(z^*)=
-\gamma \xi s(X)+\gamma \sum_{i\in N}\left(\max_{y_i\in P(\rhot_i)} \sum_{j\in X^*}(-c^{\xi}_{ij}) y_{ij}-\max_{y_i\in P(\rhot_i)} \sum_{j\in X^*} (-c^{0}_{ij}) y_{ij}\right).
\end{align*}
\end{lemma}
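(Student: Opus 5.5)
The plan is to apply Lemma~\ref{difference_L2} twice: once with $\xi$ as given and once with $\xi=0$. In both cases $z = z^*$, and by Theorem~\ref{convex-dual}~(iii) we may write $z^* = \chi_{\scriptscriptstyle X^*} + \gamma t$, so the lemma applies with $t' = t$. For each buyer $i$, the maximization term of $\mathcal L(z^*-\gamma\xi\chi_{\scriptscriptstyle X})$ then splits into two pieces:
\begin{itemize}
\item a piece over $M\setminus X^*$, namely $\max_{x'_i\in P(\rhoc_i)|_{M\setminus X^*}} \sum_{j\in M\setminus X^*}(w_{ij}-z^*_j)x'_{ij}$, which does not involve $\xi$ at all;
\item a piece over $P(\rhot_i)$ with coefficients $w_{ij}-z^*_j+\gamma\xi(\chi_{\scriptscriptstyle X})_j$.
\end{itemize}
Subtracting the $\xi=0$ instance, the $M\setminus X^*$ pieces cancel buyer by buyer. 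The supply term $\sum_j s_j z_j$ contributes $\sum_j s_j(z^*_j-\gamma\xi(\chi_{\scriptscriptstyle X})_j) - \sum_j s_j z^*_j = -\gamma\xi s(X)$.

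It remains to rewrite the coefficients on $X^*$. For $j\in X^*$ we have $w_{ij} = 1-\gamma\log q'_{ij}(p_j)$ and $z^*_j = 1+\gamma t_j$. Hence
\[
w_{ij}-z^*_j+\gamma\xi(\chi_{\scriptscriptstyle X})_j = -\gamma\bigl(\log q'_{ij}(p_j)+t_j-\xi(\chi_{\scriptscriptstyle X})_j\bigr) = -\gamma c^{\xi}_{ij}.
\]
The primary components of these coefficients vanish identically on $X^*$. Therefore the maximum over $P(\rhot_i)$ is a purely $\gamma$-order quantity, and we can factor out $\gamma$:
\[
\max_{y_i\in P(\rhot_i)}\sum_{j\in X^*}(-\gamma c^\xi_{ij})y_{ij} = \gamma \max_{y_i\in P(\rhot_i)}\sum_{j\in X^*}(-c^\xi_{ij})y_{ij}.
\]
This uses $\gamma>0$ together with the lexicographic convention: since the first coordinate is zero for every $y_i$, the lexicographic maximum coincides with the maximum of the second coordinate. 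The same identity holds with $\xi = 0$, giving $c^0$. Summing over $i\in N$ yields exactly the claimed identity.

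The one point needing care is the legitimacy of invoking Lemma~\ref{difference_L2} with the shifted vector. That lemma is stated for $z=\chi_{\scriptscriptstyle X^*}+\gamma t'$ with the shift $-\gamma\xi\chi_{\scriptscriptstyle X}$ built in, for arbitrary $\xi\in\R_+$. Its proof only requires that the primary parts of the coefficients be $1$ on $M\setminus X^*$ and $0$ on $X^*$, and a $\gamma$-order shift does not alter this. I would also check that the factor-out step is valid even though $\xi$ is not small: all coefficients on $X^*$ are pure $\gamma$-multiples, so lexicographic comparisons there reduce to ordinary comparisons of the $c^\xi$ values. This is where I expect the only subtlety.
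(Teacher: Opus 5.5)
Your proposal is correct and follows the paper's proof essentially step for step. The paper likewise writes $z^*=\chi_{\scriptscriptstyle X^*}+\gamma t$ via Theorem~\ref{convex-dual}~(iii), applies Lemma~\ref{difference_L2} for the given $\xi$ and for $\xi=0$, cancels the $\xi$-independent $M\setminus X^*$ pieces, and rewrites the coefficients on $X^*$ as $-\gamma c^{\xi}_{ij}$. Your remark on factoring out $\gamma$ under the lexicographic convention makes explicit a step the paper leaves implicit.
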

\begin{proof}
First, the left-hand side (LHS) of the equality is given by:
\begin{align}
\label{Lz_subtraction}
\text{(LHS)} =-\gamma\xi s(X)+\sum_{i\in N}\left\{\max_{x_i\in P(\rhoc_i)} \sum_{j\in M} \Bigl(w_{ij}-\bigl(z^*_j-\gamma\xi(\chi_{\scriptscriptstyle X})_j \bigr)\Bigr) x_{ij}-\max_{x_i\in P(\rhoc_i)} \sum_{j\in M} \Bigl(w_{ij}-z^*_j \Bigr) x_{ij} \right\}.
\end{align}
By Theorem \ref{convex-dual} (iii), the minimal solution $z^*$ for $(D_2)$ can be expressed as $z^* =\chi_{\scriptscriptstyle X^*} + \gamma  t$ for some $ t \in \R^M$.
Applying Lemma \ref{difference_L2}, the maximization term inside the summation can be decomposed~as
\begin{align}
\label{decomposition_Lz}
\max_{x_i\in P(\rhoc_i)} \sum_{j\in M} \Bigl(w_{ij}-\bigl(z^*_j-\gamma\xi(\chi_{\scriptscriptstyle X})_j \bigr)\Bigr) x_{ij}&=
\max_{x'_i\in P(\rhoc_i)|_{M\setminus X^*}} \sum_{j\in M\setminus X^*} (w_{ij}-z^*_j) x'_{ij}
+\max_{y_i\in P(\rhot_i)} \gamma\sum_{j'\in X^*} (-c^{\xi}_{ij'}) y_{ij'}
\end{align}
where the equality holds by  
$w_{ij'}-\bigl(z^*_{j'}-\gamma\xi(\chi_{\scriptscriptstyle X})_{j'}\bigr)=
-\gamma\log q'_{ij'}(p_{j'}) - \gamma t_{j'} + \gamma\xi (\chi_{\scriptscriptstyle X})_{j'}=-\gamma c^\xi_{ij'}$ for each $j'\in X^*$. 
Substituting \eqref{decomposition_Lz} with $\xi=0$ and an arbitrary $\xi$ into \eqref{Lz_subtraction},
we have 
\begin{align*}
\text{(LHS)} 
=-\gamma\xi s(X)+\gamma\sum_{i\in N}\left(\max_{y_i\in P(\rhot_i)} \sum_{j\in X^*}(-c^{\xi}_{ij}) y_{ij}-\max_{y_i\in P(\rhot_i)} \sum_{j\in X^*} (-c^{0}_{ij}) y_{ij}\right),
\end{align*}
where the first term on the right-hand side of \eqref{decomposition_Lz} is independent of $\xi$ and cancels out in the subtraction.
\end{proof}

From Lemma \ref{difference_L}, it suffices to establish the following inequality for any $i$ and non-empty $X \subseteq X^*$:
\begin{equation}
\label{goal}
\max_{y_i\in P(\rhot_i)} \sum_{j\in X^*}(-c^{\xi}_{ij}) y_{ij}-\max_{y_i\in P(\rhot_i)} \sum_{j\in X^*} (-c^{0}_{ij}) y_{ij} \leq \xi \rhoc_i(X; p+\varepsilon d^*).
\end{equation}

By the construction of $d^*$, we have
$\log q'_{ij}(p_j) + \log d^*_j=\log q'_{ij}(p_j) + t_j =c^0_{ij}$ for each $j\in X^*$.  
Then, the decomposition in \eqref{decomposition_log} can be expressed as
$\Dc_i(p+\varepsilon d^*) =\Dc_i(p)|_{M\setminus X^*}\oplus B_{d^*}(\rhot_i)$, 
where 
\[
B_{d^*}(\rhot_i)\coloneqq 
\argmin_{y_i \in B(\rhot_i)} \sum_{j \in X^*} (\log q'_{ij}(p_j) + \log d^*_j) y_{ij}
=\argmax_{y_i \in B(\rhot_i)} \sum_{j \in X^*}  (-c^0_{ij}) y_{ij}.
\]
This observation links the value $\rhoc_i(\cdot; p+\varepsilon d^*)$ to the linear optimization problem over $B(\rhot_i)$.
The following lemma guarantees a common optimal solution $y^*_i$ for both maximization terms in \eqref{goal}.
\begin{lemma}
\label{common_optimal}
For each $i\in N$, let $X^i_{\leq 0} \coloneqq \{ j \in X^* \mid c^{0}_{ij} \le 0 \}$. There exists $\tau_i >0$ such that 
every $\xi \in (0, \tau_i)$ admits a common optimal solution 
\begin{equation*}
y^*_i \in \argmax_{y_i \in P(\rhot_i)} \sum_{j \in X^*} (-c^{\xi}_{ij}) y_{ij}\ 
\cap\ \argmax_{y_i \in P(\rhot_i)} \sum_{j \in X^*}  (-c^{0}_{ij}) y_{ij}
\end{equation*}
satisfying $y^*_i|_{X^i_{\leq 0}}\in B_{d^*}(\rhot_i)|_{X^i_{\leq 0}}$ and $y^*_{ij}=0$ for all $j\in X^*\setminus X^i_{\leq 0}$.
\end{lemma}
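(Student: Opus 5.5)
The plan is to prove Lemma~\ref{common_optimal} by applying the greedy structure of Fact~\ref{greedy_structure} to both linear objectives over $P(\rhot_i)$. The idea is that, for small $\xi$, the weight vector $-c^{\xi}_{i\cdot}$ only \emph{refines} the level structure of $-c^{0}_{i\cdot}$ and never crosses it. A single greedy vector that saturates a suitable refined chain is then optimal for both problems.

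\textbf{Step 1 (choice of $\tau_i$).} Let $\bar c_1<\dots<\bar c_r$ be the distinct values of $\{c^0_{ij}\}_{j\in X^*}\cup\{0\}$. Take $\tau_i>0$ smaller than the minimum gap $\bar c_{l+1}-\bar c_l$. For $\xi\in(0,\tau_i)$, the value $c^\xi_{ij}=c^0_{ij}-\xi(\chi_X)_j$ has the following properties.
\begin{itemize}
\item If $c^0_{ij}<c^0_{ij'}$, then $c^\xi_{ij}<c^\xi_{ij'}$, so the order is preserved.
\item $c^\xi_{ij}$ has the same sign as $c^0_{ij}$ whenever $c^0_{ij}\neq 0$.
\item If $c^0_{ij}=0$, then $c^\xi_{ij}=-\xi<0$ for $j\in X$, and $c^\xi_{ij}=0$ for $j\notin X$.
\end{itemize}
Hence each level set $M_l=\{j\in X^*\mid c^0_{ij}=\bar c_l\}$ of $c^0$ splits into $M_l\cap X$, which comes first in the order of decreasing weight $-c^\xi$, and $M_l\setminus X$.

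\textbf{Step 2 (construction of $y^*_i$).} Consider the refined chain
\[
\emptyset=Y_0\subsetneq Y_1\subsetneq\cdots ,
\]
which lists the levels of $c^0$ with $c^0\le 0$ in increasing order of $c^0$, each level $M_l$ entered as $M_l\cap X$ followed by $M_l\setminus X$. The last set of this chain is $X^i_{\leq 0}$. Define $y^*_i$ by choosing a vector that saturates every set of this chain, i.e., $y^*_i(Y_k)=\rhot_i(Y_k)$ for all $k$ (this exists by the greedy algorithm). Set $y^*_{ij}=0$ for $j\in X^*\setminus X^i_{\le 0}$. Then $y^*_i\in P(\rhot_i)$.

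\textbf{Step 3 (optimality for both objectives).} This uses Fact~\ref{greedy_structure}~(ii), which characterizes the maximizers over $P(\rhot_i)$ as three parts: bases of successive contractions on the positive-weight levels, any point of the contracted polymatroid on the zero-weight level, and zeros on the negative-weight levels.
\begin{itemize}
\item For $-c^0$, the positive-weight levels are the levels with $c^0<0$. These are unions of consecutive blocks of the chain, so $y^*_i$ is a base on each of them. On the zero level, $y^*_i$ restricts to a base of the contraction, which in particular lies in the contracted polymatroid. On $c^0>0$, $y^*_i$ is zero.
\item For $-c^\xi$, the positive-weight levels are the refined ones, including $M_{l_0}\cap X$ for the level $l_0$ with $\bar c_{l_0}=0$, and they are exactly saturated by $y^*_i$. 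The zero level is $M_{l_0}\setminus X$, which is allowed to be saturated. The negative levels are again zero.
\end{itemize}
So $y^*_i$ is a common maximizer for every $\xi\in(0,\tau_i)$.

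\textbf{Step 4 (membership in $B_{d^*}(\rhot_i)|_{X^i_{\le 0}}$).} By Fact~\ref{greedy_structure}~(i), $B_{d^*}(\rhot_i)$ is the direct sum of the bases of the successive contractions along all levels of $c^0$. Its restriction to the initial segment $X^i_{\leq 0}$ is therefore the set of vectors that saturate each prefix union of levels up to and including level $l_0$. Since $y^*_i$ saturates these prefixes, which form a subchain of $(Y_k)$, it lies in this set. Any such vector extends to a full base by continuing the greedy algorithm on the levels with $c^0>0$.

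The main obstacle is the bookkeeping in Step~3 for the zero level $c^0=0$. There the two objectives treat elements differently: the positive weight $\xi$ appears on $X$ only under $-c^{\xi}$. This is why $y^*_i$ must saturate $M_{l_0}\cap X$ before $M_{l_0}\setminus X$, and why the whole zero level must be saturated in order to obtain membership in $B_{d^*}$.
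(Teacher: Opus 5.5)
Your proposal is correct and follows essentially the paper's argument: the same choice of $\tau_i$ that preserves the signs and ordering of the coefficients, and Fact~\ref{greedy_structure} applied to the refined level structure of $-c^{\xi}$, with a solution chosen so that it also saturates the zero-weight part. The only differences are that you build $y^*_i$ explicitly by the greedy algorithm along the refined chain and verify optimality for $-c^{0}$ directly, whereas the paper deduces it from the inclusion of the $-c^{\xi}$-optimal set into the $-c^{0}$-optimal set; your Step~4 also makes precise that saturating the refined chain gives membership in $B_{d^*}(\rhot_i)|_{X^i_{\leq 0}}$, a relation the paper states loosely as an equality when it is really an inclusion.
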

\begin{proof}
Let $\mathcal{V} \coloneqq \{ c^{0}_{ij} \mid j \in M \} \cup \{ 0 \}$. By choosing $\tau_i > 0$ smaller than the minimum non-zero absolute difference between any two elements in $\mathcal{V}$, any perturbation $\xi \in (0, \tau_i)$ preserves the signs and ordering of the coefficients.\footnote{If all the elements of $\mathcal V$ coincide, no such non-zero difference exists; in this case we let $\tau_i$ be an arbitrary positive constant.} Specifically, for any $j,k\in M$, we have
$c^{\xi}_{ij} \leq 0 \implies c^{0}_{ij} \leq 0$ and $c^{\xi}_{ij} \leq c^{\xi}_{ik} \implies c^{0}_{ij} \leq c^{0}_{ik}$.

Since the structure in Fact \ref{greedy_structure} depends solely on the signs and ordering of the objective coefficients, the set of optimal solutions for $-c^{\xi}$ is a subset of that for $-c^{0}$. Therefore, any optimal solution $y^*_i$ for $-c^{\xi}$ is also optimal for $-c^{0}$. To show the structural requirement, we consider two cases according to the sign of~$-c^{\xi}_{ij}$:
\begin{itemize}
    \item For $j \in X^* \setminus X^i_{\leq 0}$, we have $-c^{\xi}_{ij} < 0$. By Fact \ref{greedy_structure} (ii), any optimal solution must satisfy $y^*_{ij} = 0$.
    \item For $j \in X^i_{\leq 0}$, we have $-c^{\xi}_{ij} \geq 0$. By Fact \ref{greedy_structure} (ii), the optimal set is a direct sum of base polyhedra for components with $-c^{\xi}_{ij} > 0$ and a polymatroid for those with $-c^{\xi}_{ij} = 0$. 
    Since a base polyhedron is a subset of its polymatroid, we can extract a solution that belongs to the base polyhedron even for components where $-c^{\xi}_{ij} = 0$. As the direct sum of these base polyhedra forms $B_{d^*}(\rhot_i)|_{X^i_{\leq 0}}$, we obtain $y^*_i|_{X^{i}_{\leq 0}}\in B_{d^*}(\rhot_i)|_{X^i_{\leq 0}}$.
\end{itemize}
\end{proof}

\begin{proof}[Proof of Theorem \ref{ineq_L}.]
For each $i \in N$, let $\tau_i$ be the constant in Lemma~\ref{common_optimal}. For any $\xi \in (0, \tau_i)$, let $y^*_i$ be the common optimal solution guaranteed by Lemma~\ref{common_optimal}.
Since $y^*_i$ maximizes both terms in the left-hand side (LHS) of (\ref{goal}), we can evaluate them at $y^*_i$. 
By $c^{\xi}_{ij} = c^{0}_{ij} - \xi$ for $j \in X$ and $c^{\xi}_{ij} =c^{0}_{ij}$ for $j\in X^*\setminus X$, we have:
\begin{align*}
\text{(LHS) of (\ref{goal})}
= \left( \sum_{j\in X^*\setminus X}  (-c^{0}_{ij}) y^*_{ij}+\sum_{j\in X} (-c^{0}_{ij} + \xi) y^*_{ij} \right) - \sum_{j \in X^*} (-c^{0}_{ij}) y^*_{ij}
= \sum_{j \in X} \xi y^*_{ij}
= \xi y^*_{i}(X\cap X^i_{\leq 0}),
\end{align*}
where the last equality follows from  $y^*_{ij}=0$ for any $j \in X^* \setminus X^i_{\leq 0}$.
By $y^*_i|_{X^i_{\leq 0}}\in B_{d^*}(\rhot_i)|_{X^i_{\leq 0}}$, we~have:
\begin{align*}
y^*_{i}(X \cap X^i_{\leq 0})
\leq \max_{y_i \in B_{d^*}(\rhot_i)|_{X^i_{\leq 0}}} y_i(X \cap X^i_{\leq 0})
\leq \max_{y_i \in B_{d^*}(\rhot_i)} y_i(X).
\end{align*}
Now we use the relation $\Dc_i(p+\varepsilon d^*)|_{X^*} = B_{d^*}(\rhot_i)$,  obtained from $\Dc_i(p+\varepsilon d^*) =\Dc_i(p)|_{M\setminus X^*}\oplus B_{d^*}(\rhot_i)$.
By the property of rank functions described in~(\ref{rank}), the inequality (\ref{goal}) follows from:
\[
\text{(LHS)\ of\ (\ref{goal})}  
\leq \xi \max_{y_i \in \Dc_i(p+\varepsilon d^*)|_{X^*} } y_i(X) 
= \xi \rhoc_i(X; p+\varepsilon d^*).
\]
Setting $\tau \coloneqq \min_{i \in N} \tau_i$ ensures that the inequality holds for all $i\in N$, completing the proof.
\end{proof}

\subsection{Economic Interpretation of the Direction}
\label{Interpretation_abst}
Here, we provide an interpretation of $d^*$ computed in Line 4 of Algorithm~\ref{alg:direction}.
Since $d^*$ is determined solely by the restriction $z^*|_{X^*}$, 
we consider a \emph{virtual market} restricted to the goods in $X^*$ with the supply vector $s|_{X^*}$, forming the feasible domain $[\zeros,s]_{\Z}^{X^*} \coloneqq \{ y \in \Z^{X^*}_+ \mid 0 \le y_j \le s_j \text{ for all } j \in X^* \}$.
On this domain, we define the virtual valuation $\tilde{v}_i: [\zeros,s]_{\Z}^{X^*} \to \R \cup \{-\infty\}$ by
\begin{equation*}
\tilde{v}_i(y_i) \coloneqq 
\begin{cases}
\sum_{j \in X^*} w_{ij} y_{ij} & \text{if } y_i \in P(\rhot_i), \\
-\infty & \text{otherwise}.
\end{cases}
\end{equation*}
Note that the effective domain $P(\rhot_i)$ is an $\mathrm{M}^\natural$-convex set (see Appendix~\ref{DemandSets} for the formal definition). Since $\tilde{v}_i$ coincides with a linear function over this domain, the following observation is immediate:
\begin{observation}
The virtual valuation $\tilde{v}_i$ is an $\mathrm{M}^\natural$-concave function on $[\zeros,s]_{\Z}^{X^*}$.
\end{observation}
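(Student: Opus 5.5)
The plan is to verify the definition of $\mathrm{M}^\natural$-concavity directly, splitting the work into a domain part and a value part. For the domain, $P(\rhot_i)$ is an integral polymatroid, hence an $\mathrm{M}^\natural$-convex set; this is the fact cited from Appendix~\ref{DemandSets}. Concretely, for $y,y'\in P(\rhot_i)$ and $j\in\operatorname{supp}_+(y-y')$, there exists $k\in\operatorname{supp}_-(y-y')\cup\{0\}$ such that
\[
y-\chi_j+\chi_k\in P(\rhot_i) \quad\text{and}\quad y'+\chi_j-\chi_k\in P(\rhot_i).
\]
Moreover, $P(\rhot_i)\subseteq[\zeros,s]_{\Z}^{X^*}$ up to intersection with the box. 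If needed, I would note that intersecting an $\mathrm{M}^\natural$-convex set with a box preserves $\mathrm{M}^\natural$-convexity. Alternatively, I would observe that $\rhot_i(\{j\})\le s_j$, since demands lie in $[\zeros,s]_{\Z}$, so $P(\rhot_i)$ already sits inside the box.

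For the value part, fix $y,y'$ in the domain and $j\in\operatorname{supp}_+(y-y')$. If $y$ or $y'$ lies outside $P(\rhot_i)$, then $\tilde v_i(y)+\tilde v_i(y')=-\infty$ and the exchange inequality holds trivially for any $k$. Otherwise, I take $k$ from the exchange property above. Linearity gives
\[
\tilde v_i(y-\chi_j+\chi_k)+\tilde v_i(y'+\chi_j-\chi_k)=\tilde v_i(y)+\tilde v_i(y'),
\]
because the terms $\mp w_{ij}$ and $\pm w_{ik}$ cancel, with the convention $w_{i0}=0$ when $k=0$. The inequality therefore holds with equality.

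A minor point is that the values of $\tilde v_i$ lie in the lexicographic ordered group $\R+\gamma\R$ rather than in $\R$. Since the argument uses only additive cancellation, it goes through verbatim; alternatively, one can fix a sufficiently small real $\gamma$.

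The only real obstacle is the claim that a polymatroid is an $\mathrm{M}^\natural$-convex set, that is, the simultaneous exchange property with $k\in\operatorname{supp}_-(y-y')\cup\{0\}$. This is standard, and I would simply cite the appendix or \citet{M2003}. If a self-contained argument were wanted, I would first try the case $k=0$. It always keeps $y-\chi_j$ in $P$ by down-closedness. So $k=0$ fails only if $y'+\chi_j$ violates some tight set $S\ni j$ with $y'(S)=\rhot_i(S)$. Among such sets I would take a tight set for $y'$ and use $y(S)\le\rhot_i(S)=y'(S)$ together with $y_j>y'_j$ to find $k\in S$ with $y_k<y'_k$. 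Submodularity of tight sets then ensures that a suitable choice of $k$ works for both vectors.
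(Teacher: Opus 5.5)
Your proposal is correct and matches the paper's argument, which notes only that the effective domain $P(\rhot_i)$ is an integral polymatroid, hence an $\mathrm{M}^\natural$-convex set, and that $\tilde v_i$ is linear on it, so the exchange inequality holds with equality; your handling of the $-\infty$ case, the containment $P(\rhot_i)\subseteq[\zeros,s]_{\Z}^{X^*}$ via $\rhot_i(\{j\})\le s_j$, and the lexicographic codomain are correct extra details. The one soft spot is your optional self-contained sketch: an arbitrary $k$ with $y_k<y'_k$ in the minimal tight set of $y'$ containing $j$ guarantees $y'+\chi_j-\chi_k\in P(\rhot_i)$ but not $y-\chi_j+\chi_k\in P(\rhot_i)$, so the simultaneous exchange needs a more careful choice of $k$ than you describe, though citing the standard fact, as both you and the paper do, suffices.
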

Using this valuation, we can characterize $z^*|_{X^*}$ as the minimum equilibrium price vector of this virtual~market.
\begin{lemma}
\label{independence}
Let $z^*$ be the minimal solution of $(D_2)$. The restriction $z^*|_{X^*}$ is the unique minimal solution~to:
\begin{equation*}
\min_{z'\in \mathbb R^{X^*}_{+}}\left\{ \sum_{i \in N} \max_{y_i \in [\zeros,s]_{\Z}^{X^*}} \left( \tilde{v}_i(y_i) - \sum_{j \in X^*} z'_j y_{ij} \right) + \sum_{j \in X^*} s_j z'_j\right\} .
\end{equation*}
\end{lemma}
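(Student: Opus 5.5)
The plan is to show that $\mathcal L$ separates along $X^*$ and $M\setminus X^*$ in a neighborhood of $z^*$. After that, the minimality statement transfers directly.

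Write $\tilde{\mathcal L}(z')$ for the objective in the statement. Since $\tilde v_i$ is $-\infty$ outside $P(\rhot_i)$, we have
\[
\tilde{\mathcal L}(z')=\sum_{i\in N}\max_{y_i\in P(\rhot_i)}\sum_{j\in X^*}(w_{ij}-z'_j)y_{ij}+\sum_{j\in X^*}s_jz'_j .
\]
For any $t'\in\R^M$ and $z=\chi_{\scriptscriptstyle X^*}+\gamma t'$, Lemma~\ref{difference_L2} (with $\xi=0$) gives
\[
\mathcal L(z)=F(z|_{M\setminus X^*})+\tilde{\mathcal L}(z|_{X^*}),
\]
where $F(z'')\coloneqq\sum_i\max_{x'_i\in P(\rhoc_i)|_{M\setminus X^*}}\sum_{j\notin X^*}(w_{ij}-z''_j)x'_{ij}+\sum_{j\notin X^*}s_jz''_j$. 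The first step is to extend this identity from the slice of the form $\chi_{\scriptscriptstyle X^*}+\gamma t'$ to all $z=(z'',z')$ with $z''$ of primary part $0$ and $z'$ of primary part $1$. Lemma~\ref{difference_L2} only uses that the primary coefficients on $M\setminus X^*$ strictly exceed those on $X^*$, so its proof applies verbatim.

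Next, I compare minimizers. By Theorem~\ref{convex-dual}(iii), $z^*$ lies in that region, so $z^*|_{X^*}$ must minimize $\tilde{\mathcal L}$ over the primary-part-$1$ vectors; otherwise replacing $z^*|_{X^*}$ would lower $\mathcal L$.

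It remains to rule out a global minimizer of $\tilde{\mathcal L}$ outside this region that is smaller than, or incomparable with, $z^*|_{X^*}$. I would use the lattice structure. The function $\tilde{\mathcal L}$ is $\mathrm{L}^\natural$-convex by the same argument as Theorem~\ref{convex-dual}(i), since it is the dual objective of a weighted polymatroid sum over the polymatroids $P(\rhot_i)$. Hence its minimizer set is a lattice with a unique minimal element $\tilde z$, and it suffices to show $\tilde z=z^*|_{X^*}$.

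Primary parts of $\tilde z$. I would apply Theorem~\ref{convex-dual}(iii) to the virtual market itself, i.e.\ to the instance with goods $X^*$, supply $s|_{X^*}$ and rank functions $\rhot_i$. Its unweighted dual $(D_1)$ is $\min_{X\subseteq X^*}\{s(X)+\sum_i\rhot_i(X^*\setminus X)\}$. Using $\sum_i\rhot_i(Y)=\rhoc(Y\cup(M\setminus X^*))-\rhoc(M\setminus X^*)$ and the fact that $X^*$ is the minimal minimizer of $(D_1)$ at $p$, the minimal minimizer of this virtual problem is exactly $X^*$, by the computation in the proof of Proposition~\ref{simple}. So $\tilde z=\chi_{\scriptscriptstyle X^*}+\gamma\tilde t$.

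Equality. The vector $(z^*|_{M\setminus X^*},\tilde z)$ is feasible for $(D_2)$ and lies in the region where the separation holds. Its value is
\[
F(z^*|_{M\setminus X^*})+\tilde{\mathcal L}(\tilde z)\le \mathcal L(z^*),
\]
so it is a minimizer of $(D_2)$. Minimality of $z^*$ then gives $z^*|_{X^*}\le\tilde z$. Conversely, $\tilde z\le z^*|_{X^*}$ holds by the minimality of $\tilde z$, since $z^*|_{X^*}$ is a minimizer of $\tilde{\mathcal L}$. Hence the two coincide.

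The main obstacle is justifying that Theorem~\ref{convex-dual} (lattice structure and form (iii)) applies to the virtual instance. This requires checking that the virtual problem is again a weighted polymatroid sum of the same type, with $w_{ij}\ge0$ and supply $s|_{X^*}$, and that its unweighted minimal over-demanded set is $X^*$. I expect the reduction of its $(D_1)$ via contraction to be routine but the part needing the most care.
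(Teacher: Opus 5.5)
Your proposal is correct and follows essentially the same route as the paper. Both arguments use Lemma~\ref{difference_L2} to split $\mathcal L$ on the slice $\chi_{\scriptscriptstyle X^*}+\gamma t$. Both then show that the minimal solution of the restricted problem also has primary part $\chi_{\scriptscriptstyle X^*}$: the paper does this via the $\gamma=0$ case and Theorem~\ref{lexico_characterization}, while you use the contraction identity and apply Theorem~\ref{convex-dual}(iii) to the virtual instance. The only remaining difference is in the last step: the paper matches secondary parts through the auxiliary problem in $t$, whereas your splice-and-compare argument with $(z^*|_{M\setminus X^*},\tilde z)$ gets the two inequalities directly. Also, your ``extension'' of the identity is unnecessary, since that region is exactly the slice $\chi_{\scriptscriptstyle X^*}+\gamma t'$.
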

The objective function is the potential function, 
called the \emph{Lyapunov function}, for the market with valuations $\{\tilde{v}_i\}_{i \in N}$ and supply $s|_{X^*}$.
This implies that $z^*|_{X^*}$ corresponds to the minimum equilibrium price vector in this virtual market.
This reveals an intriguing feature of our framework: to compute the equilibrium with frictions, in every iteration we compute the minimum equilibrium price vector of an auxiliary market without frictions. The existence of this vector, in turn, guarantees the existence of a direction satisfying \LSC.

Since each $\tilde{v}_i$ is additive within its effective domain, the above problem can be viewed as an assignment problem for a virtual market with unit-demand buyers by replacing each buyer with copies corresponding to their allocated units (for the formal construction of the problem, see Appendix~\ref{Interpretation_detail}).
By the classical result of \citet{L1983}, the minimum equilibrium prices in such a market coincide with the VCG prices \citep{V1961, C1971, G1973}. 
Therefore, the direction $d^*$ is constructed from the VCG prices in this auxiliary 
assignment market.

\section{Finite Convergence (Proof of Theorem \ref{finite})}
\label{Convergence}
In this section, we show that events (a) and (b) in Line~4 occur only finitely many times. Theorem \ref{finite} is an immediate consequence of the following two propositions.
In the first one, which extends the analysis of \citet[Lemma 7]{GS2000}, we bound the number of occurrences of event (a). While Algorithm~\ref{algo1} employs directional price updates, \LSC\, provides a bound on the number of changes in the minimal maximally over-demanded set.

\begin{proposition}
\label{event_a}
Event (a) in Line 4 of Algorithm~\ref{algo1} occurs at most $O(s(M) mn)$ times.
\end{proposition}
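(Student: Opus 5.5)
We bound the number of event (a) occurrences with a potential that takes integer values in a range of size $O(s(M)\,mn)$ and changes monotonically on every event (a). The natural candidate, following \citet[Lemma 7]{GS2000}, combines the maximum over-demand $\max_X \mathcal{O}(X;p)$ with the size of the minimal maximizer $X^*$. Since $\mathcal{O}$ is integer valued and $0\le \max_X\mathcal{O}(X;p)\le \mu(M;p)\le n\,s(M)$, the plan is to show the following two facts along the price path.

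The plan has three steps.

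\textbf{Step 1.} We first show that the maximum over-demand $\max_X\mathcal{O}(X;p)$ never increases along the path. Within an iteration, \LSC\ and the decomposition in Proposition~\ref{decomposition}~(ii) give the formula
\[
\min_X\{s(X)+\rhoc(M\setminus X;p+\varepsilon d^*)\}=s(X^*)+\rhoc(M\setminus X^*;p).
\]
Hence the dual value of $(D_1)$ is preserved. The remaining task is to control $\rhoc(M;\cdot)$, which is nonincreasing by Proposition~\ref{mu_rho_relation}. At the endpoint, where demand sets jump, we use upper semicontinuity exactly as in the proof of Proposition~\ref{no_underdemanded}. Monotonicity of $\mu_i(Y;\cdot)$ on coordinates with unchanged prices (Proposition~\ref{mu_rho_relation}~(ii)) then shows that any set over-demanded at the new price carries over-demand at most the old maximum.

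\textbf{Step 2.} We then show that whenever event (a) occurs and the maximum value stays the same, the new minimal maximizer $X'$ satisfies a strict monotonicity relation with the old one. Following Gul--Stacchetti, the target is either $X'\supsetneq X^*$, or some lexicographic measure such as $|X'|$ increases. The argument uses supermodularity of $\mathcal{O}$ together with Proposition~\ref{mu_rho_relation}. Goods outside $X^*$ have fixed prices, so their requirements $\mu_i(\cdot)$ can only grow. For goods in $X^*$, $\mu_i(X^*)$ can only shrink while the total $\rhoc(M)$ is controlled.

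\textbf{Step 3.} Finally we count. Since $|X^*|\le m$, Step~2 allows at most $m$ consecutive events (a) with equal maximum. Each value of the maximum over-demand is also visited at most once per buyer-related drop, and I expect the count to be $O(s(M)\,n)$ distinct values. Multiplying gives $O(s(M)\,mn)$.

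The main obstacle is Step~2 under directional updates. Because prices in $X^*$ rise at unequal rates, a set $Y\subsetneq X^*$ may become maximally over-demanded, which is exactly the failure \LSC\ guards against only locally. I would first try to show that at the terminal price of the iteration any new minimal maximizer $X'$ satisfies $X'\cap X^*\in\{\emptyset, X^*\}$, or else that the maximum strictly drops. To do this I would pass to the limit $p-\varepsilon d^*\to p$ and apply Theorem~\ref{any_direction}~(c), which gives $s(Y)<\rhoc(Y;p-\varepsilon d^*)$, together with upper semicontinuity. If the case $X'\cap X^*=X^*$ cannot be forced, the fallback is a weaker monotone quantity: use $\rhoc(M;p)$, which is nonincreasing and bounded by $n\,s(M)$, as the outer counter, and $|X^*|$ as the inner counter.
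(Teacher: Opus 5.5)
Your potential (first the maximum over-demand $\max_X\mathcal O(X;p)=\mathcal O(X^*;p)$, then $|X^*|$) is exactly the paper's $\theta_t=(m+1)\mathcal O(X^*_t;p^{(t)})-|X^*_t|$, and your Steps~1 and~3 essentially match the paper. Two small points. First, in Step~1 the value of $(D_1)$ is only \emph{locally} preserved. Over a whole segment it is non-decreasing, because $\rhoc(M\setminus X^*;\cdot)$ is, and together with $\rhoc(M;\cdot)$ being non-increasing this suffices; Lemma~\ref{monotone_c} gives the same conclusion directly. Second, Step~3 only needs that the maximum is a non-increasing integer in $[0,n\,s(M)]$.

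The genuine gap is Step~2, which is the heart of the proof. You state it as a target and flag it as the main obstacle. The sub-goal and fallback you offer for it do not work.

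The missing idea is that Step~2 follows from the tightness of the chain you already use in Step~1. Let $X'$ be the new minimal maximizer at the stopping price $p^{(t+1)}$. Let $p'=p^{(t+1)}-\varepsilon d$ be a price just before it, at which $X^*$ is still the unique minimal maximizer. Then
\[
\mathcal O(X';p^{(t+1)})\le\mathcal O(X';p')\le\mathcal O(X^*;p')\le\mathcal O(X^*;p^{(t)}),
\]
by upper semicontinuity, maximality of $X^*$ at $p'$, and Lemma~\ref{monotone_c}.

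If the maximum does not drop, all three inequalities are equalities, so $X'$ is a maximizer of $\mathcal O(\cdot;p')$. Maximizers of the supermodular $\mathcal O(\cdot;p')$ are closed under intersection and $X^*$ is the minimal one, so $X^*\subseteq X'$. Event~(a) gives $X'\neq X^*$, hence $|X'|>|X^*|$.

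Your worry about some $Y\subsetneq X^*$ becoming maximally over-demanded after unequal increments is exactly the case where the middle inequality is strict. Since $Y$ is not a maximizer at $p'$, the maximum drops, and no limiting argument via Theorem~\ref{any_direction}~(c) is needed.

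Your proposed alternatives do not close the gap:
\begin{itemize}
\item The sub-goal $X'\cap X^*\in\{\emptyset,X^*\}$ would not yield progress, since the alternative $X'\cap X^*=\emptyset$ says nothing about $|X'|$.
\item The fallback pair $(\rhoc(M;p),|X^*|)$ is not monotone. With $\rhoc(M;p)$ fixed, the value of $(D_1)$ can rise and $X^*$ can shrink. For example, unit-demand buyers whose minimal demand is good $2\in X^*$ become indifferent to a good $3\notin X^*$. Then $\{1,2\}$ gives way to $\{1\}$ while total minimal demand is unchanged.
\end{itemize}
The outer counter has to be the maximum over-demand itself.
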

Throughout this section, we repeatedly use the following lemma:
\begin{lemma}
\label{monotone_c}
Let $p$ and $p'$ be two price vectors in the same iteration of Algorithm~\ref{algo1} such that $p \lneq p'$, and let $X^*$ be the minimal maximally over-demanded set in that iteration.
Then, it holds $\mathcal{O}(X^*; p) \geq \mathcal{O}(X^*; p')$.
\end{lemma}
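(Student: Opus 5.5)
The plan is to compare the over-demand of the fixed set $X^*$ at the two prices through the requirement functions $\mu_i$. The key fact is that prices on $M\setminus X^*$ never move within an iteration, so Proposition~\ref{mu_rho_relation}~(ii) applies directly.

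First I pin down the price relation. Within one iteration of Algorithm~\ref{algo1}, the direction $d^*$ computed in Line~2 is held fixed, and the price moves only along it. Hence any two prices in the same iteration satisfy $p' = p + c\,d^*$ for some $c > 0$; the inequality $p \lneq p'$ fixes the order. Since $\operatorname{supp}_+(d^*) = X^*$ (the case $\varepsilon = 0$ of \LSC), we get $p_j = p'_j$ for every $j \in M\setminus X^*$, and $p \le p'$ componentwise.

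Next I apply Proposition~\ref{mu_rho_relation}~(ii) with $Y \coloneqq M\setminus X^*$. This choice is admissible because $Y \subseteq \{j \in M \mid p_j = p'_j\}$. The second inequality there reads $\mu_i(M\setminus Y; p) \ge \mu_i(M\setminus Y; p')$, that is, $\mu_i(X^*; p) \ge \mu_i(X^*; p')$ for each buyer $i \in N$. Summing over $i$ gives $\mu(X^*; p) \ge \mu(X^*; p')$. Subtracting the price-independent quantity $s(X^*)$ then yields
\[
\mathcal{O}(X^*; p) = \mu(X^*; p) - s(X^*) \;\ge\; \mu(X^*; p') - s(X^*) = \mathcal{O}(X^*; p'),
\]
which is the claim.

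The only point needing care is the first step: justifying that $p' - p$ is a nonnegative multiple of $d^*$, and hence vanishes off $X^*$. This uses the fact that an iteration is a single straight segment along $d^*$, ending at the stopping time $\bar c(p, d^*)$ of Line~4. Once that is in place, the lemma follows from Proposition~\ref{mu_rho_relation}, which in turn rests on Theorem~\ref{gs_lad}. I do not need minimality of $X^*$ or \LSC beyond the support identity.
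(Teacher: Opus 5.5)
Your proposal is correct and follows essentially the same argument as the paper. Both use $\operatorname{supp}_+(d^*)=X^*$ to fix prices off $X^*$, apply Proposition~\ref{mu_rho_relation}~(ii) with $Y=M\setminus X^*$, and sum over buyers. Your explicit justification that $p'-p$ is a nonnegative multiple of $d^*$ is a step the paper leaves implicit.
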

\begin{proof}
By $\operatorname{supp}_+(d^*) = X^*$, 
we have $p_j = p'_j$ for all $j \in M \setminus X^*$.
Applying Proposition~\ref{mu_rho_relation} with $Y = M \setminus X^*$, we have  
$\mu_i(X^*; p) \geq \mu_i(X^*; p')$ for all $i\in N$.
Summing this over all buyers yields 
$\mathcal{O}(X^*; p)\geq~\mathcal{O}(X^*; p')$.
\end{proof}

\begin{proof}[Proof of Proposition \ref{event_a}.]
Let $p^{(t)}$ denote the price vector at the $t$-th occurrence of event (a), and let $X^*_t$ denote the minimal maximally over-demanded set  at $p^{(t)}$.
Define an integer sequence $\{ \theta_t\}$ by:
$\theta_t \coloneqq (m+1) \mathcal{O}(X^*_{t}; p^{(t)})-|X^*_{t}|$.
By Lemma~\ref{monotone_c}, the value $\mathcal{O}(X^*_t; p)$ is non-increasing between two consecutive occurrences of event~(a).
In the following, we show that the sequence strictly decreases, i.e., $ \theta_{t} >  \theta_{t+1}$ for each $t$, and that the sequence is bounded from below.
To this end, we establish the following three properties:
\begin{itemize}
    \item[(i)] $\mathcal{O}(X^*_{t}; p^{(t)}) \geq \mathcal{O}(X^*_{t+1}; p^{(t+1)})$.
    \item[(ii)] If $\mathcal{O}(X^*_{t}; p^{(t)})=\mathcal{O}(X^*_{t+1}; p^{(t+1)})$, then $|X^*_{t}| < |X^*_{t+1}|$.
    \item[(iii)] $ \theta_t\geq 0$ and $ \theta_t=0$ if and only if $X^*_{t} = \emptyset$.
\end{itemize}

Let $d$ be the direction immediately preceding the arrival at $p^{(t+1)}$.
Consider $p' \coloneqq p^{(t+1)} - \varepsilon d$, where $\varepsilon > 0$ is sufficiently small so that event (b) does not occur and the demand sets of buyers remain invariant at all prices strictly between $p'$ and $p^{(t+1)}$.
By the upper semicontinuity of the demand correspondence, $D_i(p') \subseteq D_i(p^{(t+1)})$ holds for each $i \in N$. 
Then, property (i) holds by 
\begin{align*}
\mathcal{O}(X^*_{t}; p^{(t)}) 
&=  \mu(X^*_{t}; p^{(t)})-s(X^*_{t}) && \text{(By (\ref{overdemand}))} \\
&\geq \mu(X^*_{t}; p')-s(X^*_{t})  && \text{(By Lemma \ref{monotone_c})} \\
&\geq  \mu(X^*_{t+1}; p')-s(X^*_{t+1}) && (\text{By the choice of } X^*_{t}\ \text{and \LSC}) \\ 
&\geq  \mu(X^*_{t+1}; p^{(t+1)})-s(X^*_{t+1}) && (\text{By}\ D_i(p') \subseteq D_i(p^{(t+1)})\ \text{for each}\ i) \\ 
&= \mathcal{O}(X^*_{t+1}; p^{(t+1)}) &&  \text{(By (\ref{overdemand}))}.
\end{align*}

Suppose that $\mathcal{O}(X^*_{t}; p^{(t)}) = \mathcal{O}(X^*_{t+1}; p^{(t+1)})$. 
Specifically, $\mathcal{O}(X^*_{t}; p') = \mathcal{O}(X^*_{t+1}; p')$ means that $X^*_{t+1}$ is also a maximizer of $\mathcal{O}(\cdot; p')$.
Since $\mathcal{O}(\cdot; p')$ is supermodular, the set of its maximizers forms a lattice and is thus closed under intersection. The \textit{minimal} maximally over-demanded set $X^*_{t}$ is unique, and since $X^*_{t} \neq X^*_{t+1}$, the strict inclusion $X^*_{t} \subsetneq X^*_{t+1}$ holds. 
Therefore, we have $|X^*_{t}| < |X^*_{t+1}|$, which yields (ii).

Property (iii) holds immediately by the choice of $X^*_{t}$. Specifically, if $X^*_t\neq \emptyset$, then $\mathcal{O}(X^*_{t}; p)\geq 1$ for any $p$ during the execution of Algorithm~\ref{algo1}, and thus $ \theta_t>0$ by $|X^*_t| \leq m$.

Consequently, $\{ \theta_t\}$ is a strictly decreasing sequence of integers starting from $ \theta_0$.
By (iii), the algorithm continues as long as $ \theta_t > 0$ and terminates when $ \theta_T = 0$ (i.e., $X^*_T = \emptyset$). Since the sequence decreases by at least $1$ in each iteration, the total number of occurrences of event~(a) is bounded by $\theta_0 \leq s(M)(m+1)n$. 
\end{proof}

We now address event (b), occurring when \LSC\, is violated while $X^*$ 
remains invariant. The second proposition characterizes the triggers of such changes and establishes their finiteness.
\begin{proposition}
\label{event_b}
Event (b) in Line 4 of Algorithm~\ref{algo1} is caused by one of the following:
\begin{enumerate}
    \item[{\rm (i)}] The right derivatives of the payment functions for some buyers change.
    \item[{\rm (ii)}] The value of the over-demand function for $X^*$ strictly decreases.
    \item[{\rm (iii)}] The optimal value of the dual problem $(D_2)$ strictly increases.
    \item[{\rm (iv)}] The unique minimal solution of $(D_2)$ strictly increases while the optimal value remains invariant.
\end{enumerate}
Moreover, event (b) triggers only a finite number of updates before the next occurrence of event (a).
\end{proposition}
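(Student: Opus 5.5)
The plan is to fix an iteration with minimal maximally over-demanded set $X^*$, recomputed direction $d^*$, and a price $\bar p$ at which event (b) fires. At $\bar p$, \LSC\ fails for $d^*$ while $X^*$ is still the unique minimal maximizer of $\mathcal{O}(\cdot;\bar p)$. By Theorem~\ref{any_direction}, failure of \LSC\ means that for small $\varepsilon$ some non-empty $X\subseteq X^*$ has $s(X)\ge \rhoc(X;\bar p+\varepsilon d^*)$. Proposition~\ref{decomposition}(iii) shows that $\rhoc(\cdot;\bar p+\varepsilon d^*)$ on $X^*$ is determined by three ingredients: the contracted rank functions $\rhot_i=(\rhoc_i(\cdot;\bar p))^{M\setminus X^*}$, the slopes $q'_{ij}(\bar p_j)$, and $d^*$. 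Along the open segment before $\bar p$ all three are constant, since $d^*$ was valid there. So at least one of them must change at $\bar p$. I would classify event (b) according to which one changes.

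\textbf{Step 1: the slopes.} If some $q'_{ij}(\bar p_j)$ differs from its value on the segment, a breakpoint has been hit. This is case (i).

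\textbf{Step 2: the rank functions.} If the slopes are unchanged, then some $\rhoc_i(\cdot;\bar p)$ changed. By upper semicontinuity of demand, $D_i(\bar p-\varepsilon d^*)\subseteq D_i(\bar p)$. The prices on $M\setminus X^*$ are fixed, so Lemma~\ref{monotone_c} and Proposition~\ref{mu_rho_relation}(ii) apply. Either $\mu(X^*)$ drops, which is case (ii), or $\mu_i(X^*)$ is preserved for every $i$. In the latter case, the equality clause of Proposition~\ref{mu_rho_relation}(ii) gives $\rhoc_i(M)$ and $\rhoc_i(M\setminus X^*)$ unchanged. The demand sets can only grow at $\bar p$, and polymatroid monotonicity of the rank functions holds. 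Together these imply that $P(\rhot_i)$ weakly enlarges (or changes in a way compatible with \eqref{rank}).

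\textbf{Step 3: effect on $(D_2)$.} Now compare $(D_2)$ at $\bar p$ with $(D_2)$ on the segment, through the virtual market of Lemma~\ref{independence}. If the primal $(P_2)$ value changes, Theorem~\ref{convex-dual}(ii) gives (iii) once I show it can only increase: the primary part $\sum x_i(M)$ is pinned by $(P_1)$ since $X^*$ is unchanged, and the feasible region on $X^*$ grows. If the value is unchanged but $z^*$ changes, the lattice structure of Theorem~\ref{convex-dual}(i) should force $z^*$ to move upward. The old minimal solution must still be feasible. It cannot be optimal, because otherwise the old $d^*$ would still satisfy \LSC\ by Theorems~\ref{ineq_L} and \ref{d_weighted}. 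Hence the new minimal solution differs, and I would argue it is not below the old one. This is case (iv).

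\textbf{Step 4: finiteness.} Between two occurrences of event (a), each of the four quantities moves monotonically, and each takes finitely many values.
\begin{itemize}
\item For (i), prices are bounded because $q_{ij}(\kappa)=v_i(\chi_j)$ gives an upper bound, and the $q_{ij}$ have finitely many breakpoints.
\item For (ii), $\mathcal{O}$ is a bounded integer.
\item For (iii) and (iv), $z^*$ is determined by the finitely many possible pairs (rank functions, slope vector). These are the combinatorial data of $\Dc_i$, which range over finitely many polymatroids inside $[\zeros,s]_\Z$.
\end{itemize}
Strict monotonicity of the value and of $z^*$ in the lexicographic order then excludes cycling.

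\textbf{Main obstacle.} The hardest step is Step 3: proving the monotone directions, namely that the value can only increase and that $z^*$ can only increase when the value is tied. I would first try to show $P(\rhot_i(\bar p))\supseteq P(\rhot_i)$ using Theorem~\ref{gs_lad} restricted to $X^*$. From that inclusion, $\mathcal L$ is pointwise nondecreasing. Since $\mathcal L$ is \Ln-convex, its minimizer lattice shifts upward, which I would derive from a standard monotone comparative statics argument for \Ln-convex minimization.
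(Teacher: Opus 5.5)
Your case split is the paper's. Slope changes give (i). Otherwise, with prices on $M\setminus X^*$ fixed, Proposition~\ref{mu_rho_relation}(ii) either gives a drop in $\mu(X^*)$, which is (ii), or freezes $\rhoc_i(M)$ and $\rhoc_i(M\setminus X^*)$; the polymatroids then grow and you read off (iii) or (iv). The gap is that the two monotonicity facts you defer to the ``main obstacle'' carry the proof, and the tools you name for them do not work.

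\textbf{First, the inclusion.} Let $\tilde p$ be the price just before $\bar p$; you need $P(\rhot_i(\cdot;\tilde p))\subseteq P(\rhot_i(\cdot;\bar p))$. Theorem~\ref{gs_lad} cannot give it on $X^*$, because it only controls coordinates with unchanged price (here $M\setminus X^*$) together with the total size. Also, ``demand sets can only grow'' is a statement about $D_i$, not about $\Dc_i$. The missing step goes as follows:
\begin{itemize}
\item Upper semicontinuity gives $\Dc_i(\tilde p)\subseteq D_i(\bar p)$.
\item Every bundle in $\Dc_i(\tilde p)$ has size $\rhoc_i(M;\tilde p)=\rhoc_i(M;\bar p)$, which is the common size of all bundles in $\Dc_i(\bar p)$. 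So Fact~\ref{minimal_existence} forces each such bundle to be minimal in $D_i(\bar p)$, i.e.\ $\Dc_i(\tilde p)\subseteq\Dc_i(\bar p)$.
\item Hence $\rhoc_i(\cdot;\tilde p)\le\rhoc_i(\cdot;\bar p)$. With $\rhoc_i(M\setminus X^*)$ fixed, this gives the inclusion for $\rhot_i$.
\item It also gives $\mathcal L_{\tilde p}\le\mathcal L_{\bar p}$ pointwise, so the value monotonicity behind (iii) is immediate.
\end{itemize}

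\textbf{Second, the upward move in (iv).} Pointwise domination plus \Ln-convexity does \emph{not} push minimizers upward. For instance, on $\R_+$ the function $g(z)=z+10$ dominates $f(z)=|z-1|$ but has the smaller minimizer. What the proof needs is the tie:
\begin{itemize}
\item If $\min\mathcal L_{\bar p}=\min\mathcal L_{\tilde p}$ and $\mathcal L_{\tilde p}\le\mathcal L_{\bar p}$, then $\argmin\mathcal L_{\bar p}\subseteq\argmin\mathcal L_{\tilde p}$. So the least minimizer $z^{**}$ at $\bar p$ satisfies $z^{**}\ge z^*$.
\item The same inclusion shows that no $z'\lneq z^*$ is optimal at $\bar p$. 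So if $z^*$ stayed optimal, it would stay the least minimizer.
\item Algorithm~\ref{alg:direction} would then return the same $d^*$, which satisfies \LSC{} at $\bar p$ by Theorem~\ref{d_weighted}. This contradicts event (b), so $z^{**}\gneq z^*$.
\end{itemize}
Your claim ``it cannot be optimal'' should therefore read ``it cannot remain the least minimizer''.

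\textbf{Minor points.} In Step~4 the four quantities are not monotone over the whole stretch between two (a)-events. The $(D_2)$ value can drop at a slope change or at a (ii)-event, and $z^*$ can drop at a (iii)-event. You need the paper's nested count:
\begin{itemize}
\item there are finitely many (i)-events;
\item between them, finitely many (ii)-events;
\item between those, the value is nondecreasing over a finite set;
\item between value jumps, $z^*$ strictly increases over a finite set.
\end{itemize}
In the opening step, the data are constant only on a short final sub-segment before $\bar p$, because demand is piecewise constant along the ray. Validity of $d^*$ does not make them constant on the whole segment. This is why one compares $\bar p$ with the immediately preceding price.
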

\begin{proof}
Consider Case (i), which affects the weights $w$ and thus the optimal solution to $(D_2)$. Since the payment functions are piecewise-linear, such changes occur only finitely many times.

Hereafter, we assume that the slopes of the payment functions remain unchanged. 
Then, a violation of \LSC\, implies that the solution to $(D_2)$, computed at the beginning of the iteration and used to construct $d^*$, is no longer the minimal optimal solution.
Since $w$ is unchanged, this violation must be caused by a change in the polymatroids 
$\{P(\rhot_i(\cdot; p))\}$ at the current price $p$. 
We now focus on when such a change occurs.

We first address Case (ii). Since $X^*$ remains the minimal maximally over-demanded set, this implies that $\mu_i(X^*; p)$ decreases for some $i$.
By \eqref{mu_rho}, this corresponds to either a decrease in $\rhoc_i(M; p)$ or an increase in $\rhoc_i(M\setminus X^*; p)$.
In either scenario, $\rhot_i(X^*; p)\coloneqq \rhoc_i(M; p)-\rhoc_i(M\setminus X^*; p)$ is updated, thereby altering $P(\rhot_i(\cdot;p))$. 
By Lemma~\ref{independence}, such a change may shift the minimal optimal solution to $(D_2)$. Since each such change strictly decreases $(m+1)\mathcal{O}(X^*; p) - |X^*|$, and this quantity can decrease only a finite number of times, as established in Proposition~\ref{event_a} and Lemma~\ref{monotone_c}, the change occurs only finitely many times.

Suppose that each $\mu_i(X^*;p)$ is unchanged. 
Let $\tilde{p}$ denote the price vector immediately preceding $p$. 
By Proposition~\ref{mu_rho_relation} (ii), this implies $\rhoc_i(M; p)=\rhoc_i(M; \tilde{p})$ and $\rhoc_i(M\setminus X^*; p)=\rhoc_i(M\setminus X^*; \tilde{p})$.
By $\rhoc_i(M; p)=\rhoc_i(M; \tilde{p})$ and the upper semicontinuity of the demand correspondence, the inclusion $\Dc_i(\tilde{p}) \subseteq \Dc_i(p)$ holds. 
This implies that $\rhoc_i$ is non-decreasing from $\tilde{p}$ to $p$. 
Combined with $\rhoc_i(M\setminus X^*; p)=\rhoc_i(M\setminus X^*; \tilde{p})$,
this gives, for any $Y \supseteq M\setminus X^*$, 
$\rhot_i(Y; p) = \rhoc_i(Y; p)- \rhoc_i(M\setminus X^*; p) \geq 
\rhoc_i(Y; \tilde{p})- \rhoc_i(M\setminus X^*; \tilde{p}) = \rhot_i(Y; \tilde{p})$,
which implies $P(\rhot_i(\cdot; \tilde{p})) \subseteq P(\rhot_i(\cdot; p))$. 
This means that the optimal value of $(D_2)$ does not decrease from  $\tilde{p}$ to $p$. If \LSC\, is violated, this inclusion must be strict for some $i$, i.e., $P(\rhot_i(\cdot; \tilde{p})) \subsetneq P(\rhot_i(\cdot; p))$.

Now consider Case (iii). Since each $\rhot_i(\cdot; p)$ is determined by $D_i(p) \subseteq [\zeros,s]_{\Z}$, the rank functions $\{\rhot_i\}_{i \in N}$ take only finitely many distinct forms. Hence the set of possible optimal values of $(D_2)$ is finite, so this value can strictly increase only finitely many times.

Finally, we consider Case (iv). 
Let $\mathcal L_p$ and $\mathcal L_{\tilde{p}}$ denote the objective functions of $(D_2)$ at $p$ and $\tilde{p}$, respectively. 
By $P(\rhot_i(\cdot; \tilde{p})) \subseteq P(\rhot_i(\cdot; p))$, we have $\mathcal L_{\tilde{p}}(z) \leq \mathcal L_{p}(z)$ for all $z \in \mathbb R^M_+$.
Consider any $z'\in\mathbb R^M_+$ with $z'\lneq z^*$. 
Since \LSC\, is satisfied at $\tilde{p}$ and the optimal value is unchanged, the minimality of $z^*$ yields
$\mathcal L_{p}(z') \geq \mathcal L_{\tilde{p}}(z') > \mathcal L_{\tilde{p}}(z^*) 
= \min_{z}\mathcal L_{\tilde{p}}(z) = \min_{z}\mathcal L_{p}(z)$.
Let $z^{**}$ be the unique minimal solution for $\min_{z}\mathcal L_{p}(z)$.
The inequality $\mathcal L_{\tilde{p}}(z^{**}) \le \mathcal L_{p}(z^{**}) =\min_{z} \mathcal L_{p}(z) = \min_{z} \mathcal L_{\tilde{p}}(z)$ implies that $z^{**}$ also minimizes $\mathcal L_{\tilde{p}}$. Since $z^*$ is the unique minimal minimizer of $\mathcal L_{\tilde{p}}$ and the minimal solution strictly changes (due to the violation of \LSC), we obtain  $z^{**} \gneq z^*$. 
Since the minimal solution strictly increases over a finite set of candidates, the same solution never recurs, so Case~(iv) can recur only finitely many times before Case~(iii).

Since each case occurs only finitely many times, event~(b) triggers only finitely many updates. 
\end{proof}

Unlike frictionless settings where uniform price increases preserve buyers' demands within $X^*$, our directional updates can shift demands within $X^*$, triggering Cases~(iii) and~(iv); see Appendix~\ref{Example} for a concrete example.\footnote{While \citet{DHW2015} and \citet{BEF2024} claim polynomial bounds, their proofs appear to overlook an inherent internal demand shift: a new edge between a buyer and a good, both belonging to the maximal alternating tree, may enter the demand graph. These situations correspond precisely to Cases~(iii) and~(iv). Therefore, our finite termination guarantee is not inferior to theirs.} Nevertheless, our construction of $d^*$ via Algorithm~\ref{alg:direction} guarantees finite termination.

Finally, we observe that under Assumption \ref{simple-case}, the dynamics are significantly simplified.
\begin{observation}
\label{simple_iteration}
Under Assumption~\ref{simple-case}, event (b) in Line 4 is triggered only by Case (i).
\end{observation}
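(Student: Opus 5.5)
Under Assumption~\ref{simple-case}, I will show that none of Cases~(ii)--(iv) of Proposition~\ref{event_b} can trigger event~(b). The point is that, as long as the slopes $q'_{ij}$ stay fixed and $X^*$ remains the minimal maximally over-demanded set, the direction $d^*_j = 1/\beta_j$ of Proposition~\ref{simple} keeps satisfying \LSC. I will not work with the lexicographic dual $(D_2)$ at all. Instead I will use the closed-form direction and the characterization in Theorem~\ref{any_direction}. The one point needing care is which direction Algorithm~\ref{alg:direction} actually outputs. By Theorem~\ref{d_weighted} it outputs a valid direction. However, the observation concerns the triggers of event~(b), and in the proof of Proposition~\ref{simple} the only property of $d^*$ used is that $q'_{ij}(p_j)d^*_j=\alpha_i$ is constant in $j$. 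So I will argue for any direction with that property, which includes $1/\beta_j$.

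\textbf{Step 1.} Fix an iteration and a price $p'=p+c\,d^*$ on the segment, with $X^*$ still the minimal maximizer of $\mathcal{O}(\cdot;p')$. Assume no slope has changed, so $\beta$ is unchanged and $d^*$ restricted to $X^*$ is still $1/\beta_j$. I rerun the proof of Proposition~\ref{simple} at $p'$ instead of $p$. By Proposition~\ref{decomposition}~(iii) applied at $p'$ with direction $d^*$, the objective $\sum_{j\in X^*}q'_{ij}(p'_j)d^*_jy_{ij}=\alpha_i\,y_i(X^*)$ is constant on $B(\rhot_i(\cdot;p'))$. Hence
\[
\Dc_i(p'+\varepsilon d^*)=\Dc_i(p')|_{M\setminus X^*}\oplus B(\rhot_i(\cdot;p')).
\]

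\textbf{Step 2.} Since $X^*$ is the minimal minimizer of $(D_1)$ at $p'$, the same inequality chain as in Proposition~\ref{simple} gives
\[
s(X)<\sum_i\rhot_i(X;p')=\rhoc(X;p'+\varepsilon d^*)
\]
for every nonempty $X\subseteq X^*$. Theorem~\ref{any_direction} then gives \LSC at $p'$. So \LSC can never be violated while $X^*$ stays the minimal maximizer and the slopes are constant. The only way event~(b) can occur is through a change in some $q'_{ij}$, which is Case~(i).

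\textbf{Main obstacle.} The delicate step is making sure the direction is not recomputed in a way that breaks the argument. If $d^*$ is taken from Algorithm~\ref{alg:direction} rather than as $1/\beta_j$, I need to show that the minimal dual solution has $t_j=-\log\beta_j+\text{const}$ on $X^*$, or at least that $q'_{ij}d^*_j$ is independent of $j$. I would try to establish this via Lemma~\ref{independence}. Under Assumption~\ref{simple-case}, the virtual valuation is $\tilde v_i(y)=y(X^*)-\gamma\log\alpha_i\,y(X^*)-\gamma\sum_j\log\beta_j y_j$ on $P(\rhot_i)$. The $\beta$-term can then be absorbed by the price shift $z'_j\mapsto z'_j-\gamma\log\beta_j$, which reduces the virtual market to one where buyers differ only by a per-unit scalar. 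If the minimal equilibrium price of that reduced market is uniform on $X^*$, this gives the required form. If that fails, I fall back to noting that Line~2 may use the closed-form direction of Proposition~\ref{simple}, as the paper remarks after that proposition.
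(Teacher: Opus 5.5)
Your Steps 1--2 are the paper's argument. With $d^*_j=1/\beta_j$ on $X^*$ you have $q'_{ij}(p'_j)d^*_j=\alpha_i$ at every price $p'$ of the segment, so the proof of Proposition~\ref{simple} reruns at $p'$, and Theorem~\ref{any_direction} gives \LSC as long as $X^*$ stays the minimal maximizer and the slopes are fixed. The paper's one-line justification (``the direction $d^*$ depends only on $X^*$'') says exactly this.

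The route in your ``main obstacle'' paragraph, however, cannot be completed, so the fallback is required rather than optional. The minimal prices of the reduced virtual market need not be uniform on $X^*$. Worse, for the direction returned by Algorithm~\ref{alg:direction} the observation is false. Here is a counterexample:
\begin{itemize}
\item Take $M=\{1,2\}$, $s=(1,1)$, and linear payments $q_{ij}(p)=\alpha_i\beta_j p$ with $\alpha=(1,2,\tfrac12)$, $\beta=(1,1)$.
\item Buyer~1 is unit-demand with $v_1(1,0)=10$, $v_1(0,1)=9$; buyer~2 is unit-demand with $v_2(1,0)=5$, $v_2(0,1)=10$; buyer~3 is additive with $v_3(x)=10x_1+10x_2$.
\item At $p=\zeros$ we have $\Dc_1=\{(1,0)\}$, $\Dc_2=\{(0,1)\}$, $\Dc_3=\{(1,1)\}$ and $X^*=M$. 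The minimal solution of $(D_2)$ is $z^*=(1,\,1-\gamma\log 2)$, namely the second-highest virtual values of the two goods.
\item So Algorithm~\ref{alg:direction} returns $d^*=(1,\tfrac12)$, which is not parallel to $(1/\beta_j)_j=(1,1)$.
\end{itemize}
Along this direction the minimal demand sets, $X^*$ and \LSC stay unchanged until $p=(2,1)$, where buyer~1 becomes indifferent between $(1,0)$ and $(0,1)$. At that price $X^*=M$ is still the minimal maximizer: $\mathcal O$ takes the values $0,1,2$ on $\{1\},\{2\},M$. But at $p+\varepsilon d^*$ buyer~1 switches to $(0,1)$, so $\mathcal O(\{2\};p+\varepsilon d^*)=\mathcal O(M;p+\varepsilon d^*)=2$ and \LSC fails with no slope change. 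This is Case~(iv): the value of $(D_2)$ stays $2+2\gamma\log 2$, while its minimal solution rises to $(1,1)$.

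You should therefore state outright, as the paper does after Proposition~\ref{simple}, that under Assumption~\ref{simple-case} Line~2 outputs the closed-form direction. Do not try to derive that direction from Algorithm~\ref{alg:direction}.
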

This follows from Proposition~\ref{simple}: when the slopes of the payment functions are unchanged, the direction $d^*$ depends only on $X^*$. Our mechanism therefore inherits the iteration bounds of \citet{GS2000}.

\section{Potential Functions and Generalized Discrete Convexity}
\label{potential}
\citet{A2006} provided a potential-based interpretation of the auction dynamics in the setting without frictions. 
In our framework, this corresponds to the special case of Assumption~\ref{simple-case} where $\alpha_i = 1$ and $\beta_j = 1$ for all $i \in N$ and $j \in M$.
The potential function $L:\mathbb R^M_+\to \mathbb R_+$, called the \textit{Lyapunov} function, is defined by
\begin{equation}
\label{lyapunov}
L(p) \coloneqq \sum_{i \in N} V_i(p) + \sum_{j \in M} p_j s_j,
\quad \text{where } V_i(p) \coloneqq \max_{x_i \in [\zeros,s]_{\mathbb Z}} \left\{ v_i(x_i) - \sum_{j \in M} p_j x_{ij} \right\}.
\end{equation}
The dynamics of the auction are captured by the directional derivative: for any $p \in \R_+^M$ and $X \subseteq M$, 
\begin{equation}
\label{derivative}
L'(p;X)\coloneqq
\lim_{\varepsilon \to +0}
\frac{L(p + \varepsilon \chi_{\scriptscriptstyle X}) - L(p)}{\varepsilon}
=-\mathcal O (X; p).
\end{equation}
This shows that the minimal maximally over-demanded set determines the direction of steepest descent for $L$ at $p$. \citet{MSY2016} revealed that this function belongs to the class of polyhedral ${\mathrm L}^{\natural}$-convex functions, interpreting the auction as its steepest descent algorithm.
The following result, originally derived by \citet{A2006}, can be understood as a consequence of 
${\mathrm L}^{\natural}$-convexity, where a local minimum is a global~minimum.
\begin{theorem}[\citet{A2006}]
\label{minimizer}
Consider the case of Assumption~\ref{simple-case} where $\alpha_i = 1$ and $\beta_j = 1$ for all $i \in N$, $j \in M$.
Let $p'$ be  a lower bound on some minimizer of $L$.
Then, the following are equivalent:
(i) $p'$ is an equilibrium price vector, (ii) $p'\in \arg\min L(p)$, and (iii) $L'(p';X)\geq 0$ for all $X\subseteq M$.
\end{theorem}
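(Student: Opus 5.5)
The plan is to derive the equivalences from the $\mathrm{L}^\natural$-convexity of $L$ and the directional-derivative identity \eqref{derivative}, proving (i)$\Rightarrow$(ii)$\Rightarrow$(iii)$\Rightarrow$(i). In the frictionless case, $L$ is polyhedral $\mathrm{L}^\natural$-convex: each $V_i$ is the conjugate of the $\mathrm{M}^\natural$-concave $v_i$ (result of \citet{MSY2016}), and the term $\sum_j p_j s_j$ is linear.

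For (i)$\Rightarrow$(ii), I will use weak duality. For any $p$ and any feasible allocation $\{x_i\}$ with $\sum_i x_i = s$, we have $V_i(p)\ge v_i(x_i)-\sum_j p_j x_{ij}$. Summing over $i$ gives $L(p)\ge \sum_i v_i(x_i)$. If $p'$ is an equilibrium price with allocation $x^*$, then equality holds at $p'$ because $x^*_i\in D_i(p')$ and the market clears. Hence $L(p')=\sum_i v_i(x^*_i)\le L(p)$ for every $p$, so $p'$ minimizes $L$.

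The implication (ii)$\Rightarrow$(iii) is immediate: at a minimizer every one-sided directional derivative is nonnegative, and \eqref{derivative} identifies it with $-\mathcal{O}(X;p')$.

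The main step is (iii)$\Rightarrow$(i). Condition (iii) says $\mathcal{O}(X;p')\le 0$ for all $X$, so no set is over-demanded. By Lemma~\ref{price_characterization}, it remains to exclude under-demanded sets. For that I will use the symmetric directional derivative $\lim_{\varepsilon\to+0}(L(p'-\varepsilon\chi_X)-L(p'))/\varepsilon = \mathcal{U}(X;p')$. This follows analogously: decreasing prices on $X$ raises $V_i$ at rate $\max_{x\in D_i(p')}x(X)=\hat\rho_i(X;p')$.

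Here I invoke the hypothesis that $p'$ is a lower bound on some minimizer $\bar p$. For $\mathrm{L}^\natural$-convex functions, local optimality in the directions $\pm\chi_X$ suffices for global optimality (the \citet{M2003} optimality criterion). So if (iii) held together with the downward condition, $p'$ would be a minimizer and hence an equilibrium via the converse of the weak-duality argument (strong duality/existence of equilibrium ensures $\min L=\max\sum v_i$). To obtain the downward condition, I will argue by contradiction. Suppose $\mathcal{U}(X;p')<0$ for some $X$, i.e.\ descending along $-\chi_X$ helps. Using $p'\le\bar p$, consider the set $Y=\operatorname{supp}_+(\bar p-p')$. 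Then $L$ restricted to the segment from $p'$ toward $\bar p$ is convex, and by the discrete midpoint/steepest-descent structure I would show that some $X\subseteq Y$ gives $L'(p';X)<0$ whenever $L(p')>L(\bar p)$. This contradicts (iii), so $p'$ is a minimizer, and the direction (ii)$\Rightarrow$(i) then gives an equilibrium.

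The obstacle is precisely this upward-only optimality. Namely, when $p'\le\bar p$ and $L(p')>\min L$, some upward characteristic direction $\chi_X$ must be a strict descent direction. I would first try the $\mathrm{L}^\natural$ property in the form of Murota's proximity/steepest-descent lemma: for $p\le \bar p$ with $L(p)>L(\bar p)$, the set $X=\arg\max_j(\bar p_j-p_j)$ (the top level set) satisfies $L(p+\varepsilon\chi_X)<L(p)$ for small $\varepsilon$. This follows from discrete midpoint convexity applied to $p$ and $\bar p$, by shrinking $\bar p$ toward $p$ along level sets.
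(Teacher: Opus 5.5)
\textbf{Gap in (iii)$\Rightarrow$(ii).} The rest of your outline is sound. The weak-duality argument for (i)$\Rightarrow$(ii) is correct and does not even use the lower-bound hypothesis. (ii)$\Rightarrow$(iii) is immediate. Your sketch of (ii)$\Rightarrow$(i) works: for any equilibrium $(\hat x,\hat p)$, the equality $L(p')=\min L=\sum_i v_i(\hat x_i)$ forces $V_i(p')=v_i(\hat x_i)-\langle p',\hat x_i\rangle$, i.e.\ $\hat x_i\in D_i(p')$. The detour through the downward derivative $\mathcal{U}(X;p')$ is never used and can be dropped.

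The failure is in (iii)$\Rightarrow$(ii). Your target ``some $X\subseteq Y$ gives $L'(p';X)<0$'' is correct, but the lemma you propose for it is false. That lemma says that for $p'\le\bar p$ with $L(p')>L(\bar p)$, the top level set $X=\arg\max_j(\bar p_j-p'_j)$ is a \emph{strict} descent direction. Translation-submodularity applied to $\bar p$ and $p'$, with $\lambda$ just below $\max_j(\bar p_j-p'_j)$, gives $L(p'+\mu\chi_X)+L(\bar p-\mu\chi_X)\le L(p')+L(\bar p)$. Since $\bar p$ is a minimizer, this yields only $L(p'+\mu\chi_X)\le L(p')$, and strictness genuinely fails. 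Counterexample: take $s=(1,1)$, $v_1(x)=2x_1$, $v_2(x)=v_3(x)=x_2$, so $L(p)=\max(0,2-p_1)+2\max(0,1-p_2)+p_1+p_2$.
\begin{itemize}
\item $\bar p=(2,1)$ is a minimizer with $L(\bar p)=3$, and $p'=(0,0)\le\bar p$ has $L(p')=4$.
\item The top level set of $\bar p-p'=(2,1)$ is $\{1\}$, yet $L(\varepsilon,0)=4$ for all small $\varepsilon>0$.
\item The actual descent direction is the lower level set $\{1,2\}$: $L(\varepsilon,\varepsilon)=4-\varepsilon$.
\end{itemize}

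\textbf{How to close it.} You need that \emph{some} level set is a descent direction. This follows from additivity of the directional derivative over the level-set decomposition, not from the top set alone.
\begin{itemize}
\item Write $\bar p-p'=\sum_k\lambda_k\chi_{U_k}$ with $U_1\supsetneq U_2\supsetneq\cdots$ the level sets and $\lambda_k>0$.
\item By Danskin, $L'(p';d)=\langle s,d\rangle-\sum_i\min_{x_i\in D_i(p')}\langle d,x_i\rangle$.
\item For $d\ge 0$ the inner minimum can be taken over the base polyhedron $\check{D}_i(p')$. By the greedy property, a single point of $\check{D}_i(p')$ attains $\mu_i(U_k;p')$ for every member of the chain simultaneously.
\item Hence $L'(p';\bar p-p')=-\sum_k\lambda_k\,\mathcal{O}(U_k;p')\ge 0$ under (iii).
\item Convexity of $t\mapsto L(p'+t(\bar p-p'))$ then gives $L(\bar p)-L(p')\ge L'(p';\bar p-p')\ge 0$, so $p'\in\arg\min L$.
\end{itemize}

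Equivalently, apply the $\mathrm{L}^\natural$ local-to-global optimality criterion you already quote, not to $L$ itself but to its restriction to the box $\{q\in\mathbb{R}^M_+ : q\ge p'\}$. This restriction is still polyhedral $\mathrm{L}^\natural$-convex. At $p'$ the downward directions $-\chi_X$ are infeasible, so its local optimality is exactly (iii), and $\bar p$ lies in its domain. This is the ``local minimum is global minimum'' mechanism the paper appeals to; it gives no proof and cites \citet{A2006} for the result.
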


When no set is over-demanded (i.e., $\mathcal{O}(X; p) \leq 0$ for all $X\subseteq M$), equation \eqref{derivative} implies that condition (iii) is satisfied, and thus $L$ attains its minimum.
Since the set of minimizers of an ${\mathrm L}^{\natural}$-convex function forms a lattice, 
the set of equilibrium prices inherits this structure.

Extending this potential-based framework to our general model is challenging for two reasons. First, our algorithm employs directional price updates that depend on the current price, whereas in the standard framework the direction is a $0$--$1$ vector regardless of prices. Second, and more fundamentally, the set of equilibrium prices need not be convex under general frictions, so it cannot coincide with the minimizer set of a convex function. We therefore seek not to recover all equilibrium prices, but rather to construct a generalized potential function, also denoted by $L$, that is minimized at the \emph{minimum} equilibrium price $\underline{p}$.
\begin{problem}
\label{op}
Let $\underline{p}$ be the minimum equilibrium price. Determine whether there exists a polyhedral convex function $L: \R^M_+ \to \R_{+}$ and a direction map $d: \R^M_+ \to \R^M_{++}$ satisfying the following for any $p\leq \underline{p}$:
\begin{itemize}
\item[(i)] The subdifferential of $L$ at $p$, denoted by $\partial L(p)$, satisfies
\begin{equation}
\label{g_subgradient}
\partial L(p) = d^{-1}_p \circ \Bigl(s - \sum_{i \in N} D_i(p) \Bigr),
\end{equation}
where $d^{-1}_p$ denotes the element-wise inverse of $d_p\coloneqq d(p)$, i.e.,  $(d^{-1}_p)_j = 1/(d_p)_j$, $\circ$ denotes the Hadamard product, and $\sum_{i \in N} D_i(p)$ denotes the Minkowski sum of the demand sets.
\item[(ii)] The local optimality condition is equivalent to the non-negativity of directional derivatives:
\[
0 \in \partial L(p)\Leftrightarrow L'(p;X)\coloneqq \lim_{\varepsilon \to +0}\frac{L(p + \varepsilon (d_p\circ \chi_{\scriptscriptstyle X})) - L(p)}{\varepsilon}\geq 0\ \ {\rm for\ all}\  X\subseteq M.
\]
\end{itemize}
\end{problem}

If such a function $L$ exists, our auction is interpreted as its steepest descent algorithm. To see this, we first note that the directional derivative $L'(p; X)$ is characterized by the over-demand function. Since $L$ is convex, this derivative coincides with the support function of its subdifferential:
\begin{align*}
L'(p; X)&=\max \bigl\{ \sum_{j\in M} (d_p\circ \chi_{\scriptscriptstyle X})_j y_{j} \mid y \in \partial L(p) \bigr\} 
=\max \bigl\{ x(X) \mid x \in \bigl(s - \sum_{i \in N} D_i(p)\bigr) \bigr\} \\
&=\max \bigl\{ s(X) - \sum_{i\in N} x_i(X)\mid x_i \in D_i(p)\ (\forall i \in N)\bigr\}=s(X) - \sum_{i\in N} \mu_i(X; p) = -\mathcal O(X; p).
\end{align*}
Second, Theorem \ref{minimizer}, linking equilibrium prices and minimizers of $L$, naturally extends to our model:

\begin{theorem}
Suppose that a polyhedral convex function $L$ and a direction map $d$ satisfying the conditions in Problem~\ref{op} exist. Then, for any $p \le \underline{p}$, the following are equivalent: (i) $p$ is an equilibrium price vector, (ii) $\zeros \in \partial L(p)$, and (iii) $L'(p;X) \ge 0$ for all $X \subseteq M$.
\end{theorem}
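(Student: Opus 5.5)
The plan is to prove the cycle (i) $\Rightarrow$ (iii) $\Rightarrow$ (ii) $\Rightarrow$ (i), using the two conditions of Problem~\ref{op} together with Lemma~\ref{price_characterization}. Throughout, I will rely on the identity $L'(p;X) = -\mathcal{O}(X;p)$. This identity was derived just above from condition (i) of Problem~\ref{op}: the directional derivative of a polyhedral convex function equals the support function of its subdifferential, and the Hadamard factor $d_p^{-1}$ cancels against the direction $d_p\circ\chi_{\scriptscriptstyle X}$.

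For (i) $\Rightarrow$ (iii), I take $p\le\underline{p}$ to be an equilibrium price vector. Lemma~\ref{price_characterization} then says that no set is over-demanded, so $\mathcal{O}(X;p)\le 0$ for every $X\subseteq M$, which gives $L'(p;X) = -\mathcal{O}(X;p)\ge 0$. The implication (iii) $\Rightarrow$ (ii) is exactly the $\Leftarrow$ direction of condition (ii) in Problem~\ref{op}.

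For (ii) $\Rightarrow$ (i), I assume $\zeros\in\partial L(p)$. By \eqref{g_subgradient} and the positivity of $d_p$, this means $\zeros\in s-\sum_{i\in N}D_i(p)$ as a Minkowski sum. So there are bundles $x_i\in D_i(p)$ with $\sum_{i\in N} x_i = s$, and by Definition~\ref{Walrasian} $p$ is an equilibrium price vector. This step does not even need Lemma~\ref{price_characterization}. One technical point to check is that the Hadamard product with a set is taken elementwise and that $d_p$ has strictly positive entries, so that $\zeros$ lies in the rescaled set exactly when it lies in the original one.

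I expect the only delicate step to be justifying the support-function identity $L'(p;X)=\max\{\langle d_p\circ\chi_{\scriptscriptstyle X},y\rangle \mid y\in\partial L(p)\}$. This needs $p$ to lie in the domain where $L$ is finite and needs the standard formula for directional derivatives of a polyhedral convex function to hold. For $p\in\R^M_+$ on the boundary and nonnegative directions, the formula is still valid because the directions are feasible. The computation of the maximum over $s-\sum_i D_i(p)$ as $s(X)-\sum_i\mu_i(X;p)$ uses that $D_i(p)$ is finite, so the minimum defining $\mu_i$ is attained. Note also that the hypothesis $p\le\underline{p}$ is used only to place us in the regime where Problem~\ref{op} imposes its conditions.
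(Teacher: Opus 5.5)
Your proof is correct, and two of your three legs coincide with the paper's. For (iii)$\Rightarrow$(ii) you use the nontrivial direction of Problem~\ref{op}~(ii), and for (ii)$\Rightarrow$(i) you read off $s\in\sum_{i\in N}D_i(p)$ from \eqref{g_subgradient}.

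The difference is in how you close the cycle. You go (i)$\Rightarrow$(iii) through the identity $L'(p;X)=-\mathcal{O}(X;p)$ and Lemma~\ref{price_characterization}. The paper instead uses \eqref{g_subgradient} as a two-sided equivalence, $\zeros\in\partial L(p)\Leftrightarrow s\in\sum_{i\in N}D_i(p)$, and obtains (ii)$\Leftrightarrow$(iii) from Problem~\ref{op}~(ii) alone.

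The paper's route is shorter. It never needs the support-function formula for $L'(p;\cdot)$, which you single out as the delicate step, including at boundary points of $\R^M_+$. Indeed, if $s=\sum_i x_i$ with $x_i\in D_i(p)$, then $\zeros\in\partial L(p)$ follows directly from \eqref{g_subgradient}. Moreover, $\zeros\in\partial L(p)\Rightarrow L'(p;X)\ge 0$ holds for any convex function.

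What your detour buys is interpretive. It makes explicit that (iii) is exactly the absence of over-demanded sets, i.e., the termination test of Algorithm~\ref{algo1}.

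One caveat applies to both arguments. $\partial L(p)$ is convex while $\sum_{i\in N}D_i(p)$ is finite, so \eqref{g_subgradient} must be read with convex hulls, as in the proof of Theorem~\ref{scaled}. Then $\zeros\in\partial L(p)$ only yields $s\in\operatorname{conv}\bigl(\sum_{i\in N}D_i(p)\bigr)$. Passing to actual bundles uses the hole-freeness of the \Mn-convex set $\sum_{i\in N}D_i(p)$, the integrality fact invoked in the proof of Lemma~\ref{price_characterization}. So your remark that this step needs nothing from that lemma is slightly too quick.
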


\begin{proof}
Condition (iii) is equivalent to $\zeros \in \partial L(p)$ by Problem \ref{op} (ii). Thus, it suffices to show that (i) and (ii) are equivalent to $\bm{0} \in \partial L(p)$. By convexity of $L$, we have
\begin{align*}
p \in \operatorname*{arg\,min} L(p)
\Leftrightarrow \bm{0} \in \partial L(p)
\Leftrightarrow s \in \sum_{i\in N} D_i(p) 
\Leftrightarrow p \text{ is an equilibrium price},
\end{align*}
where the second equivalence follows from Property (i) in Problem \ref{op}.
\end{proof}

Under Assumption \ref{simple-case}, where $q_{ij}'(p_j) = \alpha_i \beta_j$, we have successfully identified such a global potential function. 
We call this the \emph{scaled Lyapunov function}, defined by
\begin{equation}
\label{scaled-def}
    \tilde{L}(p) \coloneqq \sum_{i \in N} \tilde{V}_i(p) + \sum_{j \in M}  \beta_j p_j s_j ,\ \  \text{where } \tilde{V}_i(p) \coloneqq \max_{x_i \in [\zeros,s]_{\mathbb Z}} \left\{ \frac{v_i(x_i)}{\alpha_i} - \sum_{j \in M} \beta_j p_j x_{ij} \right\}.
\end{equation}

\begin{theorem}
\label{scaled}
Under Assumption \ref{simple-case}, the scaled Lyapunov function defined in \eqref{scaled-def} satisfies all the requirements in Problem~\ref{op} with the direction map $d$ given by $(d(p))_j \coloneqq 1/\beta_j$ for each $j \in M$.
\end{theorem}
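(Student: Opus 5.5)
Under Assumption~\ref{simple-case}, each $q'_{ij}$ is the constant $\alpha_i\beta_j$. Together with $q_{ij}(0)=0$ this gives $q_{ij}(p_j)=\alpha_i\beta_j p_j$, and hence
\[
u_i(x_i;p)=\alpha_i\Bigl(\tfrac{v_i(x_i)}{\alpha_i}-\sum_{j\in M}\beta_j p_j x_{ij}\Bigr).
\]
Since $\alpha_i>0$, the demand set $D_i(p)$ is exactly the argmax set in the definition of $\tilde V_i(p)$ in \eqref{scaled-def}. The plan is to reduce everything to a frictionless market by the linear change of variables $p'\coloneqq \beta\circ p$ with valuations $\tilde v_i\coloneqq v_i/\alpha_i$. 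Each $\tilde v_i$ is still normalized, monotone and $\mathrm M^\natural$-concave, because positive scaling preserves the exchange inequality. Under this change, $\tilde L(p)=L^{\tilde v}(\beta\circ p)$, where $L^{\tilde v}$ is the Lyapunov function \eqref{lyapunov} of the frictionless market. That market has the same demand sets, $D^{\tilde v}_i(\beta\circ p)=D_i(p)$. Consequently its equilibrium prices are exactly $\beta\circ$ (our equilibrium prices), and its minimum equilibrium price is $\beta\circ\underline p$.

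For requirement (i), note that $\tilde V_i$ is a pointwise maximum of finitely many affine functions of $p$, one for each $x_i\in[\zeros,s]_{\Z}$, so it is finite and polyhedral convex. The Danskin / max-of-affine formula gives
\[
\partial\tilde V_i(p)=\operatorname{conv}\{-\beta\circ x_i\mid x_i\in D_i(p)\}.
\]
The sum rule for subdifferentials holds without qualification for finite polyhedral convex functions. Hence
\[
\partial\tilde L(p)=\beta\circ\Bigl(s-\sum_{i}\operatorname{conv}D_i(p)\Bigr)=\beta\circ\Bigl(s-\operatorname{conv}\sum_i D_i(p)\Bigr).
\]
Since $d_p^{-1}=\beta$, this is \eqref{g_subgradient}, provided we read the right-hand side as a convex set, i.e.\ identify $D_i(p)$ with its convex hull.

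I expect this identification to be the main point needing care. I would justify it with the fact (Appendix~\ref{DemandSets}) that $D_i(p)$ is $\mathrm M^\natural$-convex, hence hole-free: $D_i(p)=\operatorname{conv}D_i(p)\cap\Z^M$. Moreover, the Minkowski sum of $\mathrm M^\natural$-convex sets is $\mathrm M^\natural$-convex with $\operatorname{conv}(\sum_i D_i)=\sum_i\operatorname{conv}D_i$. So the integer points of $\partial\tilde L(p)$, rescaled by $d_p$, are exactly $s-\sum_i D_i(p)$, and the two descriptions of the subdifferential agree.

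For requirement (ii), take $p\le\underline p$. The direction $d_p\circ\chi_X=\beta^{-1}\circ\chi_X$ in $p$-coordinates corresponds to the direction $\chi_X$ in $p'$-coordinates. Therefore the directional derivative $\tilde L'(p;X)$ equals the frictionless derivative $(L^{\tilde v})'(\beta\circ p;X)$. Also, $0\in\partial\tilde L(p)$ iff $0\in\partial L^{\tilde v}(\beta\circ p)$ iff $\beta\circ p$ minimizes $L^{\tilde v}$. Since $\beta\circ p\le\beta\circ\underline p$, and $\beta\circ\underline p$ is an equilibrium price and hence a minimizer of $L^{\tilde v}$, the point $\beta\circ p$ is a lower bound on some minimizer. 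Theorem~\ref{minimizer}, applied to the frictionless market with valuations $\tilde v_i$, then gives the equivalence of minimality with nonnegativity of all $(L^{\tilde v})'(\beta\circ p;X)$, which is exactly (ii). The direction ``$\Rightarrow$'' is also immediate from convexity. As a consistency check, the support-function computation preceding the theorem shows $\tilde L'(p;X)=-\mathcal O(X;p)$, matching the frictionless derivative \eqref{derivative}.
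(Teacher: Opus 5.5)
Your proposal is correct and follows essentially the same route as the paper. Both arguments reduce to the frictionless market with valuations $v_i/\alpha_i$ at prices $\beta\circ p$ (Observation~\ref{obs}), obtain (i) from Danskin's theorem plus the subdifferential sum rule, and obtain (ii) by transporting the directional derivative along $\beta^{-1}\circ\chi_X$ and invoking Theorem~\ref{minimizer}. Your additional checks fill in details the paper leaves implicit: that $q_{ij}$ is linear, that $D_i(p)$ is to be read as its convex hull (justified by hole-freeness), and that the lower-bound hypothesis of Theorem~\ref{minimizer} holds because $p\le\underline p$.
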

In the proof, we use the following relation. 
\begin{observation}
\label{obs}
Consider a situation where the valuation function of each buyer $i \in N$ is modified to $v_i(\cdot)/\alpha_i$. Let $V_{i,\alpha_i}$ and $L_{\alpha}$ denote the (standard) indirect utility function and Lyapunov function under these valuations, respectively. Then, the scaled functions satisfy the following relations:
\begin{align*}
\tilde{V}_i(p) = V_{i,\alpha_i}(\beta \circ p)\quad (i \in N)
\quad \text{and} \quad
\tilde{L}(p) = L_{\alpha}(\beta \circ p),
\end{align*}
where $\beta \circ p$ denotes the Hadamard product of $\beta$ and $p$, i.e., $(\beta \circ p)_j \coloneqq \beta_j p_j$ for each~$j \in M$.
\end{observation}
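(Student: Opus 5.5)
The two identities follow directly from the definitions by a change of variables, so the plan is to rewrite the objective inside the maximum of $\tilde V_i$ and compare it with the standard indirect utility for the scaled valuation $v_i/\alpha_i$.

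First, recall that the standard indirect utility function under valuation $v_i/\alpha_i$ is, by analogy with \eqref{lyapunov},
\[
V_{i,\alpha_i}(p') = \max_{x_i \in [\zeros,s]_{\Z}} \Bigl\{ \frac{v_i(x_i)}{\alpha_i} - \sum_{j\in M} p'_j x_{ij} \Bigr\}, \qquad p' \in \R^M_+ .
\]
Substitute $p' \coloneqq \beta \circ p$, so that $p'_j = \beta_j p_j$. Then $\sum_{j} p'_j x_{ij} = \sum_j \beta_j p_j x_{ij}$. Hence the objective of $V_{i,\alpha_i}(\beta\circ p)$ coincides with the objective in \eqref{scaled-def} for every bundle $x_i$, and the feasible set $[\zeros,s]_{\Z}$ is the same. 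This gives $\tilde V_i(p) = V_{i,\alpha_i}(\beta\circ p)$. Since $\beta \in \R^M_{++}$, the vector $\beta\circ p$ lies in $\R^M_+$ whenever $p$ does, so the substitution stays in the domain.

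Second, the standard Lyapunov function for valuations $\{v_i/\alpha_i\}$ is $L_\alpha(p') = \sum_i V_{i,\alpha_i}(p') + \sum_j p'_j s_j$. Evaluating it at $p' = \beta\circ p$ gives
\[
L_\alpha(\beta\circ p) = \sum_{i\in N} V_{i,\alpha_i}(\beta\circ p) + \sum_{j\in M}\beta_j p_j s_j .
\]
By the first identity this equals $\sum_i \tilde V_i(p) + \sum_j \beta_j p_j s_j = \tilde L(p)$.

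There is no real obstacle here. The only point needing care is to fix $V_{i,\alpha_i}$ and $L_\alpha$ exactly as \eqref{lyapunov} with $v_i$ replaced by $v_i/\alpha_i$. With that convention, both identities are termwise equalities of the maximand and of the linear term.
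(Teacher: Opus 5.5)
Your proposal is correct. It is exactly the definitional check the paper leaves implicit (the paper states this as an observation without proof): substituting $\beta \circ p$ into the standard indirect utility and Lyapunov function for the valuations $v_i/\alpha_i$ reproduces the maximand and the linear term of \eqref{scaled-def} term by term.
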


\begin{proof}[Proof of Theorem \ref{scaled}.]
(i) By Danskin's theorem \citep{D1966} (see also \citet[Proposition B.22]{B2016}), the subdifferential of the indirect utility function is the convex hull of the gradients of the objective function evaluated at the maximizers. Letting conv denote the convex hull, we have:
\begin{align*}
\partial \tilde{V}_i(p) 
= \operatorname{conv} \left\{ - x \circ \beta \mid x \in \operatorname*{arg\,max}_{x \in [\zeros,s]_{\mathbb Z}} \left( \frac{v_i(x)}{\alpha_i} - \sum_{j \in M} \beta_j p_j x_{ij} \right) \right\} 
= \operatorname{conv} \{ - x \circ \beta \mid x \in D_i(p) \} = -\beta\circ D_i(p).
\end{align*}
Since the subdifferential is additive for sums of convex functions, we have
\[
\partial \tilde{L}(p) = \sum_{i \in N} \partial \tilde{V}_i(p) +s \circ \beta=- \sum_{i \in N} \beta \circ D_i(p)+ s \circ \beta =\beta \circ \Bigl(s - \sum_{i \in N} D_i(p) \Bigr).
\]
(ii) 
The directional derivative of $\tilde{L}$
along $\beta^{-1}\circ \chi_{\scriptscriptstyle X}$,
where $\beta^{-1}$ is the element-wise inverse of $\beta$, satisfies:
\begin{align}
\label{directional}
\tilde{L}'(p;X)
=\lim_{\varepsilon \to +0}\frac{\tilde{L}(p + \varepsilon (\beta^{-1}\circ \chi_{\scriptscriptstyle X})) - \tilde{L}(p)}{\varepsilon}
=\lim_{\varepsilon \to +0}\frac{L_{\alpha}(\beta\circ p + \varepsilon \chi_{\scriptscriptstyle X}) - L_{\alpha}(\beta\circ p)}{\varepsilon}=L_{\alpha}'(\beta \circ p; X),
\end{align}
where the second equality holds by Observation~\ref{obs}. 
Again by Observation~\ref{obs}, we have
$\zeros \in \partial \tilde{L}(p)
\Leftrightarrow \zeros \in \partial L_{\alpha}(\beta \circ p)
\Leftrightarrow \tilde{L}'(p;X) \geq 0$ for all $X\subseteq M$, 
where the last equivalence holds by Theorem~\ref{minimizer} and \eqref{directional}.

\end{proof}
\begin{remark}
As is evident from the proof, $\tilde{L}$ satisfies conditions~(i) and~(ii) for \emph{all} $p \in \R^M_+$. Hence $\arg\min \tilde{L}$ coincides with the entire set of equilibrium prices, which is therefore convex under Assumption~\ref{simple-case}.
\end{remark}

Now we introduce a new class of polyhedral convex functions, called 
\emph{scaled} polyhedral $\mathrm{L}^{\natural}$-convex functions.
Let $p \vee p'$ and $p \wedge p'$ denote the element-wise maximum and minimum of $p, p' \in \R_{+}^M$, respectively.
\begin{definition}
\label{def_scaled}
Let $g : \R^M_+ \to \R \cup \{+\infty\}$ be a function with $\operatorname{dom} 
g\coloneqq \{p\mid g(p)<\infty\} \neq \emptyset$, whose epigraph $\{(p, K) \mid p \in \R^M_+,\, K \in \R,\, g(p) \leq K\}$ is a polyhedron.
Then, $g$ is a \emph{scaled polyhedral L$^{\natural}$-convex} function if it satisfies translation-submodularity 
with respect to some $\tau\in \mathbb R^M_{++}$:
\[
g(p) + g(p') \geq g((p - \lambda \tau) \vee p') + g(p \wedge (p' + \lambda \tau))
\quad (\forall p, p' \in \operatorname{dom} g, \forall \lambda \in \R_+).
\]
\end{definition}
When $\tau$ is the all-ones vector, Definition~\ref{def_scaled} reduces to that of standard polyhedral $\mathrm{L}^{\natural}$-convex functions; hence our scaled class strictly includes the standard one. Although the scaled Lyapunov function $\tilde{L}$ lies outside the standard class, it belongs to this new class for $\tau = \beta^{-1}$:
\begin{lemma}
\label{L_scaled}
The scaled Lyapunov function $\tilde{L}$ is scaled polyhedral $\mathrm{L}^{\natural}$-convex with respect to $\beta^{-1}$.
\end{lemma}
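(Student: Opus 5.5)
The plan is to reduce the statement to the standard (unscaled) polyhedral $\mathrm{L}^{\natural}$-convexity of the ordinary Lyapunov function, using the change of variables in Observation~\ref{obs}. By that observation, $\tilde{L}(p) = L_{\alpha}(\beta \circ p)$. Here $L_{\alpha}$ is the ordinary Lyapunov function \eqref{lyapunov} for the valuations $v_i/\alpha_i$. Since $\alpha_i>0$, each $v_i/\alpha_i$ is still normalized, monotone and $\mathrm{M}^\natural$-concave. Hence the result of \citet{MSY2016} applies, and $L_{\alpha}$ is a polyhedral $\mathrm{L}^{\natural}$-convex function on $\R^M_+$. Equivalently, $L_\alpha$ satisfies translation-submodularity with $\tau = \ones$:
\[
L_\alpha(q) + L_\alpha(q') \ge L_\alpha((q-\lambda\ones)\vee q') + L_\alpha(q \wedge (q'+\lambda\ones)).
\]
If needed, the polyhedral $\mathrm{L}^{\natural}$-convexity of $L_\alpha$ can also be rederived directly. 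Each $V_{i,\alpha_i}$ is the conjugate of an $\mathrm{M}^\natural$-concave function, hence polyhedral $\mathrm{L}^\natural$-convex by conjugacy in discrete convex analysis. The term $\sum_j p_js_j$ is linear, and sums of such functions remain in the class.

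The key step is to check that the coordinatewise scaling $\phi(p)\coloneqq\beta\circ p$ intertwines the lattice and translation operations. Since every $\beta_j>0$, $\phi$ is an order isomorphism of $\R^M_+$. It therefore commutes with $\vee$ and $\wedge$: $\phi(p\vee p') = \phi(p)\vee\phi(p')$ and $\phi(p\wedge p')=\phi(p)\wedge\phi(p')$. It also satisfies $\phi(p - \lambda\beta^{-1}) = \phi(p) - \lambda\ones$. Set $q=\beta\circ p$ and $q'=\beta\circ p'$. Then
\[
\tilde L((p-\lambda\beta^{-1})\vee p') = L_\alpha((q-\lambda\ones)\vee q'),
\]
and similarly $\tilde L(p\wedge(p'+\lambda\beta^{-1})) = L_\alpha(q\wedge(q'+\lambda\ones))$. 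Substituting these identities into the translation-submodularity of $L_\alpha$ gives the inequality of Definition~\ref{def_scaled} for $\tilde L$ with $\tau=\beta^{-1}$. The required quantifiers also transfer. Because $\phi$ is a bijection of $\R^M_+$, $p,p'\in\operatorname{dom}\tilde L$ if and only if $q,q'\in \operatorname{dom}L_\alpha$, and $\lambda\ge0$ is unchanged.

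The remaining requirements of Definition~\ref{def_scaled} are routine. The epigraph of $\tilde L$ is the image of the epigraph of $L_\alpha$ under the linear bijection $(p,K)\mapsto(\beta^{-1}\circ p,K)$, so it is a polyhedron. The domain is nonempty because $\tilde L$ is finite everywhere, being a finite maximum over $[\zeros,s]_\Z$.

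The main obstacle is the boundary of $\R^M_+$. The points $(p-\lambda\beta^{-1})\vee p'$ and $p\wedge(p'+\lambda\beta^{-1})$ always remain in $\R^M_+$, since the first dominates $p'$ and the second is a minimum of nonnegative vectors. So the only issue is whether the \citet{MSY2016} statement is formulated on $\R^M$ or on $\R^M_+$. I would handle this by viewing $L_\alpha$ as defined on all of $\R^M$ by the same formula, where it is polyhedral $\mathrm{L}^\natural$-convex, and then noting that restricting to the orthant is harmless because all four evaluation points lie in $\R^M_+$.
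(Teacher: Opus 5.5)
Your proposal is correct and takes essentially the paper's route: via Observation~\ref{obs}, you pull back the translation-submodularity of the unscaled polyhedral $\mathrm{L}^{\natural}$-convex functions through the coordinatewise scaling $p\mapsto\beta\circ p$, which commutes with $\vee,\wedge$ and turns translation by $\lambda\beta^{-1}$ into translation by $\lambda\ones$. The only differences are cosmetic: the paper applies the scaling to each $\tilde V_i$ separately and checks the linear term $\sum_j \beta_j p_j s_j$ by a direct coordinatewise identity, whereas you apply it to $L_\alpha$ as a whole, and you also settle the orthant/domain issue explicitly, which the paper leaves implicit.
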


The converse inclusion fails: $\tilde{L}$ is not necessarily translation-submodular with respect to the all-ones vector, so it is not polyhedral \Ln-convex in the standard sense. The following example illustrates~this.
\begin{example}
\label{Failed_tsubmodularity}
Consider a market with two buyers $N=\{1, 2\}$ and two types of goods $M=\{1, 2\}$,
each with a unit supply $s=(1, 1)$. We set the scaling parameters to $\alpha = (1, 1)$ and $\beta = (1, 2)$.
Suppose that Buyer 1 has an additive valuation $v_1(x) = x_1 + x_2$, and Buyer 2 has a unit-demand valuation $v_2(x) = \max(x_1, x_2)$ for $x \in [0,s]_{\mathbb{Z}}$.
Then, the scaled Lyapunov function is directly given by:
\begin{align*}
\tilde{L}(p) &=\tilde{V}_1(p)+\tilde{V}_2(p)+\sum_{j\in M}s_j \beta_j p_j \\
&=\max(0,1-p_1)+ \max(0,1-2p_2) + \max(0, 1-p_1, 1-2p_2) + p_1 + 2p_2.
\end{align*}
As shown in Lemma~\ref{L_scaled}, $\tilde{L}$ satisfies translation-submodularity with respect to $\beta^{-1} = (1, 0.5)$, meaning that it is a scaled polyhedral \Ln-convex function.
To see that it is not a standard polyhedral \Ln-convex function, we show that it violates translation submodularity with respect to $\mathbf{1} \coloneqq (1,1)$.
Let $\lambda = 0.1$, $p = (0.5, 0.2)$, and $p' = (0, 0.1)$, so that $(p - \lambda \mathbf{1}) \vee p' = (0.4, 0.1)$ and $p \wedge (p' + \lambda \mathbf{1}) = (0.1, 0.2)$.
Since $p_1 \leq 1$ and $p_2 \leq 0.5$ hold at all four of these points, $\tilde{L}$ simplifies there to $(1-p_1)+(1-2p_2) + \max(0, 1-p_1, 1-2p_2) + p_1 + 2p_2= 2 + \max(1-p_1, 1-2p_2)$. We then have
\begin{align*}
\tilde{L}(p) + \tilde{L}(p') =  5.6
<5.7=\tilde{L}((p - \lambda \mathbf{1}) \vee p') + \tilde{L}(p \wedge (p' + \lambda \mathbf{1})).
\end{align*}
\end{example}

While we constructed a potential function for the separable case, a general solution to Problem~\ref{op} remains a major open question. Solving it would substantially broaden the scope of market design to accommodate more general frictions. Given that scaled polyhedral ${\mathrm L}^{\natural}$-convexity already generalizes established concepts in discrete convexity, such a global potential function would also expand the theory of discrete optimization.

\section{Lexicographic Optimization for Convex Flow}
\label{Lexico}
We first provide the mathematical preliminaries on convex flows and lexicographic optimization, and then prove Theorem~\ref{convex-dual}, which plays a crucial role in the proof of Theorem~\ref{d_weighted}. 
For simplicity, we often denote the inner product of vectors $a$ and $b$ by $\langle a, b \rangle$.

\subsection{Convex Flow}
Let $D = (V, A)$ be a directed graph, $c \in \R^A$ an arc cost vector, and $\mathcal P \subseteq \R^V$ a polyhedron.
A vector $f \in \R^A$ is called a \emph{flow} on $D$.
We consider the following linear program
\begin{align}\label{prob:flow-primal}
    \begin{array}{ll}
        \displaystyle\minimize_{f \in \R^A} & \angle{c, f} \\
        \text{subject to} & \partial f \in \mathcal P,
    \end{array}
\end{align}
where ${(\partial f)}_v \coloneqq f(\delta^+ v) - f(\delta^- v) \quad (v \in V)$.
The problem~\eqref{prob:flow-primal} is a special case of the convex flow problem.
The Legendre--Fenchel dual of~\eqref{prob:flow-primal}~is
\begin{align}\label{prob:flow-dual}
    \begin{array}{ll}
        \displaystyle\maximize_{\pi \in \R^V} & \displaystyle \min_{y \in \mathcal P} \angle{\pi, y} \\
        \text{subject to} & \delta \pi = c,
    \end{array}
\end{align}
where ${(\delta \pi)}_a \coloneqq \pi_u - \pi_v \quad (a = (u,v) \in A)$.
A vector $\pi \in \R^V$ is called a \emph{potential} on $D$.

\begin{theorem}[{\citet[Section~9]{M2003}}]
\label{thm:flow}
    Suppose that~\eqref{prob:flow-primal} and~\eqref{prob:flow-dual} have optimal solutions.
    \begin{enumerate}
        \item[{\rm (i)}] Problems~\eqref{prob:flow-primal} and~\eqref{prob:flow-dual} have the same optimal values.
        \item[{\rm (ii)}] A feasible flow $f$ and a feasible potential $\pi$ are both optimal if and only if $\partial f \in \argmin_{y \in \mathcal P} \angle{\pi, y}$.\label{item:complementarity}
    \end{enumerate}
\end{theorem}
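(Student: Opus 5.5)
The plan is to prove weak duality and the complementarity characterization by a direct computation. Strong duality in (i) will then follow from ordinary LP duality, once $\mathcal P$ is written in inequality form.

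\emph{Step 1 (adjointness and weak duality).} Let $B \in \R^{V\times A}$ be the incidence matrix, with $B_{u,a}=1$ and $B_{v,a}=-1$ for $a=(u,v)$ and all other entries zero. Then $\partial f = Bf$ and $\delta \pi = B^\top \pi$, so
\[
\angle{\delta\pi, f} = \angle{\pi, \partial f} \qquad (f\in\R^A,\ \pi\in\R^V).
\]
Take a feasible flow $f$ (so $\partial f\in\mathcal P$) and a feasible potential $\pi$ (so $\delta\pi = c$). Then
\[
\angle{c,f} = \angle{\delta\pi,f} = \angle{\pi,\partial f} \ge \min_{y\in\mathcal P}\angle{\pi,y}.
\]
Equality holds exactly when $\partial f\in\argmin_{y\in\mathcal P}\angle{\pi,y}$. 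Hence the optimal value of~\eqref{prob:flow-primal} is at least that of~\eqref{prob:flow-dual}, and (ii) follows from (i). Indeed, if both $f$ and $\pi$ are optimal, (i) forces equality in the display. Conversely, equality in the display certifies that both are optimal by weak duality.

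\emph{Step 2 (strong duality via LP).} Write $\mathcal P=\{y\in\R^V \mid Gy\ge h\}$ for some matrix $G$ and vector $h$. Problem~\eqref{prob:flow-primal} is then the LP $\min\{\angle{c,f}\mid GBf\ge h\}$, which by hypothesis has an optimal solution $f^*$. By LP strong duality, its dual $\max\{\angle{h,\lambda}\mid B^\top G^\top\lambda = c,\ \lambda\ge 0\}$ has an optimal solution $\lambda^*$ with $\angle{h,\lambda^*}=\angle{c,f^*}$.

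Set $\pi^*\coloneqq G^\top\lambda^*$. Then $\delta\pi^* = B^\top G^\top\lambda^* = c$, so $\pi^*$ is a feasible potential. For every $y\in\mathcal P$ we have $\angle{\pi^*,y}=\angle{\lambda^*,Gy}\ge\angle{\lambda^*,h}$, using $\lambda^*\ge 0$. Therefore
\[
\min_{y\in\mathcal P}\angle{\pi^*,y}\ \ge\ \angle{h,\lambda^*}=\angle{c,f^*}.
\]
Combined with weak duality from Step 1, equality holds throughout. So the optimal value of~\eqref{prob:flow-dual} equals that of~\eqref{prob:flow-primal}, which is (i). In fact $\pi^*$ is a dual optimum, and $\min_{y\in\mathcal P}\angle{\pi^*,y}$ is attained at $y=\partial f^*$.

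\emph{Expected obstacle.} There is no deep difficulty here. The only point needing care is the bookkeeping of sign conventions, namely checking that $\delta$ is exactly the adjoint $B^\top$ of $\partial = B$. The second point is the reduction from the nonlinear dual objective $\min_{y\in\mathcal P}\angle{\pi,y}$ to the LP dual variables $\lambda$ in Step 2. This reduction only needs the easy inequality $\min_{y\in\mathcal P}\angle{G^\top\lambda,y}\ge\angle{h,\lambda}$ for $\lambda\ge 0$, not equality, so no additional duality for the inner problem is required. Since $\mathcal P$ is assumed to be a polyhedron, and feasibility of~\eqref{prob:flow-primal} makes it nonempty, the inequality representation exists and the argument goes through.
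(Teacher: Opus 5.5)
Your proof is correct, and it takes a different route from the paper. The paper gives no argument of its own; it cites \citet[Section~9]{M2003}, where the statement is a special case of the duality theory for M-convex submodular (convex) flows. That theory is derived from Fenchel-type duality for M-/L-convex functions. It also covers nonlinear separable convex arc costs and M-convex vertex costs, and it gives integrality of optimal flows and potentials for integral data.

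You use only that the costs are linear and that $\mathcal P$ is a polyhedron. The adjointness $\partial = B$, $\delta = B^\top$ gives weak duality and the ``if'' direction of (ii) in one line. Writing $\mathcal P=\{y \mid Gy\ge h\}$ turns the primal into an ordinary LP. Its optimal multipliers $\lambda^*$ map to an optimal potential $\pi^*=G^\top\lambda^*$ through the one-sided bound $\min_{y\in\mathcal P}\langle G^\top\lambda^*,y\rangle\ge\langle h,\lambda^*\rangle$. Your argument is self-contained and needs only a primal optimum, since you construct the dual optimum rather than assume it.

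What you give up is generality the paper never uses. Every application in Section~\ref{Lexico}, including the face $\mathcal P^0$, is a linear-cost problem over a polyhedron. This holds provided the polymatroids $P(\rhoc_i)$ there are read as real polyhedra rather than as sets of integer points. For the integer-point reading you would additionally need the integrality that Murota's framework supplies.
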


\subsection{Lexicographic Optimization}
We consider the following lexicographic optimization:
\begin{align}
    &\begin{array}{ll}
        \displaystyle\minimize_{f \in \R^A} & \angle{c^0 + \varepsilon c^1, f} \\
        \text{subject to} & \partial f \in \mathcal P,
    \end{array}
    &&
    \begin{array}{ll}
        \displaystyle\maximize_{\pi \in \R^V} & \displaystyle\min_{y \in \mathcal P} \angle{\pi, y} \\
        \text{subject to} & \delta \pi = c^0 + \varepsilon c^1,
    \end{array}\label{prob:flow-lex-primal}
    \end{align}
    where $c^0, c^1 \in \R^A$ and $\varepsilon > 0$ is sufficiently small.
    Theorem~\ref{thm:flow}~(ii) implies that, for any fixed optimal potential $\pi^0$ of~\eqref{prob:flow-lex-primal} with $\varepsilon = 0$, a flow is optimal for~\eqref{prob:flow-lex-primal} if and only if it is optimal for 
    \begin{align}
    &\begin{array}{ll}
        \displaystyle\minimize_{f \in \R^A} & \angle{c^1, f} \\
        \text{subject to} & \displaystyle \partial f \in \mathcal P^0,
    \end{array}
    &&
    \begin{array}{ll}
        \displaystyle\maximize_{\pi \in \R^V} & \displaystyle\min_{y \in \mathcal P^0} \angle{\pi, y} \\
        \text{subject to} & \delta \pi = c^1,
    \end{array}\label{prob:flow-lex-primal2}
\end{align}
where $\mathcal P^0 \coloneqq \argmin_{y \in \mathcal P} \angle{\pi^0, y}$.

\begin{theorem}
\label{lexico_characterization}
    Let $\pi^0$ be an optimal potential of~\eqref{prob:flow-lex-primal} with $\varepsilon = 0$ and $\pi^1$ an optimal potential of~\eqref{prob:flow-lex-primal2}.
    Then, $\pi^0 + \varepsilon \pi^1$ is an optimal potential of~\eqref{prob:flow-lex-primal}.
    Conversely, any optimal potential of~\eqref{prob:flow-lex-primal} is written in this~form.
\end{theorem}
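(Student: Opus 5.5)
Both directions rest on the complementarity criterion of Theorem~\ref{thm:flow}~(ii), applied once to the unperturbed problem, once to~\eqref{prob:flow-lex-primal2}, and once to~\eqref{prob:flow-lex-primal}. Throughout, comparisons involving $\varepsilon$ are lexicographic, as in the paper. The key structural fact is the following: for $\pi = \pi^0 + \varepsilon \pi^1$, the minimizers of $\angle{\pi, y}$ over $\mathcal P$ are exactly the minimizers of $\angle{\pi^1, y}$ over $\mathcal P^0 = \argmin_{y\in\mathcal P}\angle{\pi^0,y}$. I would prove this first, directly from lexicographic comparison. A point $y$ minimizes $\angle{\pi^0,y}+\varepsilon\angle{\pi^1,y}$ iff it minimizes the primary term, i.e.\ $y \in \mathcal P^0$, and, among such points, minimizes the secondary term. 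Since $\mathcal P$ is a polyhedron, one can alternatively take a genuinely small real $\varepsilon$ and use finiteness of vertices and recession directions. I would stick with the lexicographic reading to avoid boundedness issues; the remaining care is the case of an unbounded $\mathcal P$, which is handled by existence of optimal solutions.

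\textbf{Forward direction.} Feasibility is immediate: $\delta(\pi^0+\varepsilon\pi^1) = c^0 + \varepsilon c^1$ by linearity of $\delta$. For optimality, let $f$ be an optimal flow of~\eqref{prob:flow-lex-primal2}; by the discussion preceding the theorem it is an optimal flow of~\eqref{prob:flow-lex-primal}. Applying Theorem~\ref{thm:flow}~(ii) to~\eqref{prob:flow-lex-primal2} gives $\partial f \in \argmin_{y\in\mathcal P^0}\angle{\pi^1,y}$. By the structural fact, this set equals $\argmin_{y\in\mathcal P}\angle{\pi^0+\varepsilon\pi^1,y}$. Theorem~\ref{thm:flow}~(ii) applied to~\eqref{prob:flow-lex-primal} then shows that $\pi^0+\varepsilon\pi^1$ is optimal. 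I assume throughout that optimal solutions of all three problem pairs exist; this follows from the hypotheses together with the reduction stated before the theorem.

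\textbf{Converse.} Let $\pi$ be any optimal potential of~\eqref{prob:flow-lex-primal}. Its values lie in the lexicographic ring, so write $\pi = \pi^a + \varepsilon\pi^b$, splitting into primary and secondary parts. The constraint $\delta\pi = c^0 + \varepsilon c^1$ separates into $\delta\pi^a = c^0$ and $\delta\pi^b = c^1$, so each part is feasible for its respective problem. Fix an optimal flow $f$ of~\eqref{prob:flow-lex-primal}. Complementarity gives $\partial f \in \argmin_{y\in\mathcal P}\angle{\pi,y}$, which by the structural fact (with $\pi^a$ in place of $\pi^0$) means $\partial f$ minimizes $\angle{\pi^a,\cdot}$ over $\mathcal P$ and $\angle{\pi^b,\cdot}$ over $\mathcal P^a \coloneqq \argmin_{y\in\mathcal P}\angle{\pi^a,y}$. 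The flow $f$ is also feasible for the $\varepsilon=0$ problem, since $\partial f\in\mathcal P$. Theorem~\ref{thm:flow}~(ii) then makes $\pi^a$ an optimal potential there.

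\textbf{Main obstacle.} The delicate step is showing that $\pi^b$ is optimal for~\eqref{prob:flow-lex-primal2}, which is defined with the fixed $\pi^0$, i.e.\ with $\mathcal P^0$ rather than $\mathcal P^a$. Complementarity with $f$ shows $\pi^b$ is optimal for the analogue of~\eqref{prob:flow-lex-primal2} built from $\mathcal P^a$. I would close the gap by showing that $\mathcal P^0=\mathcal P^a$, i.e.\ that $\argmin_{y\in\mathcal P}\angle{\pi^0,y}$ is independent of the choice of optimal potential. To prove this, take any $\varepsilon=0$ optimal flow $g$; complementarity gives $\partial g \in \argmin_{y\in\mathcal P}\angle{\pi',y}$ for every optimal $\pi'$. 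I expect this alone to be insufficient, since the argmin faces can differ. The fallback, which I would then adopt, is to read ``$\pi^0$ an optimal potential'' in the converse as existential: the converse asserts that $\pi = \pi^a + \varepsilon\pi^b$ with $\pi^a$ optimal for the $\varepsilon=0$ problem and $\pi^b$ optimal for~\eqref{prob:flow-lex-primal2} defined via $\pi^0\coloneqq\pi^a$. The argument above already establishes exactly this.
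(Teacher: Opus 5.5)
Your proposal is correct and follows the paper's proof. The forward direction is the same chain: complementarity for \eqref{prob:flow-lex-primal2}, then the identity $\argmin_{y\in\mathcal P^0}\angle{\pi^1,y}=\argmin_{y\in\mathcal P}\angle{\pi^0+\varepsilon\pi^1,y}$, then complementarity for \eqref{prob:flow-lex-primal}. The paper's converse also chooses $\pi^0$ existentially (``there exists an optimal potential $\pi^0$\ldots''), exactly as in your fallback.

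Your ``main obstacle'' is not a real one. In the lexicographic model the primary part of $\pi$ is uniquely determined, so the converse can only mean a decomposition with $\pi^0=\pi^a$. Your explicit check that $\pi^a$ is optimal for the unperturbed problem is, if anything, more careful than the paper's step of picking some optimal $\pi^0$ and setting $\pi^1\coloneqq(\pi^*-\pi^0)/\varepsilon$.
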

\begin{proof}
    Let $\pi^* \coloneqq \pi^0 + \varepsilon \pi^1$.
    By the feasibility of $\pi^0$ for~\eqref{prob:flow-lex-primal} with $\varepsilon = 0$ and $\pi^1$ for~\eqref{prob:flow-lex-primal2}, we have $\delta \pi^* = \delta \pi^0 + \varepsilon \delta \pi^1 = c^0 + \varepsilon c^1$.
    Thus, $\pi^*$ is feasible for~\eqref{prob:flow-lex-primal}.
    Next, to show the optimality of $\pi^*$, let $f^*$ be an optimal flow for~\eqref{prob:flow-lex-primal}, which is optimal for~\eqref{prob:flow-lex-primal2} as well.
    By Theorem~\ref{thm:flow}~(ii) and the optimality of $\pi^1$, we have $\partial f^* \in \argmin_{y \in \mathcal P^0} \angle{\pi^1, y}$.
    By definition of $\mathcal P^0$, $y = \partial f^*$ minimizes $\angle{\pi^0, y}$ over $\mathcal P$, and, subject to this, minimizes $\angle{\pi^1, y}$.
    This lexicographic optimization is expressed as the minimization of $\angle{\pi^0 + \varepsilon \pi^1, y} = \angle{\pi^*, y}$ over $\mathcal P$, i.e., $\partial f^* \in \argmin_{y \in \mathcal P} \angle{\pi^*, y}$.
    This means the optimality of $\pi^*$ for~\eqref{prob:flow-lex-primal} by Theorem~\ref{thm:flow}~(ii).

    We next show the converse direction.
    Let $f^*$ and $\pi^*$ be an optimal flow and an optimal potential, respectively, for~\eqref{prob:flow-lex-primal}.
    By Theorem~\ref{thm:flow}~(ii), we have $\partial f^* \in \argmin_{y \in \mathcal P} \angle{\pi^*, y}$.
    Since $f^*$ is also optimal for~\eqref{prob:flow-lex-primal} with $\varepsilon = 0$, there exists an optimal potential $\pi^0$ for~\eqref{prob:flow-lex-primal} with $\varepsilon = 0$ such that $\partial f^* \in \argmin_{y \in \mathcal P} \angle{\pi^0, y}$.
    We show that $\pi^1 \coloneqq \frac{\pi^*-\pi^0}{\varepsilon}$ is an optimal potential for~\eqref{prob:flow-lex-primal2}, which proves the claim.
    The feasibility follows from $\varepsilon \delta \pi^1 = \delta \pi^* - \delta \pi^0 = (c^0 + \varepsilon c^1) - c^0 = \varepsilon c^1$.
    As for the optimality, we have $\argmin_{y \in \mathcal P^0} \angle{\pi^1, y} = \argmin_{y \in \mathcal P} \angle{\pi^*, y} \ni \partial f^*$.
    Hence, $\pi^1$ is optimal for~\eqref{prob:flow-lex-primal2} by Theorem~\ref{thm:flow}~(ii).
\end{proof}

\subsection{Proof of Theorem \ref{convex-dual}.}
\label{reduction_convexflow}
We first explain how we can reduce ($P_2$) to the convex flow problem. 
For notational simplicity, hereafter, we denote the polymatroid $P(\rhoc_i(\cdot;p))$ simply by $P(\rhoc_i)$, where $p$ is the current price vector in an iteration of Algorithm~\ref{algo1}. 
For $i \in N$, let $M_i^+ = \Set{j_i^+}{j \in M}$ be a copy of $M$.
We also make one more copy $M^- = \Set{j^-}{j \in M}$ of $M$.
Let $D = (V, A)$ be a directed graph defined by
\begin{align*}
    V \coloneqq \bigcup_{i \in N} M_i^+ \cup M^-, \quad
    A \coloneqq \Set{(j_i^+, j^-)}{i \in N, j \in M}.
\end{align*}
We define a feasible polyhedron $\mathcal P \subseteq \R^V$ as 
$\mathcal P \coloneqq \bigoplus_{i \in N} \mathcal P_i^+ \oplus \mathcal P^-$,
where $\mathcal P_i^+ \coloneqq P(\rhoc_i) \subseteq \R^{M_i^+}$ for $i \in N$ and $\mathcal P^- \coloneqq \bigoplus_{j \in M} [-s_j, +\infty) \subseteq \R^{M^-}$.
For $i \in N$ and $j \in M$, the cost $c_a$ of $a = (j_i^+, j^-) \in A$ is set to $-w_{ij}$.
Then, a flow $ f \in \R^A$ is identified with a vector $x \in \R^{N \times M}$ by $ f_{j_i^+,j^-} = x_{ij}$, where $\angle{c,  f} = -\angle{w, x}$ holds.

We check that the feasibility $\partial  f \in \mathcal P$ is equivalent to that of $x$ in ($P_2$).
For $i \in N$ and $j \in M$, we have ${(\partial  f)_{j_i^+}} =  f_{j_i^+,j^-} = x_{ij}$; hence $(\partial  f)_{v \in M_i^+} \in \mathcal P_i^+$ is equivalent to $x_i\coloneqq {(x_{ij})}_{j \in M} \in P(\rhoc_i)$.
For $j \in M$, we have ${(\partial  f)_{j^-}} = -\sum_{i \in N}  f_{j_i^+} \in [-s_j, +\infty)$, implying $\sum_{i \in N} x_{ij} \le s_j$.
Thus, $ f$ is feasible if and only if $x$ is feasible.

We next derive the dual problem.
Let $\pi = \bigoplus_{i \in N} \pi^+_i \oplus \pi^- \in \R^V$ be a potential on $D$, where $\pi^+_i = {(\pi^+_{ij})}_{j \in M} \in \R^{M_i^+}$ and $\pi^- = {(\pi^-_j)}_{j \in M} \in \R^{M^-}$.
The dual problem~\eqref{prob:flow-dual} reads
\begin{align*}
    \begin{array}{lll}
        - &\displaystyle\maximize_{\pi \in \R^V} & \displaystyle \sum_{i \in N} \min_{x_i \in P(\rhoc_i)} 
        \left\{\sum_{j\in M} \pi^+_{ij} x_{ij}\right\}+ \sum_{j \in M} \min_{-s_j \le y} \pi^-_j y\\
        &\text{subject to} & \pi^+_{ij} - \pi^-_j = -w_{ij} \quad (i \in N, j \in M).
    \end{array}
\end{align*}
Letting $z \coloneqq \pi^-$ and eliminating $\pi_i^+$ by substituting $\pi_{i}^+ = \pi^- - w_i = z - w_i$ for $i \in N$, we have
\begin{align*}
    \begin{array}{lll}
        -&\displaystyle\maximize_{z \in \R^M} & \displaystyle \sum_{i \in N} \min_{x_i \in P(\rhoc_i)} \left\{\sum_{j\in M} (z_j-w_{ij}) x_{ij}\right\} + \sum_{j \in M} \min_{-s_j \le y} z_j y \\
        =& \displaystyle\minimize_{z \in \R^M} & \displaystyle \sum_{i \in N} \max_{x_i \in P(\rhoc_i)} \left\{\sum_{j\in M} (w_{ij}-z_j) x_{ij}\right\} + \sum_{j \in M} \max\set{0, s_j z_j}.
    \end{array}
\end{align*}
Finally, if $z_j$ for some $j \in M$ is negative, increasing it to $0$ does not increase the objective value.
Thus, we may assume $z \in \mathbb R^M_{+}$, implying that the dual problem is
\begin{align*}
    \begin{array}{lll}
        \displaystyle\minimize_{z \in \mathbb R^M_{+}} & \displaystyle \sum_{i \in N} \max_{x_i \in P(\rhoc_i)}  \left\{\sum_{j\in M} (w_{ij}-z_j) x_{ij}\right\} +  \sum_{j \in M} s_j z_j,
    \end{array}
\end{align*}
which is nothing but ($D_2$). 
Therefore, we can apply Theorems \ref{thm:flow} and \ref{lexico_characterization}.
As a preliminary step toward proving Theorem~\ref{convex-dual}, we identify the unique minimal solution for the case of $\gamma=0$.
\begin{lemma}
\label{gamma_0}
Let $X^*$ be the minimal maximally over-demanded set at $p$.
Then, the vector $z_0\coloneqq \chi_{X^*}$ is the unique minimal optimal solution to the problem 
\begin{equation}
\label{L_0}
\min_{z\in \mathbb R^M_{+}}\ \  \mathcal L_0(z), \quad
\text{where} \quad \mathcal L_0(z)\coloneqq
\sum_{i \in N} \max_{x_i\in P(\rhoc_i)} \left\{\sum_{j\in M} (1-z_j) x_{ij}\right\} 
+ \sum_{j \in M} s_j z_j.
\end{equation}
\end{lemma}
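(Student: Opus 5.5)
The plan is to compare $\mathcal L_0$ directly with the objective of $(D_1)$ and avoid any general \Ln-convexity machinery. Write $g(X)\coloneqq s(X)+\rhoc(M\setminus X;p)$ for the objective of $(D_1)$. By \eqref{argmin}, $X^*$ is the unique minimal minimizer of $g$. Since $g$ is submodular, its minimizers form a lattice, so every minimizer of $g$ contains $X^*$.

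\emph{Step 1 (reduction to the unit cube).} Let $z\in\R^M_+$ and set $z'\coloneqq z\wedge\ones$. I will show $\mathcal L_0(z')\le\mathcal L_0(z)$. For each $j$ with $z_j>1$, the coefficient $1-z_j$ is negative, so by Fact~\ref{greedy_structure}~(ii) every maximizer over $P(\rhoc_i)$ has $x_{ij}=0$. Replacing $z_j$ by $1$ makes this coefficient $0$, and the maximum is unchanged. (Formally: the maximum for $z'$ is attained at a vector with $x_{ij}=0$ on these coordinates, because the coefficient there is zero and $P(\rhoc_i)$ is down-closed.) Meanwhile $s_jz_j$ does not increase. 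Hence every minimizer $z$ satisfies: $z'$ is also a minimizer, and $z\ge z'$.

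\emph{Step 2 (integral representation on $[0,1]^M$).} For $z\in[0,1]^M$, the weights $1-z_j$ are nonnegative. The greedy structure in Fact~\ref{greedy_structure} then shows that $\max_{x\in P(\rhoc_i)}\sum_j(1-z_j)x_j$ equals the Lovász extension of $\rhoc_i$ evaluated at $\ones-z$. Writing $X_\theta\coloneqq\{j\mid z_j>\theta\}$, this maximum is $\int_0^1\rhoc_i(M\setminus X_\theta)\,d\theta$. Likewise $\sum_j s_jz_j=\int_0^1 s(X_\theta)\,d\theta$. Summing over $i$ gives
\[
\mathcal L_0(z)=\int_0^1 g(X_\theta)\,d\theta\qquad(z\in[0,1]^M).
\]
In particular $\mathcal L_0(\chi_{\scriptscriptstyle X})=g(X)$ for every $X\subseteq M$. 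Hence $\min\mathcal L_0\le\min g$. Conversely, by Step~1 and the integral formula, $\min\mathcal L_0\ge\min g$. So the two minima coincide and $\chi_{\scriptscriptstyle X^*}$ is a minimizer.

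\emph{Step 3 (minimality and uniqueness).} Let $z$ be any minimizer of $\mathcal L_0$ and set $z'=z\wedge\ones$. By Step~1, $z'$ is a minimizer lying in $[0,1]^M$. By Step~2, $g(X_\theta)=\min g$ for almost every $\theta\in[0,1)$. Each such $X_\theta$ is a minimizer of $g$ and therefore contains $X^*$. Thus $z'_j>\theta$ for almost every $\theta<1$ and every $j\in X^*$, which forces $z'_j=1$ on $X^*$. Hence $z\ge z'\ge\chi_{\scriptscriptstyle X^*}$. So $\chi_{\scriptscriptstyle X^*}$ is a minimizer lying below every minimizer, i.e., it is the unique minimal optimal solution.

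The main point requiring care is Step~2: justifying the Lovász-extension identity for $\max_{x\in P(\rhoc_i)}\langle w,x\rangle$ with nonnegative weights, including ties and zero weights. I would derive it from the greedy decomposition in Fact~\ref{greedy_structure}~(ii) together with the telescoping formula for the rank values on the nested level sets.
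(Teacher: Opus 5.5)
Your proof is correct and follows essentially the same route as the paper. Both reduce to the unit cube, identify $\mathcal L_0$ on $[0,1]^M$ with the Lov\'{a}sz extension of $X\mapsto s(X)+\rhoc(M\setminus X;p)$, and conclude that every level set of an optimal $z$ is a minimizer and hence contains $X^*$, giving $z\ge\chi_{\scriptscriptstyle X^*}$. Your Step~1 is slightly more careful than the paper's: the paper claims the objective strictly increases in any $z_j>1$, so that all optima lie in $[0,1]^M$, which fails when $s_j=0$, whereas your comparison of $z$ with $z\wedge\ones$ covers that case.
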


\begin{proof}[Proof of Theorem \ref{convex-dual}.]
(i) It suffices to show that $\mathcal L(z)$ is a polyhedral \Ln-convex function.
Recall that the support function of an \Mn-convex set (i.e., the polymatroid $P(\rhoc_i)$) is an \Ln-convex function (by a property derived via the discrete Legendre transformation, see \citet[Section 8]{M2003}).
Consequently, each term $\max_{x_i \in P(\rhoc_i)} \sum_{j\in M} z_j x_{ij}$ is polyhedral \Ln-convex with respect to $z$.
Since the class of polyhedral \Ln-convex functions is closed under affine transformation and addition, $\mathcal L(z)$ is also polyhedral \Ln-convex.

(ii) This follows immediately from Theorem \ref{thm:flow}.

(iii) By Theorem \ref{lexico_characterization}, any optimal solution to the problem $(D_2)$ can be expressed in the form $z^* = z^0 + \gamma t$ for some $t \in \R^M$, where $z^0$ is an optimal solution to the unperturbed problem \eqref{L_0}.
By Lemma \ref{gamma_0}, the minimal optimal solution for the problem \eqref{L_0} is uniquely determined as $\chi_{\scriptscriptstyle X^*}$.
Therefore, the minimal optimal solution to the problem $(D_2)$ must take the form $z^* = \chi_{\scriptscriptstyle X^*} + \gamma t$.
\end{proof}

\section{Concluding Remarks}
\label{Concluding}
We developed a unified framework for computing the \emph{minimum} Walrasian equilibrium in combinatorial markets with frictions. Our auction extends the auctions of \citet{GS2000} and \citet{A2006} by replacing uniform price increments with directional updates, and simultaneously generalizes the unit-demand ITU models of \citet{A1989,A1992} to a combinatorial setting. 
Our technical contribution consists of the characterization of admissible update directions and a strongly polynomial-time algorithm to compute them, obtained by formulating a lexicographic extension of the polymatroid sum problem and reducing its dual to a convex flow problem. Exploiting the \Ln-convexity, we showed that the desired direction is constructed from the minimal dual solution. We also identified a scaled Lyapunov function for separable frictions, linking the auction dynamics to a new class of discrete convexity. While the general case remains an open problem, resolving it would establish a comprehensive framework for markets with general frictions.

Finally, we highlight areas for future research. 
Mathematically, the class of scaled polyhedral \Ln-convex functions warrants further investigation. 
Specifically, determining the extent to which they retain the structural properties of standard \Ln-convexity -- such as closure under basic operations, conjugacy, and optimality conditions -- offers rich potential for theoretical exploration.
Economically, extending the framework beyond strong substitutes is a natural progression. 
Promising starting points include markets with GM-concave valuations \citep{SY2009, SY2015} and trading networks \citep{CEV2021}. 
Developing ascending procedures with frictions for these broader settings would deepen the connection between auctions and discrete optimization, bridging the gap between abstract structural theory and practical market design.

\section*{Acknowledgements}
	We thank Akihisa Tamura for helpful comments.
	Taihei Oki was supported by JST FOREST Grant Number JPMJFR232L, JST BOOST Grant Number JPMJBY25A6, and JSPS KAKENHI Grant Number JP22K17853.
    Ryosuke Sato was supported by JST ERATO Grant Number JPMJER2301 and JSPS KAKENHI Grant Number JP26K21174.

\bibliography{competitive}
\bibliographystyle{ACM-Reference-Format}

\appendix
\section{Properties of Demand Sets}
\label{DemandSets}
We provide properties of demand sets and prove Theorem~\ref{gs_lad}.
\subsection{M-convex Sets}
We first recall the concepts of M-convex and \Mn-convex sets in discrete convex analysis, which play a central role in characterizing the structure of the demand sets.
\begin{definition}[\citet{M2003}]
\label{def:m_convex}
A subset $B \subseteq \mathbb{Z}^M$ is called an M-convex set if for any $x, y \in B$ and $j \in \mathrm{supp}_{+}(x-y)$, there exists $j' \in \mathrm{supp}_{-}(x-y)$ such that $x - \chi_{j} + \chi_{j'} \in B$ and $y + \chi_{j} - \chi_{j'} \in B$.
A subset $B \subseteq \mathbb{Z}^M$ is called an \Mn-convex set if for any $x, y \in B$ and $j \in \mathrm{supp}_{+}(x-y)$, there exists $j' \in \mathrm{supp}_{-}(x-y)\cup \{0\}$ such that
$x - \chi_{j} + \chi_{j'} \in B$ and $y + \chi_{j} - \chi_{j'} \in B$.
\end{definition}

\begin{fact}[\citet{M2003}]
\label{m_convex_fact}
For any price vector $p$ and each $i\in N$, the demand set $D_i(p)$ is an M$^{\natural}$-convex set, and each of $\Dc_i(p)$ and $\Dh_i(p)$ is an M-convex set.
\end{fact}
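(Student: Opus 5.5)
The plan is to get all three claims from one fact: for fixed $p$, the utility $u_i(\cdot;p)$ is an $\mathrm{M}^\natural$-concave function. The demand set is then the argmax of an $\mathrm{M}^\natural$-concave function, and $\Dc_i(p)$, $\Dh_i(p)$ are its layers of extreme total size.

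\emph{Step 1: $D_i(p)$ is $\mathrm{M}^\natural$-convex.} For fixed $p$, the payment term $\sum_{j} q_{ij}(p_j)x_{ij}$ is linear in $x$, with coefficients $q_{ij}(p_j)$. The exchange inequality in the definition of $\mathrm{M}^\natural$-concavity is preserved when a linear function is added, because the linear parts of $x,x'$ and of $x-\chi_j+\chi_k$, $x'+\chi_j-\chi_k$ have the same sum. Hence $u_i(\cdot;p)$ is $\mathrm{M}^\natural$-concave on $[\zeros,s]_{\Z}$.

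Now take $x,y\in D_i(p)$ and $j\in\operatorname{supp}_+(x-y)$. The exchange axiom gives $k\in\operatorname{supp}_-(x-y)\cup\{0\}$ with
\[
2\max u_i \;=\; u_i(x)+u_i(y)\;\le\; u_i(x-\chi_j+\chi_k)+u_i(y+\chi_j-\chi_k).
\]
Each term on the right is at most $\max u_i$, so both are equal to it. Therefore both exchanged points lie in $D_i(p)$, which is exactly the $\mathrm{M}^\natural$-convex exchange property.

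\emph{Step 2: minimal elements are exactly the minimum-cardinality elements.} I will show that $\Dc_i(p)=\{x\in D_i(p)\mid x(M)=\min_{y\in D_i(p)}y(M)\}$.

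The inclusion "$\supseteq$" is clear: if $x'\lneq x$ then $x'(M)<x(M)$, so a minimum-sum element cannot have a strictly smaller element below it.

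For "$\subseteq$", suppose $x\in\Dc_i(p)$ but some $y\in D_i(p)$ has $y(M)<x(M)$. Among all such $y$, choose one minimizing $\|x-y\|_1$.
\begin{itemize}
\item Since $y(M)<x(M)$, the set $\operatorname{supp}_+(x-y)$ is nonempty, so pick $j$ in it and apply Step 1's exchange to obtain $k$.
\item If $k=0$, then $x-\chi_j\in D_i(p)$, which contradicts the minimality of $x$.
\item If $k\neq0$, then $y'\coloneqq y+\chi_j-\chi_k\in D_i(p)$ satisfies $y'(M)=y(M)<x(M)$ and $\|x-y'\|_1=\|x-y\|_1-2$. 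This contradicts the choice of $y$.
\end{itemize}
I expect this step to be the main obstacle. It is the only place where the argument is not a direct application of the exchange axiom, and the key choice is to take $y$ closest to $x$.

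\emph{Step 3: M-convexity.} Let $x,y\in\Dc_i(p)$ and $j\in\operatorname{supp}_+(x-y)$, and apply the exchange from Step 1.
\begin{itemize}
\item The case $k=0$ is impossible, since then $x-\chi_j\in D_i(p)$ and $x$ would not be minimal.
\item So $k\in\operatorname{supp}_-(x-y)$. Both exchanged points keep the minimum total sum, and they lie in $D_i(p)$ by Step 1. By Step 2 they belong to $\Dc_i(p)$.
\end{itemize}
This is precisely the M-convex exchange property.

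\emph{Maximal elements.} The argument for $\Dh_i(p)$ is symmetric. Replace "minimum sum" by "maximum sum" and exchange from the other side, using $j\in\operatorname{supp}_+(y-x)$ applied to the pair $(y,x)$. The case $k=0$ would then produce $x+\chi_j\in D_i(p)$, contradicting the maximality of $x$. Step 2 repeats verbatim with all inequalities reversed.
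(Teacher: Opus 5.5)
Your proof is correct. The paper gives no argument for this statement; it records it as a fact and attributes it to \citet{M2003}. There it would follow from general theory: maximizers of an $\mathrm{M}^\natural$-concave function form an $\mathrm{M}^\natural$-convex set, and the extreme-cardinality layers of an $\mathrm{M}^\natural$-convex set are M-convex.

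Your route is self-contained and elementary. Step~1 is the standard observation that adding the linear payment term preserves the exchange inequality, and that at two maximizers the inequality forces both exchanged points to be maximizers. Step~2 is the closest-point exchange argument. It is the same $L_1$-distance technique the paper uses to justify Fact~\ref{minimal_existence} and in the proofs of Theorem~\ref{gs_lad} and Lemma~\ref{perfect-matching}. Step~3 then only has to rule out $k=0$, using minimality of $x$ (resp.\ maximality of the element that receives $+\chi_j$). Swapping to the pair $(y,x)$ in the maximal case is harmless, since the M-convex exchange must hold for all ordered pairs anyway.

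What your version adds beyond the citation is the explicit identification of $\Dc_i(p)$ and $\Dh_i(p)$ with the minimum- and maximum-cardinality layers of $D_i(p)$. This is exactly why $x(M)$ is constant on $\Dc_i(p)$, a property the paper uses implicitly, e.g.\ in \eqref{mu_rho} and in treating $\Dc_i(p)$ as the base polyhedron of $\rhoc_i(\cdot;p)$.
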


In general, integral polymatroids and their base polyhedra belong to the classes of \Mn-convex sets and M-convex sets, respectively, and thus possess these exchange properties. Building on these properties, we state the following fact on the existence of minimal demand bundles.
\begin{fact}
\label{minimal_existence}
For any buyer $i$, price vector $p$, and any non-minimal bundle $x_i \in D_i(p)$, there exists a minimal demand bundle $x'_i \in \Dc_i(p)$ such that $x'_i \lneq x_i$.
\end{fact}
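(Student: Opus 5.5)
Fact~\ref{minimal_existence} follows from a finite descent argument. Let $x_i \in D_i(p)$ be non-minimal. Then some $y \in D_i(p)$ satisfies $y \lneq x_i$. The set $\{x' \in D_i(p) \mid x' \le x_i\}$ is a finite subset of the discrete interval $[\zeros, x_i]_{\Z}$, and it contains both $x_i$ and $y$.

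Choose $x'_i$ in this set that minimizes the total size $x'(M)$. Any $x'' \in D_i(p)$ with $x'' \lneq x'_i$ would also lie in the set and would have strictly smaller total size. This contradicts the choice of $x'_i$, so $x'_i \in \Dc_i(p)$. Moreover, $x'_i(M) \le y(M) < x_i(M)$, which gives $x'_i \ne x_i$ and hence $x'_i \lneq x_i$.

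This direct argument only needs the finiteness of $[\zeros,s]_{\Z}$; it does not use any exchange property. The single point needing care is the strictness $x'_i \neq x_i$, and this is supplied by the assumed $y$.

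If one wants the more structured conclusion that one can descend one unit at a time, the \Mn-convexity of $D_i(p)$ from Fact~\ref{m_convex_fact} gives it. Apply the exchange property in Definition~\ref{def:m_convex} to $x_i$ and $y$. Pick $j \in \operatorname{supp}_+(x_i - y)$. Since $\operatorname{supp}_-(x_i - y) = \emptyset$, the exchange partner must be $j' = 0$, so $x_i - \chi_j \in D_i(p)$.

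Iterating this step reaches a minimal element below $x_i$. Each step strictly decreases the total size, so the iteration terminates. The main (minor) obstacle in this version is noticing that the empty negative support forces $j'=0$; everything else is routine.
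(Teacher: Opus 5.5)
Your argument is correct, but it takes a different route from the paper.

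The paper takes $x'_i \in \Dc_i(p)$ minimizing the $L_1$-distance to $x_i$. It then uses the M$^\natural$-exchange property of $D_i(p)$ to rule out $x'_{ij} > x_{ij}$ for any $j$:
\begin{itemize}
\item an exchange partner $j'=0$ would contradict the minimality of $x'_i$;
\item $j' \neq 0$ would give a minimal bundle $x'_i - \chi_j + \chi_{j'}$ strictly closer to $x_i$. This case tacitly uses that all minimal bundles have the same size, i.e., that $\Dc_i(p)$ is M-convex by Fact~\ref{m_convex_fact}.
\end{itemize}
Strictness $x'_i \neq x_i$ then comes from $x_i \notin \Dc_i(p)$.

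Your first argument instead minimizes $x'(M)$ over the finite down-set $\{x' \in D_i(p) \mid x' \le x_i\}$. This is just the fact that every element of a finite poset dominates a minimal one. It uses no discrete convexity, so it holds for arbitrary valuations on $[\zeros,s]_{\Z}$. Your use of the witness $y$ to get $x'_i(M) \le y(M) < x_i(M)$ settles strictness cleanly.

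The paper's route buys a more specific witness: every $L_1$-nearest minimal bundle is itself dominated by $x_i$. It also matches the nearest-point exchange arguments used for Theorem~\ref{gs_lad} and Lemma~\ref{perfect-matching}.

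Your optional unit-step descent is also correct; there, the empty negative support forces $j'=0$. It is the part closest in spirit to the paper's use of exchange. It additionally shows that a minimal bundle can be reached through a chain $x_i,\ x_i - \chi_{j_1},\ x_i - \chi_{j_1} - \chi_{j_2}, \dots$ that stays inside $D_i(p)$.
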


This fact is easily verified by taking an $x'_i \in \check{D}_i(p)$ that minimizes the $L_1$-distance to $x_i$. If $x_{ij} < x'_{ij}$ for some $j \in M$, the exchange property of M$^{\natural}$-convex sets immediately yields a contradiction.

\subsection{Proof of Theorem \ref{gs_lad}}
\label{proof_A}

Theorem~\ref{gs_lad} provides a new variant of the GS\&LAD condition for the minimal and maximal demand sets. Its proof follows the argument for the standard GS\&LAD condition in the Japanese edition of \citet{M2022}, which we include here for completeness. We use the following lemma:
\begin{lemma}[{\citet[Lemma 4.3]{MS1999}}]
\label{size_exchange}
Let $v: [\zeros,s]_{\Z} \to \R_+$ be an M$^{\natural}$-concave function.
For any $x, y \in [\zeros,s]_{\Z}$ with $x(M) < y(M)$, there exists $k \in \{j \in M \mid x_j < y_j\}$ such that $v(x) + v(y) \leq v(x + \chi_k) + v(y - \chi_k)$.
\end{lemma}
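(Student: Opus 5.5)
This lemma is the size-exchange property of M$^\natural$-concave functions. I would prove it directly from the exchange axiom in the Definition of M$^\natural$-concavity, using the standard embedding into M-concavity.

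The plan is to extend $v$ to a function $\hat v$ on $\Z^{M\cup\{0\}}$ by
\[
\hat v(x_0,x) \coloneqq v(x) \quad\text{if } x_0 = -x(M),
\]
and $-\infty$ otherwise. It is standard (Murota) that $v$ is M$^\natural$-concave if and only if $\hat v$ is M-concave. Concretely, this means that for $\hat x,\hat y$ and $j\in\operatorname{supp}_+(\hat x-\hat y)$ there is $k\in\operatorname{supp}_-(\hat x-\hat y)$ with $\hat v(\hat x)+\hat v(\hat y)\le \hat v(\hat x-\chi_j+\chi_k)+\hat v(\hat y+\chi_j-\chi_k)$. Alternatively, I can argue directly with the M$^\natural$ exchange axiom stated in the paper, which I sketch next, since that avoids the embedding.

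Set $\hat x=(-x(M),x)$ and $\hat y=(-y(M),y)$. Because $x(M)<y(M)$, the coordinate $0$ lies in $\operatorname{supp}_+(\hat x-\hat y)$. Applying the M-exchange at $j=0$ gives some $k\in\operatorname{supp}_-(\hat x-\hat y)$; this $k$ is a good in $M$ with $x_k<y_k$, because coordinate $0$ is the only index outside $M$ and it lies in the positive support. The exchanged vectors are $\hat x-\chi_0+\chi_k=(-(x(M)+1),x+\chi_k)$ and $\hat y+\chi_0-\chi_k=(-(y(M)-1),y-\chi_k)$. Translating back gives exactly $v(x)+v(y)\le v(x+\chi_k)+v(y-\chi_k)$.

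The one point needing care is that $x+\chi_k$ and $y-\chi_k$ stay in $[\zeros,s]_\Z$. This holds because $x_k<y_k\le s_k$ and $y_k>x_k\ge 0$.

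To avoid relying on the embedding equivalence, I would instead derive the claim from the paper's M$^\natural$ axiom applied to the pair $(y,x)$. Take any $j\in\operatorname{supp}_+(y-x)$ (nonempty since $y(M)>x(M)$). The axiom gives $k'\in\operatorname{supp}_-(y-x)\cup\{0\}$ with $v(y)+v(x)\le v(y-\chi_j+\chi_{k'})+v(x+\chi_j-\chi_{k'})$. If $k'=0$, the lemma holds with $k=j$.

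If $k'\neq0$, the sizes are unchanged, so I would iterate with a potential argument. Choose, among all pairs $(x',y')$ with $x'(M)=x(M)$, $y'(M)=y(M)$, $x'+y'=x+y$ and $v(x')+v(y')\ge v(x)+v(y)$, one that minimizes $\|x'-y'\|_1$. The difficulty is that a size-preserving exchange does not reduce this distance, which is why the M-convex embedding is the cleaner route.

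So the main obstacle is really justifying the equivalence "$v$ M$^\natural$-concave $\iff$ $\hat v$ M-concave". I would cite it from \citet{M2003} as standard; it is also in \citet{MS1999}, where this lemma originates. With that citation, the proof above is three lines.
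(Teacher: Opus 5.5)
Your argument is correct. Once M$^{\natural}$-concavity of $v$ is identified with M-concavity of the lifted function $\hat v(x_0,x)=v(x)$ on $\{x_0=-x(M)\}$, the lemma is literally the M-exchange at the extra coordinate $j=0$. That coordinate lies in $\operatorname{supp}_+(\hat x-\hat y)$ exactly when $x(M)<y(M)$, and your bookkeeping of the exchanged points and of the box constraints is right.

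The paper gives no proof at all; it simply cites Lemma~4.3 of Murota--Shioura. Your route therefore differs only in which published fact you take as a black box. Be aware, though, that the paper \emph{defines} M$^{\natural}$-concavity by the exchange axiom with $k\in\operatorname{supp}_-(x-x')\cup\{0\}$. The nontrivial direction of the equivalence you cite (that this axiom implies M-concavity of $\hat v$) contains the present lemma as its $j=0$ case. So your three lines relocate the substantive content into the citation rather than supply it. That is acceptable here, since the paper does the same, but you should present it explicitly as a reduction to the cited equivalence.

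You should also delete the unfinished direct argument. Its real obstacle is not that the $\ell_1$-distance fails to decrease. Even for a better pair $(x',y')$ with $v(x')+v(y')\ge v(x)+v(y)$, an exchange inequality for $(x',y')$ bounds $v(x'+\chi_k)+v(y'-\chi_k)$, not $v(x+\chi_k)+v(y-\chi_k)$. So nothing transfers back to the original pair.
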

\begin{proof}[Proof of Theorem \ref{gs_lad}.]
(i) We first consider the case where $x_i \in \Dc_i(p)$.
Choose a bundle $x'_i \in \Dc_i(p')$ that minimizes the $L_1$-distance $\|x_i - x'_i\|_1$.
Suppose to the contrary that there exists $j \in M$ such that $p_j = p'_j$ and $x_{ij} > x'_{ij}$.
Then, by the \Mn-concavity of $v_i$, there exist $j' \in \operatorname{supp}_-(x_i - x'_i) \cup \{0\}$ such that
$v_i(x_i) + v_i(x'_i) \leq v_i(x_i - \chi_j + \chi_{j'}) + v_i(x'_i + \chi_j - \chi_{j'})$.
For the utility of buyer $i$, we have the following relation:
\begin{align*}
&u_i(x_i;p) + u_i(x'_i;p')\\
\qquad&\qquad= v_i(x_i) + v_i(x'_i) - \sum_{l \in M} q_{il}(p_l)x_{il} - \sum_{l \in M} q_{il}(p'_l)x'_{il} \\
\qquad&\qquad\leq v_i(x_i - \chi_j + \chi_{j'}) + v_i(x'_i + \chi_j - \chi_{j'}) 
-\sum_{l \in M} q_{il}(p_l)x_{il} -\sum_{l \in M} q_{il}(p'_l)x'_{il} \\
\qquad&\qquad=u_i(x_i - \chi_j + \chi_{j'}; p)+ u_i(x'_i + \chi_j - \chi_{j'}; p') 
- (q_{ij}(p_j) - q_{ij}(p'_j)) + (q_{ij'}(p_{j'}) - q_{ij'}(p'_{j'})).
\end{align*}
Since each $q_{ij}$ is strictly increasing, by $p_j = p'_j$ and $p_{j'} \leq p'_{j'}$, 
the last two terms are nonpositive.
Therefore, we obtain 
$u_i(x_i;p) + u_i(x'_i;p')\leq u_i(x_i - \chi_j + \chi_{j'}; p)+ u_i(x'_i + \chi_j - \chi_{j'}; p')$.
This implies $x_i - \chi_j + \chi_{j'} \in D_i(p)$ and $x'_i + \chi_j - \chi_{j'} \in D_i(p')$.
If $j' = 0$, then $x_i - \chi_j \in D_i(p)$, which contradicts the minimality of $x_i$.
If $j' \neq 0$, then $x'_i + \chi_j - \chi_{j'} \in \Dc_i(p')$ 
and $\|(x'_i + \chi_j - \chi_{j'}) - x_i\|_1=\|x'_i - x_i\|_1-2$, 
which contradicts the choice of $x'_i$. 

Suppose to the contrary that $x_i(M)<x'_i(M)$. 
By Lemma \ref{size_exchange}, there exists $k \in \operatorname{supp}_-(x_i - x'_i)$ such that 
\begin{align*}
u_i(x_i;p) + u_i(x'_i;p')&\leq u_i(x_i + \chi_k ; p)+ u_i(x'_i - \chi_k; p') + (q_{ik}(p_{k}) - q_{ik}(p'_{k}))\\
&\leq u_i(x_i + \chi_k ; p)+ u_i(x'_i - \chi_k; p').
\end{align*}
This implies $x_i+\chi_k\in D_i(p)$ and $x'_i - \chi_k \in D_i(p')$, which contradicts the minimality of $x'_i$.

Consider the case where $x_i \in \Dh_i(p)$. We choose $x'_i \in \Dh_i(p')$ that minimizes $\|x_i - x'_i\|_1$. The same argument yields $x_i - \chi_j + \chi_{j'} \in D_i(p)$ and $x'_i + \chi_j - \chi_{j'} \in D_i(p')$. Here, $j' = 0$ implies $x'_i + \chi_j \in D_i(p')$, contradicting the maximality of $x'_i$, while $j' \neq 0$ implies $x'_i + \chi_j - \chi_{j'} \in \Dh_i(p')$, contradicting the choice of $x'_i$. If $x_i(M) < x'_i(M)$, Lemma~\ref{size_exchange} similarly gives $x_i+\chi_k\in D_i(p)$, which contradicts the maximality of $x_i$.

(ii) The proof is symmetric to that of (i). Given $x'_i \in \Dc_i(p')$ (resp., $x'_i \in \Dh_i(p')$), we choose $x_i \in \Dc_i(p)$ (resp., $x_i \in \Dh_i(p)$) that minimizes $\|x_i - x'_i\|_1$. The same argument as (i) ensures $x_i - \chi_j + \chi_{j'} \in D_i(p)$ and $x'_i + \chi_j - \chi_{j'} \in D_i(p')$. As before, $j' \neq 0$ immediately contradicts the choice of $x_i$. If $j' = 0$, we have $x_i - \chi_j \in D_i(p)$, contradicting the minimality of $x_i$ for the $\Dc_i$ case, and $x'_i + \chi_j \in D_i(p')$, contradicting the maximality of $x'_i$ for the $\Dh_i$ case, respectively. Finally, if $x_i(M) < x'_i(M)$, applying Lemma~\ref{size_exchange} yields $x'_i - \chi_k \in D_i(p')$ and $x_i + \chi_k \in D_i(p)$, which contradicts the minimality of $x'_i$ for the $\Dc_i$ case and the maximality of $x_i$ for the $\Dh_i$ case, respectively.
\end{proof}

\section{Computing Equilibrium Allocations}
\label{Allocation}

In this section, we explain how to find an equilibrium allocation from an equilibrium price using the demand and exchange oracles.

\subsection{Preliminaries on Submodular Polyhedra}
This section provides preliminaries on submodular polyhedra.
Interested readers may refer to~\citet{F2005}.

Let $E \coloneqq \set{1, \dotsc, m}$ be a finite ground set and $\rho\colon 2^E \to \Z$ a submodular function with $\rho(\emptyset) = 0$, which is not necessarily monotone.
Generalizing polymatroids, the \emph{submodular polyhedron} $P(\rho)$ and the \emph{base polyhedron} $B(\rho)$ are defined as
\begin{align*}
    P(\rho) &\coloneqq \Set{x \in \Z^E}{\text{$x(S) \le \rho(S)$ for all $S \subseteq E$}}, \\
    B(\rho) &\coloneqq \Set{x \in P(\rho)}{x(E) = \rho(E)}.
\end{align*}
For $x \in P(\rho)$, $e \in E$, and $f \in (E \setminus \set{e}) \cup \set{0}$, define
$c(x, e, f) \coloneqq \max\Set{\alpha \in \Z_+}{x + \alpha(\chi_e - \chi_f) \in P(\rho)}$.
The value $c(x,e,f)$ is called the \emph{exchange capacity} if $f \ne 0$ and the \emph{saturation capacity} if $f = 0$.
Using $\rho$, we can express $c(x,e,f)$ as
\begin{align}\label{eq:capacity-submo}
    c(x,e,f) = \min\Set{\rho(S) - x(S)}{S \subseteq E, \, e \in S \not\ni f}.
\end{align}

Here, we present another formulation of $c(x,e,f)$.

\begin{proposition}\label{capacity}
    For $x \in P(\rho)$, $e \in E$, and $f \in (E \setminus \set{e}) \cup \set{0}$, 
    \begin{align*}
        c(x,e,f) = \max\Set{y(e)}{y \in B(\rho), \, y|_{E \setminus \set{f}} \ge x|_{E \setminus \set{f}}} - x(e).
    \end{align*}
\end{proposition}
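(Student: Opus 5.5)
We prove the identity by establishing the two inequalities separately; the formula \eqref{eq:capacity-submo} is the bridge. Write $\kappa \coloneqq \max\{y(e) \mid y \in B(\rho),\, y|_{E\setminus\{f\}} \ge x|_{E\setminus\{f\}}\} - x(e)$; when $f = 0$ we read $E \setminus \{f\}$ as $E$.

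For the inequality $c(x,e,f) \le \kappa$, set $\alpha \coloneqq c(x,e,f)$ and $x' \coloneqq x + \alpha(\chi_e - \chi_f) \in P(\rho)$. Every integral point of a submodular polyhedron is dominated by a base, so some $y \in B(\rho)$ satisfies $y \ge x'$. This can be obtained greedily by raising coordinates via saturation capacities. Then $y|_{E\setminus\{f\}} \ge x'|_{E\setminus\{f\}} \ge x|_{E\setminus\{f\}}$, and $y(e) \ge x(e) + \alpha$. Hence $y$ is feasible for the maximum and $\kappa \ge \alpha$.

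For the inequality $\kappa \le c(x,e,f)$, take any feasible $y$ and any $S \subseteq E$ with $e \in S \not\ni f$. Since $S \subseteq E \setminus \{f\}$, we have $y|_S \ge x|_S$, and therefore
\[
y(e) - x(e) \le y(S) - x(S) \le \rho(S) - x(S),
\]
where the last step uses $y \in P(\rho)$. Minimizing over $S$ and applying \eqref{eq:capacity-submo} gives $y(e) - x(e) \le c(x,e,f)$, and taking the maximum over $y$ gives $\kappa \le c(x,e,f)$.

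The only nontrivial ingredient is the existence of a base dominating $x'$; everything else follows from \eqref{eq:capacity-submo}. Should that step need support, the standard fact applies: for any $x' \in P(\rho)$, repeatedly increasing each coordinate by its saturation capacity reaches a base. This uses only the lattice of tight sets, since the union of tight sets is tight by submodularity. If $B(\rho)$ could be empty this would fail, but $\rho$ is finite-valued with $\rho(\emptyset) = 0$, so $B(\rho)$ is nonempty and the argument goes through.
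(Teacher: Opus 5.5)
Your proof is correct and follows the paper's argument. The lower bound comes from a base dominating $x + c(x,e,f)(\chi_e - \chi_f)$, and the upper bound comes from the formula \eqref{eq:capacity-submo}. You bound against every admissible $S$ rather than a fixed minimizer $S^*$, which is an inessential difference, and you add a justification of the base-domination step (saturation capacities plus union-closure of tight sets) that the paper leaves implicit.
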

\begin{proof}
    We first show ``$\le$''. Since $\tilde x \coloneqq x + c(x,e,f) (\chi_e - \chi_f) \in P(\rho)$, there exists $y \in B(\rho)$ such that $y \ge \tilde x$.
    This $y$ satisfies $y(e) \ge \tilde x(e) = x(e) + c(x,e,f)$.
    Since $y|_{E \setminus \set{f}} \ge \tilde x|_{E \setminus \set{f}} \ge x|_{E \setminus \set{f}}$, we have $c(x,e,f) \le y(e)-x(e) \le \max\Set{y(e)}{y \in B(\rho), \, y|_{E \setminus \set{f}} \ge x|_{E \setminus \set{f}}} - x(e)$.

    We next show ``$\ge$''.
    Let $S^*$ be a minimizer of the right-hand side of~\eqref{eq:capacity-submo}.
    This implies $c(x,e,f) = \rho(S^*) - x(S^*)$.
    Then, for any $y \in B(\rho)$ with $y|_{E \setminus \set{f}} \ge x|_{E \setminus \set{f}}$, we have
    \[
        c(x,e,f) + x(e)
        = \rho(S^*) - x(S^* \setminus \set{e})
        \ge y(S^*) - y(S^* \setminus \set{e}) = y(e). 
    \]
\end{proof}

It is a classical result that a linear objective function can be optimized over a base polyhedron $B(\rho)$ by local search on $B(\rho)$ (see, e.g.,~\citet{M2003}).
Recently, \citet{OS2026} presented a long-step version of this algorithm for general M-convex functions, establishing the following bound on the number of iterations.

\begin{proposition}[{\citet{OS2026}}]\label{linear-opt-submo-base}
    Given the base polyhedron $B$ of a submodular function as an exchange capacity oracle, a point $x \in B$, and a linear objective function $c$ over $B$, we can compute a point maximizing $c$ over $B$ in polynomially many oracle queries.
\end{proposition}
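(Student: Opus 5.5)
The plan is to give a direct, greedy-type proof rather than reproduce the general long-step steepest-ascent analysis of \citet{OS2026}, exploiting the fact that the objective is linear. I would order the ground set as $e_1,\dots,e_m$ with $c_{e_1}\ge\dots\ge c_{e_m}$ and write $X_k\coloneqq\{e_1,\dots,e_k\}$. By Fact~\ref{greedy_structure}~(i), which applies verbatim to base polyhedra of submodular functions, a point $y\in B(\rho)$ maximizes $c$ if $y(X_k)=\rho(X_k)$ holds for every $k$. Such a $y$ saturates every prefix $X_{k-1}\cup M_k$ of the value classes, and ties among equal weights cost nothing. So it suffices to transform the given $x$ into a point in which every prefix $X_k$ is tight.

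The procedure processes $k=1,\dots,m$ in turn. For each $f\in E\setminus X_k$, it queries $\alpha\coloneqq c(x,e_k,f)$ and updates $x\coloneqq x+\alpha(\chi_{e_k}-\chi_f)$. This takes $O(m^2)$ exchange-capacity queries in total. Each update stays in $B(\rho)$: it stays in $P(\rho)$ by the definition of $c$, and $x(E)$ is unchanged.

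I then prove the invariant that after step $k$ we have $x(X_l)=\rho(X_l)$ for all $l\le k$, in two parts.
\begin{enumerate}
\item[(a)] Tightness is preserved. The exchanges made at a step $k'>l$ move mass only between $e_{k'}$ and elements outside $X_{k'}$, all of which lie outside $X_l$. Hence $x(X_l)$ is unchanged, and a set that was tight stays tight.
\item[(b)] $X_k$ becomes tight at step $k$. This is the main obstacle. Processing one $f$ can alter capacities toward other $f'$, so I argue through tight sets rather than capacities. Saturating the pair $(e_k,f)$ leaves, by \eqref{eq:capacity-submo}, a tight set $S_f$ with $e_k\in S_f\not\ni f$. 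Once it exists, $S_f$ stays tight during the remaining exchanges of step $k$, since each of these raises $x_{e_k}$ (and $e_k\in S_f$) and cannot violate $x(S_f)\le\rho(S_f)$. Tight sets are closed under intersection and union by submodularity. Therefore $T_k\coloneqq\bigcap_{f\notin X_k}S_f$ is tight, contains $e_k$, and is contained in $X_k$. The set $X_{k-1}$ is tight by induction, so $X_{k-1}\cup T_k=X_k$ is tight.
\end{enumerate}
The step needing care is the claim in (b) that $S_f$ remains tight under later exchanges $e_k\leftarrow f'$. This holds because the only coordinate that increases is $e_k\in S_f$, and $x(S_f)$ cannot exceed $\rho(S_f)$. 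If $f'\in S_f$ the value $x(S_f)$ is unchanged, while $f'\notin S_f$ would make $x(S_f)$ exceed $\rho(S_f)$, which is impossible, so that exchange has $\alpha=0$.

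After step $m$ all prefixes are tight, so $x$ is optimal by Fact~\ref{greedy_structure}~(i). The total cost is $O(m^2)$ oracle queries plus one sort, which gives the polynomial bound.
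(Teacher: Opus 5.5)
Your proof is correct, and it takes a different route from the paper. The paper gives no proof of its own here. It cites the long-step steepest-ascent method of \citet{OS2026}, which is designed for general M-convex objectives and whose polynomial iteration bound comes from a separate analysis of that method.

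You instead exploit linearity directly. A base in which every prefix $X_k$ of a weight-sorted order is tight is the greedy vertex. It is optimal by the identity $\langle c, y\rangle=\sum_{k<m}(c_{e_k}-c_{e_{k+1}})\,y(X_k)+c_{e_m}\,y(E)$. This identity also justifies your remark that Fact~\ref{greedy_structure}~(i) carries over to non-monotone $\rho$.

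Your fixed sweep then reaches this vertex, and each step checks out:
\begin{itemize}
\item A saturating exchange leaves, by \eqref{eq:capacity-submo}, a tight set $S_f$ with $e_k\in S_f\not\ni f$.
\item The later exchanges of the same step only raise $x_{e_k}$. So each either leaves $x(S_f)$ unchanged or has capacity $0$.
\item Lattice closure of tight sets gives a tight $T_k$ with $e_k\in T_k\subseteq X_k$, hence $X_{k-1}\cup T_k=X_k$. For $k=m$ the claim is just $x(E)=\rho(E)$.
\item Exchanges at later steps never change $x(X_l)$.
\end{itemize}

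Your route gives a self-contained, elementary argument with an explicit bound of $\binom{m}{2}$ queries, independent of the magnitudes of the capacities. The cited route buys generality: the same machinery minimizes arbitrary M-convex functions. An example is the $\ell_1$-distance minimization used for membership testing in $\Dc_i(p)$ in Section~\ref{polymatroid_theory}, which your fixed-order sweep does not handle. For the linear case stated here, which is all that Appendix~\ref{Allocation} uses, your proof suffices.
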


The \emph{submodular intersection problem} is defined as follows: given two submodular polyhedra, the goal is to find a common point in their base polyhedra.
This problem can be solved by using only exchange and saturation capacities.

\begin{proposition}[{see~\citet[Theorem~4.11]{F2005}}]
\label{polymatroid-intersection-algo}
    Given two submodular polyhedra as exchange and saturation capacity oracles, we can find a common point of their base polyhedra in strongly polynomial~time.
\end{proposition}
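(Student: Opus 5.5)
This is a classical result, and I would prove it with the augmenting-path (exchange-graph) framework for submodular flows and polymatroid intersection. Throughout, $\rho_1,\rho_2$ are accessed only through the capacities $c_1(x,e,f)$ and $c_2(x,e,f)$ of \eqref{eq:capacity-submo}.

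\textbf{Reduction and starting point.} The goal is a point $x$ that is maximal in the intersection, that is, one maximizing $x(E)$ over $P(\rho_1)\cap P(\rho_2)$. By Edmonds' intersection theorem,
\[
\max\{x(E)\mid x\in P(\rho_1)\cap P(\rho_2)\}=\min_{S\subseteq E}\{\rho_1(S)+\rho_2(E\setminus S)\}.
\]
A common base exists if and only if this value equals $\rho_1(E)=\rho_2(E)$, and in that case any maximizer is a common base.

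For the starting point I would take $x=-K\ones$ with $K$ large. To avoid needing explicit rank values, I would instead greedily raise coordinates using the saturation capacities $c_1(\cdot,e,0)$ and $c_2(\cdot,e,0)$, setting $x_e \leftarrow x_e+\min\{c_1,c_2\}$. Each coordinate is then maximal in both polyhedra with respect to its own saturation.

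\textbf{Augmentation step.} For the current $x$ I would build the exchange graph on $E$. It has an arc $(e,f)$ when $c_1(x,e,f)>0$ and an arc $(f,e)$ when $c_2(x,e,f)>0$. The sources are the elements with $c_1(x,e,0)>0$ and the sinks are those with $c_2(x,e,0)>0$.

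When a source--sink path exists, I would pick a \emph{shortest} one. Along it I would augment by the minimum of the capacities on the path, or by $1$ in the integral unit-step variant. The key technical lemma is the standard "shortest path / no shortcut" lemma: because the path is shortest, the exchanges are compatible, so the new point stays in $P(\rho_1)\cap P(\rho_2)$. This is proved from the uniqueness of a perfect matching in the induced exchange bipartite graph, using the exchange property of $\mathrm{M}$-convex sets (Definition~\ref{def:m_convex}).

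When no path exists, let $S$ be the set of elements that can reach a sink. Every element outside $S$ is tight for $\rho_1$ and every element inside $S$ is tight for $\rho_2$. Then \eqref{eq:capacity-submo} together with the uncrossing of tight sets gives $x(E)=\rho_1(E\setminus S)+\rho_2(S)$, which certifies optimality.

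\textbf{Main obstacle: strong polynomiality.} Unit augmentations give only a pseudo-polynomial bound. I would use the Fujishige--Zhang push--relabel (or Cunningham-type lexicographic shortest path) analysis. Distance labels are nondecreasing, each label is bounded by $m$, and between relabels only $O(m^2)$ saturating or nonsaturating pushes occur, each moving the full capacity $c_i(x,e,f)$. This yields an iteration bound polynomial in $m$ that is independent of the numerical values, and each iteration uses $O(m^2)$ capacity queries. This counting argument is the delicate part, and I would follow \citet[Section~4]{F2005} for it.
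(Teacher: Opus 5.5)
The paper gives no proof of its own here; it only cites \citet{F2005}. Your exchange-graph and augmenting-path plan is the standard method behind that result, so the overall route is the right one. As written, however, the construction is inconsistent.

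\textbf{Terminals are swapped.} Your $P_1$-arc $(e,f)$, coming from $c_1(x,e,f)>0$, points from the element that is increased to the one that is decreased. Your $P_2$-arc $(f,e)$ points from the decreased element to the increased one. Take a path $u_0\to u_1\to\cdots\to u_k$ with signs $+,-,\dots,+$. The $P_1$-exchanges are then $(u_0,u_1),(u_2,u_3),\dots$, so $u_k$ is the unpaired increase in $P_1$, while $u_0$ is the unpaired increase in $P_2$. Hence the sinks must be $\{e : c_1(x,e,0)>0\}$ and the sources $\{e : c_2(x,e,0)>0\}$, the reverse of what you wrote.

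With your terminals the procedure genuinely fails. By \eqref{eq:capacity-submo}, $c_1(x,e,f)\ge c_1(x,e,0)$, so every source has a $P_1$-arc to every other element. A one-arc source--sink path therefore always exists, and moving along it does not increase $x(E)$ and need not stay in $P(\rho_2)$. Your termination paragraph (outside $S$ is $\rho_1$-tight, inside $S$ is $\rho_2$-tight) in fact presupposes the corrected convention: with your terminals, non-sinks are $\rho_2$-saturated and non-sources are $\rho_1$-saturated. Swapping the two terminal sets, or reversing every arc, repairs this.

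\textbf{Strong polynomiality is not established.} The feasibility lemma you invoke (a unique perfect matching plus M-convex exchange) is a unit-step statement. It therefore only licenses augmenting by $1$, which is the pseudo-polynomial variant you want to avoid. Augmenting by the bottleneck $\alpha$ of a shortest path needs the capacity form of the no-shortcut lemma, proved by uncrossing tight sets via submodularity of $\rho_i-x$. An Edmonds--Karp-type count further needs two facts:
\begin{itemize}
\item distances never decrease;
\item at a fixed distance, each augmentation destroys an arc of the layered network without creating new ones.
\end{itemize}
Your counting paragraph supplies neither, and it conflates two schemes. In push--relabel one pushes $\min\{\text{excess},\,c_i\}$, so ``nonsaturating pushes each moving the full capacity'' is self-contradictory. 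Deferring this to Fujishige is no worse than what the paper does, but on this point the sketch then adds nothing beyond the citation.

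\textbf{Two smaller issues.}
\begin{itemize}
\item $-K\mathbf{1}$ with unspecified $K$ cannot be certified to lie in $P(\rho_1)\cap P(\rho_2)$. The capacity oracles are defined only at points of the polyhedra, so an initial common point must be supplied; in Appendix~\ref{Allocation}, $\mathbf{0}$ works.
\item After your greedy pass, each coordinate is saturated in \emph{at least one} of the two polyhedra, not maximal in both.
\end{itemize}
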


\subsection{Reduction to Submodular Intersection}

Let $p^* \in \R_+^M$ be an equilibrium price.
Let
\begin{align*}
    B_1 &\coloneqq \Set{{(x_i)}_{i \in N}}{\text{$x_i \in D_i(p^*)$ $(i \in N)$, $\sum_{i \in N} x_i(M) = s(M)$}}, \\
    \quad
    B_2 &\coloneqq \Set{{(x_i)}_{i \in N}}{\text{$x_i \in \Z_+^M$ $(i \in N)$, $\sum_{i \in N} x_i = s$}}.
\end{align*}
Then, $B_1 \cap B_2$ is exactly the set of equilibrium allocations.
Note that $B_1 \cap B_2 \ne \emptyset$ because $p^*$ is an equilibrium price by assumption.

Since each demand set $D_i(p^*)$ is an \Mn-convex set, $B_1$ is an M-convex set, or equivalently, the base polyhedron of some submodular function.
The corresponding submodular polyhedron is given by
\begin{align*}
    P_1 \coloneqq \Set{x \in \Z^{N \times M}}{\text{$x \le y$ for some $y \in B_1$}}.
\end{align*}
The other set $B_2$ is also an M-convex set in $\Z^{N \times M}$, having the associated polyhedron
\begin{align*}
    P_2 &\coloneqq \Set{{(x_i)}_{i \in N}}{\text{$x_i \in \Z^M$ $(i \in N)$, $\sum_{i \in N} x_i \le s$}}.
\end{align*}
Therefore, the problem of finding an equilibrium allocation is merely the intersection of the two submodular polyhedra $P_1$ and $P_2$.
By Proposition~\ref{polymatroid-intersection-algo}, it suffices to construct oracles for the saturation and exchange capacities of $P_1$ (those for $P_2$ are easy).

\subsection{Constructing Oracles for Exchange and Saturation Capacities}

Let $x = {(x_i)}_{i \in N} \in P_1$, $e \in N \times M$, and $f \in ((N \times M) \setminus \set{e}) \cup \set{0}$.
We present an algorithm for computing the capacity $c(x,e,f)$ for $P_1$.
We first focus on the case where $x \in B_1$, and then consider a general $x \in P_1$.
For $i \in N$, $x_i \in D_i(p^*)$, and distinct $j, j' \in M \cup \set{0}$, let $c_i(x_i, j, j') \coloneqq \max\Set{\alpha \in \Z_+}{x_i + \alpha(\chi_j - \chi_{j'}) \in D_i(p^*)}$ denote the output of the exchange oracle for $D_i(p^*)$.

Suppose that $x \in B_1$.
If $f = 0$, then $c(x,e,f) = 0$, since $x(E)$ is maximal over $P_1$.
Now suppose that $f \in (N \times M) \setminus \{e\}$, and write
$e = (i,j)$ and $f = (i',j')$.
If $i = i'$, then $x + \alpha(\chi_e - \chi_f) \in B_1$ if and only if
$x_i + \alpha(\chi_j - \chi_{j'}) \in D_i(p^*)$.
Therefore, we have $c(x,e,f) = c_i(x_i,j,j')$.
If $i \neq i'$, then $x + \alpha(\chi_e - \chi_f) \in B_1$ is equivalent to
$x_i + \alpha \chi_j \in D_i(p^*)$ and
$x_{i'} - \alpha \chi_{j'} \in D_{i'}(p^*)$, implying $c(x,e,f) = \min\set{c_i(x_i, j, 0), c_{i'}(x_{i'}, 0, j')}$.
Therefore, $c(x,e,f)$ can be computed by using exchange oracles in any case.

We next consider a general point $x \in P_1$.
By Proposition~\ref{capacity}, this reduces to solving a linear optimization problem over $B_1$ under the constraint $y(e') \ge x(e')$ for $e' \in E \setminus \set{f}$.
To capture this constraint in a general form, for $S \subseteq E$, let
$B_S \coloneqq \Set{y \in B_1}{\text{$y(e') \ge x(e')$ for all $e' \in S$}}$.
The feasible set of the above problem is written as $B_{E \setminus \set{f}}$.
Since $B_S$ is the intersection of $B_1$ with box constraints, it is again
the base polyhedron of a submodular function.
By Proposition~\ref{linear-opt-submo-base}, this linear optimization problem can be solved by (i) constructing a point in $B_S$ and (ii) computing exchange capacities on $B_S$.
We describe these steps below.

(i) We first construct a point in $B_1$ and then transform it into a point in $B_S$.
Let $y_i$ be a point in $D_i(p^*)$ given by the demand oracle for $i \in N$ and $y \coloneqq {(y_i)}_{i \in N}$.
If the sum $y(N \times M) = \sum_{i \in N} y_i(M)$ of the components in $y$ is less than $s(M)$, we repeatedly improve $y$ by choosing $i \in N$ and $j \in M$ with $c_i(y_i, j, 0) > 0$ and replacing $y_i$ with $y_i + \alpha\chi_j$, where $\alpha \coloneqq \min\set{c_i(y_i, j, 0), s(M) - y(N \times M)}$.
Since each $(i, j)$ is chosen at most once, this procedure halts in polynomial time.
An analogous procedure applies in the opposite case where $y(N \times M) > s(M)$.

We then transform $y$ into a point in $B_S$.
Let $e_1, \dotsc, e_{|S|}$ be an arbitrary ordering of $S$ and let $S_k \coloneqq \set{e_1, \dotsc, e_k}$ for $k = 0, \dotsc, |S|$.
Suppose that we have $y_k \in B_{S_k}$ for some $0 \le k < |S|$; we first take $y_0 \coloneqq y \in B_{S_0}$.
Consider the following optimization problem:
\begin{align}\label{eq:y_to_B_S_k}
    \max\Set{y(e_{k+1})}{y \in B_{S_k}}.
\end{align}
By Proposition \ref{linear-opt-submo-base}, this problem can be solved with a polynomial number of oracle queries for exchange capacities on $B_{S_k}$.
As seen in (ii), the exchange capacities are computable  from the exchange oracles of the demand sets.
Since an optimal solution to~\eqref{eq:y_to_B_S_k} is in $B_{S_{k+1}}$, we can construct $y_0, y_1, \dotsc, y_{|S|}$ in order.

(ii) Let $y = {(y_i)}_{i \in N} \in B_S$, $i, i' \in N$, and $j, j' \in M$ with $(i,j) \ne (i',j')$.
The exchange capacity $c_S(y, (i,j), (i',j'))$ on $B_S$ is computed from the exchange capacity on $B_1$ as follows.
If $(i',j') \notin S$, then $\tilde y \coloneqq y + \alpha(\chi_{(i,j)} - \chi_{(i',j')})$ is in $B_S$ if and only if $\tilde y \in B_1$.
Therefore, $c_S(y, (i,j), (i',j')) = c(y, (i,j), (i',j'))$ holds.
If $(i',j') \in S$, we have
\begin{align*}
    c_S(y, (i,j), (i',j'))
    &= \max\Set{\alpha \in \Z_+}{y + \alpha(\chi_{(i,j)} - \chi_{(i',j')}) \in B_S} \\
    &= \max\Set{\alpha \in \Z_+}{y + \alpha(\chi_{(i,j)} - \chi_{(i',j')}) \in B_1, \, y_{i'j'} - \alpha \ge x_{i'j'}} \\
    &= \min\set{c(y, (i,j), (i',j')), y_{i'j'} - x_{i'j'}}.
\end{align*}
Thus, we can compute the exchange capacity $c_S(y, (i,j), (i',j'))$ from $c(y, (i,j), (i',j'))$.
Since the exchange capacity on $B_1$ is computable using the exchange oracles as explained above, our reduction is complete.

\section{Formulation of the Auxiliary Assignment Problem}
\label{Interpretation_detail}
This section formalizes the construction of the assignment problem in Section~\ref{Interpretation_abst}. Specifically, we illustrate how the virtual market on $X^*$ is formulated as an assignment problem on the exchange~graph.

\subsection{Construction of the Exchange Graph}
\label{exchange_graph}
Let $x \coloneqq (x_i)_{i \in N}$ be an optimal solution to $(P_2)$ under the current price vector $p$. 
Since $x_i \in P(\rhoc_i(\cdot; p))$, we can extend it to $x'_i \in \Dc_i(p)$ by adding goods. Note that this extension may cause the total allocation to exceed the actual supply $s$. 
Based on such an allocation $x'$, we construct the \emph{exchange graph} $G(x') = (K, M, F)$, a bipartite graph in which each original buyer is split into unit-demand copies. The left nodes $K$ are these copies, and the right nodes are the goods $M$. Specifically, for every pair $(i,j)$ with $x'_{ij} > 0$, we add $x'_{ij}$ nodes to $K$. For each such node $k$, we set $\Gamma(k) = i$ (its original buyer) and $\phi(k) = j$ (the good assigned to $k$). The edge set $F$ consists of the following directed edges:

\begin{itemize}
    \item \textbf{Forward edges:} For each copy $k \in K$, we add a directed edge $(k, \phi(k))$.
    \item \textbf{Backward edges:} For each $k \in K$ and each good $j' \in M \setminus \{\phi(k)\}$, we add a directed edge $(j', k)$ if the corresponding swap is feasible for buyer $\Gamma(k)$, i.e., $x'_{\Gamma(k)} - \chi_{\phi(k)} + \chi_{j'} \in \Dc_{\Gamma(k)}(p)$.
\end{itemize}
This graph represents all feasible local exchanges. We call a good $j$ \emph{oversold} if its total allocation exceeds its supply, i.e., $\sum_{i \in N} x'_{ij} > s_j$. The set $X^*$ then admits a clear graph-theoretic characterization.
\begin{theorem}[\citet{ENPRVV2025}, Theorem~3.3]
\label{thm:overdemanded}
The minimal maximally over-demanded set $X^*$ consists of all goods 
from which there exists a directed path to an \emph{oversold} good in the exchange graph.
\end{theorem}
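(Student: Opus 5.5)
The plan is to let $R$ be the set of goods from which a directed path to an oversold good exists in $G(x')$, and to show that $R$ is a maximizer of $\mathcal{O}(\cdot;p)$ contained in every maximizer. Then $R = X^*$. Throughout I use $\mu_i(X;p)=\min\{y(X)\mid y\in\Dc_i(p)\}$ from \eqref{mu_rho}. I write $x'_j \coloneqq \sum_{i\in N} x'_{ij}$ and $E\coloneqq\sum_{j\in M}\max\{0,\,x'_j-s_j\}$ for the total excess.

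\textbf{Step 1: $R$ is closed in the exchange sense.} Suppose a backward edge $(j',k)$ exists with $\phi(k)\in R$. Then $j'\to k\to\phi(k)$ continues to an oversold good, so $j'\in R$. Hence for each buyer $i$, no single exchange $x'_i-\chi_{j}+\chi_{j'}$ with $j\in R$ and $j'\notin R$ stays in $\Dc_i(p)$. Since $\Dc_i(p)$ is M-convex (Fact~\ref{m_convex_fact}), the linear function $y\mapsto y(R)$ is minimized over it exactly when no such improving exchange exists. This is the local-optimality criterion for linear optimization on M-convex sets. Therefore $x'_i(R)=\mu_i(R;p)$ for every $i$, and $\mathcal{O}(R;p)=\sum_{j\in R}(x'_j-s_j)$.

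\textbf{Step 2: no good in $R$ is undersold.} This is the step I expect to be the main obstacle. Suppose some $j\in R$ had $x'_j<s_j$. Take a \emph{shortest} path from $j$ to an oversold good $o$ and perform the corresponding chain of swaps: each copy $k$ on the path exchanges $\phi(k)$ for its predecessor good. The result keeps every $x'_i$ in $\Dc_i(p)$, increases the use of $j$, and decreases the use of $o$. Shortestness is what makes the simultaneous swaps for the same buyer feasible, via the unique-matching lemma for M-convex sets, as in polymatroid intersection. I would then project back to a feasible $x$ by truncating to $s$. This raises $\sum_i x_i(M)$ by one, contradicting the optimality of $x$ for $(P_1)$; note that $x$ optimal for $(P_2)$ implies $x$ optimal for $(P_1)$. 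The bookkeeping linking $x$, $x'$ and the truncation needs care. I would argue that the optimal value of $(P_1)$ equals $\sum_i\rhoc_i(M;p)-E$ for a suitably chosen extension $x'$, and that the swap reduces $E$.

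\textbf{Step 3: conclusion.} By Step 2, every good in $R$ has $x'_j\ge s_j$, and by definition every oversold good lies in $R$. Hence $\mathcal{O}(R;p)=E$. For any $X\subseteq M$, we have $\mu_i(X;p)\le x'_i(X)$, so
\[
\mathcal{O}(X;p)\le\sum_{j\in X}(x'_j-s_j)\le E .
\]
Thus $R$ is a maximizer. Now let $X$ be any maximizer. Both inequalities above must then be tight:
\begin{itemize}
\item $X$ contains every oversold good and only goods with $x'_j\ge s_j$;
\item $x'_i(X)=\mu_i(X;p)$ for every $i$.
\end{itemize}
Now suppose a backward edge $(j',k)$ had $\phi(k)\in X$ and $j'\notin X$. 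The corresponding swap would give a bundle in $\Dc_{\Gamma(k)}(p)$ with strictly smaller value on $X$, contradicting the second tightness condition. Hence $X$ is closed under taking predecessors along edges and contains all oversold goods, so $X\supseteq R$. Therefore $R$ is the unique minimal maximizer, i.e., $R=X^*$.
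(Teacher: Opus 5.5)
The paper does not prove this statement; it is quoted from \citet{ENPRVV2025}, Theorem~3.3, so there is no in-paper argument to compare against. Judged on its own, your proof is correct.

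Step~1 is exactly the M-optimality criterion for the linear function $y\mapsto y(R)$ on the M-convex set $\Dc_i(p)$. The tightness analysis in Step~3 correctly shows that every maximizer contains all oversold goods and is closed under predecessors.

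Two details in Step~2 should be made explicit.
\begin{itemize}
\item The hedge ``for a suitably chosen extension'' is unnecessary. For \emph{any} extension $x'$ of \emph{any} optimal $x$, the bounds $x_i\le x'_i$ and $\sum_i x_i\le s$ give $\sum_i x_i(M)\le\sum_j\min\{x'_j,s_j\}$, and truncating $x'$ is feasible. So the optimal value of $(P_1)$ equals $\sum_j\min\{x'_j,s_j\}=\sum_i\rhoc_i(M;p)-E$. After the path swap, $\sum_j\min\{x''_j,s_j\}$ is larger by one, which is the contradiction.
\item The unique-matching lemma needs the goods removed and the goods added by each buyer to be disjoint. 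This holds because a shortest path never uses two consecutive copies $k_l,k_{l+1}$ of the same buyer. Exchangeability in an M-convex set is transitive: $x+\chi_a-\chi_b\in B$ and $x+\chi_b-\chi_c\in B$ imply $x+\chi_a-\chi_c\in B$, by the tight-set characterization. Consecutive copies would therefore yield the edge $(g_{l-1},k_{l+1})$, a shortcut. The same no-shortcut property makes each buyer's exchangeability graph triangular, so it has a unique perfect matching.
\end{itemize}

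You can bypass Step~2, which you flagged as the main obstacle, by using the min--max relation between $(P_1)$ and $(D_1)$ in \eqref{PD_frictionless}. Optimality of $x$ together with $\sum_i x_i(M)=s(X^*)+\rhoc(M\setminus X^*;p)$ forces two facts:
\begin{itemize}
\item $\sum_i x_{ij}=s_j$ for every $j\in X^*$;
\item $x_i(M\setminus X^*)=\rhoc_i(M\setminus X^*;p)$ for every $i$.
\end{itemize}
Hence the extension $x'$ adds units only inside $X^*$. Every oversold good then lies in $X^*$, no good of $X^*$ is undersold, and $\mathcal{O}(X^*;p)=\sum_{j\in X^*}(x'_j-s_j)=E$. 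Since the upper bound $\mathcal{O}(X;p)\le E$ from Step~3 does not use Step~2, the maximum is $E$. Your closure argument applied to $X^*$ then gives $R\subseteq X^*$. Step~1 gives $\mathcal{O}(R;p)=\sum_{j\in R}(x'_j-s_j)=E$, so $R$ is a maximizer contained in the minimal one, i.e., $R=X^*$.

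This route needs no augmenting paths and no unique-matching lemma. Yours, in exchange, does not presuppose the polymatroid-sum min--max theorem and effectively reproves the half of it that is needed.
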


\subsection{Assignment Problem on the Induced Graph}
Let $G^* = (K^*, X^*, E^*)$ be the induced exchange graph, where $K^* \subseteq K$ is the set of copies matched to goods in $X^*$ (i.e., $\phi(k) \in X^*$), and $E^*$ is the set of undirected edges between $K^*$ and $X^*$ derived from the forward and backward edges of $G(x')$. For each copy $k \in K^*$ and each good $j$ with $(k,j) \in E^*$, we define $v_{kj} \coloneqq 1 - \gamma \log q'_{\Gamma(k) j}(p_j)$. We now introduce the assignment problem on $G^*$, seeking an equilibrium allocation $y$ and the minimum equilibrium price vector~$z$.
\begin{problem}[Assignment Problem on $G^*$]
\label{prob:assignment}
Find an allocation $y = (y_{kj})_{(k,j)\in E^*}$ and a price vector $z \in \R^{X^*}_+$
satisfying the following conditions:
\begin{align*}
    y_{kj} = 1  \quad &\Rightarrow \quad v_{kj} - z_{j} = \max\Bigl\{0, \ \max_{j':(k,j')\in E^*} \{ v_{kj'} - z_{j'} \}\Bigr\}
    && ((k,j) \in E^*),  \\
    \sum_{j:(k,j)\in E^*} y_{kj} = 0 \quad &\Rightarrow \quad v_{kj'}-z_{j'} \leq 0 \quad (\forall j' : (k,j')\in E^*)
    && (k \in K^*),  \\
    y_{kj} \in \{0,1\} \ \ ((k,j) \in E^*), \quad &\sum_{j:(k,j)\in E^*} y_{kj} \leq 1 \ \ (k \in K^*), \quad
    \sum_{k:(k,j)\in E^*} y_{kj} = s_j \ \ (j \in X^*).
\end{align*}
We call such a $z$ an equilibrium price vector of $G^*$ and seek the minimum one.
\end{problem}

Lemma \ref{independence} implies that the dual variable $z^*|_{X^*}$ computed by Algorithm~\ref{alg:direction} is precisely the minimum equilibrium price $z$ of this problem. By the classical result of \citet{L1983}, the minimum equilibrium prices of a unit-demand market coincide with the VCG prices. Hence $d^*$ encodes the VCG prices of this assignment market, formalizing our economic interpretation.

\section{Other Omitted Proofs}
\label{proofs}
This section provides the  proofs omitted from the main text. Note that the proof of Theorem~\ref{gs_lad} is provided in Appendix~\ref{proof_A}. All other proofs are presented in this~section.

\subsection{Proof of Lemma \ref{price_characterization}}
Recall that $p$ is an equilibrium price vector if and only if $s \in D(p)$, where $D(p)$ is the Minkowski sum of $\{D_i(p)\}$. 
Since the class of ${\rm M}^\natural$-convex sets is closed under Minkowski sum \citep[Theorem 4.23]{M2003}, $D(p)$ is also ${\rm M}^\natural$-convex. 
Furthermore, because the convex hull of an ${\rm M}^\natural$-convex set is an integral polyhedron \citep[Section~4.7]{M2003} and $s$ is integral, $s \in D(p)$ is equivalent to $s \in \mathrm{conv}(D(p))$. 
This convex hull is given by $\mathrm{conv}(D(p)) = \{ y \in \R^M_+ \mid \mu(X; p) \le y(X) \le \rhoh(X; p)$ for all $X \subseteq M \}$.
Therefore, $s \in D(p)$ is equivalent to: $\mu(X; p) \le s(X) \le \rhoh(X; p)$ for all $X \subseteq M$. These two inequalities state that no set is over-demanded or under-demanded at $p$, respectively.

\subsection{Proof of Lemma \ref{subset}}
If $d=\zeros$, the claim holds trivially. Assume that $d\neq \zeros$.
Consider a small price increase $\varepsilon\in (0,\delta_i)$ along $d$, where $\delta_i$ is defined by  
\begin{align}
\label{delta}
\delta_i\coloneqq \min\left(\frac{\max_{x'_i\in D_i(p)} u_i(x'_i; p)-\max_{x'_i \in [\zeros,s]_{\mathbb Z}\setminus D_i(p)} u_i(x'_i; p)}{s(M) \max_{j\in M} d_j q'_{ij}(p_j)}, \ r_i\right).
\end{align}
Note that $r_i$ denotes the minimum positive step size at which the slope
of buyer $i$'s payment function changes along the direction $d$ starting from $p$.
For any $x^*_i \in D_i(p)$ and $x_i \notin D_i(p)$, 
we have
\begin{align}
\label{utility-difference}
u_i(x^*_i; p) -
\left(\max_{x'_i\in D_i(p)} u_i(x'_i; p)-\max_{x'_i \in [\zeros,s]_{\mathbb Z}\setminus D_i(p)} u_i(x'_i; p)\right)
\geq u_i(x_i; p).
\end{align}
For a sufficiently small price increase $\varepsilon\in (0,\delta_i)$, 
it holds 
\begin{align*}
u_i(x^*_i; p+\varepsilon d) 
 &\geq u_i(x^*_i; p) - \varepsilon s(M) \max_{j\in M} d_j q'_{ij}(p_j) \\
&>u_i(x^*_i; p) - \left(\max_{x'_i\in D_i(p)} u_i(x'_i; p)-\max_{x'_i \in [\zeros,s]_{\mathbb Z}\setminus D_i(p)} u_i(x'_i; p) \right) && (by \ (\ref{delta}))\\
&\geq u_i(x_i; p)  && (by\ (\ref{utility-difference}))\\
&\geq u_i(x_i; p+\varepsilon d).
\end{align*}
Therefore, the bundle $x_i \notin D_i(p)$ does not maximize $u_i(\cdot;p+\varepsilon d)$.
This implies that if a bundle yields the optimal utility at 
$p+\varepsilon d$, it must also be optimal at $p$.
Thus, we have $D_i(p+\varepsilon d) \subseteq D_i(p)$.

Next, we show that $\Dc_i(p+\varepsilon d) \subseteq \Dc_i(p)$.
Suppose that $x^*_i$ is a non-minimal bundle in $D_i(p)$.
By Fact \ref{minimal_existence}, there exists a minimal bundle $x^{\dag}_i \in \Dc_i(p)$ such that $x^{\dag}_i \lneq x^*_i$.
Since both $x^*_i$ and $x^{\dag}_i$ are in $D_i(p)$, they yield the same utility at price $p$.
For any  $\varepsilon \in (0,\delta_i)$, the inequality $x^{\dag}_i \lneq x^*_i$ implies 
\begin{align*}
u_i(x^{\dag}_i; p+\varepsilon d) 
=u_i(x^{\dag}_i; p) - \varepsilon \sum_{j \in M} q'_{ij}(p_j)d_j x^{\dag}_{ij}
\geq u_i(x^{*}_i; p) - \varepsilon \sum_{j \in M} q'_{ij}(p_j)d_j x^{*}_{ij}
=u_i(x^{*}_i; p+\varepsilon d),
\end{align*}
where the inequality holds by $x^{\dag}_i \lneq x^*_i$. 
This implies that $x^*_i\notin \Dc_i(p+\varepsilon d)$.
By $D_i(p+\varepsilon d) \subseteq D_i(p)$, any minimal bundle in $D_i(p+\varepsilon d)$ must be a minimal bundle in $D_i(p)$, i.e.,  $\Dc_i(p+\varepsilon d) \subseteq \Dc_i(p)$.

This relationship holds simultaneously for all buyers if we set $\delta\coloneqq\min_{i\in N}\delta_i$.

\subsection{Proof of Lemma \ref{perfect-matching}}
\label{perfect}
Choose a bundle $x'_i \in D_i(p)$ that minimizes the $L_1$-distance $\|x_i - x'_i\|_1$ subject to $x'_i(X) = \mu_i(X; p)$. Since $x_i(X) > x'_i(X)$, there exists $j \in X$ such that $x_{ij} > x'_{ij}$. 
By the exchange property of M$^{\natural}$-convex sets, for such an index $j \in \mathrm{supp}_+(x_i-x'_i)$, 
there exists an index $j' \in \mathrm{supp}_{-}(x_i-x'_i)\cup\{0\}$ such that 
$x''_i\coloneqq x'_i + \chi_{j} - \chi_{j'} \in D_i(p)$. 
Suppose to the contrary that $j' \in X$. Then, $x''_i(X) = \mu_i(X; p)$ and this strictly reduces the distance to $x_i$, contradicting the choice of $x'_i$.
Therefore, we must have $j' \in (M \setminus X)\cup\{0\}$.

\subsection{Proof of Lemma \ref{independence}}
Applying the same reduction as in Section~\ref{reduction_convexflow} -- replacing $M$ with $X^*$ and setting $\mathcal P_i^+ \coloneqq P(\rhot_i)$ instead of $P(\rhoc_i)$ -- the minimization problem in the lemma reduces to the dual of a convex flow problem. 
Then, we can apply Theorem~\ref{lexico_characterization} in the subsequent analysis.

\begin{proof}
It suffices to show that $z^*|_{X^*}$ is the unique minimal solution to the following problem:
\begin{equation}
\label{L_X_0}
\min_{z'\in \mathbb R^{X^*}_{+}}\cL_{X^*}(z'), \quad
\text{where} \quad  \mathcal L_{X^*}(z')\coloneqq \sum_{i \in N} \max_{y_i\in P(\rhot_i)} \sum_{j\in X^*} (w_{ij}-z'_{j}) y_{ij} + \sum_{j \in X^*} s_j z'_j.
\end{equation}

We first consider the base case where $\gamma = 0$. Imposing the condition $z_j = 0$ for all $j \in M \setminus X^*$ in problem \eqref{L_0} yields exactly the objective function of \eqref{L_X_0} with $\gamma = 0$, up to a constant term. 
Since the unique minimal solution $z_0 \coloneqq \chi_{\scriptscriptstyle X^*}$ to \eqref{L_0} (see Lemma \ref{gamma_0}) satisfies this condition, its restriction $\chi_{\scriptscriptstyle X^*}$ naturally serves as the unique minimal solution to \eqref{L_X_0} for $\gamma = 0$.

Next, we apply Theorem~\ref{lexico_characterization} to the restricted problem \eqref{L_X_0} for $\gamma > 0$. 
Since $\chi_{\scriptscriptstyle X^*}$ is the unique minimal solution for $\gamma = 0$, the unique minimal solution to \eqref{L_X_0} for $\gamma > 0$ must take the form $z' = \chi_{\scriptscriptstyle X^*} + \gamma t'$, where $t'$ is the unique minimal solution to the following problem:
\begin{equation}
\label{t}
\min_{t'' \in \mathbb R^{X^*}} 
\left( \sum_{i \in N} \max_{y_i \in P(\rhot_i)} \sum_{j \in X^*} (-\log q'_{ij}(p_j) - t''_{j}) y_{ij} + \sum_{j \in X^*} s_j t''_j \right).
\end{equation}

Now, we relate this to the original problem $(D_2)$. By Theorem~\ref{convex-dual} (iii), its unique minimal solution $z^*$ is characterized as $z^* = \chi_{\scriptscriptstyle X^*} + \gamma t$ for some $t \in \R^M$. Then, Lemma~\ref{difference_L2} with $\xi=0$ implies that the objective function $\cL(z^*)$ decomposes into two independent components:
\begin{align}
    \cL(z^*) &= 
    \left(\sum_{i \in N} \max_{x'_i \in P(\rhoc_i)|_{M \setminus X^*}} \sum_{j \in M \setminus X^*} (w_{ij} - z^*_j) x'_{ij}+ \sum_{j \in M\setminus X^*} s_j z^*_j\right) \\
    &\quad+\left(\sum_{i \in N} \max_{y_i \in P(\rhot_i)} \sum_{j \in X^*} (w_{ij} - z^*_j) y_{ij} + \sum_{j \in X^*} s_j z^*_j\right).
\end{align}
This guarantees that $t|_{X^*}$ and $t|_{M \setminus X^*}$ can be optimized independently. Notice that the last two terms exactly correspond to $\cL_{X^*}(z^*|_{X^*})$. Substituting $z^*|_{X^*} = \chi_{\scriptscriptstyle X^*} + \gamma (t|_{X^*})$ into this component reveals that minimizing it with respect to $t|_{X^*}$ is equivalent to solving \eqref{t}. Since $z^*$ is the unique minimal minimizer of $\cL$, its component $t|_{X^*}$ must be the unique minimal solution to \eqref{t}. We have $t' = t|_{X^*}$, which implies $z' = \chi_{\scriptscriptstyle X^*} + \gamma t' = z^*|_{X^*}$ and thus $z^*|_{X^*}$ is the unique minimal solution to \eqref{L_X_0}.
\end{proof}

\subsection{Proof of Lemma \ref{L_scaled}}
Observation~\ref{obs} ensures that $\tilde{L}$ inherits the regularity conditions in Definition~\ref{def_scaled} from standard polyhedral $\mathrm{L}^{\natural}$-convex functions. It thus suffices to establish translation-submodularity with respect to $\beta^{-1}$: for any $p, p' \in \R_{+}^{M}$ and any $\lambda \in \R_+$,
$\tilde{L}(p) + \tilde{L}(p') \ge \tilde{L}((p - \lambda \beta^{-1}) \vee p') + \tilde{L}(p \wedge (p' + \lambda \beta^{-1}))$. 
We establish this inequality for each $\tilde{V}_i$ and the linear term.

We first establish the inequality for $\tilde{V}_i$.
By Observation~\ref{obs}, we have $\tilde{V}_i(p) = V_{i, \alpha_i}(\beta \circ p)$ for some indirect utility function $V_{i, \alpha_i}$.
By \Ln-convexity, $V_{i, \alpha_i}$ satisfies translation-submodularity with respect to the all-ones vector $\mathbf{1}$.
Since the coordinate-wise scaling $p \mapsto \beta \circ p$ preserves lattice operations, applying translation-submodularity of $V_{i, \alpha_i}$ to $\beta \circ p$ and $\beta \circ p'$ yields
\begin{align*}
\tilde{V}_i(p) + \tilde{V}_i(p')
&= V_{i, \alpha_i}(\beta \circ p) + V_{i, \alpha_i}(\beta \circ p') \\
&\ge V_{i, \alpha_i}((\beta \circ p - \lambda \mathbf{1}) \vee (\beta \circ p')) + V_{i, \alpha_i}((\beta \circ p) \wedge (\beta \circ p' + \lambda \mathbf{1})) \\
&= V_{i, \alpha_i}(\beta \circ ((p - \lambda \beta^{-1}) \vee p')) + V_{i, \alpha_i}(\beta \circ (p \wedge (p' + \lambda \beta^{-1}))) \\
&= \tilde{V}_i((p - \lambda \beta^{-1}) \vee p') + \tilde{V}_i(p \wedge (p' + \lambda \beta^{-1})).
\end{align*}

For the linear term, it suffices to verify the equality for each coordinate $j$. 
Using the identities $\min(a+c, b+c) = \min(a,b) + c$ and 
$\max(a,b) + \min(a,b) = a+b$, we obtain:
\begin{align*}
\max(p_j - \lambda (\beta^{-1})_j, p'_j) + \min(p_j, p'_j + \lambda (\beta^{-1})_j)
= \left(p_j- \lambda (\beta^{-1})_j\right) + p'_j + \lambda (\beta^{-1})_j
= p_j + p'_j.
\end{align*}
Therefore, the scaled Lyapunov function $\tilde{L}$ satisfies translation-submodularity with respect to $\beta^{-1}$.

\subsection{Proof of Lemma \ref{gamma_0}}
Before proceeding to the proof, we recall a key fact regarding the Lov\'{a}sz extension of submodular functions and its connection to polymatroid optimization.
\begin{fact}[Lov\'{a}sz Extension; {see~\citet[Section 6.3]{F2005}}]
\label{fact:lovasz}
Let $f: 2^M\to \mathbb{R}_+$ be a submodular function with $f(\emptyset) = 0$. Its Lov\'{a}sz extension $\hat{f}: [0, 1]^M \to \mathbb{R}_+$ is defined by $\hat{f}(z) = \sum_{k=1}^K \lambda_k f(U_k)$ for some integer $K$, where $z = \sum_{k=1}^K \lambda_k \chi_{\scriptscriptstyle U_k}$ is a chain decomposition with $\emptyset \subseteq U_1 \subset \cdots \subset U_K \subseteq M$, $\lambda_k > 0$, and $\sum_{k=1}^K \lambda_k = 1$. Then, the following properties hold:
\begin{enumerate}
    \item[{\rm (i)}] $\max \bigl\{ \sum_{j\in M} z_j x_j \mid x \in P(f) \bigr\} = \hat{f}(z)$ for any $z \in [0, 1]^M$. In particular, $\hat{s}(z) = \sum_{j \in M} s_j z_j$ for the modular function $s$ with $s(X) = \sum_{j \in X} s_j$ for any $X \subseteq M$.
    \item[{\rm (ii)}] The extension is additive, i.e., $\widehat{f_1+f_2} = \hat{f}_1 + \hat{f}_2$ for any submodular functions $f_1, f_2$.
\end{enumerate}
\end{fact}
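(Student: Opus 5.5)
The plan is to prove (i) by the classical Edmonds greedy argument, using the chain decomposition of $z$ both to build a primal solution and to certify its optimality. Property (ii) then follows immediately from the definition of $\hat f$.

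\emph{Well-definedness.} Fix $z\in[0,1]^M$ and let $\bar z_1>\dots>\bar z_r$ be its distinct positive values. Set $U^{(l)}\coloneqq\{j\mid z_j\ge \bar z_l\}$, and append $U^{(0)}=\emptyset$ carrying weight $1-\bar z_1$. Then $z=\sum_l(\bar z_l-\bar z_{l+1})\chi_{U^{(l)}}+(1-\bar z_1)\chi_{\emptyset}$ with $\bar z_{r+1}\coloneqq 0$. After merging repeated sets and discarding zero weights, this is the unique chain decomposition with $\lambda_k>0$ and $\sum_k\lambda_k=1$. Uniqueness holds because the sets of a chain with positive weights are exactly the upper level sets of $z$. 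Hence $\hat f(z)=\sum_k\lambda_k f(U_k)$ is well defined, and the term attached to $\emptyset$ vanishes since $f(\emptyset)=0$.

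\emph{Upper bound.} For any $x\in P(f)$, Abel summation gives $\sum_j z_jx_j=\sum_k\lambda_k\,x(U_k)$. Each $x(U_k)\le f(U_k)$ and each $\lambda_k>0$, so $\sum_j z_jx_j\le\hat f(z)$.

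\emph{Lower bound.} Order $M$ as $j_1,\dots,j_m$ so that $z_{j_1}\ge\dots\ge z_{j_m}$, listing elements within each level consistently with the chain. Set $S_t\coloneqq\{j_1,\dots,j_t\}$ and define the greedy vector $x_{j_t}\coloneqq f(S_t)-f(S_{t-1})$.
\begin{itemize}
\item Feasibility: submodularity gives the standard inequality $x(S)\le f(S)$ for every $S\subseteq M$, proved by induction on the largest index in $S$. Monotonicity gives $x\ge 0$. Integrality of $f$, which holds in the paper's setting, gives integrality of $x$.
\item Tightness: by construction $x(S_t)=f(S_t)$ for all $t$, and every $U_k$ is one of the $S_t$. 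Applying Abel summation again yields $\sum_j z_jx_j=\sum_k\lambda_k f(U_k)=\hat f(z)$.
\end{itemize}
Together with the upper bound, this proves the equality in (i). For the modular function $s$, the Lovász extension is simply $\sum_k\lambda_k s(U_k)=\sum_k\lambda_k\sum_{j\in U_k}s_j=\sum_j s_jz_j$, again by Abel summation.

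\emph{Property (ii).} The chain $\{U_k\}$ and the weights $\{\lambda_k\}$ depend only on $z$, not on $f$. Therefore $\widehat{f_1+f_2}(z)=\sum_k\lambda_k\bigl(f_1(U_k)+f_2(U_k)\bigr)=\hat f_1(z)+\hat f_2(z)$.

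\emph{Main obstacle.} The only delicate point is feasibility of the greedy vector when $P(f)$ is taken, as in Definition~\ref{polymatroid}, to be a set of nonnegative integer vectors. This requires $f$ to be monotone and integer-valued. Both hold for the functions to which the fact is applied, namely the polymatroid rank functions $\rhoc_i$ and the modular supply function $s$. For a general non-monotone submodular $f$, I would instead work with the submodular polyhedron $P(f)$ without the sign constraint. The same greedy argument then goes through verbatim, because $z\ge 0$ makes the dropped sign constraint irrelevant to the objective.
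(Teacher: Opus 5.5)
Your proposal is correct. It is the standard Edmonds greedy argument behind the cited result in Fujishige's book: Abel summation over the chain gives the upper bound, and the greedy vector along a $z$-nonincreasing order attains it. The paper itself gives no proof and only cites that source.

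Your caveat is also well placed. As literally stated, (i) requires either that $P(f)$ be the sign-free submodular polyhedron, or that $f$ be monotone and integer-valued for the nonnegative-integer version used in the paper's definition of polymatroids. Both conditions hold where the paper actually invokes (i), namely for $\rhoc_i$ and $s$.
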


\begin{proof}[Proof of Lemma \ref{gamma_0}.]
The term $\max\{\sum_{j\in M} (1-z_j) x_{ij}\}$ is a linear optimization over a polymatroid. For any $z_j > 1$, the coefficient $1 - z_j$ is negative, so the corresponding component $x_{ij}$ vanishes (Fact~\ref{greedy_structure}~(ii)). Hence the first term is constant in $z_j$, while the second term $\sum_j s_j z_j$ strictly increases. Consequently, the objective function strictly increases for $z_j > 1$, so any optimal solution must satisfy $z \in [0, 1]^M$.

For $z \in [0, 1]^M$, we use Fact~\ref{fact:lovasz}. 
The maximization term over $P(\rhoc_i)$ evaluates the Lov\'{a}sz extension of $\rhoc_i$ at $\ones-z$. 
Crucially, this corresponds to the Lov\'{a}sz extension of the set function $X \mapsto \rhoc_i(M \setminus X)$ evaluated at $z$. 
Combined with the extension of the modular term $s(X)$ and the additivity of the extension, $\cL_0(z)$ exactly coincides with the Lov\'{a}sz extension $\hat{F}(z)$ of $F(X) \coloneqq \sum_{i \in N} \rhoc_i(M \setminus X) + s(X)$. 
Thus, $\cL_0(z)$ can be expressed via a chain decomposition as $\cL_0(z) = \sum_{k=1}^K \lambda_k F(U_k)$. 
For $z$ to be optimal, this convex combination must attain the minimum of $F$, requiring every set $U_k$ to independently minimize $F$. 
Since $F$ is submodular and $X^*$ is the unique minimal minimizer of $F$, any optimal solution must satisfy $X^* \subseteq U_k$ for all $k$. 
This immediately yields $z \ge \chi_{\scriptscriptstyle X^*}$,
which implies that $\chi_{\scriptscriptstyle X^*}$ is the unique 
minimal optimal solution to \eqref{L_0}. 
\end{proof}

\section{Non-Uniqueness of the Direction}
In this section, we show that a direction $d^*$ satisfying \LSC is not necessarily unique. 
To illustrate this, we consider the case of additive valuations. Under such valuations, each buyer's minimal demand set consists of a single bundle that contains all the goods with positive marginal utility.

\begin{proposition}
\label{additive_weight}
Suppose that each buyer $i \in N$ has an additive valuation. 
Then, any direction $d$ with $\operatorname{supp}_+(d)=X^*$ satisfies {\rm \LSC}. 
\end{proposition}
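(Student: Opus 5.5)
The plan is to verify condition (c) of Theorem~\ref{any_direction} for an arbitrary direction $d$ with $\operatorname{supp}_+(d)=X^*$; the implication (c)$\Rightarrow$(a) then gives \LSC. The key structural fact is that with additive valuations the minimal demand set is a singleton, and it stays the same singleton under small price increases.

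First I will establish the singleton claim. Under an additive valuation $v_i(x)=\sum_j a_{ij}x_j$, the utility $u_i(x;p)=\sum_j (a_{ij}-q_{ij}(p_j))x_{ij}$ separates across goods. Hence $D_i(p)$ is a box: $x_{ij}=s_j$ if the marginal utility $a_{ij}-q_{ij}(p_j)$ is positive, $x_{ij}=0$ if it is negative, and $x_{ij}$ is free in $[0,s_j]$ if it is zero. Its unique minimal element is the bundle $\check x_i(p)$ with $\check x_{ij}=s_j$ exactly on the goods of positive marginal utility, so $\Dc_i(p)=\{\check x_i(p)\}$.

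Next, Lemma~\ref{subset} gives $\Dc_i(p+\varepsilon d)\subseteq \Dc_i(p)$ for $\varepsilon\in(0,\delta)$. Since $\Dc_i(p+\varepsilon d)$ is nonempty, this forces $\Dc_i(p+\varepsilon d)=\{\check x_i(p)\}$. The same conclusion follows from Proposition~\ref{decomposition}~(iii): $B(\rhot_i)$ is a singleton, so the argmin there is trivially all of it, regardless of the weights $q'_{ij}(p_j)d_j$. Therefore $\rhoc_i(X;p+\varepsilon d)=\check x_i(X)=\rhoc_i(X;p)$ for every $X$. More to the point, $\rhoc_i(X;p+\varepsilon d)=\rhot_i(X)$ for every $X\subseteq X^*$, because a singleton base polyhedron decomposes as a direct sum and so $\rhoc_i(X\cup(M\setminus X^*);p)-\rhoc_i(M\setminus X^*;p)=\rhoc_i(X;p)$.

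The final step repeats the argument from the proof of Proposition~\ref{simple}. Minimality of $X^*$ for $(D_1)$ gives $s(X)<\rhoc(X\cup(M\setminus X^*);p)-\rhoc(M\setminus X^*;p)=\sum_i\rhot_i(X)$ for every nonempty $X\subseteq X^*$. By the previous step, the right-hand side equals $\rhoc(X;p+\varepsilon d)$, which is exactly condition (c). Invoking Theorem~\ref{any_direction} completes the proof.

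The only delicate point is the claim that $\Dc_i(p)$ is a singleton even when some marginal utilities are exactly zero. The minimal element of the box simply sets those coordinates to zero, so the claim still holds and the rest is routine.
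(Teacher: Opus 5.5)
Your proof is correct and rests on the same key idea as the paper: with additive valuations $\Dc_i(p)$ is a singleton, so Lemma~\ref{subset} forces $\Dc_i(p+\varepsilon d)=\Dc_i(p)$. The only difference is the last step. The paper observes that every $\mu_i(X;\cdot)$, and hence $\mathcal O(\cdot;\cdot)$ itself, is unchanged at $p+\varepsilon d$, so $X^*$ remains the minimal maximizer immediately; you instead verify Theorem~\ref{any_direction}~(c) via the argument of Proposition~\ref{simple}, which is valid but a longer detour.
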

\begin{proof}
Suppose that each buyer $i$ has an additive valuation given by $v_i(x_i) = \sum_{j \in M} v_{ij} x_{ij}$ for any $x_i \in [\zeros, s]_{\Z}$, where the vector $\{v_{ij}\}_{j \in M} \in \R_+^M$ represents the marginal valuations of buyer $i$.
Then, $\Dc_i(p)$ consists of a unique bundle $x^*_i = (x^*_{ij})_{j \in M}$ defined~by $x^*_{ij}=s_j$ if $v_{ij} - q_{ij}(p_j) > 0$ and $x^*_{ij}= 0$ otherwise.

By Lemma~\ref{subset}, $\Dc_i(p+\varepsilon d)$ is a non-empty subset of $\Dc_i(p)$ for sufficiently small $\varepsilon > 0$. 
This implies $\Dc_i(p+\varepsilon d) = \{x^*_i\} = \Dc_i(p)$.
Then, the requirement on any set $X$ is locally unchanged:
\[
    \mu_i(X; p) = x^*_i(X) = \mu_i(X; p+\varepsilon d) \quad \text{for all } X \subseteq M.
\]
Therefore, $\mathcal O(X;p)=\sum_{i\in N}x^*_i(X)- s(X)=\mathcal O(X;p+\varepsilon d)$ holds for all $X\subseteq M$. 
The minimal maximizer $X^*$ at $p$ remains the minimal maximizer at $p+\varepsilon d$, and thus \LSC is satisfied.
\end{proof}

The following example illustrates that this non-uniqueness can occur even for non-additive valuations.
In this example, as in Proposition~\ref{additive_weight}, the non-uniqueness arises because each minimal demand set consists of a single bundle.
\begin{example}
\label{ex:non_unique_d}
Consider a market with two goods (denoted by $1$ and $2$), each with a unit supply, and two buyers.
Suppose that both buyers have the same valuation function:
\begin{equation}
    v_i(\emptyset)=0, \quad v_i(\{1\})=2, \quad v_i(\{2\})=3, \quad v_i(\{1,2\})=4 \quad (i=1,2).
\end{equation}
Note that this valuation is not additive, as $v_i(\{1\}) + v_i(\{2\}) = 5 \neq 4 = v_i(\{1,2\})$.
However, it belongs to the class of OXS functions, a subclass of $\mathrm{M}^{\natural}$-concave functions.

At the price vector $p=(0,0)$, the minimal demand set is 
uniquely determined as $\Dc_i(p)=\{(1,1)\}$ for $i=1,2$.
Recall the logic used in the proof of Proposition~\ref{simple}: if $\Dc_i(p)$ is a singleton and the inclusion $\Dc_i(p+\varepsilon d) \subseteq \Dc_i(p)$ holds (by Lemma~\ref{subset}), then we must have $\Dc_i(p+\varepsilon d) = \Dc_i(p)$.
This invariance implies that \LSC holds for \emph{any} direction $d$ with $\operatorname{supp}_+(d)=\{1,2\}$.
For instance, both $d=(1,1)$ and $d'=(1,2)$ satisfy the condition, yet $d$ cannot be represented as a scalar multiple of $d'$.
Therefore, we conclude that the admissible direction is not unique.
\end{example}

\section{Example Illustrating Violations of \LSC}
\label{Example}
This section provides an example illustrating Cases~(iii) and~(iv) in Proposition~\ref{event_b}. In the algorithms of \citet{DHW2015} and \citet{BEF2024}, these correspond to a new edge entering the demand graph between a buyer and a good, both in the maximal alternating tree.
\begin{example}
\label{ex:event_b_cases}
Consider a market with three buyers $N=\{1, 2, 3\}$ and two goods $M=\{1, 2\}$, each with a unit supply. Each valuation function belongs to the class of unit-demand functions, a subclass of $\mathrm{M}^{\natural}$-concave functions: $v_i(0,0) = 0$ and $v_i(1,1) = \max\{v_i(1,0), v_i(0,1)\}$ for all $i$, with
\[
v_1(1,0) = 8.2,\quad v_1(0,1) = 7,\quad v_2(1,0) = 8,\quad v_2(0,1) = 9.5,\quad v_3(1,0) = v_3(0,1) = 10.
\]
The payment slopes are price-independent, with $q'_{11} = 2$, $q'_{12} = 1.6$, $q'_{21} = 0.5$, $q'_{22} = 2$, and $q'_{31} = q'_{32} = 1$.

As explained in Appendix~\ref{Interpretation_detail}, computing the minimal solution $z^*$ to $(D_2)$ reduces to finding the VCG payments in a virtual market where buyer $i$'s valuation for good $j$ is $1 - \gamma \log q'_{ij}(p_j)$. Specifically, $z^*$ gives the VCG payments associated with the allocated goods.

At the initial price $p=\zeros$, the minimal demand set for each buyer is $\Dc_1(p) = \{(1,0)\}$, $\Dc_2(p) = \{(0,1)\}$, and $\Dc_3(p) = \{(1,0), (0,1)\}$.
Importantly, the minimal maximally over-demanded set $X^*$ is $\{1, 2\}$ throughout this example. 
The objective function of $(D_2)$ is given by 
\[
\mathcal L(z) = \max(0, 1-\gamma \log 2-z_1)
+\max(0, 1-\gamma \log 2-z_2)+\max(0, 1-z_1, 1-z_2) + (z_1+z_2).
\]
In this case, the minimal solution $z^*\coloneqq (z^*_1, z^*_2)$ to $(D_2)$ is given by $z^*_1=z^*_2=1-\gamma \log 2=1+\gamma \log 0.5$.
The optimal value of $(D_2)$ is $2+\gamma \log 0.5$.
From Theorem~\ref{d_weighted}, the desired direction is $d^0 = (0.5, 0.5)$.
After the price perturbation along $d^0$, the minimal demand sets remain unchanged for all buyers.

\textbf{Violation of \LSC\, in Case~(iii).}
Until the price reaches $p = (1, 1)$, the minimal demand sets remain unchanged.
At $p = (1, 1)$, Buyer~2 becomes indifferent between good~1 and good~2 since 
$u_2((1,0);p)=u_2((0,1);p)= 7.5$ and their minimal demand set changes to $\Dc_2(p)=\{(1,0), (0,1)\}$. 
If this direction were maintained, the minimal demand set would change to $\Dc_2(p+\varepsilon d^0)=\{(1,0)\}$. 
This violates \LSC for the direction $d^0$ since good~2 is no longer over-demanded.
The objective function of $(D_2)$ is given by 
\[
\mathcal L(z) = \max(0, 1-\gamma \log 2-z_1)
+\max(0, 1-\gamma \log 0.5-z_1, 1-\gamma \log 2-z_2)+\max(0, 1-z_1, 1-z_2) + (z_1+z_2).
\]
The optimal value of $(D_2)$ increases to $2-\gamma\log 0.5=2+\gamma\log 2$ from $2+\gamma\log 0.5$. This shows the violation is due to Case~(iii). 
In this case, $z^*\coloneqq (z^*_1, z^*_2)$ changes to $z^*_1=1-\gamma \log 2=1+\gamma \log 0.5$ and $z^*_2=1-\gamma \log 8=1+\gamma \log 0.125$ and the desired direction is $d^1 = (0.5, 0.125)$.
After the price perturbation along $d^1$, the minimal demand sets become:
$\Dc_1(p+\varepsilon d^1) = \{(1,0)\}$,\ $\Dc_2(p+\varepsilon d^1)=\{(1,0), (0,1)\}$,\ and $\Dc_3(p+\varepsilon d^1) = \{(0,1)\}$, 
whereas $X^*=\{1, 2\}$ remains unchanged.

\textbf{Violation of \LSC\, in Case~(iv).}
The algorithm proceeds until $p = (1.5, 1.125)$ with minimal demand sets unchanged.
At this point, Buyer~1 becomes indifferent between good~1 and good~2 since 
$u_1((1,0);p)=u_1((0,1);p)= 5.2$ and their minimal demand set becomes $\Dc_1(p)=\{(1,0), (0,1)\}$. 
If this direction were maintained, the minimal demand set would become $\Dc_1(p+\varepsilon d^1)=\{(0,1)\}.$ This violates \LSC\, for the direction $d^1$ since good~1 is no longer over-demanded.
The objective function of $(D_2)$ is given by 
\begin{align*}
\mathcal L(z) &= \max(0, 1-\gamma \log 2-z_1, 1-\gamma \log 1.6-z_2)\\
&\quad+\max(0, 1-\gamma \log 0.5-z_1, 1-\gamma \log 2-z_2)+\max(0, 1-z_2) + (z_1+z_2).
\end{align*}

The optimal value remains unchanged at $2+\gamma\log 2$, whereas $z^*\coloneqq (z^*_1, z^*_2)$ changes to $z^*_1=1-\gamma \log 2=1+\gamma \log 0.5$ and $z^*_2=1-\gamma \log 1.6=1+\gamma \log 0.625$.
This shows the violation is due to Case~(iv).
Then, the desired direction is $d^{2}=(0.5,0.625)$. After the price perturbation along $d^{2}$, the minimal demand sets become: $\Dc_1(p+\varepsilon d^2) = \{(1,0), (0,1)\}$,\ $\Dc_2(p+\varepsilon d^2)=\{(1,0)\}$,\ and $\Dc_3(p+\varepsilon d^2) = \{(0,1)\}$, whereas $X^*=\{1, 2\}$ remains unchanged.
\end{example}

\end{document}